\tikzset{cross/.style={cross out, draw=black, minimum size=2*(#1-\pgflinewidth), inner sep=0pt, outer sep=0pt},
cross/.default={1pt}}
\tikzstyle{int}=[draw, fill=blue!20, minimum size=2em]
\tikzstyle{dot}=[circle, draw, fill=blue!20, minimum size=2em]
\tikzstyle{dotred}=[circle, draw, fill=red!20, minimum size=2em]
\tikzstyle{init} = [pin edge={to-,thin,black}]
\tikzstyle{initred} = [pin edge={to-,thin,red}]
\tikzstyle{plan}=[draw, fill=blue!20, minimum size=2em, text width=5em, rounded corners,align=center]
\tikzstyle{planwide}=[draw, fill=blue!20, minimum size=2em, text width=8em, rounded corners,align=center]
\tikzstyle{nodedot}=[circle, draw, fill=white, minimum size=0.3cm,inner sep=0pt]
\tikzstyle{nodedot}=[circle, draw, fill=white, minimum size=3,inner sep=0pt]
\tikzstyle{Medge}=[green!60!black, thick]
\tikzstyle{Bedge}=[red, thick]
\tikzstyle{Cedge}=[blue, thick]
\tikzstyle{Sedge}=[black, thick]
\tikzstyle{Mgiantedge}=[green!60!black, line width=3.0pt]
\tikzstyle{Bgiantedge}=[red,line width=3.0pt]
\tikzstyle{Cgiantedge}=[blue,line width=3.0pt]
\tikzstyle{Sgiantedge}=[black,line width=3.0pt]
\tikzstyle{shadedgiantnode}=[circle, draw, fill=black!10, minimum size=1cm, inner sep=0pt]
\tikzstyle{unshadedgiantnode}=[circle, draw, fill=white, minimum size=1cm, inner sep=0pt]
\tikzset{my loop/.style =  {to path={
  \pgfextra{}
  [looseness=5,min distance=10mm]
  \tikz@to@curve@path},font=\sffamily\small
  }}  
\newcolumntype{C}[1]{>{\centering\arraybackslash}p{#1}}
\tikzstyle{vertexdot}=[circle, draw, fill=white, minimum size=3,inner sep=0pt]
\tikzstyle{root}=[circle, draw, fill=black, minimum size=3,inner sep=0pt]
\tikzstyle{vertexdotsolid}=[circle, draw, fill=black, minimum size=3,inner sep=0pt]
\pgfplotsset{
    standard/.style={
        axis x line=middle,
        axis y line=middle,
        every axis x label/.style={at={(current axis.right of origin)},anchor=north west},
        every axis y label/.style={at={(current axis.above origin)},anchor=north west}
    }
}
\theoremstyle{plain}
\newtheorem{theorem}{Theorem}
\newtheorem{lemma}{Lemma}
\newtheorem{proposition}{Proposition}
\newtheorem{corollary}{Corollary}
\theoremstyle{definition}
\newtheorem{definition}{Definition}
\newtheorem{assumption}{Assumption}
\newtheorem{problem}{Problem}
\newtheorem{remark}{Remark}
\newtheorem{claim}{Claim}
\newtheorem*{remark*}{Remark}
\newtheorem*{theorem*}{Theorem}
\newtheorem{example}{Example}
\newcommand{\ceil}[1]{\left\lceil #1 \right\rceil}
\newcommand{\Hyper}{\text{Hypergeometric}}
\newcommand{\argmax}{\mathop{\arg\max}}
\newcommand{\diverge}{\to\infty}
\newcommand{\iiddistr}{{\stackrel{\text{\iid}}{\sim}}}
\newcommand{\reals}{{\mathbb{R}}}
\newcommand{\naturals}{{\calN}}
\newcommand{\Expect}{\mathbb{E}}
\newcommand{\expect}[1]{\mathbb{E}\left[ #1 \right]}
\newcommand{\Prob}{\mathbb{P}}
\newcommand{\prob}[1]{ \mathbb{P}\left\{ #1 \right\} }
\newcommand{\Bern}{{\rm Bern}}
\newcommand{\Binom}{{\rm Binom}}
\newcommand{\iid}{i.i.d.\xspace}
\newcommand{\indc}[1]{{\mathbf{1}_{\left\{{#1}\right\}}}}
\newcommand{\sfE}{{\mathsf{E}}}
\newcommand{\sfX}{{\mathsf{X}}}
\newcommand{\polylog}{\mathsf{polylog}}
\newcommand{\calA}{{\mathcal{A}}}
\newcommand{\calC}{{\mathcal{C}}}
\newcommand{\calE}{{\mathcal{E}}}
\newcommand{\calJ}{{\mathcal{J}}}
\newcommand{\calN}{{\mathcal{N}}}
\newcommand{\calP}{{\mathcal{P}}}
\newcommand{\calT}{{\mathcal{T}}}
\newcommand{\calU}{{\mathcal{U}}}
\newcommand{\calV}{{\mathcal{V}}}
\newcommand{\calW}{{\mathcal{W}}}
\newcommand{\ER}{Erd\H{o}s--R\'enyi\xspace}
\newcommand{\Side}{\calP}
\newcommand{\Set}{\calW}
\newcommand{\foverodd}{f_{\overline{O}_{\mathrm{odd}}}}
\newcommand{\fovereven}{f_{\overline{O}_{\mathrm{even}}}}
\newcommand{\fundereven}{f_{\underline{O}_{\mathrm{even}}}}
\newcommand{\funderodd}{f_{\underline{O}_{\mathrm{odd}}}}
\newcommand{\doverodd}{\overline{O}_{\mathrm{odd}}}
\newcommand{\dunderodd}{\underline{O}_{\mathrm{odd}}}
\newcommand{\dovereven}{\overline{O}_{\mathrm{even}}}
\newcommand{\dundereven}{\underline{O}_{\mathrm{even}}}
\newcommand{\offspring}{O}
\newcommand{\fone}{\xi_1}
\newcommand{\ftwo}{\xi_2}
\newcommand{\paraone}{\eta_1}
\newcommand{\paratwo}{\eta_2}
\newcommand{\Short}{\calA}
\newcommand{\Long}{\calJ}
\newcommand{\overell}{r}
\newcommand{\ess}{s}
\newcommand{\Post}{A}
\newcommand{\Pre}{B}
\newcommand{\DistA}{\mathbb{A}}
\newcommand{\DistB}{\mathbb{B}}
\newcommand{\stepa}[1]{\overset{\rm (a)}{#1}}
\newcommand{\stepb}[1]{\overset{\rm (b)}{#1}}
\newcommand{\stepc}[1]{\overset{\rm (c)}{#1}}
\begin{document}

\title{From signaling to interviews in random matching markets}

\author{Maxwell Allman, Itai Ashlagi, Amin Saberi, and Sophie H.\ Yu\thanks{
I. Ashlagi and A.\ Saberi are with the Department of Management Science and Engineering, Stanford University, Stanford CA, USA
\texttt{\{iashlagi,saberi\}@stanford.edu}.
S.\ H.\ Yu is with The Wharton School of Business, University of Pennsylvania, Philadelphia PA, USA,  \texttt{hysophie@wharton.upenn.edu}.
I. Ashlagi is supported in part by NSF Award CCF2312156. A. Saberi is supported in part by NSF Awards CCF2209520, CCF2312156, and a gift from CISCO.
}}

\date{}
\maketitle

\begin{abstract} 
In many two-sided labor markets, interviews are conducted before matches are formed. The growing number of interviews in medical residency markets has increased demand for signaling mechanisms, where applicants send a limited number of signals to communicate interest. We study the role of signaling mechanisms to reduce interviews in centralized random matching markets where initial preferences are refined through interviews. Agents can only match with those they interview. For the market to clear, we focus on \emph{perfect interim stability}: no pair of agents—even if they never interviewed each other—prefers each other to their assigned partners under their interim preferences. A matching is \emph{almost interim stable} if it is perfect interim stable after removing a vanishingly small fraction of agents.

We analyze signaling mechanisms in random matching markets with $n$ agents where agents on the short side, long side, or both sides signal their top $d$ preferred partners. The \emph{interview graph} connects pairs where at least one party signaled the other. We reveal a fundamental trade-off between almost and perfect interim stability. For almost interim stability, $d=\omega(1)$ signals suffice: short-side signaling is always effective, whereas long-side signaling is effective only when the market is weakly imbalanced, i.e., when any size difference between the two sides becomes negligible as the market grows. For perfect interim stability, at least $d=\Omega(\log^2 n)$ signals are necessary, and short-side signaling becomes crucial in any imbalanced market. We establish that truthful signaling is a Bayes-Nash equilibrium and extend our analysis to markets with hierarchical structure. As a technical contribution, we develop a \emph{message-passing algorithm} that efficiently determines interim stability by leveraging local neighborhood structures.

\end{abstract}


\section{Introduction}

In many two-sided matching markets, frictions arise as market participants search and learn about their preferences through costly interactions before matches are formed. A notable example is the residency market in the US, where medical graduates are matched annually to residency programs through a clearinghouse organized by the National Resident Matching Program (NRMP). The match occurs after an interview season, and a large surge in the number of interviews in recent years \citep{watson2017burden,gadepalli2015effort,melcher2018matching} has raised the demand for mechanisms to alleviate interview congestion. As a result, residencies and fellowships markets are increasingly adopting signaling mechanisms: candidates can send a limited number of signals to different programs, which assist them in deciding whom to invite for interviews.\footnote{Other proposals included capping applications or interviews \citep{morgan2021case} or an interview match \citep{melcher2018matching}.}
Such signals help programs in making decisions about which candidates to interview. Signaling mechanisms are used in the academic hiring market for economists and even dating apps,\footnote{Users can send a few ``special'' messages to other users.} and studies explain why such signals have the potential to increase match efficiency ~\citep{chang2021preference,lee2015propose,coles2007signaling,jagadeesan2018varying}.\footnote{Recent evidence from otolaryngology demonstrates the effectiveness of signaling in addressing congestion: when applicants were allowed to send 25 preference signals in addition to their regular applications, 84.4\% of interview offers came from programs they had signaled \citep{yousef2024impact}.} Little is known about how to design the number of signals and which signals are effective in reducing the number of interviews. Medical residencies vary substantially in the number of signals (e.g., Family Medicine allows 5 signals, and Orthopaedic Surgery allows 30 signals).\footnote{\url{https://students-residents.aamc.org/applying-residencies-eras/program-signaling-2025-myeras-application-season}}

This paper studies the effectiveness of signaling mechanisms in reducing congestion in two-sided matching markets. We focus on markets similar to the medical match, where a centralized clearinghouse forms matches after an interview season. Agents in the market initially have latent preferences over each potential match, which accounts for only prior information about the value generated by the interview. A signaling mechanism prescribes to each agent which potential partners to interview with. Following these interviews agents update their preferences, and can ultimately  match only with those they have interviewed with.

A desirable property of the final match is stability \citep{gale1962college}. Stability, which can be viewed as an equilibrium concept for two-sided markets, requires that no pair of agents prefer being matched with each other over their current partners. This notion is well defined when restricting attention to pairs of agents who have interviewed with each other.
We expand this notion to further preclude mutual regret by pairs who did not interview with each other; a matching is \emph{interim stable} if no pair of agents prefers to match with each other with respect to their \emph{interim preferences}, which reflect their preferences after the interview season—incorporating post-interview utilities for those they interviewed with and pre-interview utilities for those they did not.

This paper provides a comprehensive analysis of how simple signaling mechanisms, where interviews occur between pairs where at least one party signals to the other, can achieve interim stability with remarkably few interviews in random matching markets. We characterize the conditions under which different signaling strategies lead to interim stable matchings, depending on factors such as the number of signals, market competition, {market structure}, and the impact interviews have on agents' preferences.

A key technical contribution of our work is a novel message-passing algorithm that analyzes matching outcomes through local graph structures rather than global market analysis. Our algorithm leverages the almost tree-like properties of sparse graphs' local neighborhoods to efficiently determine interim stability and matching outcomes through local computations. This approach offers both computational advantages and theoretical insights for analyzing large matching markets.

\subsection{Model overview}
Our model generalizes large two-sided random markets to capture settings where market participants learn about their preferences through interviews before matches are formed, allowing for both pre- and post-interview preferences.

Each agent's utility for a potential partner on the other side of the market has the following additive structure. Prior to an interview, an agent's utility for a partner is the sum of that partner's intrinsic value and an idiosyncratic pre-interview score. If an interview occurs, the utility additionally incorporates a post-interview score that is revealed through the interview. Unless specified otherwise, we assume these idiosyncratic scores are drawn independently, and there is a positive probability that the post-interview score exceeds zero.

The matching  process occurs in two stages. First, a signaling mechanism forms interviews based on agents' pre-interview preferences. Each agent can signal up to $d$ potential partners, and interviews occur between pairs where at least one party signaled the other. 
During this stage, agents learn their post-interview scores. Then, a stable one-to-one (final) matching is formed in the market induced by the set of interviews. Agents can match only with someone they interviewed with, and stability implies that there are no two agents who interviewed with each other but prefer each other over their assigned partners.

As mentioned above, we extend the notion of stability to interim stability, accounting for interim preferences (preferences following  the interview phase). A matching is \emph{perfect interim stable} if there exists no pair of agents who mutually prefer each other to their assigned partners, regardless of whether they interviewed or not.  We further consider a notion that allows slight instability:  a matching is \emph{almost interim stable} if it becomes perfect interim stable after removing a vanishingly small fraction of agents from the market. We seek to characterize the conditions under which different signaling mechanisms lead to almost interim stable or perfect interim stable matchings.

\subsection{Key insights}

We begin by analyzing the random matching market where all agents share the same intrinsic value and there is no hierarchical structure. Let $n$ denote the total number of applicants and firms. We first consider a \emph{one-side signaling mechanism} where agents from one side of the market signal their most preferred potential partners based on pre-interview utilities. Each agent can signal up to $d$ potential partners, and we distinguish between cases where the short-side or long-side of the market sends signals.

We analyze the conditions under which signaling mechanisms achieve almost interim stability and perfect interim stability. Our analysis reveals a fundamental trade-off: almost interim stability can be achieved with relatively few signals, while achieving perfect interim stability requires substantially more. This occurs because almost interim stability permits the removal of a vanishingly small fraction of agents, making it a less stringent condition than perfect interim stability, which requires stability across the entire market without exceptions.
Our key findings are:
\begin{itemize} 
\item \textbf{For almost interim stability:} Only $d = \omega(1)$ signals are needed (i.e., $d$ can grow arbitrarily slowly with $n$). The effectiveness depends on market imbalance—we classify markets as \emph{weakly imbalanced} if any size difference between the two sides becomes negligible as the market grows, and \emph{strongly imbalanced} if one side is significantly larger than the other, with this size difference remaining substantial even as the market grows. In weakly imbalanced markets, both short-side and, surprisingly, long-side signaling suffice. However, in strongly imbalanced markets, short-side signaling becomes necessary, while long-side signaling fails when pre-interview scores have a ``stronger'' influence than post-interview scores.

\item \textbf{For perfect interim stability:} $d=\Omega(\log^2 n)$ signals are required, representing a substantial increase in signaling requirements. The advantage of short-side signaling is more pronounced in this regime. For any imbalanced market, short-side signaling is crucial to achieve perfect interim stability, while long-side signaling fails when pre-interview scores have a ``stronger'' influence than post-interview scores.
\end{itemize}

We then examine a \emph{both-side signaling} mechanism, where all agents signal their top $d$ preferred partners based on pre-interview preferences. Interestingly, we find that both-side signaling exhibits contrasting behaviors compared to one-side signaling. Furthermore, we establish that our signaling mechanisms have desirable incentive properties: we show that it is Bayes-Nash incentive compatible for agents to signal according to their true preferences as suggested by the mechanism. {We then extend our analysis to multi-tiered markets, where agents are partitioned into commonly known tiers that represent heterogeneity in observable attributes. Higher-tier agents have strictly higher intrinsic values than lower-tier agents, ensuring that tier membership creates a clear desirability hierarchy.} 

A  methodological contribution is a novel message-passing algorithm that analyzes matching outcomes through local graph structures rather than global market analysis. By leveraging the almost tree-like properties of sparse graphs' local neighborhoods, our algorithm efficiently determines interim stability and matching outcomes through local computations. This approach offers significant advantages over traditional methods that couple the Deferred Acceptance (DA) algorithm~\citep{gale1962college} with balls-into-bins processes and rejection chain analysis~\cite[see, e.g.,][]{im2005, ashlagi2017unbalanced,kanoria2023competition,potukuchi2024unbalanced}: it is more robust to perturbations, handles non-regular graphs naturally, and characterizes all stable matchings rather than just the DA outcomes, yielding both theoretical insights and computational advantages for large matching markets.

\subsection{Literature review }

\paragraph{Interview dynamics and market design.}

There is an emerging literature on information acquisition and interviews in two-sided matching markets. Several papers find benefits in  how interview formation affects outcomes:~\cite{lee2017interviewing} show that ``interview overlap'' can improve match rates and ~\cite{manjunath2021interview} find benefits in balancing the number of interviews in random markets and  \cite{skancke2021welfare,beyhaghi2021randomness} demonstrate the effectiveness of limiting interview numbers to reduce costs. Our paper instead looks at large markets and examines how a small number of interviews, guided by signals, can clear the market. 
Our paper takes a similar approach to \cite{allman2023interviewing}, which studies the match rate and welfare under simple mechanisms for forming interviews but does not consider interim stability. 

The literature on signaling in matching markets emerged to reduce congestion ~\cite{lee2015propose,coles2007signaling,jagadeesan2018varying}, with applications in residency and fellowship markets~\citep{melcher2019reducing, pletcher2022otolaryngology,irwin2024use,yousef2024impact}. These papers primarily study  how  strategic signals from doctors to hospitals can improve efficiency by indicating special interest. Our paper examines two-sided large markets, quantifies the number of signals and interprets signals as informative rather than strategic.

Beyond congestion reduction, other papers have analyzed interview decision-making through different lenses:~\cite{drummond2013elicitation,kadam2021interviewing} study games induced by inviting agents for interviews and demonstrate various frictions, while~\cite{drummond2014preference,rastegari2013two} consider interview decisions in worst-case scenarios. 
Our paper is closely  related to \cite{ashlagi2020clearing}, which identifies how little communication can help to reach a stable matching in large markets when agents know their preferences. Our paper expands  their multi-tiered market model and their (simultaneous) signaling protocol to incorporate incomplete information about interviewing scores.    Also related is \cite{ashlagi2025stablematchinginterviews}, which develops  adaptive and non-adaptive algorithms for generating interviews. In contrast, we focus on  simple and practical decentralized signaling mechanisms for  generating interviews and ask when such signaling mechanisms can clear the market and have good incentive properties  as a function of the market structure.

\paragraph{Competition in random two-sided matching markets}

The literature on two-sided random matching markets seeks to characterize typical outcomes in large markets when agents have  random preferences and agents know their own preferences \citep{pittel1989, im2005,pittel2019likely,cai2022short,ashlagi2023welfare,kanoria2023competition,potukuchi2024unbalanced,arnosti2023lottery} look at how the market clears in sparse markets by looking at constant-length preference lists. \cite{ashlagi2017unbalanced} finds that  the short side of the market has significant advantage. \cite{kanoria2023competition} refines  by characterizing the advantage of the short side as a function of the connectivity in the market. The paper identifies a threshold in connectivity (measured by the degree $d$ of a one-sided random $d$-regular graph) at $\log^2(n)$, separating ``weak competition'' and ``strong competition'' regimes. For connectivity that is $o\left(\log^2(n) \right)$, agents on both sides do equally well in weakly imbalanced markets. Above $\omega \left(\log^2(n)\right)$, short-side agents enjoy a significant advantage.  These are aligned with our findings, in which one can  achieve almost interim stability with sparse signals, when either short-side or long-side signaling is effective in weakly imbalanced markets, indicating no significant short-side advantage. However, to achieve perfect interim stability with dense signals, the advantage of short-side signaling becomes pronounced: it successfully attains this goal, while long-side signaling fails to do so in imbalanced markets.  Finally, several papers study at random markets with more general utility models to study match rates and welfare \citep{menzel2015large, pkeski2017large, 
lee2016incentive, ashlagi2023welfare, agarwal2023stable}.

\paragraph{Message passing algorithm.}
The message-passing algorithm, also known as belief propagation, has been widely applied across statistical physics~\citep{mezard2009information}, computer science~\citep{mezard2002analytic}, artificial intelligence~\citep{pearl2014probabilistic, pearl2022fusion}, and computer vision~\citep{freeman2000learning}. A pioneering application analyze complex market dynamics in matching markets was undertaken by~\cite{immorlica2022matching}, who leveraged message-passing algorithms to investigate information deadlocks in markets with costly compatibility inspections.

Our work extends these tools to analyze interview processes and interim stability. By leveraging local neighborhood information for each agent, we develop a message-passing algorithm that characterizes the matching outcomes. This approach offers several advantages over traditional methods that rely on coupling DA with balls-into-bins processes and tracking rejection chains to understand the applicant-optimal stable matching and the job-optimal stable matching~\cite[see, e.g.,][]{im2005, ashlagi2017unbalanced,kanoria2023competition,potukuchi2024unbalanced}: it is more robust to small perturbations, can handle non-regular graphs, and provides stronger statements about market-wide matching structures by considering all stable matchings rather than just DA outcomes.

\subsection{Notation and paper organization}

For any graph $H$, let $\mathcal{V}(H)$ denote its vertex set and $\mathcal{E}(H)$ denote its edge set. For each node $i$ in $H$, let $\mathcal{N}(i) \subset \mathcal{V}(H)$ represent its neighbors in $H$. A \emph{bipartite} graph is a graph whose vertices can be divided into two independent sets, $\mathcal{A}$ and $\mathcal{J}$, such that every edge $(u,v)$ connects a vertex $u \in \mathcal{A}$ to a vertex $v \in \mathcal{J}$. A \emph{rooted} graph is a graph where one vertex is designated as the root. Given a rooted graph $H$ with root $\rho$, for any $m \in \calN$, we define $H_m(\rho)$ as the vertex-induced subgraph of $H$ on the set of nodes at depths less than or equal to $m$, which is also known as the $m$-hop neighborhood of $\rho$ on $H$.

Let $X \overset{\mathrm{s.t.}}{\succeq} Y$ denote that random variable $X$ has first-order stochastic dominance over random variable $Y$ if $\mathbb{P}[X \ge x] \ge \mathbb{P}[Y \ge x]$ for any $x \in \mathbb{R}$. If $X$ and $Y$ are two independent random variables with probability distributions $\mathbb{D}_1$ and $\mathbb{D}_2$ respectively, then the distribution of the sum $X + Y$ is given by the convolution $\mathbb{D}_1 * \mathbb{D}_2$. For two real numbers $x$ and $y$, we let $x \vee y \triangleq \max\{x, y\}$
and $x\wedge y \triangleq \min\{x, y\}$. 
We use standard asymptotic notation: for two positive sequences $\{x_n\}$ and $\{y_n\}$, we write $x_n = O(y_n)$ or $x_n \lesssim y_n$, if $x_n \le C y_n$ for an absolute constant $C$ and for all $n$; $x_n = \Omega(y_n)$ or $x_n \gtrsim y_n$, if $y_n = O(x_n)$; $x_n = \Theta(y_n)$ or $x_n \asymp y_n$, if $x_n = O(y_n)$ and $x_n = \Omega(y_n)$; 
$x_n = o(y_n)$ or $y_n = \omega(x_n)$, if $x_n / y_n \to 0$ as $n\diverge$. 

The remainder of the paper is organized as follows. Section \ref{sec:model_setup} establishes the model setup and key concepts. Section \ref{sec:main} provides a comprehensive analysis of signaling mechanisms in random matching markets with no hierarchical structure, examines incentive compatibility, and extends the analysis to multi-tiered markets. Section \ref{sec:message} develops a novel message-passing algorithm for analyzing stability using local neighborhood information, with full development in \prettyref{sec:local}. Section \ref{sec:numerical} validates our theoretical findings through numerical simulations, and Section \ref{sec:conclusions} concludes with future research directions. 
Technical details are deferred to the appendices: preliminaries to \prettyref{sec:preliminary}, the extension to correlated post-interview scores to \prettyref{sec:correlated}, proofs of main results to \prettyref{sec:analysis}, and supplementary materials to \prettyref{sec:supp} in the Appendix. 

\section{Model setup and key concepts} \label{sec:model_setup}

Let $\Short$ denote the set of applicants and $\Long$ denote the set of firms. Let $n_{\Short}$ and $n_{\Long}$ denote the number of applicants and firms, respectively, and let $n \triangleq n_{\Short}+ n_{\Long}$.

For an applicant $a \in \Short$, its pre-interview utility for a firm $j \in \Long$ is $U_{a,j}^B = v_j + \Pre_{a,j}$, and its post-interview utility is $U_{a,j}^A = v_j + \Pre_{a,j} + \Post_{a,j}$, where  $v_j$ is an intrinsic value of firm $j$, $\Pre_{a,j}$ and $\Post_{a,j}$ are pre- and post-interview idiosyncratic scores of $a$ towards $j$. Similarly, for a firm $j \in \Long$, its pre-interview utility for an applicant $a\in\Short$ is $U_{j,a}^B = v_a + \Pre_{j,a}$, and its post-interview utility is $U_{j,a}^A = v_a + \Pre_{j,a} + \Post_{j,a}$, where $v_a$ is an intrinsic value of applicant $a$ and  $\Pre_{j,a}$ and $\Post_{j,a}$  are pre- and post-interview idiosyncratic scores of $j$ towards $a$, respectively.

{For our main results, we focus on random matching markets with no hierarchical structure, where all agents share the same intrinsic value (normalized to $v_i \equiv  0$ for all $i \in \Short \cup \Long$). In this setting, utilities are determined solely by idiosyncratic scores. We later extend our analysis to multi-tiered markets where intrinsic values reflect tier membership. }

Unless otherwise specified, we make the following assumption.

\begin{assumption}[Pre-interview and post-interview scores]
\label{assump:general}
For every applicant $a \in \Short$ and every  $j \in \Long$, pre-interview scores $\Pre_{a,j}$ and $\Pre_{j,a}$ are $\iid$ drawn from a continuous distribution $\DistB$ and  post-interview scores $\Post_{a,j}$ and $\Post_{j,a}$ are $\iid$ drawn from a distribution $\DistA$. Let $p$ denote the probability of a post-interview score being non-negative, where $p>0$.
\end{assumption}
Continuity of the pre-interview score distribution $\DistB$ ensures that agents have strict interim preferences. The probability $p$ can be interpreted as the chance that an interview maintains or improves upon the pre-interview impression.

The set of  interviews between applicants and firms can be represented by a graph.
\begin{definition}[Interview Graph]
 An interview graph is a bipartite graph $H$ that connects applicants $\Short$ to firms $\Long$. An applicant $a \in \Short$ is said to interview with a firm $j \in \Long$ if and only if $(a,j) \in \calE(H)$.
\end{definition}

Consider an interview graph $H$. We denote $U_{a,j}^H$ and $U_{j,a}^H$ as the interim utilities induced by $H$. Specifically, if $a$ interviewed with $j$ (i.e., $(a,j) \in \mathcal{E}(H)$), then $U_{a,j}^H = U_{a,j}^A$ and $U_{j,a}^H = U_{j,a}^A$; otherwise, $U_{a,j}^H = U_{a,j}^B$ and $U_{j,a}^H = U_{j,a}^B$.

Following interviews, a one-to-one matching $\Phi$ will be formed between applicants and firms. We assume that an applicant can match with a firm only if they interview with each other. For every $a \in \Short$ and $j \in \Long$, we say that $a$ is matched to $j$ in $\Phi$ if $(a,j) \in \Phi$, denoted as $\phi(a) = j$ and $\phi(j) = a$. For $i\in \Short\cup\Long$, if $i$ is unmatched, write $\phi(i) = \emptyset$.

For any applicant $a \in \Short$ evaluating two firms $j_1, j_2 \in \Long$, $a$ strictly (resp. weakly) prefers $j_1$ over $j_2$, represented as $j_1 \succ_a j_2$ if $U_{a,j_1}^H > U_{a,j_2}^H$ (resp. $j_1\succeq_a j_2$ if $U_{a,j_1}^H\ge U_{a,j_2}^H$). Similarly, a firm $j \in \Long$ strictly (resp. weakly) prefers applicant $a_1$ over $a_2$, denoted as $a_1 \succ_j a_2$ if $U_{j,a_1}^H > U_{j,a_2}^H$ (resp. $a_1\succeq_j a_2$ if $U_{j,a_1}^H \ge U_{j,a_2}^H $). 
Unless specified otherwise, we  assume  that every agent is acceptable to every other agent, that is: every agent  prefers  being matched with any agent over remaining unmatched.\footnote{Without loss of generality, we can assume that the $U_{i,\emptyset}^H = -\infty$ for any $i\in\Short \cup \Long$.}  

\begin{definition}[Stable matching]
    A stable matching in a bipartite graph $H$ is a matching where there are no applicant-firm blocking pairs $(a,j)\in \calE(H)$ such that both $a \in \Short$ and $j \in \Long$ prefer each other over their current match in $\Phi$, indicated as $j \succ_a \phi(a)$ and $a \succ_j \phi(j)$.
\end{definition}

\begin{definition}[Interim blocking pair]
    In a matching $\Phi$ on $H$, an interim blocking pair is formed if an applicant $a$ and a firm $j$ mutually strictly prefer each other over their respective matches in $\Phi$, indicated as $j \succ_a \phi(a)$ and $a \succ_j \phi(j)$, irrespective of whether $(a, j)$ are connected in $H$.
\end{definition}

\begin{definition}[Interim stability]\label{def:stable}
A stable matching $\Phi$ is perfect interim stable if it does not have any interim blocking pairs. A stable matching is considered almost interim stable if it becomes perfect interim stable after excluding a vanishingly small fraction of agents.
\end{definition}

Note that a stable matching $\Phi$ on $H$ need not be perfect or almost interim stable since agents who didn't interview with each other may form an interim blocking pair. 
However, if $\Phi$ is perfect interim stable, it is also a stable matching on the complete graph, where utilities are induced by $H$.
And if $\Phi$ is almost interim stable, after removing a vanishingly small fraction of agents from the market, it is also a stable matching on the complete graph, where utilities are induced by $H$.

Interviews in our setup are formed through a signaling mechanism as follows. A signaling mechanism prescribes which agents send signals, and each such agent sends $d$ signals to agents on the other side of the market, based on their pre-interview preferences. We assume that an interview between an  applicant and a firm occurs if at least one of them signaled the other. We are interested in whether simple signaling mechanisms are able to attain almost or perfect interim stability. The specific mechanisms we analyze and their effectiveness under different market conditions are presented in the following section.

\section{Main results}\label{sec:main}

In this section, we focus on random matching markets where all agents have the same intrinsic value. We investigate the effectiveness of short-side signaling, long-side signaling, and both-side signaling in achieving almost interim stability and perfect interim stability under various market conditions and signaling regimes, and examine the incentive compatibility of our signaling mechanisms. We then extend our analysis to multi-tiered markets with hierarchical structure. Extensions to markets with correlated post-interview scores are deferred to \prettyref{sec:correlated} in the Appendix.

\subsection{Almost interim stability}\label{sec:single_sparse}

We analyze the conditions under which signaling mechanisms achieve almost interim stability. We first focus on one-side signaling, specifically applicant-signaling, where every applicant sends signals to their top $d$ preferred partners. By varying the relative sizes of the two sides of the market, this framework captures short-side signaling (when $n_{\Short} < n_{\Long}$) as well as long-side signaling (when $n_{\Short} > n_{\Long}$). 

The following theorem establishes conditions under which one-side signaling achieves almost interim stability.

\begin{theorem}[Effectiveness of one-side signaling for almost interim stability]
\label{thm:single_signal_sparse}\
 Let $H$ denote an interview graph constructed based on the applicant-signaling mechanism with $\omega(1) \le d \le O\left(\polylog n\right)$. Suppose $n_{\Short} \le \left(1+o(1)\right) n_{\Long}$ and $p \ge \Omega (1)$.\footnote{The condition $p \ge \Omega(1)$ can be refined based on market imbalance: it can be relaxed to $p = \omega(1/d)$ in strongly imbalanced markets and to weaker conditions in weakly imbalanced markets. Details are provided in the proof in \prettyref{sec:single_signal_sparse}.} Then, with high probability, every stable matching on $H$ is almost interim stable.
\end{theorem}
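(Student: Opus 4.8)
The plan is to show that under applicant-signaling with $d=\omega(1)$, the interview graph $H$ is ``rich enough'' that the set of agents who could conceivably be part of an interim blocking pair is vanishingly small, and then verify that any stable matching on $H$ leaves at most a vanishing fraction of agents poorly matched. The natural strategy is: (i) analyze the local structure of $H$ around a typical agent; (ii) define a ``good event'' for an applicant $a$, namely that in any stable matching $a$ receives a firm whose interim utility is at least (roughly) the $(\varepsilon n_{\Long})$-th best among $a$'s pre-interview options, and similarly define a good event for firms; (iii) show these good events hold for all but $o(n)$ agents with high probability; (iv) conclude that no interim blocking pair survives after deleting the $o(n)$ bad agents.

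For step (i)–(iii), I would argue as follows. Each applicant signals its top $d$ firms, so $a$ has at least $d$ incident edges in $H$; among those $d$ firms, a constant fraction have non-negative post-interview score (since $p\ge\Omega(1)$ and $d=\omega(1)$, the number of such firms is $\omega(1)$ w.h.p. by a Chernoff bound). Crucially, for $a$ the top-$d$ firms by pre-interview utility are, conditionally, a uniformly random $d$-subset of $[n_{\Long}]$, and for each such firm $j$ the post-interview score only helps with probability $p$. A firm $j$ is signaled by roughly $d n_{\Short}/n_{\Long}\ge d(1-o(1))$ applicants in expectation, so every firm has degree $\Theta(d)=\omega(1)$ w.h.p. by concentration (a balls-into-bins / Poisson approximation). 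Since $n_{\Short}\le(1+o(1))n_{\Long}$, a Hall-type / deficient-matching argument shows that in \emph{any} stable matching essentially all applicants are matched, and moreover — this is the heart — each applicant is matched to one of a ``large'' pool: I would show that if $a$ were matched to a firm worse (in interim utility) than its $\varepsilon n_{\Long}$-th pre-interview choice, one could exhibit a blocking pair within $H$ using the rejection-chain / threshold structure, because with $\omega(1)$ signals the ``competition'' each agent faces is $\omega(1)$ and hence the rank of one's partner is $o(n)$ for all but $o(n)$ agents. This is exactly the regime where Theorem 1 of the earlier literature (Kanoria–Sadler–Shi style, or the message-passing characterization developed later in the paper) gives that stable partners have rank $\tilde\Theta(n/d)=o(n)$ on the short side and symmetric control on the long side in the weakly balanced case.

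Given steps (i)–(iii), step (iv) is short: suppose $(a,j)$ is an interim blocking pair with both $a,j$ among the ``good'' agents. Being good, $a$'s matched firm $\phi(a)$ has interim utility among $a$'s top $o(n)$ pre-interview firms, and likewise $j$'s matched applicant is among $j$'s top $o(n)$. For $(a,j)$ to block, $j$ must beat $\phi(a)$ in $a$'s interim ranking and $a$ must beat $\phi(j)$ in $j$'s ranking; but the interim utility $U_{a,j}^H$ for a non-interviewed pair is just the pre-interview utility, and for $j$ to sit above $a$'s top-$o(n)$ pre-interview firm while not having been signaled by $a$ is impossible — $a$ signaled its top $d$ firms, and among those the post-interview bump keeps $\omega(1)$ of them non-negative, so $a$'s matched partner (being near the top) dominates any non-signaled $j$. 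The same on $j$'s side. Hence all interim blocking pairs have at least one endpoint in the $o(n)$ exceptional set, which is precisely almost interim stability. (The footnote's refinement — relaxing $p\ge\Omega(1)$ to $p=\omega(1/d)$ in strongly imbalanced markets — comes from the same computation: one only needs $dp=\omega(1)$ firms with a non-negative bump among $a$'s signals, though in the strongly imbalanced regime one must be more careful that short-side agents, who face essentially no competition, still get a top-$o(n)$ partner.)

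\textbf{Main obstacle.} The hard part will be step (iii): controlling, uniformly over all stable matchings (not just the applicant- or firm-optimal one), the interim rank of every agent's partner, and showing the exceptional set is genuinely $o(n)$ rather than $\Theta(n)$. The subtlety is that $H$ is a \emph{non-regular} bipartite graph (firm degrees fluctuate, and the top-$d$ pre-interview sets interact with the post-interview randomness), so the classical DA-coupled-with-balls-in-bins analysis does not apply verbatim — this is exactly why the paper introduces the message-passing machinery of Section~\ref{sec:message}. I would therefore route step (iii) through that local-neighborhood characterization: show that w.h.p. the $m$-hop neighborhood of a typical agent in $H$ is tree-like for $m=\Theta(\log_d n)$, run the message-passing recursion to pin down the law of an agent's stable-partner interim utility, and read off that the fraction of agents whose partner falls outside the top $o(n)$ is $o(1)$. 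Getting the error terms in the local-to-global passage to be $o(1)$ uniformly, while $d$ is allowed to grow as slowly as $\omega(1)$, is where the real work lies.
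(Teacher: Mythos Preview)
Your proposal correctly identifies the message-passing / local-neighborhood machinery as the engine and correctly flags the local-to-global passage as the hard part, but the reduction in step (iv) has a genuine gap. Your ``good event'' for $a$ --- that $\phi(a)$'s interim utility exceeds $a$'s $(\varepsilon n_{\Long})$-th best pre-interview option --- is too weak to rule out interim blocking pairs: whenever $\varepsilon n_{\Long} > d$ (which it will be, since $d$ may be as small as $\omega(1)$), there are $\varepsilon n_{\Long}-d$ \emph{non}-signaled firms sitting above that threshold in $a$'s pre-interview order, and any one of them is a candidate blocker. Your sentence ``for $j$ to sit above $a$'s top-$o(n)$ pre-interview firm while not having been signaled by $a$ is impossible'' is therefore false as stated. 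The two-sided rank control you propose at best gives $O(\varepsilon^2 n^2)$ expected blocking pairs among good agents, which does not translate into an $o(n)$ vertex cover of the blocking-pair graph.

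The paper's fix is a sharper good event that lets you drop the firm side entirely. Call $a$ good if \emph{some} $j\in\calN_+(a):=\{j\in\calN(a):A_{a,j}\ge 0\}$ is \emph{available} to $a$, meaning $j$ weakly prefers $a$ to $\phi(j)$ in every stable matching on $H$. Stability then forces $\phi(a)\succeq_a j$, so
\[
U_{a,\phi(a)}^A \;\ge\; U_{a,j}^A \;\ge\; U_{a,j}^B \;>\; \max_{j'\notin\calN(a)} U_{a,j'}^B,
\]
the middle inequality using $A_{a,j}\ge 0$ and the last using that $j$ lies in $a$'s top~$d$. Thus a good $a$ participates in \emph{no} interim blocking pair with any non-interviewed firm, and stability on $H$ rules out interviewed ones. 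The message-passing analysis is deployed not to bound ranks but to show $\Pr[\text{every }j\in\calN_+(a)\text{ is unavailable}]=o(1)$: with $|\calN_+(a)|\approx pd=\omega(1)$ and each signaled firm available with probability $\Theta((1\wedge\lambda)\log d/d)$ in the regime $n_\Short\le(1+d^{-\lambda})n_\Long$, the tree-like independence gives the bound. Markov then yields $o(n)$ bad applicants; removing them suffices. You were circling this idea in your parenthetical about the non-negative bump, but the ``availability'' notion is the missing hinge.
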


The following remark highlights that we can identify a small subset of applicants such that if we remove this subset, every stable matching on the induced subgraph of the interview graph on the remaining applicants and all firms is perfect interim stable with high probability.

\begin{remark}\label{rmk:single_signal_sparse_identify}
Building upon Theorem \ref{thm:single_signal_sparse}, there exists a subset $\Short'\subset \Short$ such that  with high probability, every stable matching on the vertex-induced subgraph of $H$ on $\left(\Short\backslash \Short'\right)\cup \Long$ is perfect interim stable, where $|\Short'| = o \left( n_{\Short} \right)$.  
\end{remark}

To characterize when long-side signaling fails, we introduce the following definition.

\begin{definition}[Pre-interview scores outweighing post-interview scores]\label{def:outweigh}
Let $\DistA$ and $\DistB$ denote the distributions of post-interview and pre-interview scores respectively. For any $\kappa_1,\kappa_2 \in \calN$, we say that pre-interview scores outweigh post-interview scores in the $(\kappa_1,\kappa_2)$ range if the $\kappa_1$-th to $(\kappa_1+1)$-th quantile of the convolution distribution $\DistA * \DistB$ is strictly smaller than the $\kappa_2$-th to $(\kappa_2+1)$-th quantile of $\DistB$.\footnote{If $A$ and $B$ are two independent random variables with probability distributions $\mathbb{A}$ and $\mathbb{B}$ respectively, then the distribution of the sum $A + B$ is given by the convolution $\mathbb{A} * \mathbb{B}$.}
\end{definition}

The following theorem highlights the limitations of long-side signaling in achieving almost interim stability.

\begin{theorem}[Failure of one-side signaling for almost interim stability] \label{thm:single_firm_signal_sparse}
Let $H$ denote an interview graph constructed based on the applicant-signaling mechanism with $\omega(1) \le d \le O\left(\polylog n\right)$. Suppose that $n_{\Short} \ge \left(1+\Omega(1)\right) n_{\Long}$, and that pre-interview scores outweigh post-interview scores in the range $(\kappa_1,\kappa_2)$, where $\kappa_1= \lceil 2 d n_\Short /n_\Long\rceil$ and $\kappa_2 = n_\Short -n_\Long - \lceil 2 d n_\Short /n_\Long\rceil$. Then, with high probability, no stable matching on the interview graph $H$ is almost interim stable.

Two examples where the $(\kappa_1,\kappa_2)$-condition is satisfied are: 
\begin{itemize}
\item $\DistB$ is any continuous distribution and $\DistA$ is a degenerate distribution at zero ($\DistA=\boldsymbol{\delta}_0$).
\item $\DistB$ is a normal distribution and $\DistA$ is any bounded distribution with finite support.
\end{itemize}
\end{theorem}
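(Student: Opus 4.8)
The plan is to prove that with high probability, for \emph{every} stable matching $\Phi$ on $H$, the bipartite ``interim blocking graph'' on $\Short\cup\Long$ whose edges are the interim blocking pairs of $\Phi$ has a matching of size $\Omega(n)$. A graph with a matching of size $m$ has no vertex cover of size $<m$, and making $\Phi$ perfect interim stable by deleting $o(n)$ agents is exactly choosing a vertex cover of size $o(n)$ to delete; so this is precisely the failure of almost interim stability. All blocking pairs will be of the form $(a,j)$ with $a\in\calU$ an unmatched applicant who never interviewed $j$: since $n_\Short\ge(1+\Omega(1))n_\Long$, every matching leaves at least $n_\Short-n_\Long=\Omega(n)$ applicants unmatched, and by the Rural Hospitals theorem this unmatched set $\calU$ is common to all stable matchings of $H$ (preferences are a.s.\ strict since $\DistB$ is continuous). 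For such $a,j$, with $\phi(a)=\emptyset$ and every agent acceptable, $(a,j)$ is an interim blocking pair of $\Phi$ exactly when $\Pre_{j,a}>U^A_{j,\phi(j)}$.

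The decoupling step makes this tractable. Signaling, hence $H$ and every interview neighborhood $D_j:=\mathcal N(j)$, depends only on the applicant-side scores $\{\Pre_{a,\cdot}\}$, and the set of stable matchings of $H$ (including which applicants are matched) depends only on preferences restricted to the edges of $H$. Hence, conditionally on $H$ and all ``edge'' scores $\{\Pre_{j,a},\Post_{j,a},\Post_{a,j}:(a,j)\in\calE(H)\}$, the set $\calU$ is fixed and for every stable matching $\Phi$ the partner $\phi(j)\in D_j$ satisfies $U^A_{j,\phi(j)}\le\Theta_j:=\max_{b\in D_j}(\Pre_{j,b}+\Post_{j,b})$ (with $U^A_{j,\emptyset}=-\infty$), while the ``non-edge'' scores $\{\Pre_{j,a}:(a,j)\notin\calE(H)\}$ remain i.i.d.\ $\DistB$ and independent of everything above. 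Here $\Theta_j$ is a maximum of at most $|D_j|$ i.i.d.\ $\DistA*\DistB$ draws built from edge scores, and the crude bound $U^A_{j,\phi(j)}\le\Theta_j$ is the only property of a generic stable matching we use.

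A Chernoff bound shows all but $o(n_\Long)$ firms are \emph{good}, i.e.\ have $|D_j|\le\kappa_1=\lceil 2dn_\Short/n_\Long\rceil$ (twice the mean degree; $d=\omega(1)$ makes the exceptional count $o(n_\Long)$), and those we delete. For a good $j$, at least $n_\Short-n_\Long-\kappa_1=\kappa_2$ applicants of $\calU$ lie outside $D_j$; their scores form $\ge\kappa_2$ i.i.d.\ $\DistB$ draws independent of $\Theta_j$. The hypothesis that pre-interview scores outweigh post-interview scores in the range $(\kappa_1,\kappa_2)$ supplies a threshold $t^\star$ with $\Prob_{\DistA*\DistB}[X>t^\star]<1/\kappa_1$ and $\Prob_\DistB[Y>t^\star]>1/\kappa_2$. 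Thus $\Prob[\Theta_j\le t^\star]\ge(1-1/\kappa_1)^{\kappa_1}=\Omega(1)$, and given $\Theta_j\le t^\star$, with probability $\ge1-(1-1/\kappa_2)^{\kappa_2}=\Omega(1)$ the applicant $a_j:=\argmax_{a\in\calU\setminus D_j}\Pre_{j,a}$ satisfies $\Pre_{j,a_j}>t^\star\ge\Theta_j\ge U^A_{j,\phi(j)}$, making $(a_j,j)$ an interim blocking pair of \emph{every} $\Phi$; and by exchangeability $a_j$ is then uniform in $\calU\setminus D_j$. Call such a $j$ \emph{active}. The randomness distinguishing active firms lies in disjoint score sets, so standard concentration yields $\Omega(n_\Long)$ active firms with high probability, each carrying a blocking pair $(a_j,j)$ with $a_j$ almost uniform in $\calU$, $|\calU|=\Omega(n)$. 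A balls-into-bins / Hall argument ($\Omega(n_\Long)$ near-uniform balls into $\Omega(n)$ bins hit $\Omega(n)$ distinct bins) then extracts $\Omega(n)$ vertex-disjoint interim blocking pairs. (If instead $\Omega(n)$ firms are themselves unmatched, every unmatched-firm/unmatched-applicant pair is interim blocking and $|\calU|$ exceeds the number of unmatched firms, so a linear matching is immediate.) Every event invoked is high-probability and $\Phi$-free, so whp no stable matching of $H$ is almost interim stable.

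For the two examples one only checks the quantile inequality: if $\DistA=\boldsymbol{\delta}_0$ then $\DistA*\DistB=\DistB$ and the inequality reduces to $\kappa_1<\kappa_2$, true since $\kappa_1=O(\polylog n)$ and $\kappa_2=\Omega(n)$; if $\DistB$ is normal and $\DistA$ bounded, $\DistA*\DistB$ has a Gaussian upper tail shifted by $O(1)$, so its $\approx(1-1/\kappa_1)$-quantile is $O(1)+\sigma\sqrt{2\ln\kappa_1}$ while the $\approx(1-1/\kappa_2)$-quantile of $\DistB$ is $\sigma\sqrt{2\ln\kappa_2}+O(1)$, and $\ln\kappa_2=\Theta(\log n)\gg\log\kappa_1$ when $d=O(\polylog n)$. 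The step I expect to be the main obstacle is the last part of the third paragraph: upgrading ``$\Omega(n_\Long)$ firms each have a blocking partner'' to ``a linear-size matching of blocking pairs'', simultaneously for all stable matchings. Because the only handle on a generic stable matching is $U^A_{j,\phi(j)}\le\Theta_j$, and because the hypothesis may guarantee only a constant expected number of blocking partners per firm, one genuinely needs the cross-firm independence and a random-graph (balls-into-bins) argument rather than a union bound, together with careful bookkeeping that the randomness producing the disjoint blocking pairs is independent of $\Phi$, so that the same pairs certify the instability of every stable matching at once.
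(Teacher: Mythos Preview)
Your proposal is correct and follows the same line as the paper: fix $\calU$ via the Rural Hospitals theorem, bound firm degrees by $\kappa_1$ via Chernoff, and use the outweigh hypothesis to show that for each firm $j$ the best pre-interview score over $\calU\setminus D_j$ beats $\Theta_j=\max_{b\in D_j}(\Pre_{j,b}+\Post_{j,b})$ with constant probability, producing an interim blocking pair with an unmatched applicant. The paper dispatches the final step---that these blocking pairs cannot be covered by $o(n)$ vertices---in a single sentence invoking independence of preferences across pairs, whereas you spell it out via the balls-into-bins argument on the near-uniform, cross-firm-independent argmaxes $a_j$; your identification of this as the crux is apt, and your treatment of it is more complete than the paper's.
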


In this scenario, applicant-signaling leads to many unmatched applicants and failure of almost interim stability, as most matched firms would prefer unmatched applicants to their matches.

From Theorems \ref{thm:single_signal_sparse} and \ref{thm:single_firm_signal_sparse}, we conclude that short-side signaling is always effective for almost interim stability, while long-side signaling is effective only when the market is weakly imbalanced—that is, when any size difference between the two sides becomes negligible as the market grows.



As corollaries to our main results on one-side signaling, we examine both-side signaling, where both applicants and firms signal their top dd
d preferred partners based on pre-interview utilities. This mechanism introduces a coordination challenge: when both sides signal simultaneously, agents must navigate between partners they actively chose (by signaling) versus those who chose them (by receiving signals).

The effectiveness of both-side signaling depends on the relative importance of pre-interview versus post-interview scores. When pre-interview scores dominate, agents strictly prefer partners they signaled over those who signaled them, creating coordination failures that lead to instability. When post-interview scores dominate, preferences over interviewed partners become essentially random, eliminating this coordination problem. We first formalize the failure case:
\begin{corollary}[Failure on both-side signaling for almost interim stability]
\label{cor:single_both_signal_sparse}
Let $H$ denote an interview graph constructed based on both-side signaling with $\omega(1)\le d\le o\left(\log n\right)$.  
Suppose $n_{\Short}= n_{\Long}$ and the post-interview scores are absent, i.e., $\DistA = \boldsymbol\delta_0$. Then, with high probability, no stable matching on $H$ is almost interim stable.
\end{corollary}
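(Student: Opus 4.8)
\textbf{Proof proposal for Corollary \ref{cor:single_both_signal_sparse}.}

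The plan is to reduce the both-side signaling case with $\DistA=\boldsymbol\delta_0$ and $n_\Short = n_\Long$ to the mechanics of a balanced bipartite market where the interview graph is the union of two random $d$-out graphs (applicants' signals and firms' signals), and then exhibit a linear-sized set of agents that can never be ``cleared''. Since post-interview scores are absent, interim preferences coincide with pre-interview preferences for \emph{all} pairs, matched or not: every agent $a\in\Short$ strictly prefers firm $j_1$ to $j_2$ iff $\Pre_{a,j_1} > \Pre_{a,j_2}$, and this ranking is over the \emph{entire} firm side, not just interviewed firms. Consequently, a stable matching on $H$ is almost interim stable iff, after removing $o(n)$ agents, the induced matching has no pair $(a,j)$ with $j \succ_a \phi(a)$ and $a \succ_j \phi(j)$ on the \emph{complete} graph. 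The key structural observation is that in a balanced market each applicant $a$ signals its top-$d$ firms and each firm $j$ signals its top-$d$ applicants; with $d = o(\log n)$, whp a constant fraction of applicants are such that \emph{none} of their top-$d$ signalled firms also signalled them back, and symmetrically for firms. Such an applicant $a$ is interviewed only by firms that rank $a$ outside their top $d$, and $a$ interviews only firms that rank $a$ outside their top $d$; so in \emph{any} stable matching on $H$, $a$ is matched to some firm $\phi(a)$ that ranks $a$ beyond position $d$, while $a$'s favorite interviewed firm is also one that ranks $a$ beyond position $d$.

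The first step is to set up the random graph model precisely: $H = H_\Short \cup H_\Long$ where $H_\Short$ is the $d$-out graph from applicants to their favorite firms (favorite in the i.i.d.\ continuous $\DistB$ sense) and $H_\Long$ similarly from firms. Because the $\Pre$ scores are i.i.d.\ across all ordered pairs, $H_\Short$ and $H_\Long$ are independent uniformly random $d$-out graphs. The second step is a first-moment / concentration argument: for a fixed applicant $a$, the probability that at least one of its $d$ signalled firms also signalled $a$ is $O(d^2/n)$ (each of $a$'s $d$ chosen firms independently picks its top $d$ applicants, and the chance a given firm's top-$d$ set contains $a$ is $d/n$), so whp at least $(1-o(1))n_\Short$ applicants are ``one-sidedly signalled.'' Symmetrically for firms. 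The third step is the core instability argument. Consider the set $\Short^\star$ of one-sidedly signalled applicants and $\Long^\star$ of one-sidedly signalled firms; both have size $(1-o(1))n$. Fix a stable matching $\Phi$ on $H$. I claim one can find $\Omega(n)$ interim blocking pairs that are ``robust'' in the sense of surviving the removal of any $o(n)$ agents. The mechanism: take any applicant $a\in\Short^\star$ matched to $\phi(a) = j$. Since $a$ only interviews firms ranking $a$ beyond their top $d$, $j$ ranks $a$ beyond position $d$; hence $j$ has $d$ applicants it strictly prefers to $a$, and it signalled all of them, so it interviewed all of them. For $\Phi$ to be stable on $H$, each such preferred applicant $a'$ must be matched to something it prefers to $j$. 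Chaining this — each firm's top-$d$ applicants are ``spoken for'' by something better — forces a large matched sub-structure; but a counting argument (the bipartite graph restricted to top-$d$ preference edges has $dn$ edges on $n+n$ vertices and cannot be ``internally consistent'' for a perfect matching) will produce $\Omega(n)$ applicants whose match is strictly worse (in the complete-graph interim order) than some firm that strictly prefers them — the desired interim blocking pairs.

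The main obstacle is making the chaining/counting in the third step rigorous: one needs to show not just that \emph{some} blocking pairs exist but that they number $\Omega(n)$ and cannot all be destroyed by deleting $o(n)$ agents. The cleanest route is probably to argue directly that any stable matching on $H$ leaves $\Omega(n)$ applicants matched to a firm they rank low while some \emph{globally} high-ranking (for that applicant) firm is matched to an applicant it ranks even lower — using that the top-$d$-preference digraph of firms, being $d$-out with $d=o(\log n)$, has $\Omega(n)$ vertices in ``short'' structures (e.g., most firms' top choice's top choice is not themselves), so a positive-density set of firm-pairs $(j,j')$ exists where $j$ would rather have $j'$'s matched applicant and vice versa is blocked — then transfer this to an applicant-firm blocking pair via the one-sided-signalling property. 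I would model this on the rejection-chain/balls-into-bins instability arguments in \cite{ashlagi2017unbalanced, kanoria2023competition} but, following the paper's methodology, phrase it through the local (tree-like) neighborhood structure of the sparse interview graph: in the $d$-out model with $d=o(\log n)$ the $r$-hop neighborhood of a typical vertex is a tree for $r$ up to $\Theta(\log n/\log d)$, and on such trees one can compute the stable-matching outcome locally and exhibit the persistent blocking pairs. Robustness to $o(n)$ deletions then follows because the blocking pairs are spread over $\Omega(n)$ disjoint local neighborhoods, so deleting $o(n)$ agents kills only $o(n)$ of them.
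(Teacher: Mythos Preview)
Your proposal has a factual error in its central structural claim and, more importantly, is missing the quantitative step that actually drives the proof.

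\textbf{The error.} You assert that a one-sidedly signalled applicant $a$ ``is interviewed only by firms that rank $a$ outside their top $d$.'' This is false: $a$ is interviewed by the $d$ firms $a$ signalled (which, by the one-sided property, rank $a$ outside their top $d$) \emph{and} by the firms that signalled $a$ (which by definition rank $a$ \emph{inside} their top $d$). So your premise ``$\phi(a)$ ranks $a$ beyond position $d$'' need not hold --- $a$ may perfectly well be matched to a firm that ranks it highly --- and the chaining argument built on it collapses.

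\textbf{The real gap.} Write $H=H_1\cup H_2$, with $H_1$ the applicant-signal graph and $H_2$ the firm-signal graph; let $\Short_2$ be the set of applicants matched via $H_2$ (hence matched outside their own top $d$) and $\Long_1$ the set of firms matched via $H_1$. The blocking-pair construction needs $|\Short_2\cup\Short_U|=\Omega(n)$ \emph{and} $|\Long_1\cup\Long_U|=\Omega(n)$ simultaneously. Your one-sidedly-signalled dichotomy gives only ``for each $a\in\Short^\star$, either $a\in\Short_2\cup\Short_U$ or $\phi(a)\in\Long_1$,'' which sums to $|\Short_2\cup\Short_U|+|\Long_1|\ge(1-o(1))n$ --- perfectly consistent with, say, $|\Short_2\cup\Short_U|=o(n)$ and $|\Long_1|=(1-o(1))n$. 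That configuration produces no linear family of blocking pairs.

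\textbf{What the paper does.} Because applicants strictly prefer $\calN_1(a)$ to $\calN_2(a)\setminus\calN_1(a)$, applicant-proposing DA on $H$ factors as DA on $H_1$ followed by the unmatched applicants proposing through $H_2$. The paper then runs a consistency argument: it lower-bounds the final unmatched set $|\Short_U|$ via the stable matching on the induced graph $H_1''$ on $(\Short_1\cup\Short_U)\cup\Long$ (using the paper's \prettyref{cor:unmatched_constant}, with imbalance parameter $\theta_1=|\Short_1\cup\Short_U|/n$), and upper-bounds $|\Short_U|$ via the stable matching on $H_2'$ on $(\Short_2\cup\Short_U)\cup\Long$ (using \prettyref{prop:remove_gamma_d_log_n}, with parameter $\theta_2=|\Short_2\cup\Short_U|/n$). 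Comparing the two bounds forces $\theta_1\le 1-\Omega(1)$ and $\theta_2\ge\Omega(1)$; symmetry (firm-optimal matching) then gives $|\Long_1\cup\Long_U|=\Omega(n)$ as well. Only after this bootstrap does the blocking-pair count go through, via the order-statistic comparison of \prettyref{thm:single_firm_signal_sparse} (the $\Theta(d)$-th quantile of $\DistB$ versus the $\Omega(n)$-th), not a chaining argument on the top-$d$ preference digraph. Your step three is too vague to substitute for this; the ``counting argument on the $d$-out preference graph'' you gesture at does not obviously give a contradiction, and I do not see how to make it rigorous without essentially rediscovering the two-sided bound comparison.
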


The intuition for this failure is as follows. When running the applicant-proposing DA algorithm on the interview graph, the process effectively occurs in two stages. First, applicants propose to firms they signaled, since applicants strictly prefer those they signal to over those who signal to them. Due to signal sparsity (i.e., $d=o(\log n)$), many applicants remain unmatched. Second, these unmatched applicants propose to firms that signaled them. Since firms prefer applicants they signaled over those who signaled them, this creates long rejection chains, resulting in many agents being matched to partners from the ``wrong'' signaling direction, creating interim blocking pairs.

{However, this coordination failure disappears when pre-interview scores are absent, leading to a starkly different outcome:}
\begin{corollary}[Effectiveness on both-side signaling  for almost interim stability]\label{cor:single_both_signal_sparse_2}
Let $H$ denote an interview graph constructed based on both-side signaling with $\omega(1)\le d\le O\left(\polylog n \right)$. 
Suppose $n_{\Short} \le \left(1+o(1)\right)n_{\Long}$,  $p \ge \Omega (1)$ and the pre-interview scores are absent, i.e., $\DistB = \boldsymbol\delta_0$, while the post-interview scores follow a continuous distribution $\DistA$.\footnote{When $\DistB = \boldsymbol\delta_0$, each agent randomly selects $d$ partners to signal on both sides, given that the pre-interview utilities are the same across all agents. The continuous post-interview scores $\DistA$ guarantee the strict preferences of every agent with respect to the agents they interviewed with.} Then, with high probability, every stable matching on $H$ is almost interim stable.  
\end{corollary}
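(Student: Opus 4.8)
The plan is to reduce the statement to a bound on the number of ``dissatisfied'' agents in an arbitrary stable matching, and then to obtain that bound with the same sparse-random-graph estimates that drive \prettyref{thm:single_signal_sparse}. Since $\DistB=\boldsymbol\delta_0$, every pre-interview utility equals $0$, so for any interview graph $H$ and any stable matching $\Phi$ on $H$, an applicant $a$ has interim utility $0$ for every firm it did not interview with, while $U^H_{a,\phi(a)}=\Post_{a,\phi(a)}$ if $a$ is matched and $-\infty$ otherwise (and symmetrically for firms). Hence a non-edge $(a,j)$ is an interim blocking pair exactly when both $a$ and $j$ are \emph{dissatisfied}, \ie unmatched or matched to a partner with strictly negative post-interview score. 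Because $H$ has maximum degree $O(\polylog n)$, all but a vanishing fraction of (dissatisfied applicant, dissatisfied firm) pairs are non-edges; so $\Phi$ is almost interim stable once the set $D_\Short$ of dissatisfied applicants has size $o(n)$, while it is not almost interim stable whenever both $|D_\Short|$ and $|D_\Long|$ are $\Omega(n)$. Finally, writing $\Phi_\Long$ for the applicant-pessimal (equivalently, firm-optimal) stable matching, the lattice structure of stable matchings together with the rural-hospitals theorem give $D_\Short(\Phi)\subseteq D_\Short(\Phi_\Long)$ for every stable $\Phi$. It therefore suffices to show $|D_\Short(\Phi_\Long)|=o(n_\Short)$ with high probability.

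To prove this estimate, note that with $\DistB=\boldsymbol\delta_0$ both-side signaling makes every agent pick $d$ uniformly random partners, so $H$ is the union of two independent random $d$-out bipartite graphs, and since $\DistA$ is continuous, each agent's ranking of its interviewed neighbors is a uniformly random order, independent across agents. Call $j$ a \emph{good} neighbor of $a$ if $\Post_{a,j}\ge0$; a Chernoff bound with $p=\Omega(1)$ and $d=\omega(1)$ shows that all but $o(n)$ agents have $\omega(1)$ good neighbors, and the near-balance $n_\Short\le(1+o(1))n_\Long$ together with $d=\omega(1)$ implies every stable matching leaves only $o(n_\Short)$ applicants unmatched. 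For a \emph{matched} applicant $a$: since $a$ strictly prefers all of its good firms to all of its bad firms, $a$ is dissatisfied in $\Phi_\Long$ only if, in firm-proposing DA, no good firm of $a$ ever proposes to $a$, \ie every good firm of $a$ is matched to an applicant it strictly prefers to $a$. Unfolding this condition along rejection chains produces a large alternating structure supported on the ``good-edge'' subgraph $H^+$, which has minimum degree $\Omega(pd)=\omega(1)$ on \emph{both} sides with i.i.d.\ uniform preferences and therefore lies in the regime analyzed for \prettyref{thm:single_signal_sparse}; the same first-moment / local-tree estimates used there bound the number of applicants supporting such a structure by $o(n_\Short)$. (Equivalently, one may run the message-passing recursion of \prettyref{sec:message} and \prettyref{sec:local}, verify that the limiting dissatisfaction probability of a root applicant is $o(1)$, and transfer to $o(n)$ by concentration.)

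The delicate part is this last estimate. A bad edge is still perfectly usable for building a stable matching on $H$ --- every agent prefers any match, even a negative-score one, to being unmatched --- so proposals along bad edges can crowd an applicant out of all of its good firms, and one must show such crowding-out traps only $o(n_\Short)$ applicants. This requires controlling rejection chains in a setting where \emph{both} sides have uniformly random preferences over interviewed neighbors, unlike applicant-signaling in \prettyref{thm:single_signal_sparse} where only one side signals, and it is here that checking $n_\Short\le(1+o(1))n_\Long$ is exactly what keeps $H^+$ dense enough on the firm side for these chains to terminate. I expect adapting the rejection-chain / message-passing analysis to this symmetric graph to be the bulk of the work; the reduction in the first paragraph is routine.
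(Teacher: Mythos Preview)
Your reduction is correct and is essentially the paper's argument in different clothing. The key observation is that ``$a$ is satisfied in the firm-optimal matching $\Phi_\Long$'' coincides with ``some $j\in\calN_+(a)$ is available to $a$ on $H$'': availability means $j$ weakly prefers $a$ to its match in \emph{every} stable matching, which (since $\Phi_\Long$ is firm-optimal) is equivalent to $j$ weakly preferring $a$ to its $\Phi_\Long$-match; and if such a $j$ exists, stability of $\Phi_\Long$ forces $a$'s $\Phi_\Long$-match to be weakly preferred to $j$, hence to have nonnegative score. So your criterion is exactly \prettyref{lmm:positive_available}, and your lattice step is already absorbed into the definition of availability. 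From there the paper simply invokes the both-side analog of its one-sided availability estimate --- \prettyref{cor:calN_a_omega_1_both}, built on \prettyref{prop:both_remove_gamma_d_omega_1} --- which handles the union $H=H_1\cup H_2$ of two random $d$-out graphs with uniform preferences, and the remainder of the proof of \prettyref{thm:single_signal_sparse} goes through verbatim. The extension from one-sided to the union graph is a mechanical reparametrization of the branching-tree coupling (\prettyref{lmm:both_claim_1}), not the ``bulk of the work'' you anticipate.

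One genuine wobble: the alternating structure you describe is \emph{not} supported on the good-edge subgraph $H^+$. The condition ``every good firm $j$ of $a$ is matched to some applicant it strictly prefers to $a$'' is governed by $j$'s scores $A_{j,\cdot}$, which are independent of whether the edge to that applicant is good from the applicant's side; the chain therefore walks on all of $H$, not $H^+$. Your parenthetical ``equivalently, run the message-passing recursion of \prettyref{sec:message} and \prettyref{sec:local}'' is the correct formulation and is exactly what the paper does; drop the $H^+$ detour.
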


The intuition is that without pre-interview scores, agents' preferences become essentially random over their interviewed partners, eliminating the coordination problem that causes both-side signaling to fail when agents prefer those they signal over those who signal them.

These contrasting results suggest that the relative importance of pre-interview versus post-interview scores determines the effectiveness of both-side signaling. When pre-interview scores dominate, agents prefer partners they signal over those who signal them, causing instability. When post-interview scores dominate, preferences become more uniform over interviewed partners, enabling stability. We conjecture that there exists a threshold ratio between these scores that determines whether both-side signaling achieves almost interim stability.

The proofs of Theorems \ref{thm:single_signal_sparse} and \ref{thm:single_firm_signal_sparse}, Remark \ref{rmk:single_signal_sparse_identify}, and Corollaries \ref{cor:single_both_signal_sparse} and \ref{cor:single_both_signal_sparse_2} are deferred to \prettyref{sec:almost_interim_proof}.

\subsection{Perfect interim stability} \label{sec:single_dense}

In this subsection, we shift focus to perfect interim stability and analyze the conditions under which signaling mechanisms can achieve this stronger stability notion. The following theorem establishes conditions on the number of signals required for achieving perfect interim stability as a function of market imbalance.

\begin{theorem}[Effectiveness of one-side signaling for perfect interim stability] \label{thm:single_signal_dense}
Let $H$ denote an interview graph constructed based on the applicant-signaling mechanism. Suppose  
$n_{\Short}= \delta n_{\Long}$ for some $0< \delta \le 1$,  and  $d \ge \frac{8}{ \delta p} \log \left( \frac{1}{1-\delta + \delta^2/ n_\Long} \right) \log n_{\Short}$. 
\begin{itemize}
\item If $\delta<1$, every stable matching on $H$ is perfect interim stable with high probability.
\item If $\delta = 1$, the applicant-optimal stable matching on $H$ is perfect interim stable with high probability.
\end{itemize}
\end{theorem}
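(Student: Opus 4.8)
\textbf{Proof proposal for Theorem \ref{thm:single_signal_dense}.}

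The plan is to show that, under the stated lower bound on $d$, no interim blocking pair can survive in any stable matching (or in the applicant-optimal stable matching when $\delta=1$). The key structural fact I would exploit is that an interim blocking pair $(a,j)$ with $(a,j)\notin\calE(H)$ requires either $a$ to be unmatched or $\phi(a)$ to be worse for $a$ than the \emph{pre-interview} value $U_{a,j}^B$, and symmetrically for $j$; since pre-interview utilities of all firms (resp. applicants) are i.i.d.\ draws from $\DistB$ (all intrinsic values are zero), the event that $j$ beats $\phi(a)$ under $a$'s interim preferences is controlled by where $U_{a,j}^B$ sits in the empirical distribution of $a$'s realized utilities. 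So the first step is to establish a \emph{high-probability lower bound on the rank/quality of each agent's assigned partner} in every stable matching. Here I would invoke the short-side advantage: when $\delta<1$, standard results on unbalanced random markets (or the message-passing machinery developed later in the paper) guarantee that with high probability every applicant is matched and matched to one of its top $O(\log n_\Short / (\text{imbalance}))$ choices among interviewed firms; the quantity $\frac{1}{\delta p}\log\!\big(\tfrac{1}{1-\delta+\delta^2/n_\Long}\big)$ is exactly the scaling of that typical rank, which is why it appears in the hypothesis on $d$.

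The second step is a union-bound over potential interim blocking pairs. Fix an applicant $a$. Conditioned on $a$ being matched to its $k$-th best interviewed firm with $k\lesssim \frac{1}{\delta p}\log(\cdots)\log n_\Short$ as above, a firm $j$ not interviewed by $a$ forms a blocking pair on $a$'s side only if $U_{a,j}^B$ exceeds $a$'s utility for its match, i.e.\ only if $j$ is among $a$'s top $\approx d\cdot(\text{const})$ firms in pre-interview order that $a$ did \emph{not} signal — but $a$ signaled its top $d$, so any such $j$ would have to beat, in pre-interview value, the post-interview-adjusted value of one of $a$'s top-$d$ signaled firms. The factor $8$ and the $p$ in the denominator come from ensuring (via a Chernoff bound on a Binomial counting how many of $a$'s top-$d$ interviewed firms have non-negative post-interview score, each with probability $p$) that with probability $1-n^{-\omega(1)}$ enough of $a$'s signaled firms retain utility at least $U_{a,j}^B$ for all $j$ outside the top $d$. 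Then symmetrically one argues on the firm side that $j$ does not prefer $a$ to $\phi(j)$: here the roles are reversed and this is where $\delta<1$ matters, because firms are on the long side and are \emph{not} all matched, so one must rule out blocking pairs where $j$ is unmatched — which is precisely the content of ``every applicant is matched'' plus a counting argument that an unmatched firm's pre-interview scores cannot beat any matched applicant's realized match utility, again by a quantile/Chernoff estimate using the $d\gtrsim \log n$ signals each applicant sent.

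For the $\delta=1$ (balanced) case the statement weakens to the applicant-optimal stable matching, because in the balanced market a small number of agents are necessarily poorly matched in \emph{some} stable matchings; I would run the applicant-proposing deferred acceptance on $H$ and use that in the applicant-optimal matching every applicant gets (whp) a partner of rank $O(\log^2 n)$ among interviewed firms, together with the fact that $d\gtrsim \log n\cdot\log n = \log^2 n$ signals suffice to dominate the pre-interview threshold as in the unbalanced analysis. I expect the \textbf{main obstacle} to be the two-sided coupling in step two: the quality of $\phi(a)$ and the quality of $\phi(j)$ are not independent across a candidate blocking pair, and controlling the joint event cleanly — without circular conditioning between ``$a$ is well-matched'' and ``$j$ is well-matched'' — is the delicate part. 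The cleanest route is probably to condition on the entire interview graph $H$ and the \emph{set} of realized utilities, prove the rank bounds deterministically on a high-probability event depending only on $H$ and the i.i.d.\ scores, and then take the union bound over the $O(n^2)$ pairs; the exponential tail from the Chernoff bound with the generous constant $8$ is what makes this union bound go through.
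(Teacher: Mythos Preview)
Your rank-bound approach and the Chernoff bound on $|\calN_+(a)|=|\{j\in\calN(a):A_{a,j}\ge 0\}|$ are exactly what the paper uses. But you have introduced an unnecessary complication that creates a phantom obstacle. The paper's proof is \emph{entirely one-sided}: it never examines whether a firm prefers its match to a non-interviewed applicant. Under applicant signaling, $\calN(a)$ is precisely $a$'s top-$d$ firms by pre-interview utility, so $U_{a,j}^B < \min_{j'\in\calN(a)} U_{a,j'}^B$ for every $j\notin\calN(a)$. Hence if $\phi(a)$ is at least as good for $a$ as \emph{some} $j'\in\calN_+(a)$ --- equivalently, if $a$ is matched to one of its top $|\calN_+(a)|$ interviewed firms --- then $U_{a,\phi(a)}^H \ge U_{a,j'}^A \ge U_{a,j'}^B > U_{a,j}^B$ for all $j\notin\calN(a)$, and $a$ cannot be the applicant in any interim blocking pair. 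Since every interim blocking pair contains an applicant, this already rules them all out; there is nothing to check on the firm side, unmatched firms are irrelevant, and the ``two-sided coupling'' you flag as the main obstacle simply does not arise. The paper records this reduction as two short lemmas (one via availability of some $j\in\calN_+(a)$, one via the rank of $\phi(a)$ in $\calN(a)$), and the rest is bookkeeping.

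With that reduction in hand, the paper splits into two regimes. For $\delta\le 1-\Omega(1)$ it uses the local-tree/message-passing machinery to show that some $j\in\calN_+(a)$ is available to $a$ with probability $1-o(1/n)$ for each $a$, then union-bounds over applicants. For $1-o(1)\le\delta\le 1$ it invokes a global rank bound (via the balls-into-bins/rejection-chain analysis of random matching markets) that every applicant is matched to one of its top $\Theta(r_n)$ interviewed firms with $r_n=\tfrac{1}{\delta}\log\bigl(\tfrac{1}{1-\delta+\delta^2/n_\Long}\bigr)\log n_\Short$, and the Chernoff bound with the factor $8/p$ in the hypothesis on $d$ ensures $|\calN_+(a)|$ exceeds this threshold simultaneously for all $a$. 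The distinction between ``every stable matching'' ($\delta<1$) and ``applicant-optimal only'' ($\delta=1$) enters only in this second regime: the rank bound holds for all stable matchings when $\delta<1$ but only for the applicant-optimal one when $\delta=1$.
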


This theorem reveals a fundamental trade-off between market balance and signaling requirements. As markets become more imbalanced, fewer signals suffice for perfect interim stability. Specifically, when $\delta \ge 1- O\left(1/n\right)$ (nearly balanced markets), we require $\Omega\left(\log^2 n / p\right)$ signals, while for $\delta \le 1-\Omega(1)$ (strongly imbalanced markets), only $\Theta\left(\log n / p\right)$ signals are needed.

The following remark establishes that our bound is tight.

\begin{remark}\label{rmk:insuffcient_signal}
Consider $d \le \frac{1-\epsilon}{ \delta } \log \left( \frac{1}{1-\delta + \delta^2/ n_\Long} \right) \log n_{\Short}$ for any constant $\epsilon>0$. Then, with high probability, no stable matching on $H$ is perfect interim stable.
\end{remark}

This lower bound, which follows from~\cite[Theorems 1 and 2]{potukuchi2024unbalanced}, shows that our threshold is tight: when $p=\Theta(1)$, the number of signals required for perfect interim stability (Theorem \ref{thm:single_signal_dense}) and the threshold below which it fails (Remark \ref{rmk:insuffcient_signal}) differ only by a constant factor.

The following remark highlights a subtle distinction between balanced and imbalanced markets.

\begin{remark}\label{rmk:firm_optimal_signal}
In balanced markets ($n_\Short = n_{\Long}$), the firm-optimal stable matching (under applicant-signaling) may fail to be perfect interim stable with non-vanishing probability.
\end{remark}

We next establish limitations of long-side signaling for perfect interim stability.

\begin{theorem}[Failure of one-side signaling for perfect interim stability] \label{thm:single_firm_signal_dense}
Let $H$ denote an interview graph constructed based on the applicant-signaling mechanism. Suppose that $ n _{\Long} < n_{\Short} \le C n _{\Long}$ for some arbitrarily large constant $C>1$,  $d \le \frac{1}{4 C} n_{\Long}$, and that pre-interview scores outweigh post-interview scores in the $(\kappa_1,\kappa_2)$ range with $\kappa_1 = \left( 2  d  n_\Short / n_\Long \right) \vee \log^2 n $ and $\kappa_2 =n_{\Long}-d-1$. Then, with high probability, no stable matching on $H$ is perfect interim stable. 

Two examples where the $(\kappa_1,\kappa_2)$-condition is satisfied are:
\begin{itemize}
\item $\DistB$ is any continuous distribution and $\DistA$ is a degenerate distribution at zero ($\DistA=\boldsymbol{\delta}_0$).
\item If $d \le n^{\alpha}$ for any constant $\alpha<1$, $\DistB$ is a normal distribution and $\DistA$ is any bounded distribution with finite support.
\end{itemize}
\end{theorem}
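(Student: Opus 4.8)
The plan is to exhibit, with high probability, a linear-sized set of firms that remain unmatched in \emph{every} stable matching on $H$, and then show that the presence of these unmatched firms forces an interim blocking pair with some matched applicant. Since $n_\Short > n_\Long$ under applicant-signaling, the interview graph has roughly $d\,n_\Short$ edges distributed among $n_\Long$ firms, so a constant fraction of firms receive no signal from applicants; moreover, because each applicant signals only $d$ firms, the number of applicants who can ever match is at most (number of firms with a neighbor) $\le n_\Long$, and a counting argument (each firm with degree $\ge 1$ can absorb one applicant) shows at least $n_\Short - n_\Long$ applicants are unmatched in every stable matching. The first step is therefore a balls-into-bins / degree-concentration estimate: with $d n_\Short$ balls into $n_\Long$ bins, w.h.p.\ at least $\Omega(n_\Long)$ bins are empty — these firms are unmatched in every stable matching — and simultaneously w.h.p.\ at least $\Omega(n_\Short - n_\Long) = \Omega(n_\Short)$ applicants are unmatched. (The hypothesis $d \le \frac{1}{4C} n_\Long$ guarantees the edge count is a small enough constant fraction of $n_\Long^2$ that the empty-bin fraction is bounded below by a positive constant.)

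Next I would fix one such unmatched firm $j^\star$ and one matched applicant $a$ with partner $\phi(a) = j'$, chosen so that $a$ did interview with $j'$ but did \emph{not} interview with $j^\star$ (this holds automatically since $j^\star$ received no signal, and $a$ didn't signal $j^\star$ either, as $a$ only signals $d$ firms). For the pair $(a,j^\star)$ to be an interim blocking pair we need $j^\star \succ_a \phi(a)$ and $a \succ_{j^\star} \emptyset$; the latter is free by acceptability. For the former: $a$'s interim utility for $j^\star$ is the pre-interview score $\Pre_{a,j^\star}$ (no interview), while $a$'s utility for $j' = \phi(a)$ is the post-interview utility $\Pre_{a,j'} + \Post_{a,j'}$, which is distributed as $\DistA * \DistB$. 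The key observation is that $j'$ is the \emph{best} interviewed firm for $a$ among at most $d n_\Short / n_\Long$ firms $a$ could plausibly match to (or more precisely: $a$'s assigned firm in a stable matching is no better than $a$'s top choice among its $\le$ some bounded number of achievable firms — here I'd invoke the structure from Theorem~\ref{thm:single_signal_dense}'s analysis, or directly bound the rank), hence $U_{a,\phi(a)}^H$ is stochastically below roughly the $\kappa_1$-th quantile of $\DistA * \DistB$; meanwhile among the $\Omega(n_\Long)$ unmatched firms, $a$'s best pre-interview score is near the top quantile of $\DistB$ over that many draws, i.e.\ above the $\kappa_2$-th quantile of $\DistB$. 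The $(\kappa_1,\kappa_2)$-outweighing hypothesis is exactly the statement that the former quantile is below the latter, so w.h.p.\ there is an unmatched firm $j^\star$ with $\Pre_{a,j^\star} > U_{a,\phi(a)}^H$, giving the blocking pair.

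The main obstacle is the second step's quantitative control of $U_{a,\phi(a)}^H$ — i.e.\ arguing that a \emph{positive fraction} of matched applicants have their assigned firm ranked no higher than $\kappa_1 = (2d n_\Short/n_\Long)\vee\log^2 n$ among all firms, uniformly over all stable matchings. For a single DA outcome this follows from rejection-chain length bounds, but "every stable matching" requires the lattice structure of stable matchings (the applicant-pessimal matching) plus a union bound over the (at most exponential, but controllably so) family, or — more in the spirit of this paper — the message-passing / local-neighborhood characterization of Section~\ref{sec:message}, which directly reads off each applicant's achievable set from its $O(\log n)$-hop neighborhood in $H$. Handling the dependence between "which applicants are matched / their ranks" and "which firms are unmatched and their pre-interview scores to $a$" also needs care; I would decouple them by first revealing only the interview graph $H$ and the post-interview scores (which determine the matching and the ranks $\kappa_1$), then revealing the pre-interview scores $\Pre_{a,\cdot}$ to the $\Omega(n_\Long)$ unmatched firms conditionally, so that the top-quantile estimate is an independent fresh computation. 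The two worked examples then follow: for $\DistA = \boldsymbol\delta_0$, $\DistA*\DistB = \DistB$ and any continuous $\DistB$ satisfies the quantile inequality since $\kappa_1 \ll \kappa_2$; for $\DistB$ normal and $\DistA$ bounded, $\DistA * \DistB$ has Gaussian tails shifted by $O(1)$, and the Gaussian quantile gap between rank $\kappa_1 = \mathrm{poly}\log n$ (or $n^\alpha$) and rank $\kappa_2 = \Theta(n_\Long)$ dwarfs the $O(1)$ shift, again yielding the inequality.
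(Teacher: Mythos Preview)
Your proposal has the blocking pair oriented the wrong way, and this is fatal rather than cosmetic. You look for an unmatched \emph{firm} $j^\star$ and argue that some matched applicant $a$ prefers $j^\star$ to $\phi(a)$. But under applicant-signaling with $n_\Short>n_\Long$, the guaranteed scarce side is the firm side: what is guaranteed is an unmatched \emph{applicant} $a^\star$ (at least $n_\Short-n_\Long$ of them), not unmatched firms. Your balls-into-bins claim that ``a constant fraction of firms receive no signal'' is false once $d=\omega(1)$: the expected number of signals per firm is $dn_\Short/n_\Long=\Theta(d)$, so the empty-bin fraction is $e^{-\Theta(d)}=o(1)$; the hypothesis $d\le\frac{1}{4C}n_\Long$ does not help here. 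Worse, even granting an unmatched firm, your quantile estimate for $U_{a,\phi(a)}^H$ is wrong in a structural way: because $a$ signals its \emph{top} $d$ firms by pre-interview score, $\phi(a)\in\calN(a)$ already satisfies $\Pre_{a,\phi(a)}>\max_{j\notin\calN(a)}\Pre_{a,j}$; hence $U_{a,\phi(a)}^H=\Pre_{a,\phi(a)}+\Post_{a,\phi(a)}$ is \emph{not} a fresh $\DistA*\DistB$ draw but a top-$d$ order statistic of $\DistB$ plus $\Post$. The whole point of the $(\kappa_1,\kappa_2)$-outweighing condition is to compare a max over $\kappa_1$ \emph{unselected} $\DistA*\DistB$ draws against a max over $\kappa_2$ $\DistB$ draws, and your side of the market has selection built in.

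The paper's argument runs in the opposite direction and exploits exactly this asymmetry. Fix any unmatched applicant $a^\star$ (exists by the Rural Hospital Theorem, same set in every stable matching). Let $\calU$ be the $n_\Long-d=\kappa_2+1$ firms $a^\star$ did \emph{not} signal. For each firm $j$, its neighborhood $\calN(j)$ is a \emph{random} set of applicants (those who happened to rank $j$ in their top $d$), with $|\calN(j)|<\kappa_1$ w.h.p.\ by a Chernoff bound; hence $U_{j,\phi(j)}^A\le\max_{a\in\calN(j)}U_{j,a}^A$ is stochastically below the max of $\kappa_1-1$ fresh $\DistA*\DistB$ draws. Now \emph{choose} $j^\star=\argmax_{j\in\calU}U_{j,a^\star}^B$, so that $U_{j^\star,a^\star}^B$ is the max of $\kappa_2+1$ fresh $\DistB$ draws. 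The $(\kappa_1,\kappa_2)$ condition then gives $U_{j^\star,a^\star}^B>U_{j^\star,a_{j^\star}}^A\ge U_{j^\star,\phi(j^\star)}^A$ w.h.p., and since $a^\star$ is unmatched, $(a^\star,j^\star)$ is an interim blocking pair in every stable matching. No control over applicants' ranks in their matched firms, no lattice-of-stable-matchings argument, and no decoupling of randomness is needed beyond the observation that firm-side scores toward $a^\star$ are independent of the interview graph.
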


The failure mechanism is straightforward: with more applicants than firms, at least one applicant must remain unmatched. When pre-interview scores dominate, some matched firms prefer this unmatched applicant (whom they haven't interviewed) over their current matches, creating interim blocking pairs and precluding perfect interim stability. 

Finally, we examine both-side signaling with dense signals.

\begin{corollary}[Effectiveness of both-side signaling for perfect interim stability] \label{cor:single_both_signal_dense}
Let $H$ denote an interview graph constructed based on both-side signaling. Suppose $n_\Short = n_\Long$ and $d \ge \frac{8}{p} \log^2 n_{\Short}$. Then, either the applicant-optimal or the firm-optimal stable matching is perfect interim stable with high probability. 
\end{corollary}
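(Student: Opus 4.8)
The plan is to reduce to the balanced case ($\delta=1$) of Theorem~\ref{thm:single_signal_dense}. Decompose the both-side interview graph as $H=\vec H_{\Short}\cup\vec H_{\Long}$, where $\vec H_{\Short}$ carries the applicants' signal edges and $\vec H_{\Long}$ the firms'. Since all intrinsic values are zero and pre-interview scores are \iid continuous, each applicant's top-$d$ set is a uniformly random $d$-subset of firms, so $\vec H_{\Short}$ is a uniform $d$-out bipartite graph, independent of the analogously distributed $\vec H_{\Long}$. Consequently $H\supseteq\vec H_{\Short}$, every applicant still has exactly $d$ outgoing signal edges in $H$, and a Chernoff bound gives that with high probability every vertex of $H$ has degree in $[d,3d]=\Theta(\log^2 n/p)$: its $d$ outgoing edges plus $\Binom(\cdot,\,d/n_{\Long})\approx\Pois(d)$ incoming ones. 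In particular $H$ is locally tree-like with bounded degree, the regime the analysis behind Theorem~\ref{thm:single_signal_dense} is built for.

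First I would dispatch perfectness: with $d\ge\tfrac8p\log^2 n\gg\log n$ signals per agent in a balanced market, the first-moment estimates behind the $\delta=1$ case of Theorem~\ref{thm:single_signal_dense} give that whp every stable matching on $H$ matches all agents (no agent can be rejected by all $\Theta(d)$ of its neighbors), so the applicant-optimal $\mu_A$ and firm-optimal $\mu_F$ are perfect. The core step is a rank-localization bound for the proposing side. For $a\in\Short$ let $\tau_a$ be the pre-interview utility of $a$'s best non-signaled firm --- the $(d+1)$-st largest of $\{\Pre_{a,j}\}_j$ --- and set $S_a=\{f:U^A_{a,f}\ge\tau_a\}$. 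Then $S_a$ is a prefix of $a$'s post-interview ranking and contains every firm $a$ signaled whose post-interview score is non-negative, of which there are $\ge pd/2\ge4\log^2 n$ whp. In applicant-proposing DA on $H$, $a$ ends up matched outside $S_a$ only if rejected by every such firm; and each such firm, for which $a$ is a typical applicant, rejects $a$ with probability $1-\Omega(1/\log n)$, its final partner being the best of its $O(\log n)$ proposers. Hence $a$ is matched outside $S_a$ with probability at most $(1-\Omega(1/\log n))^{pd/2}=n^{-\Omega(1)}$ --- the factor $\tfrac8p$ in the hypothesis is exactly what forces the exponent to beat $\log n$ --- and a union bound gives that whp $\mu_A(a)\in S_a$ for all $a$. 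Finally, the blocking-pair step: if $(a,j)$ were an interim blocking pair for $\mu_A$ it could not be an edge of $H$ (that would contradict stability of $\mu_A$ on $H$), so $a$ did not signal $j$, and therefore $U^{H}_{a,j}=\Pre_{a,j}\le\tau_a\le U^A_{a,\mu_A(a)}=U^{H}_{a,\mu_A(a)}$, so $a$ does not strictly prefer $j$ --- a contradiction. Thus $\mu_A$ is perfect interim stable whp; running firm-proposing DA and repeating the argument verbatim gives the same for $\mu_F$, which establishes the ``either/or'' conclusion (indeed both hold).

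The main obstacle is the rank-localization bound on $H$. Unlike the $d$-regular graph abstracted by Theorem~\ref{thm:single_signal_dense}, $H$ is irregular and its neighborhoods mix signal edges (one endpoint among the other's top-$d$) with reverse-signal edges (a typical partner), so one must establish, with sharp enough constants: (i) no firm receives more than $O(\log n)$ proposals in DA --- the constant must be small enough that $\tfrac{pd/2}{\log n}$ still dominates $\log n$, which is why the hypothesis pins down the multiple $\tfrac8p$ rather than a generic $\omega(\log^2 n)$; (ii) every applicant keeps $\ge pd/2$ candidates in $S_a$ despite the post-interview reshuffling of its list; and (iii) the per-firm rejection events are close enough to independent for the product bound. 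These are precisely the estimates carried out in the $\delta=1$ part of the proof of Theorem~\ref{thm:single_signal_dense}; importing them requires only checking that $H$'s extra incoming-signal edges enlarge acceptable sets (hence $S_a$) without altering signaled-firm utilities and change per-firm loads by at most a constant factor. Alternatively --- and more naturally for the non-regular, mixed-edge structure --- one runs the message-passing/local-neighborhood analysis of Section~\ref{sec:message} directly on $H$.
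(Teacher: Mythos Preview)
Your high-level strategy matches the paper's: establish that applicant-proposing DA on $H$ matches each $a$ inside a top prefix $S_a$ of its interim list, then argue no non-edge can block (your final step is exactly the content of Lemma~\ref{lmm:positive_available_addition}). The paper organizes the reduction differently---it first dispatches the two extreme cases where post-interview (resp.\ pre-interview) scores are absent, each of which reduces cleanly to Theorem~\ref{thm:single_signal_dense} on the one-sided graph, and only then treats the general case---but the destination is the same.

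The substantive gap is in your per-firm acceptance step. You claim each firm $f$ that $a$ signaled rejects $a$ with probability $1-\Omega(1/\log n)$ because $f$ sees $O(\log n)$ proposers and $a$ is a ``typical'' one. But in both-side signaling $f$'s ranking over its neighbors is \emph{not} exchangeable: applicants that $f$ itself signaled have $B_{f,\cdot}$ drawn from the top-$d$ order statistics, so if any of them proposes to $f$ it is likely preferred to $a$ regardless of the proposer count. Your closing paragraph flags degree irregularity and load as the obstacles but misses this preference asymmetry. The paper sidesteps it by bounding the number of proposals \emph{per applicant} (not per firm) at $(1+\epsilon)\log^2 n_{\Short}$: the key observation is that, conditional on $H$, the sequence of proposers arriving at any firm $j$ is independent of $j$'s own ranking, so the Proposition~\ref{prop:perfect_stable} analysis transfers; then $|\calN_+(a)|\ge\tfrac{pd}{4}\ge 2\log^2 n_{\Short}$ exceeds this proposal count and Lemma~\ref{lmm:positive_available_addition} concludes without ever needing a per-firm acceptance rate. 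Finally, your suggested message-passing alternative is not how the paper handles the balanced dense regime; that case goes through Proposition~\ref{prop:perfect_stable} (Potukuchi-style rejection-chain analysis), not the local-tree machinery of Section~\ref{sec:message}.
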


To understand this result, consider two extreme cases. When post-interview scores are absent, agents strictly prefer partners they signaled over those who signaled them. With $d$ sufficiently large, applicants propose only to firms they signaled, reducing the problem to one-side signaling. When pre-interview scores are absent, each applicant's preference ordering over their interviewed partners is a uniform random permutation—again reducing to one-side signaling with approximately $2d$ signals. The key insight is that sufficient signals avoid the coordination failures of sparse signaling (when $d=o(\log n)$, see~Corollary \ref{cor:single_both_signal_sparse}).

The proofs of Theorems \ref{thm:single_signal_dense} and \ref{thm:single_firm_signal_dense}, Remarks \ref{rmk:insuffcient_signal} and \ref{rmk:firm_optimal_signal}, and Corollary \ref{cor:single_both_signal_dense} are deferred to \prettyref{sec:perfect_interim_proof}. 
We omit the proof of Remark \ref{rmk:insuffcient_signal} as it follows directly from~\cite[Theorems 1 and 2]{potukuchi2024unbalanced}.

\subsection{Incentive compatibility}\label{sec:incentive}

We investigate the incentive compatibility of the signaling mechanisms. Assuming that each agent can send at most $d$ signals and only possesses distributional knowledge of others' preferences (without knowing their realizations), we say that an agent signals truthfully if the agent signals its top $d$ preferred partners based on its pre-interview utilities. The following theorem demonstrates that truthful signaling constitutes a Bayes-Nash equilibrium in single-tiered markets.

\begin{theorem}[Incentive compatibility in single-tiered markets]\label{thm:incentive_single} 
Suppose each agent can send at most $d = \omega(1)$ signals and possesses only distributional knowledge of other agents' preferences without knowing their realizations. Then truthful signaling is a Bayes-Nash equilibrium. 
\end{theorem}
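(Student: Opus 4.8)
The plan is to show that a unilateral deviation from truthful signaling can only hurt an agent, so truthful signaling is a best response when everyone else is truthful. I would fix an arbitrary agent, say an applicant $a$, condition on the preferences (pre- and post-interview scores) of everyone else being drawn truthfully, and compare $a$'s interim expected utility under truthful signaling to that under any alternative signaling strategy. The key structural fact I would exploit is that, under the mechanism, an interview between $a$ and $j$ occurs if \emph{either} $a$ signals $j$ \emph{or} $j$ signals $a$; hence the set of firms $a$ can end up matched with is the union of $a$'s own signal set with the (random, deviation-independent) set of firms that signal to $a$. Changing $a$'s signal set only changes which firms in the first part of this union become available, and — because $d = \omega(1)$ and the other side signals according to true preferences — with high probability $a$'s top-$d$ pre-interview firms are essentially ``fresh'' choices that a deviation would forfeit.

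The main steps, in order, are: (i) \textbf{Reduce to a matching-market domination argument.} Argue that in the stable matching formed on the realized interview graph, $a$'s assigned partner is (weakly) determined by $a$'s available set of interview partners in the sense that enlarging or improving $a$'s available set can only (weakly) improve $a$'s match — this is the standard monotonicity of stable matchings in one agent's acceptable/available set, applied to the complete-information stage after scores are revealed. (ii) \textbf{Compare available sets.} Truthful signaling guarantees that $a$'s available set includes $a$'s top-$d$ pre-interview firms; any deviation replaces some of these with firms $a$ likes less ex ante. The delicate point is that the realized \emph{post-interview} utility, not the pre-interview one, determines the final match, so I must show that from $a$'s interim (Bayesian) perspective, having the top-$d$ pre-interview firms in the available set stochastically dominates having any other size-$d$ set, after integrating out the post-interview scores $\Post_{a,\cdot}$. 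Here I would use that $\Post_{a,j}$ are i.i.d.\ across $j$ and independent of $\Pre_{a,j}$, so conditionally on the available set the distribution of achievable post-interview utilities is a function only of the multiset of pre-interview scores in that set; and the convolution $\DistA*\DistB$ evaluated at $a$'s top-$d$ pre-interview values first-order stochastically dominates the convolution at any other $d$ values. (iii) \textbf{Handle the firms-signal-$a$ contribution.} Show this contribution is independent of $a$'s action (other agents are truthful, and their signals to $a$ depend only on their own scores toward $a$, which $a$ does not control), so it can be treated as common randomness and does not affect the comparison. (iv) \textbf{Conclude via averaging.} Integrate the pointwise domination over the randomness of everyone else's scores to get that $a$'s Bayes expected interim utility is maximized by truthful signaling; since $a$ was arbitrary and the same argument applies symmetrically to firms, truthful signaling is a Bayes-Nash equilibrium.

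The step I expect to be the main obstacle is (ii) combined with (i): making precise the claim that ``a better available set yields a weakly better stable-match partner'' in a way that survives the Bayesian averaging. The subtlety is that improving $a$'s available set can trigger rejection chains elsewhere that, in principle, route back and change which of $a$'s other available firms are ``free'' — so the naive monotonicity statement needs the correct form. The clean way around this is to invoke the known comparative-statics/monotonicity result for stable matchings under truncation or extension of a single agent's preference list (adding an acceptable partner, or swapping in a more-preferred one, weakly improves that agent across \emph{all} stable matchings, or at least the agent-optimal one), applied after the scores are realized; then the only probabilistic content left is the stochastic-dominance comparison of available sets, which reduces to a one-dimensional statement about order statistics of $\DistB$ and the monotonicity of $\DistA*\DistB$. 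The role of $d = \omega(1)$ is to ensure that with high probability the deviation actually costs $a$ something (e.g., $a$'s truthful top-$d$ set is not already fully ``covered'' by firms signaling to $a$, so the deviation is not vacuous), which I would establish by a short first/second-moment estimate on the overlap between $a$'s top-$d$ firms and the set of firms that signal to $a$.
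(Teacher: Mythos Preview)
Your plan has a genuine gap at exactly the point you flag as the obstacle: you want to pass from ``$S^*$ gives a stochastically better post-interview utility profile than $S'$'' to ``$a$'s stable-match partner is weakly better,'' and you propose to do this via a pointwise comparative-statics lemma applied after scores are revealed. But the swap version of that lemma is false. For fixed realized preferences, replacing one interviewed firm by another with strictly higher post-interview utility to $a$ can strictly worsen $a$'s stable match (it can create a new stable matching in which $a$ is bumped by a rejection chain that did not exist before). The standard truncation/extension comparative statics only cover \emph{adding} an acceptable partner, not swapping one for another, so the fallback you name does not cover the case you need. Once pointwise monotonicity fails, your steps (i)--(ii) do not compose: you have stochastic dominance of $\{B_{a,j}+A_{a,j}:j\in S\}$ but no bridge from that to stable-matching outcomes.

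The paper's argument bypasses all of this with a pure exchangeability observation you never invoke. From $a$'s interim viewpoint, every random variable other than $a$'s own $B_{a,\cdot}$ is i.i.d.\ across firm labels; hence for any signal set $S$ of size $d$, the distribution of ``$a$'s match rank within $a$'s own post-interview ordering of $S$'' (including the event ``unmatched'') is the \emph{same}---it does not depend on which $d$ firms are in $S$, nor on $a$'s realized ordering of them. This decouples the market structure from $a$'s utility and reduces the comparison to: for each fixed rank $r$, which $S$ maximizes the expected $r$-th order statistic of $\{B_{a,j}+A_{a,j}:j\in S\}$? A trivial coupling of the $A$-scores shows the top-$d$ set $S^*$ dominates every other $S$ order-statistic by order-statistic, and the conclusion follows. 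No stable-matching monotonicity lemma is needed, and neither is $d=\omega(1)$ nor your overlap estimate---those concerns are red herrings here.
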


This result establishes that agents have no incentive to misrepresent their preferences when signaling in homogeneous markets. The proof leverages the fact that deviating from truthful signaling reduces an agent's probability of matching with their most preferred partners.

The proof of Theorem \ref{thm:incentive_single} is deferred to \prettyref{sec:incentive_proof}.

\subsection{Extension to Multi-tiered Markets}\label{sec:multi}

We now extend our analysis to multi-tiered markets, where observable heterogeneity creates a hierarchical structure. Formally, we partition applicants $\Short$ and firms $\Long$ into hierarchical tiers: $\Short = \bigcup_{\ess=1}^m \Short_\ess$ and $\Long = \bigcup_{\kappa=1}^\ell \Long_\kappa$, where $m$ and $\ell$ denote the number of tiers for applicants and firms, respectively.

The tier number is positively correlated with desirability, meaning agents in higher-ranked tiers are considered more desirable. In particular, 
for any $1\le \ess\le m$, every applicant $a\in \Short_\ess$ has an intrinsic value $v_a$ equal to $s$, i.e., $v_a \equiv s$. Similarly, for any $ 1\le \kappa\le \ell$, every firm $j\in \Long_\kappa$ has an intrinsic value $v_j$ equal to $\kappa$, i.e., $v_j \equiv \kappa$. This framework can be simplified to a single-tiered market scenario when $m=\ell=1$.

For any $1\le \ess \le m$, the proportion of applicants in tier $\Short_\ess$ is denoted by $\alpha_\ess$, where $\boldsymbol{\alpha}=\left(\alpha_\ess\right)_{\ess=1}^m$, 
$|\Short_\ess| = \alpha_\ess n_{\Short}$, and $\sum_{\ess=1}^m \alpha_\ess = 1$. Similarly, for any $1\le \kappa \le \ell$, the proportion of firms in tier $\Long_\kappa$ is denoted by $\beta_\kappa$, where $\boldsymbol{\beta}=\left(\beta_\kappa\right)_{\kappa=1}^\ell$,
$|\Long_\kappa| = \beta_\kappa n_{\Long}$, and $\sum_{\kappa=1}^\ell \beta_\kappa = 1$. 
To complement Assumption~\ref{assump:general}, we introduce two additional assumptions. 

\begin{assumption}[Non-vanishing tier size]\label{assump:non_vanishing}
We assume $ \{ \alpha_{\ess}\}_{\ess=1}^{m} , \{\beta_{\kappa}\}_{\kappa=1}^{\ell} \ge \Omega(1)$.
\end{assumption}

\begin{assumption}[Boundedness]\label{assump:bounded}
$\DistA$ and $\DistB$ are bounded distributions with $\mathrm{Support}(\DistA), \mathrm{Support}(\DistB) \subset \left[-M,M\right]$, where $0\le M < \frac{1}{4}$.
\end{assumption}

Throughout this subsection, we assume that Assumptions \ref{assump:general}, \ref{assump:non_vanishing} and \ref{assump:bounded} hold unless otherwise specified.
The non-vanishing Assumption \ref{assump:non_vanishing} ensures that as the market scales up, the relative sizes of the tiers remain stable and do not become negligibly small as the market grows. The boundedness assumption \ref{assump:bounded} ensures that the tier market structure is maintained, as it guarantees that applicants and firms consistently prefer counterparts in higher tiers, regardless of whether an interview has taken place or not.\footnote{For example, for any applicant $a\in\Short$, and firms $j_1 \in \Long_{\kappa_1}$ and $j_2\in \Long_{\kappa_2}$ with $\kappa_1 > \kappa_2$, we have $j_1 \succ_a j_2$, since $U_{a,j_1}^H - U_{a,j_2}^H \ge \left(\kappa_1 - \kappa_2 \right)-|A_{a,j_1} -A_{a,j_2}|- |B_{a,j_1}- B_{a,j_2}|>0$, given that $\kappa_1 - \kappa_2 \ge 1$, and $|A_{a,j_1} -A_{a,j_2}|, |B_{a,j_1}- B_{a,j_2}|< \frac{1}{2}$ by Assumption~\ref{assump:bounded}.}   
Later, we also extend our results by relaxing Assumption \ref{assump:bounded} to allow pre-interview and post-interview scores that may not preserve tier structure (see Remarks \ref{rmk:relaxed_bounded_sparse} and \ref{rmk:relaxed_bounded_dense}).

For this model, we study a  \emph{multi-tiered signaling mechanism} \citep{ashlagi2020clearing} where each agent signals its top $d$ preferred partners within its target tier. Target tiers are determined by \emph{dominance relationships}: 
an applicant tier $\Short_\ess$ dominates a firm tier $\Long_\kappa$,  
if the total number of applicants in $\Short_\ess$ and above is less than or equal to the total number of jobs in $\Long_\kappa$ and above, i.e., $\sum_{\ess'=\ess}^{m} |\Short_{\ess'}| \le \sum_{\kappa'=\kappa}^{\ell} |\Long_{\kappa'}|$. Conversely, a job tier $\Long_\kappa$ dominates an applicant tier $\Short_\ess$ if $\sum_{\ess'=\ess}^{m} |\Short_{\ess'}| \ge \sum_{\kappa'=\kappa}^{\ell} |\Long_{\kappa'}|$.
For an applicant tier $\Short_\ess$, its target tier $\calT(\Short_\ess)$ is the highest-ranked firm tier that it dominates; for a firm tier $\Long_\kappa$, its target tier $\calT(\Long_\kappa)$ is the highest-ranked applicant tier that it dominates. If a tier does not dominate any tier on the opposite side, we denote its target tier as $\calT(\cdot) = \emptyset$.
Figure \ref{fig:target_tier} illustrates an example of the multi-tiered signaling mechanism.

  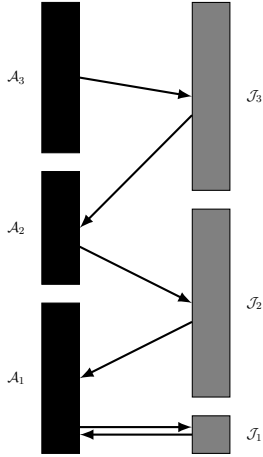
\begin{figure}[ht]
    \centering
    \begin{tikzpicture}[>=latex, arrows={-Triangle[length=3mm, width=2mm]}, scale=0.5, transform shape]
    \tikzstyle{leftbox} = [rectangle, fill=black, minimum width=1cm, draw=black]
    \tikzstyle{rightbox} = [rectangle, fill=gray, minimum width=1cm, draw=black]
    \tikzstyle{arrow} = [->, thick]

    \node[leftbox, minimum height=4cm] (A3) at (0,0) {};
    \node[leftbox, minimum height=3cm] (A2) at (0,-4 cm) {};
    \node[leftbox, minimum height=4cm] (A1) at (0,-8 cm) {};

    \node[rightbox, minimum height=5cm] (J3) at (4,-0.5) {};
    \node[rightbox, minimum height=5cm] (J2) at (4,-6 cm) {};
    \node[rightbox, minimum height=1cm] (J1) at (4,-9.5 cm) {};

    \node (A3Label) [left= 3mm of A3] {$\mathcal{A}_3$};
    \node (A2Label) [left=3mm of A2] {$\mathcal{A}_2$};
    \node (A1Label) [left=3mm of A1] {$\mathcal{A}_1$};

    \node (J2Label) [right=3mm of J3] {$\mathcal{J}_3$};
    \node (J2Label) [right=3mm of J2] {$\mathcal{J}_2$};
    \node (J1Label) [right=3mm of J1] {$\mathcal{J}_1$};

    \draw[arrow] ([yshift=0 cm]A3.east) -- ([yshift=0 cm]J3.west);
    \draw[arrow] ([yshift=-0.5 cm]J3.west) -- ([yshift=0 cm]A2.east);
    \draw[arrow] ([yshift=-0.5 cm]A2.east) -- ([yshift=0 cm]J2.west);
    \draw[arrow] ([yshift=-0.5 cm]J2.west) -- ([yshift=0 cm]A1.east);
    \draw[arrow] ([yshift= -1.3 cm]A1.east) -- ([yshift= 0.2 cm]J1.west);
    \draw[arrow] ([yshift= 0 cm]J1.west) -- ([yshift=-1.5cm]A1.east);
    
    \end{tikzpicture}
    \caption{Example of the multi-tiered signaling mechanism  with the applicant tiers $(\mathcal{A}_1, \mathcal{A}_2, \mathcal{A}_3)$ and firm tiers $(\mathcal{J}_1, \mathcal{J}_2, \mathcal{J}_3)$ arranged in order of desirability. The length of each box represents the size of the respective tier, with $\mathcal{A}_3$ being the highest-ranked applicant tier and $\mathcal{J}_3$ being the highest-ranked firm tier.  Arrows indicate target tiers. 
    }\label{fig:target_tier}
\end{figure}

To further characterize the structure of multi-tiered markets and its impact on the signaling mechanism, we introduce the concept of market general imbalance. This notion characterizes the overall balance of applicants and firms across different tiers and its implications for matching outcomes.
For any applicant tier $\Short_\ess$ and firm tier $\Long_\kappa$, we define their cumulative counts difference as:
$$
\left|\left|\cup_{\ess'=\ess}^{m} \Short_{\ess'} \right| -  |\cup_{\kappa'=\kappa}^{\ell} \Long_{\kappa'}|\right| =
\left|\sum_{\ess'=\ess}^{m} \alpha_{\ess'} n_{\Short} - \sum_{\kappa'=\kappa}^{\ell} \beta_{\kappa'} n_{\Long}\right|\,.
$$
A market is said to be generally imbalanced if, for any applicant tier $\Short_\ess$ and firm tier $\Long_\kappa$, their cumulative counts difference is always positive. It is equivalent to saying that there does not exist any pair of an applicant tier and a firm tier that are the target tiers of each other, or no applicant tier and firm tier simultaneously dominate each other. A market is $\gamma$-generally imbalanced if, for any applicant tier $\Short_\ess$ and firm tier $\Long_\kappa$, their cumulative counts difference is always lower bounded by $\gamma n$ for some $0<\gamma <1$.\footnote{Note that for a $\gamma$-generally imbalanced market, we must have $\gamma \ge \Omega \left(\frac{1}{n} \right)$, given that the difference between the cumulative counts of any two tiers is at least $1$.}


We now characterize when the multi-tiered signaling mechanism achieves different levels of stability—from almost interim stability to perfect interim stability—and examine its incentive properties. All proofs are deferred to Appendix \ref{sec:multi_proof}.


\paragraph{Almost interim stability.} 
The following corollary establishes that when the market is generally imbalanced, the multi-tiered signaling mechanism leads to almost interim stable matchings.
\begin{corollary}[Effectiveness of multi-tiered signaling for almost interim stability]
\label{cor:multi_signal_sparse}
Consider a multi-tiered two-sided market with applicants $\Short = \bigcup_{\ess=1}^m \Short_{\ess}$ and firms $\Long = \bigcup_{\kappa=1}^\ell \Long_\kappa$ with $n_{\Short} \le n_{\Long}$.
Let $H$ denote an interview graph constructed based on the multi-tiered signaling mechanism with $\omega (1) \le d\le O\left(\polylog n \right)$. Then, if the market is generally imbalanced and $p = \omega\left(1/\log d\right)$, every stable matching on $H$ is almost interim stable with high probability.
\end{corollary}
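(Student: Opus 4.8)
The plan is to reduce the multi-tiered claim to the single-tiered result (Theorem~\ref{thm:single_signal_sparse}) by decomposing the interview graph $H$ into the sub-markets defined by pairs of target tiers, and then arguing that the boundedness Assumption~\ref{assump:bounded} decouples these sub-markets. First I would invoke the structural consequence of Assumption~\ref{assump:bounded}: since the intrinsic-value gap between consecutive tiers is at least $1$ while all idiosyncratic scores lie in $(-\tfrac14,\tfrac14)$ on each side, every agent ranks all partners in a higher target-compatible tier above all partners in a lower one, regardless of interviews. Consequently, in any stable matching on $H$, the matched edges respect the dominance/target-tier structure: an applicant in tier $\Short_\ess$ is matched (if at all) into $\calT(\Short_\ess)$ or, symmetrically, receives a signal from its dominating firm tier, and cross-tier blocking pairs of the ``wrong direction'' are impossible because the higher-tier side strictly prefers its own-level partners. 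So $H$ effectively splits into a disjoint collection of bipartite ``layers,'' each layer being an applicant tier together with the firm tier it targets (or a firm tier with the applicant tier it targets), each carrying the applicant-signaling or firm-signaling mechanism with $d$ signals.

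Next I would apply Theorem~\ref{thm:single_signal_sparse} to each layer. General imbalance is exactly the hypothesis that guarantees no applicant tier and firm tier are mutual target tiers, so within each layer the signaling side is (weakly) the short side of that layer: the cumulative-count inequality defining the target tier forces $n_{\text{signalers}} \le (1+o(1)) n_{\text{receivers}}$ in that sub-market (in fact with a genuine, though possibly $o(n)$, gap because the difference is a positive integer). By Assumption~\ref{assump:non_vanishing} each tier has size $\Theta(n)$, so each layer is itself a linear-size market and the ``with high probability'' conclusions hold per layer and survive a union bound over the constantly-many ($\le m+\ell$) layers. Theorem~\ref{thm:single_signal_sparse} then yields that every stable matching restricted to each layer is almost interim stable; removing the $o(n_\ess)$ exceptional applicants from each layer and taking the union gives an $o(n)$ exceptional set for the whole market, and Remark~\ref{rmk:single_signal_sparse_identify} lets us exhibit this set explicitly. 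The relaxation $p = \omega(1/\log d)$ rather than $p \ge \Omega(1)$ comes from the refined version of Theorem~\ref{thm:single_signal_sparse} noted in its footnote: in a strongly imbalanced layer $p = \omega(1/d)$ suffices, and $\omega(1/\log d)$ dominates $\omega(1/d)$; the weakly imbalanced layers need only the weaker condition quoted there.

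I expect the main obstacle to be rigorously justifying the decoupling into layers — i.e. showing that a stable matching on the full multi-tiered $H$ really does behave, tier-pair by tier-pair, like an independent stable matching on each layer, and in particular that no interim blocking pair can straddle two layers. The subtlety is that the signaling graph does connect an agent only within its target tier, but interim preferences are defined over \emph{all} agents on the other side, so one must verify that an applicant in a high layer never forms an interim blocking pair with a firm in a strictly lower layer (immediate from Assumption~\ref{assump:bounded}, since the firm strictly prefers its own-tier matches to any lower-tier applicant — wait, the direction to check is that the \emph{firm} must prefer the applicant, which fails when the applicant's tier sits below the firm's target tier) and, conversely, handle blocking pairs where the applicant is in a lower tier than the firm by the symmetric argument. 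Pinning down these cross-layer cases, and checking that the per-layer ``short side'' identification is consistent with the global dominance relation under general imbalance, is where the real work lies; once that is in place the result is a clean reduction plus a union bound.
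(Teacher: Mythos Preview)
Your decomposition into disjoint, independent layers does not hold, and this is a genuine gap rather than a detail. A single tier typically participates in two signaling relationships at once: it sends signals to its target tier and simultaneously receives signals from the tier that targets it (see the figure in the paper, where $\Short_2$ is linked both to $\Long_3$ via $\Long_3$'s signals and to $\Long_2$ via its own). So the interview graph does not split into vertex-disjoint bipartite pieces, and a stable matching on $H$ cannot be described as a product of independent stable matchings on such pieces.

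More importantly, your claim that the cumulative-count inequality forces $n_{\text{signalers}}\le(1+o(1))\,n_{\text{receivers}}$ within each layer is not justified and is false for the full tiers. The inequality $\sum_{\ess'\ge \ess}|\Short_{\ess'}|\le\sum_{\kappa'\ge \kappa}|\Long_{\kappa'}|$ says nothing about $|\Short_\ess|$ versus $|\Long_\kappa|$ individually; it only becomes a per-layer statement \emph{after} the higher tiers have been matched and removed. The paper's proof is therefore a top-down peeling induction, not a parallel decomposition: at each step one considers the \emph{remaining} portion $\widetilde{\Short}_\ess$ of the current tier (after its members absorbed by higher firm tiers are removed) against $\Long_{t_\ess}$, and the ratio $|\widetilde{\Short}_\ess|/|\Long_{t_\ess}|$ is shown to be at most $1+d^{-\lambda}$, where the excess $d^{-\lambda}$ comes from the small number of unmatched agents leaking down from previous steps. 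This requires two ingredients you are missing: a version of the single-tier result for \emph{reduced} interview graphs (the paper's \prettyref{thm:single_tier_general_sparse}, since $\widetilde{\Short}_\ess$ is a subset of the original signalers), and a quantitative bound on the unmatched fraction at each stage (\prettyref{rmk:single_tier_general_sparse_unmatched}) to control $\lambda$ inductively. Without the peeling and the unmatched-count bookkeeping, you cannot establish that each layer is even approximately in the short-side-signaling regime, and Theorem~\ref{thm:single_signal_sparse} does not apply.
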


The following remark shows that we can identify small subsets whose removal yields perfect interim stability.

\begin{remark}\label{rmk:multi_signal_sparse_identify}
Building upon Theorem \ref{cor:multi_signal_sparse}, there exist subsets $\Short'\subset\Short$ and $\Long'\subset \Long$ with $|\Short'|=o\left(n_\Short\right)$ and $|\Long'|=o\left(n_\Long\right)$ such that every stable matching on the vertex-induced subgraph of $H$ on $\left(\Short\backslash \Short' \right)\cup \left(\Long\backslash \Long'\right)$ is perfect interim stable with high probability.
\end{remark}

To address potential coordination issues when tiers mutually dominate each other (similar to issues with both-side signaling in single-tiered markets, see Corollary \ref{cor:single_both_signal_sparse}), we introduce a restricted variant of the multi-tiered signaling mechanism. In this restricted mechanism, when a firm tier and an applicant tier are each other's target tiers, only one tier is allowed to signal the other.

\begin{remark}\label{rmk:restricted_sparse}
Extending Corollary \ref{cor:multi_signal_sparse}, let $H$ denote an interview graph constructed based on the restricted multi-tiered signaling mechanism with $\omega (1) \le d\le O\left(\polylog n \right)$. Then, if $p = \omega\left(1/\log d\right)$, every stable matching on $H$ is almost interim stable with high probability.
\end{remark}

We next consider scenarios where pre-interview and post-interview scores may not preserve the tier structure. To address this, we relax Assumption \ref{assump:bounded}. Let $\DistA$ and $\DistB$ be bounded distributions with maximum support values $M_\DistA$ and $M_\DistB$ respectively. We define $q$ as the probability that the sum of independently drawn random variables exceeds $M_\DistA + M_\DistB - 1$:
\begin{align}
q \triangleq \mathbb{P}\{A + B > M_\DistA + M_\DistB - 1\}, \quad \text{where } A \sim \DistA, B \sim \DistB \text{ independently}. \label{eq:q_AB}
\end{align}

\begin{remark}\label{rmk:relaxed_bounded_sparse}
Extending Corollary \ref{cor:multi_signal_sparse}, we relax Assumption \ref{assump:bounded} to allow pre-interview and post-interview scores that may not preserve tier structure. If $p = \omega\left(1/\log d \right)$ and $q = \omega\left(1/\log d\right)$, the conclusions of Corollary \ref{cor:multi_signal_sparse} and Remarks \ref{rmk:multi_signal_sparse_identify} and \ref{rmk:restricted_sparse} still hold.
\end{remark}


\paragraph{Perfect interim stability}
The following corollary shows how the required number of signals depends on the market's general imbalance structure.

\begin{corollary}[Effectiveness of multi-tiered signaling for perfect interim stability]
\label{cor:multi_signal_dense}
Consider a multi-tiered two-sided market with applicants $\Short = \bigcup_{\ess=1}^m \Short_{\ess}$ and firms $\Long = \bigcup_{\kappa=1}^\ell \Long_\kappa$ with $n_{\Short}= \delta n_{\Long}$, where $\Omega(1) \le \delta \le 1$.
Let $H$ denote an interview graph constructed based on the multi-tiered signaling mechanism with $d \ge \underline{d}/p$, for some $\underline{d}$ that depends only on $\boldsymbol{\alpha},\boldsymbol{\beta},\delta$ and $n$.
\begin{itemize}
\item If the market is not generally imbalanced, then $\underline{d}=\Theta\left(\log^2n\right)$, and the applicant-optimal or firm-optimal stable matching is perfect interim stable with high probability.
\item If the market is $\gamma$-generally imbalanced with $0<\gamma <1$, then $\underline{d} = \Theta \left( \log ( 1/\gamma ) \log n \right)$, and every stable matching on $H$ is perfect interim stable with high probability. 
\end{itemize}
\end{corollary}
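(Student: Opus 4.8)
The plan is to derive the corollary from the single-tiered results by decomposing the multi-tiered market into $O(1)$ essentially independent \emph{dominance blocks} and applying Theorem~\ref{thm:single_signal_dense} or Corollary~\ref{cor:single_both_signal_dense} inside each block. \textbf{Step 1 (block decomposition).} Writing $a_\ess = |\bigcup_{\ess'\ge \ess}\Short_{\ess'}|$ and $b_\kappa = |\bigcup_{\kappa'\ge\kappa}\Long_{\kappa'}|$ for the cumulative tier counts read from the most desirable tier down, I would cut the ``ladder'' of tiers at the points where these two sequences cross. This partitions $\Short\cup\Long$ into maximal blocks, each spanning a run of consecutive applicant tiers and consecutive firm tiers, so that inside a block one side strictly dominates the other except possibly at the two endpoints; a block is a \emph{balanced block} exactly when it consists of an applicant tier and a firm tier that are each other's target tiers (i.e.\ $a_\ess=b_\kappa$ at that level), which happens for at least one block precisely when the market is not generally imbalanced. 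Since the multi-tiered signaling mechanism only creates interviews within target tiers, the interview graph is, up to $O(1)$ boundary agents per block, a disjoint union over blocks of one-sided random $d$-signal graphs (two-sided on balanced blocks). By Assumption~\ref{assump:non_vanishing} the number of tiers---hence of blocks---is $O(1)$, and every block has size $\Omega(n)$.

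\textbf{Step 2 (within-block stability).} On an imbalanced block, the induced sub-market is (after possibly recursing on the finitely many tiers it contains, which on each side have the same strict-dominance pattern) a single-tiered imbalanced market with one-sided signaling, so Theorem~\ref{thm:single_signal_dense} with $\delta<1$ shows that every stable matching restricted to the block is perfect interim stable once $d \gtrsim \frac{1}{\delta_{\mathrm{blk}}\,p}\log\frac{1}{1-\delta_{\mathrm{blk}}}\log n$. The crucial estimate is the block imbalance: $\gamma$-general imbalance forces a cumulative-count gap of at least $\gamma n$ at every crossing, while Assumption~\ref{assump:non_vanishing} bounds block sizes by $O(n)$, so $1-\delta_{\mathrm{blk}}\gtrsim\gamma$ and the per-block requirement is $\underline d_{\mathrm{blk}}=\Theta(\log(1/\gamma)\log n/p)$. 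On a balanced block, the induced sub-market is a single-tiered balanced market with two-sided signaling, so Corollary~\ref{cor:single_both_signal_dense} applies with $\underline d_{\mathrm{blk}}=\Theta(\log^2 n/p)$ (here $\log^2(\text{block size})=\Theta(\log^2 n)$ by Assumption~\ref{assump:non_vanishing}): either the applicant-optimal or the firm-optimal stable matching of that block is perfect interim stable, but in general no other stable matching of the block is.

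\textbf{Step 3 (no cross-block blocking pairs, and assembling).} Using the tier-preservation computation behind Assumption~\ref{assump:bounded}---any agent strictly prefers any partner in a strictly higher tier, before or after an interview---I would rule out interim blocking pairs spanning two blocks: if $a$ and $j$ lie in different blocks with $j$'s block higher, then $j$'s assigned partner sits in an applicant tier at least as high as $a$'s, so $j$ strictly prefers its partner to $a$, and the symmetric argument handles the other orientation. Hence every interim blocking pair is intra-block and is controlled by Step~2. Taking $\underline d$ to be the maximum of the per-block thresholds and union-bounding over the $O(1)$ blocks finishes both cases: if the market is generally imbalanced, every block is imbalanced, $\underline d=\Theta(\log(1/\gamma)\log n/p)$, and every stable matching on $H$ is perfect interim stable; otherwise there is at least one balanced block, which forces $\underline d=\Theta(\log^2 n/p)$ and restricts the conclusion to the applicant-optimal or firm-optimal stable matching, using that across blocks the extremal stable matchings of the whole market restrict to the extremal stable matchings of each block (product structure of the stable-matching lattice), so the globally applicant-/firm-optimal matching is perfect interim stable on every block while being trivially interim stable on every imbalanced block.

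\textbf{Main obstacle.} The technical heart is making the block decomposition rigorous: (i) showing the matching ``splits'' at the cumulative-count crossings with only $O(1)$ boundary agents straddling a block, which needs a concentration argument for rejection chains crossing a target boundary; (ii) verifying that the interview graph restricted to a block is distributionally the one-/two-sided random $d$-signal graph required by the single-tiered theorems---in particular that ``overflow'' signals aimed just across a boundary stay inside the block with high probability; and (iii) tracking the $\delta$-dependence in Theorem~\ref{thm:single_signal_dense} carefully enough that $1-\delta_{\mathrm{blk}}\gtrsim\gamma$ yields the stated $\Theta(\log(1/\gamma)\log n)$ scaling with the right constants. A secondary point is checking global consistency of the ``applicant-optimal or firm-optimal'' choice when more than one balanced block is present.
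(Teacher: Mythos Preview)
Your high-level strategy---decompose by dominance structure, apply single-tier results inside pieces, use \prettyref{assump:bounded} to rule out cross-piece blocking---matches the paper's, but the block decomposition in Step~1 has a genuine gap. The claim that the interview graph is ``up to $O(1)$ boundary agents a disjoint union over blocks'' is false: under the multi-tiered mechanism, a firm tier $\Long_\kappa$ typically receives signals from the applicant tier $\Short_\ess$ whose target it is \emph{and simultaneously} sends its own $d$ signals to its target $\calT(\Long_\kappa)=\Short_{\ess'}$ with $\ess'<\ess$. So the whole tier $\Long_\kappa$---$\Theta(n)$ agents, not $O(1)$---carries edges into two distinct applicant tiers, and the interview graph is one connected piece in general (\prettyref{fig:target_tier} already illustrates this). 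Consequently your Step~2 cannot invoke \prettyref{thm:single_signal_dense} on a multi-tier block: such a block is not a one-sided $d$-regular bipartite graph, and the ``overflow'' you flag in obstacle~(ii) is $\Theta(n)$, not $O(1)$.

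The paper replaces your parallel block decomposition with a sequential \emph{top-down peeling}. It first argues that any stable matching on $H$ restricts to a stable matching on the vertex-induced subgraph on $\Short_m\cup\Long_\ell$ (tier preservation gives this immediately), and applies a reduced-graph analogue of the single-tier theorem there---\prettyref{thm:single_tier_general_dense}, which handles the case where arbitrary subsets $\Short'\subset\Short$ and $\Long'\subset\Long$ have been carved out of a one-sided $d$-regular graph. Then it deletes the matched agents and recurses on the leftovers together with the next tier down. This reduced-graph theorem is the technical ingredient you are missing: it supplies the per-tier parameters $\gamma_\ess,\delta_\ess$, and the $\Theta(\log(1/\gamma)\log n)$ versus $\Theta(\log^2 n)$ dichotomy drops out of the explicit threshold \prettyref{eq:C_alpha_beta}. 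Peeling also dissolves your Step~3 circularity (higher tiers are certified matched in-tier \emph{before} lower tiers are treated) and your consistency worry about applicant- versus firm-optimal across multiple balanced pieces, since the peeling handles them in sequence rather than in parallel.
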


This result reveals a fundamental trade-off: generally imbalanced markets require fewer signals to achieve perfect interim stability for all stable matchings, while non-generally imbalanced markets need more signals and only guarantee perfect interim stability for extremal stable matchings.

We next consider scenarios where pre-interview and post-interview scores may not preserve tier structure.

\begin{remark}\label{rmk:relaxed_bounded_dense}
Extending Corollary \ref{cor:multi_signal_dense}, we relax Assumption \ref{assump:bounded} to allow pre-interview and post-interview scores that may not preserve tier structure. If $d \ge 2 \underline{d}/\left(p \wedge q \right)$, where $q$ is defined in Equation \eqref{eq:q_AB}, the conclusions of Corollary \ref{cor:multi_signal_dense} still hold.
\end{remark}

\paragraph{Incentive compatibility}
We say that an agent signals truthfully if it signals its top $d$ preferred partners based on pre-interview utilities within its target tier. The following corollary characterizes when truthful signaling is approximately incentive compatible.

\begin{corollary}[Incentive compatibility for multi-tiered markets]\label{cor:incentive}
Consider multi-tiered markets with applicants $\Short = \bigcup_{\ess=1}^m \Short_{\ess}$ and firms $\Long = \bigcup_{\kappa=1}^\ell \Long_\kappa$ with $n_{\Short} \le n_{\Long}$, paired with a multi-tiered signaling mechanism. Suppose each agent can send at most $d$ signals and possesses only distributional knowledge of other agents' preferences without knowing their realizations.
\begin{itemize}
\item If the market is $\gamma$-generally imbalanced with $\gamma \ge \Omega(1)$ and $p = \omega \left(\frac{\log d}{d}\right)$, truthful signaling is an $\epsilon$-Bayes-Nash equilibrium where $\epsilon = o(1)$.
\item If $d \ge \underline{d}/ p$ for some $\underline{d}$ that depends only on $\boldsymbol{\alpha},\boldsymbol{\beta},\delta$ and $n$,\footnote{Here $\underline{d}$ is as in Corollary \ref{cor:multi_signal_dense}: $\underline{d} = \Theta(\log^2n)$ for non-generally imbalanced markets and $\underline{d}= \Theta(\log(1/\gamma) \log n)$ for $\gamma$-generally imbalanced markets.} truthful signaling is an $\epsilon$-Bayes-Nash equilibrium where $\epsilon = o(1)$.
\end{itemize}
\end{corollary}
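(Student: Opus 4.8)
The plan is to reduce the multi-tiered incentive argument to the single-tiered one (Theorem \ref{thm:incentive_single}) tier by tier, and then to combine it with the stability guarantees of Corollaries \ref{cor:multi_signal_sparse} and \ref{cor:multi_signal_dense} to control the gain from deviating. Fix a deviating agent, say an applicant $a$ in tier $\Short_\ess$, and condition on the signaling strategies of everyone else being truthful. Because Assumption \ref{assump:bounded} guarantees that tier membership is a strict desirability hierarchy for \emph{both} interim and pre-interview preferences, $a$'s interim-optimal feasible partners lie in the highest firm tiers it can reach, and the set of firms it can plausibly match with is governed by the dominance/target-tier structure: in a $\gamma$-generally imbalanced (resp.\ dense-signal) market, with high probability $a$ matches within a narrow band of tiers around $\calT(\Short_\ess)$. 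The first step is therefore to establish this ``localization'' claim — that under truthful play by others, $a$'s match lies in tier $\calT(\Short_\ess)$ (or one adjacent tier) with probability $1-o(1)$ — using the almost/perfect interim stability conclusions already proved, together with the boundedness assumption that separates tiers by at least $1 > 2M$.

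The second step treats the within-tier problem as a single-tiered market. Once we know $a$'s realized match is (whp) in its target tier, the relevant submarket is the bipartite graph between $\Short_\ess$ (together with any higher applicant tiers that also target $\Long_{\calT(\Short_\ess)}$) and $\Long_{\calT(\Short_\ess)}$, which by Assumption \ref{assump:non_vanishing} has $\Theta(n)$ agents on each side and $d = \omega(1)$ signals per agent — exactly the regime of Theorem \ref{thm:incentive_single}. I would argue that $a$'s best response, restricted to signaling within its target tier, is to signal its top $d$ pre-interview choices: any deviation (signaling a less-preferred firm in the target tier, or signaling outside the target tier) can only lower $\Prob[a \text{ matches one of its top-ranked feasible firms}]$, by the same monotone-coupling / rejection-chain argument used for the single-tiered case, while the improvement $a$ could hope for is bounded by $2M < 1/2$ and occurs only on the $o(1)$-probability event that the localization claim fails. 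Quantitatively, $\epsilon$ is the product of (deviation probability of changing the matched tier) $\times$ (per-swap utility bound $2M$), plus the within-tier loss, and both pieces are $o(1)$ under the stated hypotheses on $p$ and $d$; the $p = \omega(\log d / d)$ condition (sparse case) and $d \ge \underline d / p$ condition (dense case) are precisely what make the single-tiered argument go through on the target submarket.

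The main obstacle I anticipate is handling deviations that cross tier boundaries, i.e.\ an applicant in $\Short_\ess$ signaling firms in a tier \emph{above} $\calT(\Short_\ess)$ in the hope of a lucky match. This requires showing that such higher-tier firms are, whp, saturated by applicants from even higher tiers who dominate them, so that a single extra signal from $a$ cannot dislodge the stable structure — essentially a one-sided ``no room at the top'' argument that leans on the $\gamma$-general-imbalance (or dense-signal) hypothesis to guarantee the relevant cumulative counts differ by $\Omega(\gamma n)$ (resp.\ that $d$ is large enough to fill the gap). A secondary subtlety is that in markets that are \emph{not} generally imbalanced, two tiers can be each other's target tiers, and only the extremal (applicant- or firm-optimal) stable matching is perfect interim stable; there one must argue that the mechanism's tie-breaking selects that extremal matching and that a unilateral deviation does not move the outcome off it — again reducing, within the mutually-dominating pair of tiers, to the balanced single-tiered case analyzed in Theorem \ref{thm:incentive_single} and Remark \ref{rmk:firm_optimal_signal}. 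Assembling these pieces and bookkeeping the $o(1)$ terms across all $O(1)$ tiers (finite by Assumption \ref{assump:non_vanishing}) completes the argument.
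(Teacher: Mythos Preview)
Your high-level decomposition---within-tier symmetry via Theorem \ref{thm:incentive_single}, a ``no room at the top'' argument for upward deviations, and localization for downward ones---matches the paper's case analysis, and the within-target-tier step is essentially the paper's symmetry argument. But there is a concrete gap in your upward-deviation step.

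You claim that all firm tiers above $\calT(\Short_\ess)$ are ``saturated by applicants from even higher tiers who dominate them.'' That holds only for tiers $\Long_\kappa$ with $\kappa > \kappa_1$, where $\kappa_1$ is the largest index with $\calT(\Long_{\kappa_1}) = \Short_\ess$. For the intermediate band $\kappa_2 \le \kappa \le \kappa_1$---firm tiers whose target tier is $\Short_\ess$ itself---the firms are \emph{not} saturated by higher applicant tiers: their target is precisely $\Short_\ess$, and they actively signal into it. Your saturation argument does not cover these tiers, and your localization step only places $a$'s match in $\calT(\Short_\ess)$, so the reduction to a single-tiered submarket on $\Short_\ess \times \Long_{\calT(\Short_\ess)}$ does not control deviations to $\Long_\kappa$ in this band either. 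The paper handles this case separately: any such $j \in \Long_\kappa$ is, with probability $1 - \exp(-C'pd)$, matched to one of \emph{its own} top-$d$ applicants in $\Short_\ess$ with non-negative post-interview score (via Proposition \ref{prop:remove_gamma_d_log_n}); if $a$ is not among $j$'s top $d$ then $j$ strictly prefers that match to $a$, and if $a$ is among $j$'s top $d$ then $j$ already signals $a$ and the deviation is redundant. Union-bounding over at most $d$ deviating signals gives $d\exp(-C'pd) = o(1)$ under $p = \omega(\log d / d)$---this is precisely where that hypothesis enters.

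A secondary point: you lean on the almost-interim-stability corollaries for the failure probabilities, but those yield only $o(1)$-\emph{fraction} guarantees, not the per-firm $\exp(-Cpd)$ tail needed to survive a union bound over $d$ signals. The paper instead invokes the underlying availability estimate (Proposition \ref{prop:remove_gamma_d_log_n}) directly in each tier case. For the dense bullet, the paper's argument is much shorter than your plan: once Corollary \ref{cor:multi_signal_dense} guarantees every applicant is matched with high probability, boundedness of utilities (Assumption \ref{assump:bounded}) immediately gives $\epsilon = o(1)$.
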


This corollary establishes that truthful signaling becomes approximately optimal in large markets under two conditions: either when sufficient market imbalance exists, or when agents can send many signals. The approximation error $\epsilon$ vanishes as the market grows, indicating that deviations from truthful signaling become increasingly unprofitable in large markets.

\section{Proof sketch via leveraging local neighborhood information} \label{sec:message}
One of the key contributions of our paper is a novel approach that leverages local neighborhood information to analyze the stability properties of matching outcomes in random bipartite graphs. In this section, we provide a proof sketch of this method,  with details deferred to \prettyref{sec:local}. 

Consider a bipartite graph $H$ on a two-sided market with strict preferences. For any agent $\rho$ and its neighbor agent $i$ on $H$, we say $i$ is \emph{available} to $\rho$ on $H$ if and only if $i$ weakly prefers $\rho$ to its match in every stable matching on $H$. Formally, $i$ is available to $\rho$ if and only if $\rho \succeq_i \phi(i)$ for every stable matching $\Phi$ on $H$. Consequently, if $i$ is available to $\rho$, then $\rho$ must weakly prefer its match to $i$ in every stable matching on $H$. Note that availability is only defined between two agents that are neighbors on $H$.

Analyzing which neighboring agents are available to an agent $\rho$ serves as a benchmark for $\rho$'s matching outcomes on $H$.
To illustrate this approach, consider the proof for weakly imbalanced market in \prettyref{thm:single_signal_sparse}. In this proof, we aim to show that every stable matching on the interview graph is almost interim stable. Without loss of generality, we focus on applicant-signaling mechanism, where each applicant $a \in \Short$ signals to and interviews with its top $d$ most-preferred firms based on their pre-interview utilities. For any applicant $a$, let $\Long_a$ denote the set of firms that $a$ signals to. Consequently, if there exists some firm $j\in\Long_a$ with $A_{a,j}>0$ that is also available to $a$ on $H$, then
\[
U_{a,\phi(a)}^A \ge U_{a,j}^A = U_{a,j}^B + A_{a,j} > \max_{j'\not\in \Long_a} U_{a,j'}^B \,, 
\]
where the first inequality holds because $j$ being available to $a$ on $H$ means that $a$ must weakly prefer its match $\phi(a)$ to $j$ in every stable matching (by the definition of availability), and the second inequality holds because $A_{a,j}>0$ and $
\min_{j\in \Long_a} U_{a,j}^B> \max_{j'\not\in \Long_a} U_{a,j'}^B\,.
$ 
That is to say that $a$ must strictly prefer its match to all other firms they have not interviewed with in every stable matching on $H$. 
Hence, to prove the interim stability of the matching outcome, it suffices to show that under the applicant-signaling mechanism, for every applicant $a$, with high probability, there exists some firm $j$ with a positive post-interview score ($A_{a,j}>0$) such that $j$ is available to $a$ on $H$. 

Next, we demonstrate, for any agent $\rho$ on $H$, how local information can be leveraged to infer which agents are available to $\rho$ on $H$, which in turn determines the matching outcomes of $\rho$. In particular, when the bipartite graph is relatively sparse, i.e., when local neighborhoods are almost tree-like, we develop a message-passing algorithm that efficiently determines the availability of neighboring agents using only local neighborhood information.

\paragraph{Truncation on local neighborhood.}
By viewing $H$ as a graph rooted at $\rho$, the depth of each agent is the number of edges in the shortest path from $\rho$ to that agent. For any $m\in\naturals$, we define $H_m(\rho)$ as the vertex-induced subgraph of $H$ on the set of agents at depths less than or equal to $m$, which is also known as the $m$-hop neighborhood of $\rho$.
By~\cite[Theorem 1 and 2]{crawford1991}, when agents are removed from one side of a bipartite graph, all remaining agents on the same side are weakly better off, while all agents on the opposite side are weakly worse off.

We then claim that if $m$ is even, $\rho$ is weakly worse off in $H_m(\rho)$ compared to $H$; if $m$ is odd, $\rho$ is weakly better off in $H_m(\rho)$ compared to $H$ (\prettyref{lmm:local}). To see this, note that $H_m(\rho)$ can be obtained from $H$ by removing all agents at depth $m+1$ and considering the connected component containing $\rho$ in the remaining graph. Since $H$ is a bipartite graph, the agents removed at depth $m+1$ are on the opposite side of the market from $\rho$ when $m$ is even and on the same side when $m$ is odd, from which the claim follows.
This claim leads to a key observation: for any agent $\rho$ and its neighbor $i$ on $H$, if $m$ is even and $i$ is available to $\rho$ in $H_m(\rho)$, then $i$ must also be available to $\rho$ in $H$; if $m$ is odd and $i$ is not available to $\rho$ in $H_m(\rho)$, then $i$ must also not be available to $\rho$ in $H$ (\prettyref{lmm:local_available}). This observation is further illustrated in \prettyref{fig:every}. We refer to \prettyref{sec:truncation} for detailed proofs and further monotonicity results concerning truncated neighborhoods.

\begin{figure}[ht]
\hspace{0.5cm}
\begin{subfigure}[c]{.5\linewidth}
    \centering
    \vspace{0.25cm}
    \begin{tikzpicture}[scale=0.54, every node/.style={scale=0.54}] 
        \node[circle, draw, fill=gray!30] (a1) at (-4,3) {$a_1$};
        \node[circle, draw, fill=gray!30] (a2) at (-4,2) {$a_2$};
        \node[circle, draw, fill=gray!30] (a3) at (-4,1) {$a_3$};
        \node[circle, draw, fill=gray!30] (a4) at (-4,0) {$a_4$};
        \node[circle, draw, fill=gray!30] (a5) at (-4,-1) {$a_5$};
        \node[circle, draw, fill=gray!30] (a6) at (-4,-2) {$a_6$};
        \node[circle, draw, fill=gray!30] (a7) at (-4,-3) {$a_7$};
        \node[circle, draw, fill=gray!30] (a8) at (-4,-4) {$a_8$};
        \node[circle, draw, fill=gray!30] (a9) at (-4,-5) {$a_9$};
        \node[circle, draw, fill=gray!30] (a10) at (-4,-6) {$a_{10}$};
        \node[circle, draw, fill=gray!30] (a11) at (-4,-7) {$a_{11}$};
        \node[circle, draw, fill=gray!30] (a12) at (-4,-8) {$a_{12}$};
        \node[circle, draw, fill=gray!30] (a13) at (-4,-9) {$a_{13}$};
        \node[circle, draw, fill=gray!30] (a14) at (-4,-10) {$a_{14}$};
        \node[circle, draw] (j1) at (0,3) {$j_1$};
        \node[circle, draw] (j2) at (0,2) {$j_2$};
        \node[circle, draw] (j3) at (0,1) {$j_3$};
        \node[circle, draw] (j4) at (0,0) {$j_4$};
        \node[circle, draw] (j5) at (0,-1) {$j_5$};
        \node[circle, draw] (j6) at (0,-2) {$j_6$};
        \node[circle, draw] (j7) at (0,-3) {$j_7$};
        \node[circle, draw] (j8) at (0,-4) {$j_8$};
        \node[circle, draw] (j9) at (0,-5) {$j_9$};
        \node[circle, draw] (j10) at (0,-6) {$j_{10}$};
        \node[circle, draw] (j11) at (0,-7) {$j_{11}$};
        \node[circle, draw] (j12) at (0,-8) {$j_{12}$};
        \node[circle, draw] (j13) at (0,-9) {$j_{13}$};
        \draw (a1) -- (j1);
        \draw (a1) -- (j2);
        \draw (a1) -- (j3);
        \draw (j1) -- (a2);
        \draw (j1) -- (a3);
        \draw (j2) -- (a4);
        \draw (j3) -- (a5);
        \draw (j3) -- (a6);
        \draw (a2) -- (j4);
        \draw (a2) -- (j5);
        \draw (a3) -- (j6);
        \draw (a3) -- (j7);
        \draw (a4) -- (j8);
        \draw (a5) -- (j9);
        \draw (a5) -- (j10);
        \draw (a6) -- (j11);
        \draw (a6) -- (j12);
        \draw (a7) -- (j4);
        \draw (a8) -- (j4);
        \draw (a9) -- (j5);
        \draw (a10) -- (j8);
        \draw (a11) -- (j11);
        \draw (a12) -- (j12);
        \draw (a13) -- (j12);
        \draw (a14) -- (j13);
        \draw (a10) -- (j11);
        \draw (a11) -- (j13);
        \draw (a14) -- (j12);
        \draw (a10) -- (j9);
        \end{tikzpicture}
    \caption
      {The bipartite graph $H$
        \label{fig:bipartite}
      }%
  \end{subfigure}\hspace{-3cm}
\begin{tabular}[c]{@{}c@{}}
    \begin{subfigure}[c]{0.4\linewidth}
      \centering
        \begin{tikzpicture}[scale=0.6, every node/.style={scale=0.6}]  
    \node[circle, draw, fill=gray!30] (rho) at (0,2) {$a_1$};
    \node[circle, draw] (a) at (-2,0) {$j_1$};
    \node[circle, draw] (o) at (0,0) {$j_2$};
    \node[circle, draw] (b) at (2,0) {$j_3$};
    \node[circle, draw, fill=gray!30] (c) at (-3,-2) {$a_2$};
    \node[circle, draw, fill=gray!30] (d) at (-1,-2) {$a_3$};
    \node[circle, draw, fill=gray!30] (p) at (0,-2) {$a_4$};
    \node[circle, draw, fill=gray!30] (e) at (1,-2) {$a_5$};
    \node[circle, draw, fill=gray!30] (f) at (3,-2) {$a_6$};
    \draw (rho) -- (a);
    \draw (rho) -- (o);
    \draw (rho) -- (b);
    \draw (a) -- (c);
    \draw (a) -- (d);
    \draw (o) -- (p);
    \draw (b) -- (e);
    \draw (b) -- (f);
    \end{tikzpicture}
      \caption
        {%
         Truncation at depth $2$: $H_2(a_1)$.
          \label{fig:upper}%
        }%
    \end{subfigure}\\
    \noalign{\smallskip}%
    \begin{subfigure}[c]{0.5\linewidth}
      \centering
        \begin{tikzpicture}[scale=0.6, every node/.style={scale=0.6}]
        \node[circle, draw, fill=gray!30] (rho2) at (0,-4) {$a_1$};
        \node[circle, draw] (a2) at (-2,-6) {$j_1$};
        \node[circle, draw] (o2) at (0,-6) {$j_2$};
        \node[circle, draw] (b2) at (2,-6) {$j_3$};
        \node[circle, draw, fill=gray!30] (c2) at (-3,-8) {$a_2$};
        \node[circle, draw, fill=gray!30] (d2) at (-1,-8) {$a_3$};
        \node[circle, draw, fill=gray!30] (p2) at (0,-8) {$a_4$};
        \node[circle, draw, fill=gray!30] (e2) at (1,-8) {$a_5$};
        \node[circle, draw, fill=gray!30] (f2) at (3,-8) {$a_6$};
        \node[circle, draw] (g2) at (-4,-10) {$j_4$};
        \node[circle, draw] (h2) at (-3,-10) {$j_5$};
        \node[circle, draw] (i2) at (-2,-10) {$j_6$};
        \node[circle, draw] (j2) at (-1,-10) {$j_7$};
        \node[circle, draw] (q2) at (0,-10) {$j_8$};
        \node[circle, draw] (k2) at (1,-10) {$j_9$};
        \node[circle, draw] (l2) at (2,-10) {$j_{10}$};
        \node[circle, draw] (m2) at (3,-10) {$j_{11}$};
        \node[circle, draw] (n2) at (4,-10) {$j_{12}$};
        \draw (rho2) -- (a2);
        \draw (rho2) -- (o2);
        \draw (rho2) -- (b2);
        \draw (a2) -- (c2);
        \draw (a2) -- (d2);
        \draw (c2) -- (g2);
        \draw (c2) -- (h2);
        \draw (d2) -- (i2);
        \draw (d2) -- (j2);
        \draw (o2) -- (p2);
        \draw (p2) -- (q2);
        \draw (b2) -- (e2);
        \draw (b2) -- (f2);
        \draw (e2) -- (k2);
        \draw (e2) -- (l2);
        \draw (f2) -- (m2);
        \draw (f2) -- (n2);
        \end{tikzpicture}      \caption
        {%
           Truncation at depth $3$: $H_3(a_1)$.
          \label{fig:lower}%
        }%
    \end{subfigure}
  \end{tabular}
  \caption
    {Given $H$ as a two-sided market with applicants $\Short =\{a_1,a_2,\cdots, a_{14}\}$ and firms $\Long =\{j_1,j_2,\cdots, j_{13}\}$, $H_2(a_1)$ and $H_3(a_1)$ are the $2$-hop and $3$-hop neighborhoods of $a_1$ on $H$, respectively. In $H_2(a_1)$, $a_1$ is weakly worse off compared to $H$, while in $H_3(a_1)$, $a_1$ is weakly better off compared to $H$. For example, if $j_1$ is available to $a_1$ on $H_2(a_1)$, it must also be available to $a_1$ in $H$, and if $j_1$ is not available to $a_1$ in $H_3(a_1)$, $j_1$ must not be available to $a_1$ in $H$.
      \label{fig:every}%
    }
\end{figure}
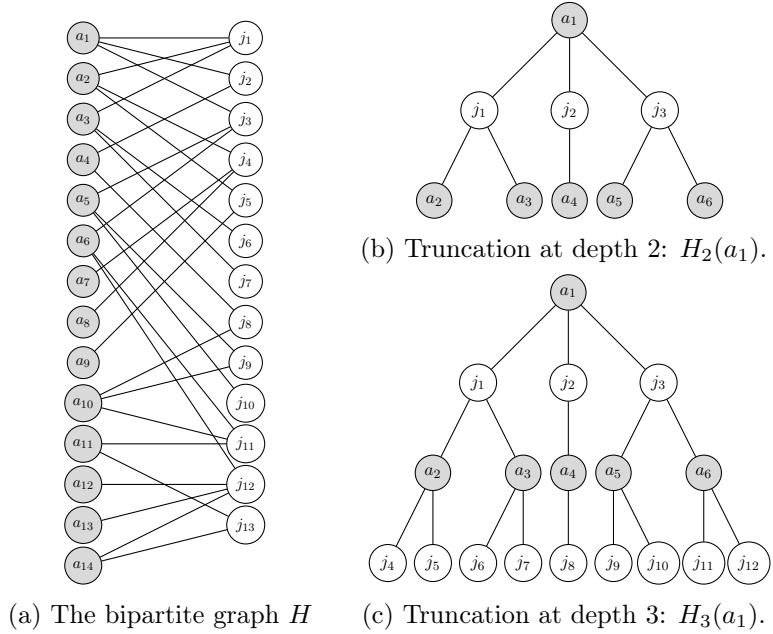

\paragraph{Message-passing algorithm on trees.}
When the local neighborhood around an agent is almost tree-like (i.e., contains only a constant number of cycles), we can obtain a tree by truncating the local neighborhood through the removal of a constant number of vertices. The stable matching on a tree is unique (\prettyref{lmm:tree_unique}), allowing us to apply the hierarchical proposal-passing algorithm (\prettyref{alg:proposal_passing_alg}) to find this matching and determine the availability of agents to the root node.

This algorithm consists of two phases: the proposing phase and the clean-up matching phase. In the proposing phase, operating from the bottom to the top of the tree, each node may receive proposals from its child nodes and chooses to propose to its parent if it prefers the parent to all received proposals. In the clean-up matching phase, operating from the top to the bottom of the tree, each node that is not matched to its parent and has received proposals accepts the most favorable proposal and matches with the corresponding node.
An illustrative example of the hierarchical proposal-passing algorithm is shown in \prettyref{fig:both_phases_updated_7}. 
We refer to \prettyref{sec:proposal_passing_alg} for the complete algorithm description and proof of its correctness in finding the stable matching.

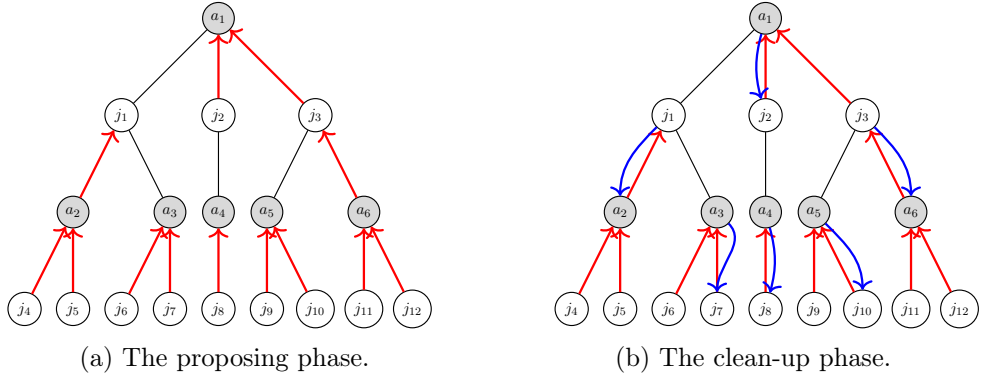
\begin{figure}[ht]
\centering
\begin{tabular}{C{.4\textwidth}C{.4\textwidth}}  
\subcaptionbox{The proposing phase.}%
[0.9\linewidth]{
\resizebox{0.35\textwidth}{!}{
\centering
\begin{tikzpicture}[scale=0.6, every node/.style={scale=0.5}] 
\node[circle, draw, fill=gray!30] (rho) at (0,0) {$a_1$};
\node[circle, draw] (a) at (-2,-2) {$j_1$};
\node[circle, draw] (o) at (0,-2) {$j_2$};
\node[circle, draw] (b) at (2,-2) {$j_3$};
\node[circle, draw, fill=gray!30] (c) at (-3,-4) {$a_2$};
\node[circle, draw, fill=gray!30] (d) at (-1,-4) {$a_3$};
\node[circle, draw, fill=gray!30] (p) at (0,-4) {$a_4$};
\node[circle, draw, fill=gray!30] (e) at (1,-4) {$a_5$};
\node[circle, draw, fill=gray!30] (f) at (3,-4) {$a_6$};
\node[circle, draw] (g) at (-4,-6) {$j_4$};
\node[circle, draw] (h) at (-3,-6) {$j_5$};
\node[circle, draw] (i) at (-2,-6) {$j_6$};
\node[circle, draw] (j) at (-1,-6) {$j_7$};
\node[circle, draw] (q) at (0,-6) {$j_8$};
\node[circle, draw] (k) at (1,-6) {$j_9$};
\node[circle, draw] (l) at (2,-6) {$j_{10}$};
\node[circle, draw] (m) at (3,-6) {$j_{11}$};
\node[circle, draw] (n) at (4,-6) {$j_{12}$};
\draw[<-, thick, red] (rho) -- (b);
\draw (rho) -- (a);
\draw[<-, thick, red] (rho) -- (o);
\draw[<-, thick, red] (a) -- (c);
\draw (a) -- (d);
\draw (b) -- (e);
\draw[<-, thick, red] (b) -- (f);
\draw[<-, thick, red] (c) -- (g);
\draw[<-, thick, red] (c) -- (h);
\draw[<-, thick, red] (d) -- (i);
\draw[<-, thick, red] (d) -- (j);
\draw[<-, thick, red] (e) -- (k);
\draw[<-, thick, red] (e) -- (l);
\draw[<-, thick, red] (f) -- (m);
\draw[<-, thick, red] (f) -- (n);
\draw (o) -- (p);
\draw[<-, thick, red] (p) -- (q);
\end{tikzpicture}
}}&
\subcaptionbox{The clean-up phase.}[0.8\linewidth]{
\resizebox{0.35\textwidth}{!}{
\centering
\begin{tikzpicture}[scale=0.6, every node/.style={scale=0.5}]  
\node[circle, draw, fill=gray!30] (rho) at (0,0) {$a_1$};
\node[circle, draw] (a) at (-2,-2) {$j_1$};
\node[circle, draw] (o) at (0,-2) {$j_2$};
\node[circle, draw] (b) at (2,-2) {$j_3$};
\node[circle, draw, fill=gray!30] (c) at (-3,-4) {$a_2$};
\node[circle, draw, fill=gray!30] (d) at (-1,-4) {$a_3$};
\node[circle, draw, fill=gray!30] (p) at (0,-4) {$a_4$};
\node[circle, draw, fill=gray!30] (e) at (1,-4) {$a_5$};
\node[circle, draw, fill=gray!30] (f) at (3,-4) {$a_6$};
\node[circle, draw] (g) at (-4,-6) {$j_4$};
\node[circle, draw] (h) at (-3,-6) {$j_5$};
\node[circle, draw] (i) at (-2,-6) {$j_6$};
\node[circle, draw] (j) at (-1,-6) {$j_7$};
\node[circle, draw] (q) at (0,-6) {$j_8$};
\node[circle, draw] (k) at (1,-6) {$j_9$};
\node[circle, draw] (l) at (2,-6) {$j_{10}$};
\node[circle, draw] (m) at (3,-6) {$j_{11}$};
\node[circle, draw] (n) at (4,-6) {$j_{12}$};
\draw[<-, thick, red] (rho) -- (b);
\draw (rho) -- (a);
\draw[<-, thick, red] (rho) -- (o);
\draw[<-, thick, red] (a) -- (c);
\draw (a) -- (d);
\draw (b) -- (e);
\draw[<-, thick, red] (b) -- (f);
\draw[<-, thick, red] (c) -- (g);
\draw[<-, thick, red] (c) -- (h);
\draw[<-, thick, red] (d) -- (i);
\draw[<-, thick, red] (d) -- (j);
\draw[<-, thick, red] (e) -- (k);
\draw[<-, thick, red] (e) -- (l);
\draw[<-, thick, red] (f) -- (m);
\draw[<-, thick, red] (f) -- (n);
\draw (o) -- (p);
\draw[<-, thick, red] (p) -- (q);
\draw[->, thick, blue] (rho) to[out=255, in=105] (o);
\draw[->, thick, blue] (a) to[out=225, in=90] (c);
\draw[->, thick, blue] (b) to[out=315, in=90] (f);
\draw[->, thick, blue] (d) to[out=315, in=90] (j);
\draw[->, thick, blue] (e) to[out=315, in=90] (l);
\draw[->, thick, blue] (p) to[out=285, in=75] (q);
\end{tikzpicture}
}}
\end{tabular}
\caption{An illustration of the hierarchical proposal-passing algorithm on the $3$-hop neighborhood of $a_1$ on $H$, denoted as $H_3(a_1)$ as shown in Figure \ref{fig:every}, with truncated preferences shown in \prettyref{tab:preferences_updated}. (a) The proposing phase (from bottom to top): all proposals are indicated by red arrows.
(b) The clean-up matching phase (from top to bottom): all accepted proposals are indicated by blue arrows.
}
\label{fig:both_phases_updated_7}
\end{figure}

\begin{table}[ht]
\centering
\begin{tabular}{c|c|c|c}
Applicant & Applicant preferences & Firm & Firm preferences  \\
$a_1$ & $j_1 \succ j_2 \succ j_3$ & $j_1$ & $a_2 \succ a_3 \succ a_1$\\
$a_2$ & $j_1 \succ j_4 \succ j_5$ &  $j_2$ & $a_4 \succ a_1$ \\
$a_3$ & $j_7 \succ j_6 \succ j_1$ & $j_3$ & $a_5 \succ a_6 \succ a_1$ \\
$a_4$ & $j_8 \succ j_2$\\
$a_5$ & $j_{10} \succ j_3 \succ j_9$\\
$a_6$ & $j_{11} \succ j_{12} \succ j_3$\\

\end{tabular}
\caption{Truncated preferences for each agent with respect to its neighbors in the 3-hop neighborhood of $a_1$ on $H$, denoted as $H_3(a_1)$, as shown in Figure \ref{fig:lower}. The preferences of leaf nodes $\{j_4, j_5, \ldots, j_{12}\}$ are omitted as they each have only one neighbor.}
\label{tab:preferences_updated}
\end{table}

Notably, on a tree, the proposing phase can be viewed as a message-passing process, where messages represent proposals between nodes and are iteratively updated from the bottom to the top of the tree. A key observation is that a neighboring node is available to the root on a tree if and only if it proposes to the root in this phase. This equivalence enables us to compute exact availability probabilities on trees by calculating proposal probabilities. For trees with uniformly generated strict preferences, we characterize these marginal proposing probabilities in almost regular trees (\prettyref{lmm:rooted_tree}) and randomly generated branching trees (\prettyref{lmm:rooted_tree_expected_degree}). The analysis of computing the marginal proposing probability via message-passing algorithm on trees is presented in \prettyref{sec:message_passing_alg}.

For a general bipartite graph $H$, we can approximate the probability of a neighbor being available to a node through analysis of truncated neighborhoods. Specifically, for any node $\rho$ in $H$, its $m$-hop neighborhood $H_m(\rho)$ provides bounds on the true availability probability: when $m$ is even, it gives a lower bound; when $m$ is odd, it provides an upper bound. These bounds become tighter as $m$ increases, though $m$ cannot be too large as the approximation's accuracy deteriorates when the neighborhood deviates significantly from a tree structure (see \prettyref{sec:loc_tree_random}). By applying these tools, we analyze stability properties in random bipartite graphs, providing probabilistic bounds for when nodes have available neighbors under various conditions. We also present several corollaries that offer standalone results about stable matching outcomes, complementing and extending existing results in the literature. We refer to \prettyref{sec:stable} for more details.

Our approach offers several advantages over traditional methods that analyze stable matchings by coupling the DA algorithm with balls-into-bins processes and tracking rejection chains~\citep{im2005, ashlagi2017unbalanced,kanoria2023competition,potukuchi2024unbalanced}.  First, our method requires only local neighborhood information around each agent rather than global market knowledge, making it particularly suitable for decentralized settings where agents have limited information. Second, the computational complexity is significantly reduced: while traditional approaches require simulating the entire DA algorithm and tracking complex rejection chains throughout the market, our message-passing algorithm performs simple local computations that can be parallelized across agents. Third, our method is more robust to small perturbations in the graph structure or preferences, as local changes affect only nearby computations rather than propagating through entire rejection chains. Additionally, our approach naturally handles non-regular graphs and characterizes all stable matchings rather than just DA outcomes. While our local message-passing method is particularly effective for sparse markets with almost tree-like local neighborhoods, global analysis may be required for dense markets where these local structures are less prevalent.

\section{Numerical Results}\label{sec:numerical}

We present numerical results on synthetic data to corroborate our theoretical findings in the single-tiered market setting. 
These simulations provide additional insights into the dynamics of interim stability under different market conditions and signaling strategies. 

\begin{figure}[ht]
    \centering
    \subfloat[\centering $d=10$]
    {{\includegraphics[width=8.5cm]
    {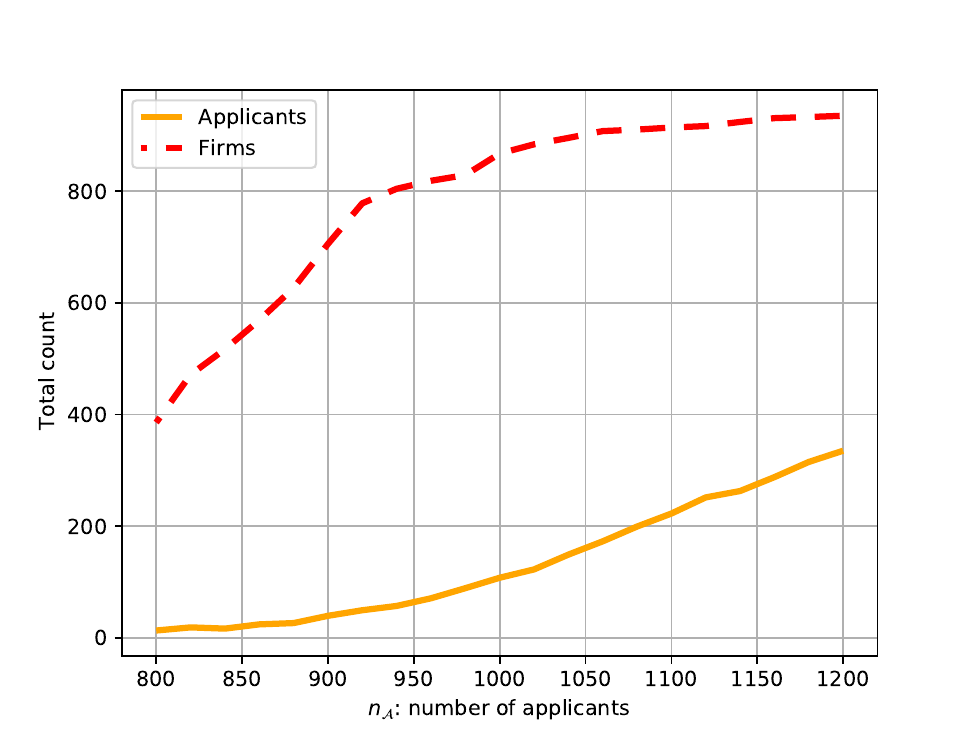} }}%
    \subfloat[\centering  $d=20$]
    {{\includegraphics[width=8.5cm]{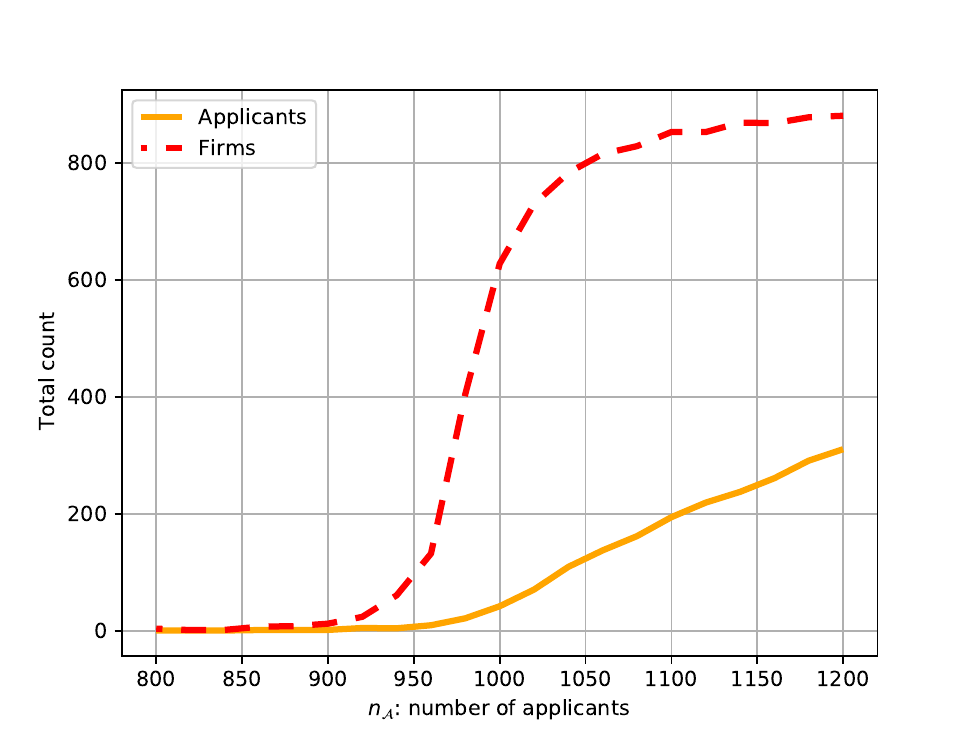} }}%
    \caption{The number of applicants and the number of firms involved in at least one interim blocking pair in the applicant-optimal stable matching for $d=10, 20$ respectively, with $800 \leq n_{\mathcal{A}} \leq 1200$, $n_{\mathcal{J}} = 1000$, $\DistB = \calN(0,1)$, and $\DistA = \calU[-1,1]$, where the interview graph is constructed by the applicant-signaling mechanism. Each data point represents the average over $10$ trials of simulations.}%
    \label{fig:increasing_n_A}%
\end{figure}

\prettyref{fig:increasing_n_A} illustrates how market imbalance influences the effectiveness of one-side signaling in achieving interim stability. We count the number of applicants and the number of firms that involved in at least one interim blocking pair in the applicant-optimal stable matching for $d=10$ and $d=20$ respectively, with pre-interview scores following $\calU[-1,1]$ and post-interview scores following $\calN(0,1)$,\footnote{Throughout this section, $\mathcal{N}(\mu,\sigma^2)$ denotes the normal distribution with mean $\mu$ and variance $\sigma^2$, and $\mathcal{U}[a,b]$ denotes the uniform distribution on $[a,b]$.} where the interview graph is constructed by the applicant-signaling mechanism. 


As $n_{\Short}$ increases from 800 to 1200 (with $n_{\Long}$ fixed at 1000), the number of both applicants and firms involved in interim blocking pairs increases for both $d=10$ and $d=20$. This trend indicates the decreasing effectiveness of achieving interim stability as the market becomes more applicant-heavy, transitioning from short-side to long-side signaling.

Moreover, with fixed $n_{\Short}$ and $n_{\Long}$, the number of agents involved in interim blocking pairs is consistently higher for $d=10$ than for $d=20$. This comparison demonstrates that increasing the number of signals significantly enhances the effectiveness of achieving interim stability.
Notably, when $d=20$, the market maintains perfect interim stability until $n_{\Short}$ approaches $n_{\Long}$, after which the number of agents in blocking pairs increases sharply. 

\begin{figure}[ht]
    \centering
    \includegraphics[width=0.7\linewidth]{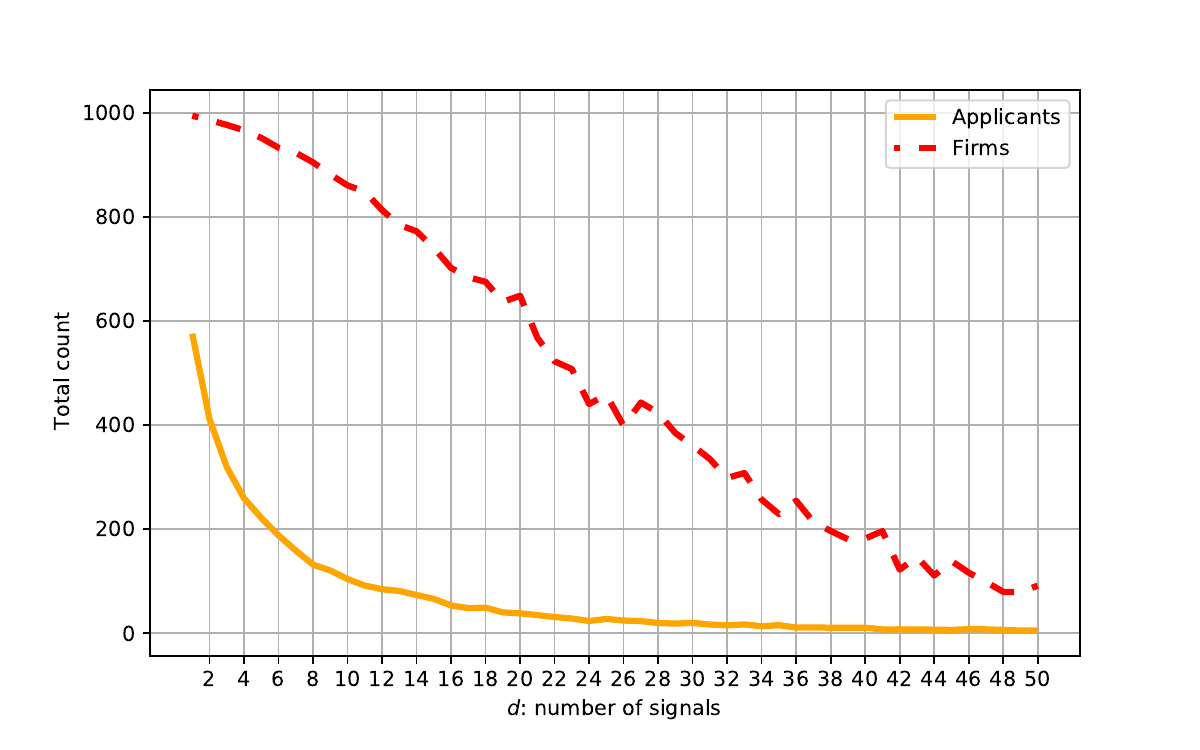}
    \caption{The number of applicants and the number of firms involved in at least one interim blocking pair in the applicant-optimal stable matching, $n_{\mathcal{A}}=n_{\mathcal{J}}=1000$, $1\leq d\leq 50$, $\DistB = \calN(0,1)$, and $\DistA=\calU[-1,1]$, where the interview graph is constructed by the applicant-signaling mechanism. Each data point represents the average over $10$ trials of simulations.}
    \label{fig:d_increase}
\end{figure}

\prettyref{fig:d_increase} illustrates how the number of signals ($d$) influences the effectiveness of one-side signaling in achieving interim stability in a balanced market where $n_{\Short} = n_{\Long} = 1000$, with pre-interview utilities following $\calN(0,1)$ and post-interview scores following $\calU[-1,1]$. It plots both the number of applicants and the number of firms involved in at least one interim blocking pair as $d$ increases from $1$ to $50$. As shown in the figure, both counts decrease significantly as $d$ increases, demonstrating the improved effectiveness of the signaling mechanism with more signals. This decrease not only indicates enhanced interim stability but also implies that fewer applicants need to be removed to achieve perfect interim stability as $d$ grows larger. The number of firms involved in interim blocking pairs consistently exceeds that of applicants, which is a result of constructing the interview graph based on the applicant-signaling mechanism. 

\begin{figure}[ht]
    \centering
    \subfloat[\centering $\DistB = \calN(0,1)$ and $\DistA = \boldsymbol\delta_0 $]
    {{\includegraphics[width=8.5cm]
    {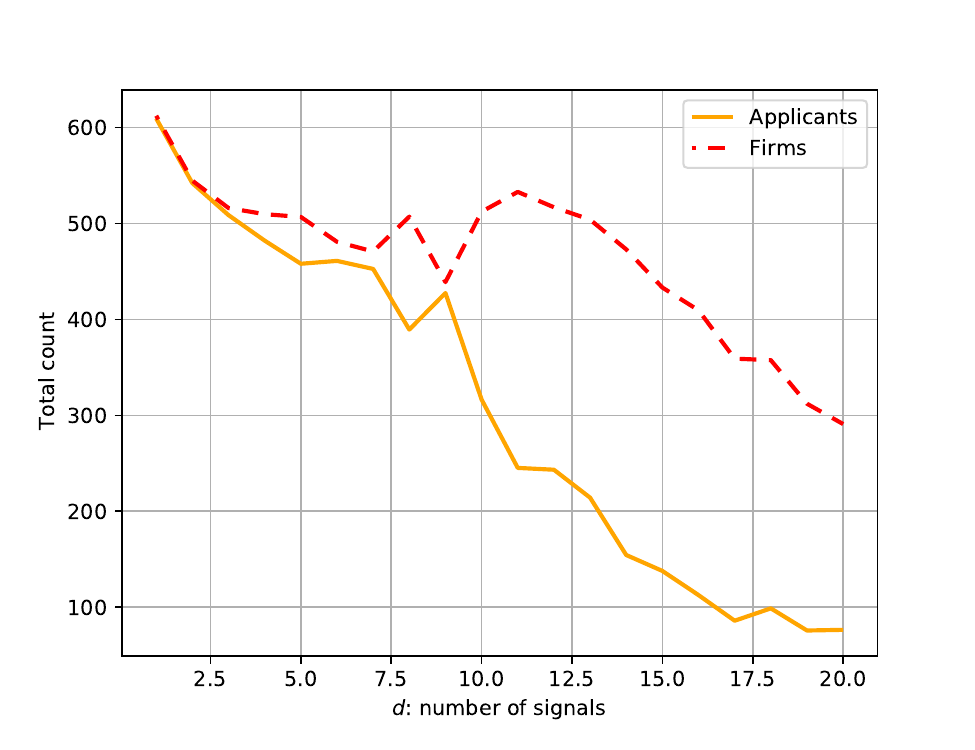} }}%
    \subfloat[\centering $\DistB = \boldsymbol\delta_0$ and $\DistA = {\calU\left[-1,1\right]} $]
    {{\includegraphics[width=8.5cm]{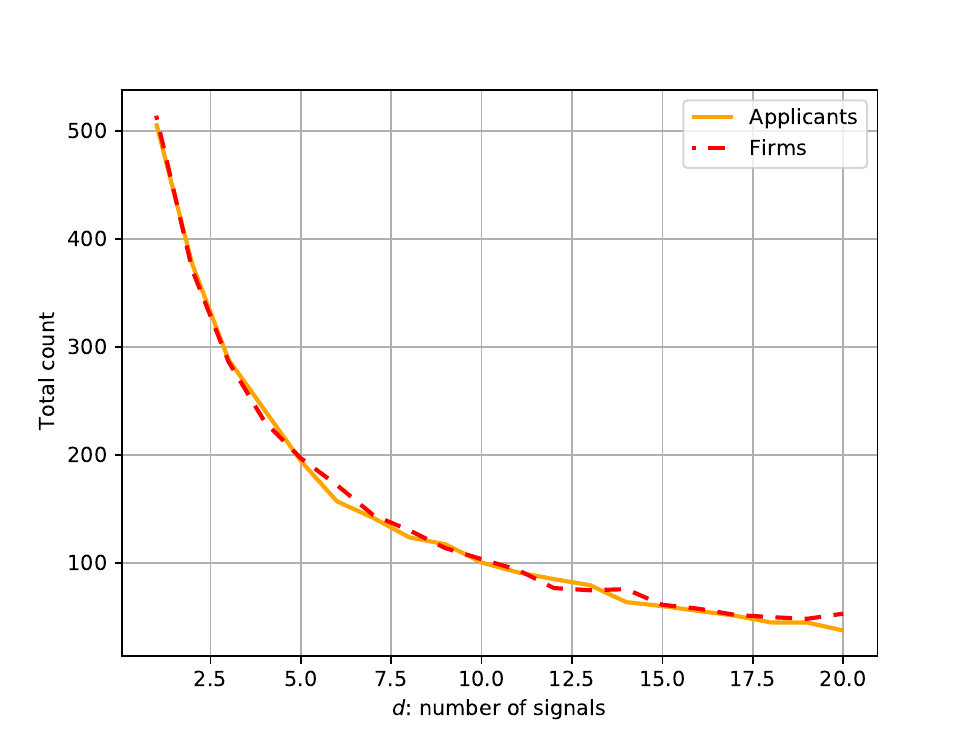} }}%
    \caption{The number of applicants and firms involved in at least one interim blocking pair in the applicant-optimal stable matching with $n_{\mathcal{A}} = n_{\mathcal{J}} = 1000$ and $1 \leq d \leq 20$, where the interview graph is constructed by the both-side signaling mechanism. Each data point represents the average over $10$ trials of simulations.}%
    \label{fig:both_increasing_n_A}%
\end{figure}

\prettyref{fig:both_increasing_n_A} demonstrates the effectiveness of both-side signaling as the number of signals increases from $1$ to $20$ in two extreme scenarios: (a) absent post-interview scores ($\DistB = \calN(0,1)$ and $\DistA = \boldsymbol{\delta}_0$), and (b) absent pre-interview scores ($\DistB = \boldsymbol{\delta}_0$ and $\DistA = \calU[-1,1]$). When post-interview scores are absent (a), the market fails to achieve interim stability with sparse signals ($d \leq 10$): a constant fraction of applicants and firms remains involved in at least one interim blocking pair, even as the number of signals increases. In contrast, when pre-interview scores are absent (b), both the fraction of applicants and the fraction of firms involved in at least one interim blocking pair decrease rapidly as $d$ increases. These observations align with theoretical insights suggesting that both-side signaling can fail to achieve almost interim stability when the impact of post-interview scores is negligible with sparse signals, but could succeed if the impact of pre-interview scores is negligible.

\section{Conclusion}
\label{sec:conclusions}

Signaling mechanisms in two-sided matching markets reduce interview congestion by allowing participants to indicate their interests. We study random matching markets where interviews are conducted through signals, after which a clearinghouse forms matches. We show that carefully designed signaling mechanisms achieve interim stable matchings with remarkably few interviews. While perfect interim stability demands polylogarithmic signals per agent, almost interim stability—allowing a vanishingly small fraction of instability—requires only $\omega(1)$ signals.

Our key insight reveals fundamental asymmetry in signaling effectiveness: short-side signaling consistently achieves almost interim stability regardless of market balance, whereas long-side signaling succeeds only when markets are weakly imbalanced. This asymmetry becomes more pronounced for perfect interim stability, where short-side signaling remains effective while long-side signaling fails in any imbalanced market when pre-interview scores dominate post-interview scores.

This paper opens several research directions. First, extending to markets with richer vertical heterogeneity beyond our multi-tiered framework presents significant challenges, particularly in understanding equilibrium signaling behavior when interviews generate preferences misaligned with public scores. Second, investigating multiple signaling rounds could reveal trade-offs between communication rounds and stability levels, guiding the design of efficient multi-stage matching processes. Third, exploring alternative models for how interviews and signals shape preferences would benefit from empirical evidence and data from real matching markets.

\appendix

\section{Preliminary facts}\label{sec:preliminary}
\subsection{Concentration inequalities}
\begin{lemma}[Chernoff bounds]\label{lmm:chernoff}
Suppose  $X \sim \Binom(n,p)$ with mean $\mu=np$.
Then for any $\epsilon>0$,
\begin{align}
\prob{ X \ge (1+\epsilon) \mu} \le \exp \left( -\frac{ \epsilon^2}{2+\epsilon} \mu  \right), \label{eq:chernoff_binom_right}
\end{align}
and 
\begin{align}
\prob{ X \le (1-\epsilon) \mu } \le \exp\left( - \frac{\epsilon^2}{2} \mu \right). \label{eq:chernoff_binom_left}
\end{align}
\end{lemma}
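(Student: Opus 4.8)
The plan is the standard exponential-moment (Chernoff) argument, handling the two tails separately. Write $X = \sum_{i=1}^n X_i$ with $X_i \iiddistr \Bern(p)$, so that for every $t \in \reals$ the moment generating function factorizes as
\[
\eexpect{e^{tX}} = (1 - p + p e^t)^n = \big(1 + p(e^t - 1)\big)^n \le \exp\!\big(np(e^t-1)\big) = \exp\!\big(\mu(e^t-1)\big),
\]
where the inequality uses $1 + x \le e^x$.

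For the upper tail \eqref{eq:chernoff_binom_right}, fix $t > 0$ and apply Markov's inequality to $e^{tX}$:
\[
\pprob{X \ge (1+\epsilon)\mu} \le e^{-t(1+\epsilon)\mu}\,\eexpect{e^{tX}} \le \exp\!\big(\mu\big(e^t - 1 - t(1+\epsilon)\big)\big).
\]
The exponent is minimized at $t = \log(1+\epsilon) > 0$, which gives $\pprob{X \ge (1+\epsilon)\mu} \le \exp\!\big(-\mu\,\psi_+(\epsilon)\big)$ with $\psi_+(\epsilon) = (1+\epsilon)\log(1+\epsilon) - \epsilon$. It then remains to establish the elementary inequality $\psi_+(\epsilon) \ge \frac{\epsilon^2}{2+\epsilon}$ for all $\epsilon > 0$: both sides vanish at $\epsilon = 0$, and comparing derivatives reduces this to the standard logarithmic bound $\log(1+\epsilon) \ge \frac{2\epsilon}{2+\epsilon}$.

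For the lower tail \eqref{eq:chernoff_binom_left}, first dispose of the easy regime $\epsilon \ge 1$: then $(1-\epsilon)\mu \le 0$, and since $X \ge 0$ the probability is at most $\pprob{X = 0} = (1-p)^n \le e^{-\mu} \le e^{-\mu\epsilon^2/2}$. For $0 < \epsilon < 1$, fix $t > 0$ and apply Markov to $e^{-tX}$:
\[
\pprob{X \le (1-\epsilon)\mu} \le e^{t(1-\epsilon)\mu}\,\eexpect{e^{-tX}} \le \exp\!\big(\mu\big(e^{-t} - 1 + t(1-\epsilon)\big)\big),
\]
again via $1 + x \le e^x$. Minimizing the exponent over $t > 0$ gives $t = \log\frac{1}{1-\epsilon}$ and the bound $\exp\!\big(-\mu\,\psi_-(\epsilon)\big)$ with $\psi_-(\epsilon) = (1-\epsilon)\log(1-\epsilon) + \epsilon$. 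The proof then concludes with the inequality $\psi_-(\epsilon) \ge \frac{\epsilon^2}{2}$ on $(0,1)$, which has a one-line power-series proof: $\psi_-(\epsilon) = \sum_{k \ge 2} \frac{\epsilon^k}{k(k-1)} \ge \frac{\epsilon^2}{2}$.

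There is no substantive obstacle here: the only non-mechanical content is the two scalar inequalities $\psi_+(\epsilon) \ge \frac{\epsilon^2}{2+\epsilon}$ and $\psi_-(\epsilon) \ge \frac{\epsilon^2}{2}$, of which the former — needing a short monotonicity/derivative check — is marginally the fussier, while the latter is immediate from its Taylor expansion. Everything else is a rote application of Markov's inequality together with $1 + x \le e^x$ and the optimal choice of $t$.
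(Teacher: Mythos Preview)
Your proof is the standard textbook Chernoff argument and is essentially correct; the paper itself states this lemma as a preliminary fact without proof, so there is nothing to compare against.

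One small slip: in the regime $\epsilon \ge 1$ you write $\pprob{X=0} \le e^{-\mu} \le e^{-\mu\epsilon^2/2}$, but the last inequality fails once $\epsilon > \sqrt{2}$. This is harmless because for $\epsilon > 1$ and $\mu > 0$ the event $\{X \le (1-\epsilon)\mu\}$ is empty (as $X \ge 0 > (1-\epsilon)\mu$), so the probability is zero and the bound holds trivially; only the case $\epsilon = 1$ actually needs the chain, and there $\epsilon^2/2 = 1/2 \le 1$ so it goes through. Just split $\epsilon = 1$ from $\epsilon > 1$ explicitly and you are fine.
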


For any $K, N, t \in \calN_+$, let $X \sim \text{Hypergeometric}(K, N, t)$ denote a random variable $X$ following the hypergeometric distribution with parameters $K$, $N$, and $t$. The probability mass function of $X$ is given by $\mathbb{P}[X = x] = \binom{K}{x} \binom{N-K}{t-x}/\binom{N}{t}$ for $x \in \{0, 1, \ldots, \min\{t, K\}\}$.

\begin{lemma}\cite[Tail inequalities for Hypergeometrics]{skala2013hypergeometric} \label{lmm:hyper}
Suppose  $X\sim \Hyper\left(K, N, t\right) $ for some $K, N, t\in \naturals_+$ with mean $\mu= t \frac{K}{N}$. Then for any $\epsilon>0$
\begin{align}
    \prob{X \ge \mu + \epsilon t }\le  \exp\left(- \epsilon^2 t\right) \,,\label{eq:hyper_upper}
\end{align}
and 
\begin{align}
     \prob{X \le \mu - \epsilon t }\le  \exp\left(- \epsilon^2 t\right) \,. \label{eq:hyper_lower}
\end{align}
\end{lemma}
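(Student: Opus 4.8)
The plan is to reduce both tail bounds to the standard Chernoff bound for a binomial, exploiting the fact that a hypergeometric variable is a sum of negatively associated indicators whose moment generating function is therefore dominated by that of the corresponding binomial.

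First I would realize $X$ as a sum of dependent Bernoulli indicators: draw the $t$ sampled items one at a time without replacement from the population of $N$ items (of which $K$ are marked), and let $Y_i$ be the indicator that the $i$-th drawn item is marked, so that $X = \sum_{i=1}^{t} Y_i$, each $Y_i \sim \Bern(K/N)$ marginally, and $\mathbb{E}[X] = tK/N = \mu$. The structural input is that $(Y_1,\dots,Y_t)$ is negatively associated — a classical property of sampling without replacement (Joag-Dev and Proschan) — which yields, for every $s \in \mathbb{R}$,
\[
\mathbb{E}\bigl[e^{sX}\bigr] \;=\; \mathbb{E}\Bigl[\textstyle\prod_{i=1}^{t} e^{sY_i}\Bigr] \;\le\; \prod_{i=1}^{t}\mathbb{E}\bigl[e^{sY_i}\bigr],
\]
i.e.\ the MGF of $X$ is bounded by that of $\Binom(t, K/N)$.

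For the upper tail I would then apply the exponential Markov inequality: for $s > 0$,
\[
\prob{X \ge \mu + \epsilon t} \;\le\; e^{-s(\mu + \epsilon t)}\,\mathbb{E}\bigl[e^{sX}\bigr] \;\le\; e^{-s\epsilon t}\prod_{i=1}^{t}\mathbb{E}\bigl[e^{s(Y_i - \mathbb{E}Y_i)}\bigr] \;\le\; e^{-s\epsilon t}\,e^{ts^2/8},
\]
where the last step is Hoeffding's lemma applied to $Y_i \in [0,1]$. Choosing $s = 4\epsilon$ gives $\prob{X \ge \mu + \epsilon t} \le e^{-2\epsilon^2 t} \le e^{-\epsilon^2 t}$, which is \eqref{eq:hyper_upper}. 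The lower bound \eqref{eq:hyper_lower} follows either by running the same argument with $s < 0$ (still valid, since negative association gives the product bound for all real $s$), or simply by noting $t - X \sim \Hyper(N-K, N, t)$ with mean $t - \mu$ and invoking the upper-tail bound just established.

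The only genuinely non-routine ingredient is the reduction step — bounding the hypergeometric MGF by the binomial one. I would obtain it from negative association as above; alternatively one can invoke Hoeffding's comparison theorem for sampling with versus without replacement, which gives $\mathbb{E}[f(X)] \le \mathbb{E}[f(\hat X)]$ for every convex $f$, where $\hat X \sim \Binom(t, K/N)$, and take $f(x) = e^{sx}$. Once the MGF domination is in hand, the remainder is the textbook Chernoff optimization, which in fact produces the stronger constant $2$ in the exponent, so the stated bound with constant $1$ holds with room to spare.
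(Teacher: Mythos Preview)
Your proof is correct and complete; it is in fact the standard derivation, yielding the sharper exponent $2\epsilon^2 t$ of which the stated bound is an immediate consequence. Note, however, that the paper does not supply its own proof of this lemma: it is quoted as a preliminary fact with a citation to \cite{skala2013hypergeometric}, so there is no argument in the paper to compare against.
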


\begin{lemma}[Chernoff-hoeffding bounds on negatively correlated binary random variables]\cite[Theorem $3.4$]{panconesi1997randomized} \label{lmm:chernoff_negative}
    Let $X_1, X_2, \ldots, X_n$ be given $0-1$ random variables with $X = \sum_{i} X_i$. Suppose for all $I \subset [n]$,
\begin{align*}
\prob{\forall i \in I \,, X_i = 1} \leq \prod_{i \in I} \prob{X_i = 1} \,.
\end{align*}
Then,
\begin{align*}
\prob{X > (1 + \varepsilon)\expect{X}}\leq \exp\left( \log \left(\frac{\exp\left(\epsilon\right)}{\left(1+\epsilon\right)^{1+\epsilon}}\right) \cdot  \expect{X}  \right) \,.
\end{align*}
\end{lemma}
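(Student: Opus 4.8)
The plan is to run the standard Chernoff (exponential moment) argument, with the one nonstandard ingredient being the observation that negative correlation of the $X_i$ already suffices to control the moment generating function of $X$ as if the variables were independent. Fix $t>0$ and apply Markov's inequality to $e^{tX}$: this gives $\prob{X > (1+\epsilon)\expect{X}} \le e^{-t(1+\epsilon)\expect{X}}\,\expect{e^{tX}}$. The crux is then to show that $\expect{e^{tX}} \le \prod_{i=1}^{n}\expect{e^{tX_i}}$.

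To prove this sub-multiplicativity, I would use that each $X_i$ is $\{0,1\}$-valued, so $e^{tX_i} = 1 + (e^t-1)X_i$ exactly. Expanding the product $\prod_i e^{tX_i} = \prod_i\bigl(1+(e^t-1)X_i\bigr)$ over all subsets $I \subseteq [n]$ yields $\sum_{I}(e^t-1)^{|I|}\prod_{i\in I}X_i$. Taking expectations, using $t>0$ so that each coefficient $(e^t-1)^{|I|}$ is nonnegative, and invoking the hypothesis $\expect{\prod_{i\in I}X_i} = \prob{\forall i\in I,\ X_i=1} \le \prod_{i\in I}\prob{X_i=1} = \prod_{i\in I}\expect{X_i}$, I obtain $\expect{\prod_i e^{tX_i}} \le \sum_I (e^t-1)^{|I|}\prod_{i\in I}\expect{X_i} = \prod_i\bigl(1+(e^t-1)\expect{X_i}\bigr) = \prod_i\expect{e^{tX_i}}$, as wanted.

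From here the argument is the classical one. Writing $p_i=\expect{X_i}$ and $\mu=\expect{X}=\sum_i p_i$, the inequality $1+x\le e^{x}$ gives $\prod_i\expect{e^{tX_i}} = \prod_i\bigl(1+(e^t-1)p_i\bigr) \le e^{(e^t-1)\mu}$, hence $\prob{X>(1+\epsilon)\mu} \le \exp\bigl((e^t-1)\mu - t(1+\epsilon)\mu\bigr)$. Optimizing the exponent over $t>0$ sets $e^t = 1+\epsilon$, i.e.\ $t=\log(1+\epsilon)$; substituting yields exponent $\mu\bigl(\epsilon - (1+\epsilon)\log(1+\epsilon)\bigr) = \mu\log\!\bigl(\tfrac{e^{\epsilon}}{(1+\epsilon)^{1+\epsilon}}\bigr)$, which is precisely the claimed bound.

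The only genuine obstacle is the sub-multiplicativity step: once one notices the linearization $e^{tX_i} = 1+(e^t-1)X_i$ and that all the subset coefficients $(e^t-1)^{|I|}$ are nonnegative for $t>0$, the negative-correlation hypothesis plugs in termwise and the remainder is the textbook Chernoff optimization. (Alternatively, since the statement is quoted verbatim from \cite[Theorem 3.4]{panconesi1997randomized}, one may simply cite it.)
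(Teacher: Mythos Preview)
Your proof is correct and is exactly the standard argument behind \cite[Theorem~3.4]{panconesi1997randomized}. The paper itself does not prove this lemma; it simply states it with the citation, so there is nothing further to compare.
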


\begin{lemma}\label{lmm:tight_gaussian}
    Let $Z$ be a standard normal random variable. Then, 
\begin{align*}
\frac{\sqrt{2/\pi}}{t + \sqrt{t^2 + 4}}\exp\left(-\frac{t^2}{2}\right) <\prob{Z > t} = \frac{1}{\sqrt{2\pi}} \int_t^{\infty} e^{-x^2/2} dx < \frac{\sqrt{2/\pi}}{t + \sqrt{t^2 + \frac{8}{\pi}}} \exp\left(\frac{t^2}{2}\right).
\end{align*}
\end{lemma}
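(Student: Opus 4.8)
The plan is to reduce both inequalities to two–sided bounds on the Mills ratio
\[
R(t) \;:=\; e^{t^2/2}\!\int_t^\infty e^{-x^2/2}\,\diff x \;=\; \frac{\prob{Z>t}}{\phi(t)},\qquad \phi(t)=\tfrac{1}{\sqrt{2\pi}}e^{-t^2/2}.
\]
Multiplying the claimed inequalities by $\sqrt{2\pi}$, they become exactly $\tfrac{2}{t+\sqrt{t^2+4}} < R(t) < \tfrac{2}{t+\sqrt{t^2+8/\pi}}$. The first step is to record the differential identity $R'(t) = t\,R(t)-1$, which follows from $\tfrac{\diff}{\diff t}\prob{Z>t} = -\phi(t)$ and $\phi'(t)=-t\phi(t)$ via the quotient rule. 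This recasts the problem as a comparison–of–solutions question for the linear ODE $y' = ty-1$.

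Next, for a parameter $a>0$ I would set $u_a(t) := \tfrac{2}{t+\sqrt{t^2+a}} = \tfrac{2}{a}\bigl(\sqrt{t^2+a}-t\bigr)$ (rationalizing the denominator) and compute, writing $w=\sqrt{t^2+a}$ and $\delta = w-t = a/(w+t)$, the clean expression
\[
E_a(t) \;:=\; u_a'(t) - t\,u_a(t) + 1 \;=\; \frac{\delta\,(w\delta-2)}{a\,w},\qquad w\delta-2 = \frac{(a-2)w-2t}{w+t}.
\]
Since $w/t = \sqrt{1+a/t^2}$ decreases from $+\infty$ (as $t\to 0^+$) to $1$ (as $t\to\infty$), the sign of $w\delta-2$ is read off immediately: for $a=4$ one has $w\delta-2 = \tfrac{2(w-t)}{w+t}>0$, so $E_4(t)>0$ for all $t\ge 0$; for $a=8/\pi$, because $0<\tfrac8\pi-2<2$, the quantity $w\delta-2$ is positive for small $t$ and negative for large $t$, so $E_{8/\pi}$ has \emph{exactly one} sign change, from $+$ to $-$.

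To conclude, set $g := R - u_a$. Combining $R'=tR-1$ with the definition of $E_a$ gives $g'-tg = -E_a$, hence $\bigl(e^{-t^2/2}g(t)\bigr)' = -e^{-t^2/2}E_a(t)$. For the lower bound ($a=4$), this derivative is everywhere negative, so $e^{-t^2/2}g$ is strictly decreasing; since $R(t)$ and $u_4(t)$ share the expansion $\tfrac1t - \tfrac1{t^3} + O(t^{-5})$ as $t\to\infty$, we have $e^{-t^2/2}g(t)\to 0$, and a strictly decreasing function tending to $0$ is strictly positive, so $R>u_4$. For the upper bound ($a=8/\pi$), the constant is calibrated so that $u_{8/\pi}(0) = 2/\sqrt{8/\pi} = \sqrt{\pi/2} = R(0)$, i.e.\ $g(0)=0$; by the single sign change of $E_{8/\pi}$, the function $e^{-t^2/2}g$ first decreases then increases, and its limit at $+\infty$ is $0$ (matching asymptotics again: $u_{8/\pi}\sim \tfrac1t-\tfrac{2/\pi}{t^3}$ while $R\sim\tfrac1t-\tfrac1{t^3}$, so $g\sim(\tfrac2\pi-1)t^{-3}<0$). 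Hence $e^{-t^2/2}g\le 0$ throughout, strictly for $t>0$, giving $R(t)<u_{8/\pi}(t)$.

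The main obstacle is the sign bookkeeping in $E_a$ together with pinning down the boundary behaviour that closes the envelope argument: the lower bound needs the routine but slightly delicate matching of the asymptotic expansions of $R$ and $u_4$ to order $t^{-3}$ to force $g(\infty)=0$, and the upper bound rests on the arithmetic coincidence $R(0)=\sqrt{\pi/2}=u_{8/\pi}(0)$ and on $E_{8/\pi}$ crossing zero only once — without the single–crossing property the monotone–envelope conclusion would fail. Everything else is elementary calculus.
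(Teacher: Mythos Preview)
Your proof is correct and complete. The paper's own ``proof'' is a one-line citation to Abramowitz--Stegun (Equation 7.1.13), so your argument is genuinely different: you supply a self-contained derivation rather than invoking a handbook identity.

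Your route is the classical Mills-ratio ODE comparison. You correctly reduce the problem to $u_4(t)<R(t)<u_{8/\pi}(t)$ for the Mills ratio $R$, derive the first-order linear ODE $R'=tR-1$, and compare $R$ against the one-parameter family $u_a(t)=2/(t+\sqrt{t^2+a})$ via the defect $E_a=u_a'-tu_a+1$. The algebra leading to $E_a=\delta(w\delta-2)/(aw)$ and $w\delta-2=\bigl((a-2)w-2t\bigr)/(w+t)$ is clean and correct, and the sign analysis (always positive for $a=4$; a single $+\to-$ crossing for $a=8/\pi$) is exactly what drives the envelope argument. The two boundary conditions you identify --- matching of the $t^{-1}$ and $t^{-3}$ asymptotics at infinity for $a=4$, and the calibration $R(0)=\sqrt{\pi/2}=u_{8/\pi}(0)$ at the origin for the upper bound --- are precisely what is needed to close each case.

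What your approach buys over the paper's citation is a transparent explanation of \emph{why} the constants $4$ and $8/\pi$ appear: $a=4$ is the unique value making $u_a$ match $R$ to order $t^{-3}$ at infinity, and $a=8/\pi$ is the unique value making $u_a(0)=R(0)$. A minor remark: your upper bound argument yields $R(t)<u_{8/\pi}(t)$ strictly only for $t>0$, with equality at $t=0$; the lemma as stated in the paper does not specify the range of $t$, and is only used for large $t$, so this is harmless.
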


\begin{proof}
    The result directly follows from~\cite[Equation $7.13$]{abramowitz1968handbook}. 
\end{proof}


\subsection{Facts on stable matching on random bipartite graph}
\begin{proposition}\label{prop:unmatched}\cite[Theorem 1]{kanoria2023competition}
      Given a one-sided $d$-regular graph $H$ with uniformly generated strict preferences on $\Short \cup \Long$ with $n_{\Short} = \delta n_{\Long}$, where $\omega(1)\le d =o\left(\log^2 n\right)$ and  $1\le \delta \le 1 + n^{-\epsilon}$ for any constant $\epsilon>0$, for any stable matching $\Phi$ on $H$,   
        \begin{align}
        & \prob{  \exp\left(-d^{\frac{1}{2}} - 3d^{\frac{1}{4}} \right) \le  \frac{1}{n_{\Short}}\left|a\in\Short \text{ s.t. $a$ is unmatched on $\Phi$} \right| \le \exp\left(-d ^{\frac{1}{2}} +3d^{\frac{1}{4}} \right)   }  \nonumber  \\
        & \ge 1- O\left(\exp\left(-d/4\right)\right) \,,\label{eq:unmatched_fraction}
    \end{align}
    and the applicant’s average rank of firms in all stable matchings on $H$ is $\Theta\left(\sqrt{d}\right)$ with high probability. 
\end{proposition}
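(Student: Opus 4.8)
\emph{Step 1 (localization).} The plan is to reduce the statement about an arbitrary stable matching $\Phi$ to a local computation on the almost tree-like neighborhood of a typical applicant and then run the message-passing recursion of \prettyref{sec:message}; this recovers the result of \cite{kanoria2023competition} within our framework. Fix an applicant $\rho$ and a radius $m=m(n)$ with $m\diverge$ slowly. Since $H$ is one-sided $d$-regular on $n$ vertices and $d=o(\log^2 n)$, one can choose such $m$ with $d^{2m}=o(n)$, so $H_m(\rho)$ has $O(1)$ cycles with probability $1-o(1)$ and, after deleting $O(1)$ vertices, becomes a tree on which the stable matching is unique (\prettyref{lmm:tree_unique}). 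By the well-known fact that the set of unmatched agents is identical in every stable matching, it suffices to estimate, for a typical $\rho$, the probability that $\rho$ has no available firm-neighbor; for the rank claim it suffices, by \prettyref{lmm:local} and \prettyref{lmm:local_available}, to bound $\rho$'s matched rank on $H_m(\rho)$ (even depth: $\rho$ weakly worse off) and on $H_{m+1}(\rho)$ (odd depth: $\rho$ weakly better off), which sandwich $\rho$'s rank in every stable matching on $H$. The mild imbalance $\delta-1\le n^{-\epsilon}$ contributes only lower-order corrections, since $n^{-\epsilon}=o(e^{-\sqrt d})$ when $d=o(\log^2 n)$.

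\emph{Step 2 (message-passing recursion).} Run the hierarchical proposal-passing algorithm (\prettyref{alg:proposal_passing_alg}) on the rooted tree; a firm-child is available to its parent applicant iff it proposes to the parent. Let $q_t$ be the probability that a firm at odd depth $2t-1$ is available to its parent in the corresponding finite-depth instance. Uniform preferences collapse the held-proposal rank distribution to a scalar (a firm ranking a given child first among $K+1$ competitors does so with probability $1/(K+1)$), giving a one-dimensional recursion $q_{t+1}=F_d(q_t)$: an applicant points to a firm iff that firm beats its $\Binom(d-1,q_t)$ other available firm-children, and a firm is available to a child iff no other child pointing to it outranks it. Monotonicity of truncated neighborhoods (\prettyref{lmm:local}, \prettyref{sec:truncation}) makes $\{q_t\}$ monotone and bounded; I would show $F_d$ has a unique attracting fixed point $q^\star=(1+o(1))\,d^{-1/2}$ — so $q_t\to q^\star$, with no rate of convergence needed — whence, by conditional independence of the $d$ subtrees at $\rho$, the probability that $\rho$ is unmatched is $(1-q^\star)^d=\exp(-(1+o(1))\sqrt d)$, and its best available firm-child sits at position $\approx (q^\star d)^{-1}d=\Theta(\sqrt d)$. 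The marginal proposing probabilities on almost-regular trees (\prettyref{lmm:rooted_tree}) are the quantitative input here.

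\emph{Step 3 (concentration and globalization).} The number of available firm-children of $\rho$ is $\Binom(d,q^\star(1+o(1)))$ with mean $\asymp\sqrt d$ and fluctuations $\asymp d^{1/4}$; a Chernoff estimate (\prettyref{lmm:chernoff}, \prettyref{lmm:hyper}) upgrades this to $\exp(-\sqrt d-3d^{1/4})\le (1-q^\star)^d\le\exp(-\sqrt d+3d^{1/4})$ with probability $1-O(e^{-d/4})$ over the neighborhood randomness, and likewise pins the rank to $\Theta(\sqrt d)$. Summing over applicants, the expected unmatched fraction equals the per-applicant probability up to the $o(1)$ fraction of vertices with a non-tree $m$-neighborhood, and concentration of the fraction itself follows from near-independence of the unmatched indicators at graph-distance $>2m$ via a bounded-differences (Azuma) argument, yielding \eqref{eq:unmatched_fraction}. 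I expect the main obstacle to be the fixed-point analysis in Step 2: the recursion $F_d$ is near-critical ($q_{t+1}\approx q_t$ to leading order), so pinning the constant in $q^\star\sqrt d\to 1$, and even proving uniqueness of the fixed point, demands a careful second-order expansion, and one must ensure the depth $m$ needed for $\{q_t\}$ to be close to $q^\star$ stays compatible with $d^{2m}=o(n)$, a tension that tightens as $d$ approaches $\log^2 n$. Propagating the $\Binom$-fluctuations over these $m$ levels is exactly what the $3d^{1/4}$ slack absorbs, and carrying everything out uniformly over \emph{all} stable matchings, via the even/odd sandwiching, is the remaining delicate point.
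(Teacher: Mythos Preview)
The paper does not prove this statement; it is quoted from \cite{kanoria2023competition}, whose proof proceeds via a global rejection-chain (balls-into-bins) analysis of Deferred Acceptance rather than any local method. Your plan to recover it from the paper's message-passing framework is natural, and Steps~1 and~3 are broadly sound, but the obstacle you flag at the end of Step~3 is in fact fatal across most of the stated range of $d$, and directly contradicts the claim in Step~2 that ``no rate of convergence [is] needed.''

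Concretely, in the balanced regime the recursion has $(f_d\circ f_d)'(q^\star)=1-2\sqrt d\,e^{-\sqrt d}(1+o(1))$: writing $q=x/\sqrt{d+1}$ one finds $f_d(q)=(1+o(1))/(x\sqrt{d+1})$ for any bounded $x>0$, so $f_d\circ f_d$ acts as the identity on $x$ up to corrections that are exponentially small in $\sqrt d$. Equivalently, the even iterates started from $q_0=1$ obey $q_{t+1}\approx q_t(1-e^{-1/q_t})$, and reaching a $(1+o(1))$-neighborhood of $q^\star=1/\sqrt{d+1}$ takes on the order of $e^{\sqrt d}/\sqrt d$ steps; the bound on $m$ in case~\ref{F:2} of \prettyref{lmm:fixed_point_convergence} reflects exactly this. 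But the tree-like radius of $H$ is only $O(\log n/\log d)$ (\prettyref{prop:one_side_ER_tree}), so the even/odd sandwich closes only when $e^{\sqrt d}\lesssim \log n$, i.e.\ $d\lesssim(\log\log n)^2$, far short of $d=o(\log^2 n)$. This is precisely why the paper's own message-passing bound in the balanced regime, \prettyref{prop:remove_gamma_d_omega_1}, delivers only the non-matching pair $\underline C\,\log d/d \le q \le \overline C/\log d$ rather than $\Theta(1/\sqrt d)$, and why the $\sqrt d$ result is imported rather than derived within the local framework.
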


\subsection{Locally-tree structure for sparse random graphs}\label{sec:loc_tree_random}

\subsubsection{\ER Random  bipartite graph}

In the $\mathbb{G}(n_{\Short}, n_{\Long}, p)$ model, a bipartite \ER random graph with two distinct vertex sets, $\Short$ and $\Long$, is constructed by connecting each node $i \in \Short$ and $j \in \Long$ with probability $p \in [0,1]$, where $n_{\Short}\le n_{\Long}$ and $n= n_{\Short}+n_{\Long}$. The following proposition demonstrates that for a bipartite \ER random graph $H \sim \mathbb{G} \left(n_\Short, n_\Long, p \right)$, its local neighborhood resembles a tree structure, with only a constant number of cycles, with high probability.
\begin{proposition} \label{prop:ER_tree} 
Suppose $H \sim \mathbb{G} \left(n_\Short, n_\Long, p \right)$. With  probability $1-o\left(n^{-\gamma}\right)$, $H_{\overell}(\rho)$ has at most tree excess $\gamma\in\naturals$ for any $\rho\in \calV(H)$ ${\overell} \le \frac{\log n}{  \left(8 \log \left(n p \right)\right) \vee \left(4 \log \log n \right)}$.   
\end{proposition}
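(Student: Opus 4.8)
The plan is a first-moment argument over small ``witness'' subgraphs, carried out for a fixed root $\rho$ and a fixed radius $\overell$. Since $B_{\overell}(\rho)$ is connected and $H_{\overell}(\rho)$ is the subgraph it induces, the tree excess of $H_{\overell}(\rho)$ equals $|\calE(H_{\overell}(\rho))|-|\calV(H_{\overell}(\rho))|+1$, which is the number of edges of $H_{\overell}(\rho)$ outside any fixed BFS tree $T$ of $H_{\overell}(\rho)$ rooted at $\rho$; such a $T$ has depth at most $\overell$. If the excess exceeds $\gamma$ there are $\gamma+1$ distinct non-tree edges $e_1,\dots,e_{\gamma+1}$, and each $e_i=\{u_i,v_i\}$ together with the $T$-paths from $u_i$ and $v_i$ to their least common ancestor $w_i$, followed by the $T$-path from $w_i$ up to $\rho$, is a ``tadpole'': a path of length $a_i\ge 0$ from $\rho$ to $w_i$ then a cycle of (even) length $\ell_i\le 2\overell$ through $w_i$, with $a_i+\ell_i/2\le\overell$ since every vertex involved lies in $B_{\overell}(\rho)$. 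Taking the union of the $\gamma+1$ tadpoles and deleting edges in cycles until the excess is exactly $\gamma+1$ produces a connected subgraph $S\subseteq H$ with $\rho\in\calV(S)$, excess exactly $\gamma+1$, and $|\calE(S)|\le\sum_i(a_i+\ell_i)\le 2\overell(\gamma+1)$. So $\mathbb{P}[\text{tree excess of }H_{\overell}(\rho)>\gamma]$ is at most the probability that $H$ contains such an $S$, and a first-moment bound suffices.

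For a fixed combinatorial type of $S$ — say $v$ vertices (one of them $\rho$), $e$ edges, excess $\gamma+1$, so $v-1=e-(\gamma+1)$ — the number of embeddings into $[n]$ rooted at $\rho$ is at most $n^{v-1}$, each present with probability at most $p^e$, giving an expected count at most $n^{v-1}p^e=n^{-(\gamma+1)}(np)^e$ for that type. Using $e\le 2\overell(\gamma+1)$ together with the hypothesis $\overell\le\frac{\log n}{8\log(np)}$, which gives $(np)^{2\overell}\le n^{1/4}$, each type contributes at most $n^{-(\gamma+1)}\big((np)^{2\overell}\big)^{\gamma+1}\le n^{-3(\gamma+1)/4}$. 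The number of combinatorial types is $\overell^{O(\gamma)}$, which is $n^{o(1)}$ because the other branch of the threshold, $\overell\le\frac{\log n}{4\log\log n}$, forces $\overell=O(\log n/\log\log n)$; hence the expected number of witnesses at $\rho$ is $n^{-3(\gamma+1)/4+o(1)}$, and Markov's inequality yields the claimed bound $o(n^{-\gamma})$. (When $np=n^{\Omega(1)}$ the threshold forces $\overell=0$ and the statement is vacuous, and when $np$ is only polylogarithmic the bound $(np)^{2\overell}\le n^{1/4}$ still holds, so the same estimate applies.)

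The step I expect to be the crux is keeping this count tight. The obvious reduction — ``$H_{\overell}(\rho)$ contains some connected subgraph on $\le 1+2\overell(\gamma+1)$ vertices with $\gamma+1$ surplus edges'' — is far too wasteful: choosing $\Theta(\overell\gamma)$ internal vertices freely and then planting $\gamma+1$ ``cheap'' extra edges among them inflates the first moment by a factor like $n^{\Theta(\overell\gamma)}$ that the present radius cannot absorb. The tadpole decomposition is what forces each of the $\gamma+1$ independent cycles to be paid for by the product of $np$ over \emph{its own} vertices, which is exactly what turns the expected count into the manageable $n^{-\mathrm{excess}(S)}(np)^{|\calE(S)|}$; and the two constants $8$ and $4$ (and the $\vee$ between the branches) are calibrated so that both $(np)^{2\overell}\le n^{1/4}$ and $\overell^{O(\gamma)}=n^{o(1)}$ hold uniformly over the regimes $np=\mathrm{polylog}(n)$ and $np$ larger, leaving enough slack for the resulting polynomial rate $n^{-3(\gamma+1)/4}$ to beat $n^{-\gamma}$. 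A minor point to check along the way is the deterministic implication used at the start — that $\gamma+1$ fundamental cycles of $H_{\overell}(\rho)$ really do witness excess $\ge\gamma+1$ in their union, hence in $H$ — which is immediate since fundamental cycles with respect to a spanning tree are independent in the cycle space.
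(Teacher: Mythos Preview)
Your approach is the paper's: both take, for each of the $\gamma+1$ non-tree edges of the BFS spanning tree, the tree-paths from its endpoints back to $\rho$, obtaining a connected witness $S\ni\rho$ with at most $2\overell(\gamma+1)$ edges and excess \emph{exactly} $\gamma+1$ (so your edge-deletion step is vacuous---overlapping tadpoles share only tree edges, hence their union already has excess equal to the number of non-tree edges), and then bound by first moment. The paper folds your separate type-count and the uniformity over $\rho$ into a single crude sum $\sum_{\kappa\le 2\overell(\gamma+1)}\binom{n}{\kappa}\binom{\binom{\kappa}{2}}{\kappa+\gamma}p^{\kappa+\gamma}\lesssim n^{-\gamma}\sum_\kappa(n\kappa^2 p)^{\kappa+\gamma}$ over unrooted $\kappa$-vertex subgraphs, which is cleaner bookkeeping but the same arithmetic (the factor of $n$ you would lose union-bounding over roots is exactly the one you saved by pinning $\rho$ in the embedding count).
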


\begin{proof}

    First, we show that for any $\kappa,m\in \naturals$ such that $\kappa^2p < \frac{1}{2}$, the expected number of induced subgraphs in $H$ with $\kappa$ vertices and at least $\kappa + m$ edges is at most 
    \begin{align}
        \binom{n_{\Short}+n_{\Long}}{\kappa} \sum_{\ell = \kappa+m}^{\binom{\kappa}{2}} \binom{\binom{\kappa}{2}}{\ell} p^{\ell} \le 2 n^\kappa \left( \kappa^{2}p \right)^{\kappa+m}  \,. \label{eq:expect_subgraph_count}
    \end{align}
    
    Next, fix $\rho  \in \calV(H)$ and $H_{\overell}(\rho)$ that is the vertex-induced subgraph of $H$ based on the $r$-hop neighborhood of $\rho$. Suppose that $H_{\overell}(\rho)$ has at least $\gamma+1$ tree excess. Let $T$ be a breadth-first search spanning tree of this neighborhood $H_{\overell}(\rho)$. Since $T$ contains $\ell$ vertices and $\ell - 1$ edges, there are $\gamma + 1$ edges which are not contained in $T$. Each extra edge is incident to two vertices. Let $\calU$ be the set of these vertices. Let $G$ be the vertex-induced subgraph of $H_{\overell}(\rho)$ on the union of the $\gamma + 1$ extra edges and the unique paths in $T$ from $u$ to $\rho$ for each $u \in \calU$. Since $|\calU| \leq 2(\gamma + 1)$ and each path to the root in the breadth-first-search tree $T$ has length at most $r$, the number of vertices of $H$ is bounded by $ 2 \left(r-1\right)(\gamma + 1) +1$.

    Hence, if there exists $\rho \in \calV(H)$ such that  $H_{\overell}(\rho)$ has tree excess at least $\gamma+1$, then there exists a subgraph with $\kappa \leq 4 r$ vertices and $\gamma+1$ more edges than vertices. By \prettyref{eq:expect_subgraph_count} and $\kappa^2 p < \frac{1}{2}$, in view of $np = O \left( \polylog n\right)$ and $\kappa \le {\overell} \le \frac{\log n}{  \left(8 \log \left(n p \right)\right) \vee \left(4 \log \log n \right) }$, the expected number of $\rho \in \calV(H)$ such that  $H_{\overell}(\rho)$ has tree excess at least $\gamma+1$ is bounded by 
    \[
        \sum_{\kappa=1}^{ 4 \left(r-1\right)+1} 2 n^\kappa \left( \kappa^{2}p \right)^{\kappa+\gamma+1} \le 
        2 n^{-\gamma-1} \sum_{\kappa=1}^{ 4 r- 3} \left( n\kappa^{2}p \right)^{\kappa+\gamma+1} 
        \le 4 n^{-\gamma-1} \left( 16 r^2 n p \right)^{ 4 r} 
        \,.
    \]
    By Markov's inequality, we have
    \begin{align*}
        \prob{\exists \rho \text{ s.t.  $H_{\overell}(\rho)$ has at least $\gamma+1$ tree excess}} 
        & \le  4 n^{-\gamma-1} \left( 16 r^2 n p \right)^{ 4 r} \le o\left(n^{-\gamma} \right)\,,
    \end{align*}
    where the last inequality holds by ${\overell} \le \frac{\log n}{  \left(8 \log \left(n p \right)\right) \vee \left(4 \log \log n \right)} $.  Hence, our desired result follows. 
    

\end{proof}

\begin{lemma}\label{lmm:n_sqaure_ER}
    Suppose $H \sim \mathbb{G} \left(n_\Short, n_\Long, p \right)$. If $ np = O\left(\polylog n \right)$, for any $\ell \in \naturals_+$ such that $\ell \le \frac{\log n}{4 \left( \log \left(np\right) \vee   \log \log n \right)}$, we have
    \begin{align*}
        \prob{\exists i\in \calV(H) \text{ s.t. } |\calV(H_{\ell}(i))| \ge n^{\frac{1}{2}}} \le  \exp\left(- 2\left( \left( np \right) \vee \log n\right)\right) \,.
    \end{align*}
\end{lemma}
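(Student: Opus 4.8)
The plan is to control $|\calV(H_\ell(i))|$ by dominating the breadth-first exploration from $i$ with a branching process. Write $Z_k$ for the number of vertices at graph distance exactly $k$ from $i$, so that $Z_0=1$ and $|\calV(H_\ell(i))|=\sum_{k=0}^{\ell}Z_k$. Conditionally on the first $k$ explored layers, each of the $Z_k$ vertices at distance $k$ has its set of not-yet-discovered neighbours on the opposite side contained in an $n$-element set with each element present independently with probability $p$; hence $Z_{k+1}$ is stochastically dominated by $\Binom(n Z_k,p)$. I would then fix a suitable absolute constant $C$, introduce the deterministic thresholds $t_k\triangleq C\log n\cdot\bigl(e^2(np+1)\bigr)^k$ (note $t_0=C\log n\ge 1=Z_0$), and show by induction on $k$ that $\prob{Z_{k}>t_{k}\mid Z_{k-1}\le t_{k-1}}\le n^{-Ce^2}$. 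The inductive step uses the Poisson-type Chernoff tail $\prob{\Binom(N,p)\ge t}\le (eNp/t)^t\le e^{-t}$, valid whenever $t\ge e^2 Np$, together with $t_{k}=e^2(np+1)t_{k-1}\ge e^2\,(np)\,t_{k-1}$ and $t_k\ge t_1\ge Ce^2\log n$; moreover, since $t_1\ge Ce^2\,(np)\,\log n$, the same argument yields the much sharper bound $\prob{Z_k>t_k\mid Z_{k-1}\le t_{k-1}}\le n^{-Ce^2\,np}$ when $np$ is large.

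A union bound over the $\ell\le\log n$ layers then gives $\prob{Z_k\le t_k\text{ for all }k\le\ell}\ge 1-\log n\cdot n^{-Ce^2}$ (respectively $1-\log n\cdot n^{-Ce^2\,np}$). On this event, $|\calV(H_\ell(i))|\le\sum_{k=0}^{\ell}t_k\le 2C\log n\,(e^2(np+1))^{\ell}$, using $e^2(np+1)\ge 2$. It then remains to check that this is $o(n^{1/2})$: writing $np+1\le 2(np\vee 1)$ and invoking the hypothesis $\ell\le\frac{\log n}{4(\log(np)\vee\log\log n)}$, one gets $(np\vee 1)^{\ell}\le n^{1/4}$ and $(2e^2)^{\ell}\le n^{o(1)}$, so $|\calV(H_\ell(i))|\le n^{1/4+o(1)}<n^{1/2}$ for all large $n$. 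Hence $\prob{|\calV(H_\ell(i))|\ge n^{1/2}}\le\log n\cdot n^{-Ce^2}$ (respectively $\log n\cdot n^{-Ce^2\,np}$).

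Finally I would union-bound over the $n$ choices of $i\in\calV(H)$, obtaining $n\log n\cdot n^{-Ce^2}$ when $np\le\log n$ and $n\log n\cdot n^{-Ce^2\,np}$ when $np>\log n$. Choosing $C$ a large enough absolute constant makes the former at most $\exp(-2\log n)$ and the latter at most $\exp(-2np)$, i.e.\ in both cases at most $\exp\bigl(-2((np)\vee\log n)\bigr)$, which is the claim. (The hypothesis $np=O(\polylog n)$ guarantees that the admissible range of $\ell$ is nonempty and, in combination with the bound on $\ell$, that all the $n^{o(1)}$ factors above are genuinely negligible.)

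I expect the main obstacle to be obtaining a per-vertex tail bound that is simultaneously strong enough to survive the union over all $n$ vertices and to reach down to $\exp(-2np)$ in the large-$np$ regime: a naive first-moment/Markov estimate on $|\calV(H_\ell(i))|$ only gives a bound of order $n^{-1/4}$ per vertex, which is useless after the union bound. The layer-by-layer domination with the exponential Poisson tail is what circumvents this, precisely because all the thresholds $t_k$ have size at least $Ce^2\log n$ (indeed at least $Ce^2\,np\,\log n$), so each layer contributes only a super-polynomially small failure probability.
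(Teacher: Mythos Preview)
Your argument is correct, but it is considerably more elaborate than the paper's. The paper proceeds by a single maximum-degree bound: setting $d=(5np)\vee(3\log n)$, one Chernoff inequality (together with a union bound over vertices) shows that every vertex of $H$ has degree at most $d$ except with probability at most $\exp\bigl(-2((np)\vee\log n)\bigr)$; on that event, $|\calV(H_\ell(i))|$ is deterministically bounded by the size of a $d$-ary tree of depth $\ell$, namely $2d^{\ell}$, and the hypothesis on $\ell$ makes this $\le n^{1/2}$. So the paper reduces everything to one tail bound, whereas you control each BFS layer separately via branching-process domination and a Chernoff bound per layer. Your route is more flexible---it would still work in settings where the maximum degree is not uniformly bounded but the layer sizes grow geometrically on average---but for this lemma the paper's one-shot degree bound is the shorter path and avoids the bookkeeping with the thresholds $t_k$ and the separate treatment of the small- and large-$np$ regimes.
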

\begin{proof}
    For $i\in \calV(H)$, let $d_i$ denote the degree of $i$ in $H$, and set $d= \left( 5 n p \right) \vee \left(3 \log n \right)$.   By applying union bound and \prettyref{eq:chernoff_binom_right} in \prettyref{lmm:chernoff}, we obtain
    \begin{align*}
         \prob{ \forall i\in \calV(H), \, d_i \ge d } \le n \exp \left( -\frac{ \left(  5 \vee \frac{3 \log n}{np} -1\right)^2}{3 } np \right) 
         & \le n \exp\left(-\left( \left( 5 np \right) \vee \left( 3 \log n \right)\right)\right)  \\
         & \le    \exp\left(- 2\left( \left( np \right) \vee \log n\right)\right)  \,, 
    \end{align*}
    where the second inequality holds because $ \left(x-1\right)^2 \ge 3 x$ for $x \ge 5$. 
    
    Conditional on $d_i \ge d$ for all $i\in \calV(H)$, it follows that $|\calV(H_{\ell}(i))|$ is upper bounded by the number of vertices in a $d$-regular tree with depth $\ell$ such that each node except the leaf node has degree $d$, which contains at most
    \[
    2d^{\ell}  \le 2 \left(\left(5 n p \right) \vee \left(3 \log n \right)\right)^{\frac{\log n}{\left(4 \log \left(np\right) \right) \vee  \left(4 \log \log n \right)  }} \le n^{\frac{1}{2}} \,.
    \]
    Hence, our desired result follows. 
\end{proof}

\subsubsection{Random one-sided regular bipartite graph}


A one-sided regular $d$-bipartite graph is defined as a bipartite graph on $\Short\cup \Long$ such that each $a\in \Short$ is connected to $d$ randomly chosen $j\in \Long$, where $n_{\Short} \le n_{\Long}$ and $n= n_{\Short}+n_{\Long}$. 

\begin{lemma}\label{lmm:one_sided_regular}
    Let $H$ be a random one-sided $d$-regular bipartite graph, where each $a\in \Short$ is connected to $d$ randomly chosen $j\in \Long$. With probability at least $
    1-\exp \left(- 2.5 \left( d \vee \log n\right) \right) $, $H$ can be viewed as a subgraph of $G \sim \mathbb{G} \left(n_\Short, n_\Long, p\right)$ where $p \ge \frac{ 8 \left( d \vee \log n \right)}{n} $. 
\end{lemma}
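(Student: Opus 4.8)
The plan is to construct an explicit coupling of the random one-sided $d$-regular bipartite graph $H$ with an \ER bipartite graph $G\sim\mathbb{G}(n_\Short,n_\Long,p)$, $p=\frac{8(d\vee\log n)}{n}$, so that $\calE(H)\subseteq\calE(G)$ holds deterministically on a high-probability ``degree event,'' and then to bound the complement of that event by a Chernoff lower-tail estimate and a union bound over the applicants. Concretely, for each applicant $a\in\Short$ I would independently draw a uniformly random permutation $\sigma_a$ of $\Long$ together with an independent $K_a\sim\Binom(n_\Long,p)$, and then declare $(a,j)\in\calE(H)$ iff $j$ is among the first $d$ symbols of $\sigma_a$, and $(a,j)\in\calE(G)$ iff $j$ is among the first $K_a$ symbols of $\sigma_a$. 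Marginally, the $\Long$-neighborhood of each $a$ under $H$ is a uniform $d$-subset of $\Long$, independent across $a$, so $H$ has exactly the law of the random one-sided $d$-regular graph; and conditionally on $(K_a)_{a\in\Short}$, the $\Long$-neighborhood of $a$ under $G$ is a uniform $K_a$-subset of $\Long$, independent across $a$, which is precisely the conditional law of $\mathbb{G}(n_\Short,n_\Long,p)$ given its left-degree sequence, so $G\sim\mathbb{G}(n_\Short,n_\Long,p)$. By construction, on the event $\calG:=\{K_a\ge d\text{ for every }a\in\Short\}$ the first $d$ symbols of $\sigma_a$ are among the first $K_a$, hence $\calE(H)\subseteq\calE(G)$, i.e.\ $H$ is a subgraph of $G$; the firm side never enters, since $\deg_H(j)\le\deg_G(j)$ automatically once $\calE(H)\subseteq\calE(G)$.

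It then remains to estimate $\prob{\calG^{c}}$. Writing $\mu:=n_\Long p$ and using $n_\Long\ge n/2$ (which follows from $n_\Short\le n_\Long$), we get $\mu=\frac{8\,n_\Long\,(d\vee\log n)}{n}\ge 4(d\vee\log n)\ge 4d$, so $d\le\mu/4$. For a fixed $a$, the Chernoff lower-tail bound \eqref{eq:chernoff_binom_left} gives
\[
\prob{K_a<d}\ \le\ \prob{K_a\le(1-\tfrac{3}{4})\mu}\ \le\ \exp\!\Big(-\tfrac{(3/4)^2}{2}\,\mu\Big)\ \le\ \exp\!\big(-\tfrac{9}{8}(d\vee\log n)\big),
\]
and a union bound over the at most $n\le e^{\log n}\le e^{d\vee\log n}$ applicants controls $\prob{\calG^{c}}$; keeping the multiplicative constant in the definition of $p$ comfortably large (equivalently, using a sharper Chernoff estimate for the ratio $t=d/\mu$) upgrades the resulting bound to the stated $\exp(-2.5(d\vee\log n))$, the numerical constants being non-tight. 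Since $H\subseteq G$ on $\calG$, this proves the lemma.

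The argument has no genuinely hard step; the one place that needs care is the distributional bookkeeping in the coupling — verifying that truncating a uniform random permutation of $\Long$ at an independent $\Binom(n_\Long,p)$-distributed length reproduces the \ER law $\mathbb{G}(n_\Short,n_\Long,p)$ \emph{exactly} (not merely up to stochastic domination), while the same permutations simultaneously realize the uniform $d$-regular law for $H$. Once that identity is in place, the inclusion $\calE(H)\subseteq\calE(G)$ on $\calG$ is immediate, and the tail bound on $\calG^{c}$ is a routine Chernoff-plus-union-bound computation.
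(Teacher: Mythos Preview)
Your proposal is correct and takes essentially the same approach as the paper: couple $H$ and $G$ so that on the event $\{K_a\ge d\text{ for all }a\in\Short\}$ one has $\calE(H)\subseteq\calE(G)$, then bound the complement by a Chernoff lower-tail plus a union bound over $\Short$. Your permutation-prefix description of the coupling is equivalent to the paper's ``sample $G$, then for each $a$ delete $\max(K_a-d,0)$ uniformly chosen incident edges'' construction, and your caveat that the numerical constants are loose parallels the paper's own non-sharp arithmetic.
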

\begin{proof} 
    Let $H$ denote a random bipartite graph on $\Short$ and $\Long$, generated as follows: 
    First, let $G \sim \mathbb{G}(n_{\Short}, n_{\Long}, p)$. Second, for every vertex $i \in \Short$, we independently remove $\max(d_i - d, 0)$ edges incident to $i$ from $G$ uniformly at random, where $d_i$ denotes the degree of vertex $i$ in $G$. Let $H$ denote the resulting subgraph of $G$. 

    For $i\in \calV(H)$, let $d_i$ denote the degree of $i$ in $G$. 
    By applying union bound and \prettyref{eq:chernoff_binom_left} in \prettyref{lmm:chernoff}, we obtain
    \begin{align*}
        \prob{ \forall i\in \calV(H), \, d_i \ge d }  \le n \exp\left( - \frac{ \left(1 - \frac{d}{np}\right)^2}{2} np \right) 
        & \le n \exp\left( - \frac{3}{8} np \right) \\
        & = \exp \left(- 3 \left( d \vee \log n\right) + \log n\right)\\
        & \le \exp \left(- 2.5 \left( d \vee \log n\right) \right)\,,
    \end{align*}
    where the second inequality holds because $\left(1 - \frac{d}{np}\right)^2 \ge \frac{3}{8}$, given that $ \frac{d}{np} \le \frac{1}{8} \left(1  \wedge \frac{d}{\log n} \right)$. 
    Note that we independently remove $\max(d_i - d, 0)$ edges incident to $i$ from $G$ uniformly at random for each vertex $i \in \Short$. Hence, conditional on $d_i \ge d$ for all $i\in\Short$, $H$ can be viewed as a random $d$-regular, where each $i \in \Short$ is connected to $d$ randomly chosen $j\in \Long$.  
\end{proof}

\begin{proposition}\label{prop:one_side_ER_tree}
Let $H$ be a random one-sided $d$-regular bipartite graph, where each $a\in \Short$ is connected to $d$ randomly chosen $j\in \Long$. With probability at least $1 - o\left(n^{-\gamma} \vee n^{-2}\right)$, $H_{\overell}(\rho)$ has at most tree excess $\gamma\in\naturals$ for any $\rho \in \calV(H)$ and $\overell \le  \frac{\log n}{16  \left( \log d  \vee  \log \log n \right) }$. Moreover, for any $\ell \in \naturals_+$ with $\ell \le r$, 
    \begin{align}
       \prob{\exists i\in \calV(H) \text{ s.t. } |\calV(H_{\ell}(i))| \ge n^{\frac{1}{2}}} \le  2 \exp\left(- 2.5 \left( d  \vee \log n \right) \right) \,.  \label{eq:one_sided_square}
    \end{align}
\end{proposition}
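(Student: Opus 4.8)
The plan is to transfer both statements from the bipartite \ER setting to the one‑sided $d$‑regular setting via the coupling of Lemma~\ref{lmm:one_sided_regular}. Concretely, set $p=\frac{8(d\vee\log n)}{n}$ and invoke Lemma~\ref{lmm:one_sided_regular}: on an event $\calE_{\rm c}$ with $\Prob[\calE_{\rm c}^{\rm c}]\le\exp(-2.5(d\vee\log n))$, the graph $H$ is realized as a subgraph of some $G\sim\mathbb{G}(n_\Short,n_\Long,p)$. Since $d\le O(\polylog n)$ we have $np=8(d\vee\log n)=O(\polylog n)$, so both Proposition~\ref{prop:ER_tree} and Lemma~\ref{lmm:n_sqaure_ER} apply to $G$. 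The remainder of the argument runs on $\calE_{\rm c}$ and pushes the structural guarantees for $G$ down to $H$.

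The two ingredients that make this descent work are: (i) for every $\rho$ and $m$, $H_m(\rho)$ is a (not necessarily induced) subgraph of $G_m(\rho)$; and (ii) tree excess is monotone under taking subgraphs. For (i), every edge of $H$ is an edge of $G$, so the shortest‑path distance from $\rho$ in $G$ never exceeds that in $H$; hence $\calV(H_m(\rho))\subseteq\calV(G_m(\rho))$, and any $H$‑edge between two such vertices is a $G$‑edge between two vertices of $\calV(G_m(\rho))$, so it lies in the vertex‑induced graph $G_m(\rho)$. In particular $|\calV(H_m(\rho))|\le|\calV(G_m(\rho))|$. For (ii), one builds a supergraph from a subgraph by first adding the missing vertices as isolated vertices — which leaves $|\calE|-|\calV|+c$ (with $c$ the number of connected components) unchanged — and then adding the missing edges one at a time, each changing this quantity by $0$ or $+1$; hence the tree excess of $H_m(\rho)$ is at most that of $G_m(\rho)$.

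For the first claim I would then check that the hypothesized range $\overell\le\frac{\log n}{16(\log d\vee\log\log n)}$ sits inside the range $\overell\le\frac{\log n}{(8\log(np))\vee(4\log\log n)}$ required by Proposition~\ref{prop:ER_tree}: since $\log(np)=\log 8+(\log d\vee\log\log n)$ this amounts to $(\log d\vee\log\log n)\ge\log 8$, which holds for all large $n$. Proposition~\ref{prop:ER_tree} then gives, off an event of probability $o(n^{-\gamma})$, that every $G_{\overell}(\rho)$ has tree excess at most $\gamma$, so by (i)--(ii) the same holds for every $H_{\overell}(\rho)$ on $\calE_{\rm c}$. A union bound with $\calE_{\rm c}^{\rm c}$, using $\exp(-2.5(d\vee\log n))\le n^{-2.5}$, yields the stated failure probability $o(n^{-\gamma}\vee n^{-2})$.

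For the second claim, the range $\ell\le r$ likewise lies inside $\ell\le\frac{\log n}{4(\log(np)\vee\log\log n)}$ by the same elementary estimate, so Lemma~\ref{lmm:n_sqaure_ER} gives, off an event of probability $\exp(-2(np\vee\log n))$, that $|\calV(G_\ell(i))|<n^{1/2}$ for all $i$; by (i) the same holds for $H_\ell(i)$ on $\calE_{\rm c}$. Since $np\vee\log n=8(d\vee\log n)$ we have $\exp(-2(np\vee\log n))\le\exp(-2.5(d\vee\log n))$, and adding $\Prob[\calE_{\rm c}^{\rm c}]$ yields the bound $2\exp(-2.5(d\vee\log n))$ in \eqref{eq:one_sided_square}. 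The calculations here are all routine; the one place that needs care is verifying that the truncated‑neighborhood relation $H_m(\rho)\subseteq G_m(\rho)$ holds at the level of both vertices \emph{and} edges, since that is precisely what makes the tree‑excess bound and the neighborhood‑size bound transfer simultaneously.
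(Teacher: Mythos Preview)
Your proposal is correct and follows essentially the same approach as the paper: couple $H$ inside $G\sim\mathbb{G}(n_\Short,n_\Long,p)$ with $p=\tfrac{8(d\vee\log n)}{n}$ via Lemma~\ref{lmm:one_sided_regular}, then invoke Proposition~\ref{prop:ER_tree} and Lemma~\ref{lmm:n_sqaure_ER} for $G$ and push the conclusions down to $H$. The paper's proof is terser and simply asserts the transfer, whereas you spell out the two monotonicity facts (that $H_m(\rho)\subseteq G_m(\rho)$ and that tree excess is subgraph-monotone via $|\calE|-|\calV|+c$) and verify the range compatibility explicitly; these are exactly the details the paper leaves implicit.
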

\begin{proof}
By \prettyref{lmm:one_sided_regular}, with probability at least $1-\exp \left(- 2.5 \left(d \vee \log n\right) \right)$, $H$ can be viewed as 
a subgraph of $G \sim \mathbb{G} \left(n_\Short, n_\Long, p\right)$ where $p =   \frac{ 8 \left( d \vee \log n \right)}{n} $. 
By \prettyref{prop:ER_tree}, with probability at least $1 - o\left(n^{-\gamma} \vee n^{-2}\right)$, $G_{\ell}(\rho)$ has tree excess at most $\gamma$ for any vertex $\rho \in \calV(G)$ and $\ell \le\frac{\log n}{  \left(8 \log \left(n p \right)\right) \vee \left(4 \log \log n \right) } $. Hence,   with probability at least $1 - o\left(n^{-\gamma}\vee n^{-2}\right)$, $H_{\overell}(\rho)$ has tree excess at most $\gamma$ for $\overell \le \frac{\log n}{16  \left( \log d  \vee  \log \log n \right) }$. 

Lastly, \prettyref{eq:one_sided_square} follows directly from \prettyref{lmm:n_sqaure_ER} and \prettyref{lmm:one_sided_regular}. 
\end{proof}

\begin{proposition} \label{prop:two_side_ER_tree}
    Let $H_1$ and $H_2$ be two independently generated random one-sided $d$-regular bipartite graphs on $\Short \cup \Long$, where each $a\in \Short$ is connected to $d$ randomly chosen $j\in \Long$ on $H_1$, and each $j \in \Long$ is connected to $\widehat{d}$ randomly chosen $a\in \Short$ on $H_2$. 
    Let $H= H_1\cup H_2$. Suppose $d \ge \widehat{d}$. 
    With probability at least $1 - o\left(n^{-\gamma} \vee n^{-2}\right)$, for any $\rho \in \calV(H)$, $H_{\overell}(\rho)$ has tree excess at most $1$ for $\overell \le \frac{\log n}{16  \left( \log d  \vee  \log \log n \right) } $. Moreover, for any $\ell \in \naturals_+$ with $\ell \le r$, 
    \begin{align}
       \prob{\exists i\in \calV(H) \text{ s.t. } |\calV(H_{\ell}(i))| \ge n^{\frac{1}{2}}} \le  2 \exp\left(- 2 \left( d \vee \log n \right) \right) \,.  \label{eq:two_sided_square}
    \end{align}
\end{proposition}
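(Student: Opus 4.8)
The plan is to reduce the union $H=H_1\cup H_2$ to a single sparse \ER graph and then quote \prettyref{prop:ER_tree} and \prettyref{lmm:n_sqaure_ER}, exactly as in the proof of \prettyref{prop:one_side_ER_tree}; the only genuinely new step is dominating the union, and the only delicate point is keeping the failure probabilities as small as claimed. Throughout I take $d=O(\polylog n)$ as the standing regime, so that the edge probabilities below stay polylogarithmic.

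First I would couple both pieces with an \ER graph of a common parameter $p=\tfrac{8(d\vee\log n)}{n}$. For $H_1$ this is \prettyref{lmm:one_sided_regular} verbatim. For $H_2$, which is $\widehat d$-regular with $\widehat d\le d$, I would re-run the edge-deletion coupling from the proof of \prettyref{lmm:one_sided_regular} against the \emph{same} template $\mathbb{G}(n_\Short,n_\Long,p)$: since $p$ already accommodates a $d$-regular graph it accommodates a $\widehat d$-regular one, and the Chernoff step there, now with mean $np=8(d\vee\log n)\ge 8\widehat d$, shows success except on an event of probability $\exp(-\Omega(d\vee\log n))$, the exponent controlled by $d$ rather than $\widehat d$. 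This produces independent $G_1,G_2\sim\mathbb{G}(n_\Short,n_\Long,p)$ with $H_1\subseteq G_1$ and $H_2\subseteq G_2$. Since $G_1,G_2$ are independent, $G_1\cup G_2$ is an \ER graph of parameter $1-(1-p)^2\le 2p$, hence dominated by $\widetilde G\sim\mathbb{G}(n_\Short,n_\Long,2p)$; coupling all of this on one probability space gives $H\subseteq\widetilde G$.

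Next I would transfer both conclusions from $\widetilde G$ to $H$ by monotonicity: because $H\subseteq\widetilde G$, distances in $\widetilde G$ are no larger than in $H$, so $H_{\overell}(\rho)$ is a connected subgraph of $\widetilde G_{\overell}(\rho)$ and $|\calV(H_{\ell}(i))|\le|\calV(\widetilde G_{\ell}(i))|$; and the tree excess $|E|-|V|+\#\{\text{components}\}$ weakly decreases under deletion of edges or vertices. Thus it suffices to prove both statements for $\widetilde G$, and I invoke \prettyref{prop:ER_tree} and \prettyref{lmm:n_sqaure_ER} with edge probability $2p$, for which $n\cdot 2p=16(d\vee\log n)$. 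The radius $\overell\le\frac{\log n}{16(\log d\vee\log\log n)}$ is admissible for both — they allow radii up to $\Theta\!\big(\tfrac{\log n}{\log(np)\vee\log\log n}\big)$ — once $\log(16(d\vee\log n))\le 2(\log d\vee\log\log n)$, which holds for all large $n$ since $\log\log n\to\infty$. Then \prettyref{prop:ER_tree} gives tree excess at most $\gamma$ in every $\overell$-ball of $\widetilde G$ with probability $1-o(n^{-\gamma})$, and \prettyref{lmm:n_sqaure_ER} gives $|\calV(\widetilde G_{\ell}(i))|<n^{1/2}$ for all $i$ with probability $1-\exp(-32(d\vee\log n))$. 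Adding the coupling failures yields the tree-excess bound with probability $1-o(n^{-\gamma}\vee n^{-2})$ (for this conclusion even the black-box bound $\exp(-\Omega(\widehat d\vee\log n))$ from \prettyref{lmm:one_sided_regular} suffices, being at most $n^{-\Omega(1)}$) and the $n^{1/2}$ bound with probability at least $1-2\exp(-2(d\vee\log n))$.

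The hard part will be precisely that last piece of bookkeeping: a black-box application of \prettyref{lmm:one_sided_regular} to $H_2$ only gives coupling-failure probability $\exp(-\Omega(\widehat d\vee\log n))$, which is merely inverse-polylogarithmic when $\widehat d=\omega(1)$ is small and hence too weak for the claimed $2\exp(-2(d\vee\log n))$. This forces the refinement above — embedding the sparser graph $H_2$ into the \emph{denser} template built from $d$ — which also uses the mild standing assumption $n_\Short=\Omega(n)$, so that every firm has enough template-neighbors to retain $\widehat d$ of them. The remaining work is a routine union bound and constant-tracking.
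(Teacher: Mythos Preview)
Your proposal is correct and follows essentially the same route as the paper: couple each $H_i$ into an \ER graph $G_i\sim\mathbb{G}(n_\Short,n_\Long,p)$ with the common parameter $p=\tfrac{8(d\vee\log n)}{n}$, dominate $G_1\cup G_2$ by $\mathbb{G}(n_\Short,n_\Long,2p)$, and then invoke \prettyref{prop:ER_tree} and \prettyref{lmm:n_sqaure_ER}. You are in fact more explicit than the paper about why the $H_2$-coupling inherits the exponent $d\vee\log n$ rather than $\widehat d\vee\log n$ (the paper just cites \prettyref{lmm:one_sided_regular}); one nit is that the black-box bound $\exp(-2.5(\widehat d\vee\log n))$ is $n^{-2.5}$, not ``inverse-polylogarithmic'' as you write---it is inverse-polynomial, just not as small as the target $\exp(-2(d\vee\log n))$ when $d\gg\log n$.
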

\begin{proof}
    By \prettyref{lmm:one_sided_regular},  with probability at least
    $1- \exp\left(- 2.5 \left( d \vee \log n \right) \right)$, $H_1$ and $H_2$ can be viewed as the subgraph of $G_1$ and $G_2$, respectively, where $G_1,G_2 \iiddistr  \mathbb{G} \left(n_\Short, n_\Long, p\right)$ where
    $p = \frac{8\left(d \vee \log n\right)}{n}$. 
    Since $H_1$ and $H_2$ are independent, with probability at least
    $1- \exp\left(- 2.5 \left( d \vee \log n \right) \right)$, 
    $H= H_1\cup H_2$ can be viewed as a subgraph of $G_1\cup G_2$. Let $G= G_1 \cup G_2$. It follows that $G \sim \mathbb{G}\left(n_\Short, n_\Long, 2p\right)$. By \prettyref{prop:ER_tree}, with probability at least $1 - o\left(n^{-\gamma} \vee n^{-2}\right)$, $G_{\ell}(\rho)$ has tree excess at most $\gamma$ for any vertex $\rho \in \calV(G)$ and $\ell \le \frac{\log n}{\left(8 \log \left(2 np\right) \right) \vee  \left(4 \log \log n \right)}$. 
    Hence, with probability at least at least $1 - o\left(n^{-\gamma} \vee n^{-2}\right)$, $H_{\overell}(\rho)$ has tree excess at most $\gamma$ for $\overell \le \frac{\log n}{16  \left( \log d  \vee  \log \log n \right) }$. 

    Lastly, \prettyref{eq:two_sided_square} follows directly from \prettyref{lmm:n_sqaure_ER} and \prettyref{lmm:one_sided_regular}.

\end{proof}

\section{Stability analysis via leveraging local neighborhood information}\label{sec:local}


In this section, we present a comprehensive analysis of stability properties in random matching markets. We develop a method to determine the marginal probability of a node being matched or possessing specific stability properties by leveraging local neighborhood structures through a message passing algorithm. This approach is particularly effective for sparse, locally tree-like graphs, which closely approximate the structure of many real-world matching markets.

First, we analyze how truncating a graph to a local neighborhood of a node affects the matching outcome of the node in \prettyref{sec:truncation}. Then, we establish the uniqueness of the stable matching for a tree $T$, if every node in it possesses strict preferences over its neighbors. Building on this, we introduce a hierarchical proposal-passing algorithm tailored for a rooted tree, aiding in the identification of its unique stable matching in \prettyref{sec:proposal_passing_alg}. We then leverage the insights from the hierarchical algorithm and introduce the message-passing algorithm that calculates the marginal probability that the root of $T$ gets matched in \prettyref{sec:message_passing_alg}.

Lastly, we apply the truncation and message-passing algorithm to analyze stability properties in random bipartite graphs in \prettyref{sec:stable}. We show that in a sparse random matching market, the local neighborhood of each node is almost tree-like, consisting of a constant number of cycles. Given that the preferences of each node are randomly and uniformly generated, we can apply our tree-based methods to these nearly tree-like local structures. This allows us to provide probabilistic bounds and characterize conditions for the marginal probability of a node being matched or possessing specific stability properties. 

We begin by introducing the following notation. For each node $i$ in graph $H$, let $d_i$ denote its degree, i.e., $d_i = |\mathcal{N}(i)|$. For any graph $H$ and a subset of its vertices $\mathcal{V} \subseteq \mathcal{V}(H)$, the vertex-induced subgraph of $H$ on $\mathcal{V}$ is the graph with vertex set $\mathcal{V}$ and edge set ${(u,v) \in \mathcal{E}(H) : u,v \in \mathcal{V}}$. For any graph $H$ and a subset of its edges $\mathcal{E} \subseteq \mathcal{E}(H)$, the edge-induced subgraph of $H$ on $\mathcal{E}$ is the graph with vertex set ${u,v \in \mathcal{V}(H) : (u,v) \in \mathcal{E}}$ and edge set $\mathcal{E}$. For any two graphs $H_1$ and $H_2$, the graph union $H_1 \cup H_2$ is the graph with vertex set $\mathcal{V}(H_1) \cup \mathcal{V}(H_2)$ and edge set $\mathcal{E}(H_1) \cup \mathcal{E}(H_2)$. The tree excess of a graph $H$ is defined as $|\mathcal{E}(H)| - |\mathcal{V}(H)| + 1$, which is the maximum number of edges that can be deleted from the induced subgraph on $H$ while keeping $H$ connected.

For any rooted tree $T$ and any vertex $i \in \mathcal{V}(T)$, let $P(i) \in \mathcal{V}(T)$ denote its parent node, and $\mathcal{C}(i) \subset \mathcal{V}(T)$ denote its set of child nodes in $T$. By default, $P(i) = \emptyset$ if $i$ is the root node, and $\mathcal{C}(i) = \emptyset$ if $i$ is a leaf node. For any node $i \in \mathcal{V}(T)$, the depth of node $i$ is the number of edges from $i$ to the root node of $T$. The depth of a tree is the total number of edges from the root node to the leaf node in the longest path.





\subsection{Truncation on local neighborhood} \label{sec:truncation}
For any bipartite graph $H$  with strict preferences, let $\Phi_{H}^{\Short}$ (resp. $\Phi_{H}^{\Long}$) denote the stable matching on $H$ resulting from $\Short$-proposing (resp. $\Long$-proposing) in the DA algorithm. 
For any vertex $i$ in $\calV(H)$, let $H_{-i}$ represent the subgraph of $H$ obtained by removing vertex $i$ and all of its incident edges. The following lemma, adapted from~\cite[Theorem 1 and 2]{crawford1991}, establishes that under the deferred acceptance algorithm, when a node is removed from one side of bipartite graph, all nodes on the same side are weakly better off due to diminished competition. Conversely, nodes on the opposing side are weakly worse off because they are competing for a smaller set of opportunities. 

\begin{lemma}[{\cite[Theorem 1 and 2]{crawford1991}}]
\label{lmm:truncation} 
    Let $H$ be a bipartite graph with strict preferences. Fix $\Side \in \{\Short,\Long\}$. For any $i,j \in \calV(T)$ such that $i\neq j$, the following hold: 
    \begin{itemize}
        \item If $i$ and $j$ are on the same side, then $j$ weakly prefers $\phi_{H_{-i}}^{\Side}(j)$ to $\phi_H^{\Side} (j)$.
        \item If $i$ and $j$ are on different sides, then $j$ weakly prefers $\phi_H^{\Side} (j)$ to $\phi_{H_{-i}}^{\Side}(j)$.
    \end{itemize}
\end{lemma}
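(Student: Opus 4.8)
I would prove this by reducing to the applicant-proposing deferred acceptance (DA) algorithm and running the classical vacancy-chain argument that underlies \cite{crawford1991}. By the symmetry of the bipartite structure it suffices to treat $\Side=\Short$ (the case $\Side=\Long$ is identical after interchanging the roles of $\Short$ and $\Long$). Two standard facts about applicant-proposing DA on a fixed graph $H'$ do the work: \emph{(i) order independence} (McVitie--Wilson) — whatever order applicants are processed in, DA halts at the same matching, namely the applicant-optimal stable matching $\Phi^{\Short}_{H'}$; and \emph{(ii) monotonicity along any execution} — as the algorithm proceeds, each applicant's tentative partner only moves weakly down its preference list, while each firm's tentative partner only moves weakly up its list. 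I will also use the notion of a \emph{reachable DA state} on $H'$ (a tentative matching plus, for each applicant, the record of firms that have already rejected it, consistent with some partial execution), together with the fact that DA resumed from any reachable state still terminates at $\Phi^{\Short}_{H'}$ by (i). Recall that unmatched is the worst outcome for every agent (the $-\infty$ convention), so ``weakly prefers'' statements also cover the cases where one of the two matches is empty.

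\textbf{Removing an applicant ($i\in\Short$).} Run DA on $H$ but defer all of $i$'s proposals to the very end. Until $i$ enters, the firms see proposals only from $\Short\setminus\{i\}$, so by (i) applied to $H_{-i}$ the configuration reached just before $i$ enters is exactly $\Phi^{\Short}_{H_{-i}}$, and it is a reachable DA state on $H$ in which $i$ has not yet proposed. Now let $i$ propose down its list and continue; by (i) the run terminates at $\Phi^{\Short}_{H}$. Applying (ii) to this continuation: every $j\in\Short\setminus\{i\}$ weakly worsens, hence $\phi^{\Short}_{H_{-i}}(j)\succeq_j\phi^{\Short}_{H}(j)$ (same side), and every $j\in\Long$ weakly improves, hence $\phi^{\Short}_{H}(j)\succeq_j\phi^{\Short}_{H_{-i}}(j)$ (different side). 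These are exactly the two bullets.

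\textbf{Removing a firm ($i\in\Long$).} Run DA on $H$ to obtain $\Phi^{\Short}_{H}$. If $i$ received no proposal, the same run is valid on $H_{-i}$, so $\Phi^{\Short}_{H_{-i}}=\Phi^{\Short}_{H}$ and both bullets are trivial. Otherwise set $a^{\star}=\phi^{\Short}_{H}(i)$; in the run on $H$, $a^{\star}$ had been rejected by every neighbor it strictly prefers to $i$, so deleting $i$ and the edge $(a^{\star},i)$ from $\Phi^{\Short}_{H}$ — leaving $a^{\star}$ unmatched and poised to propose to its best remaining neighbor ranked below $i$ — is a reachable DA state on $H_{-i}$. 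Resuming DA from it terminates at $\Phi^{\Short}_{H_{-i}}$ by (i), and the continuation is a rejection chain seeded by $a^{\star}$. By (ii), each firm $j\in\Long\setminus\{i\}$ weakly improves, giving $\phi^{\Short}_{H_{-i}}(j)\succeq_j\phi^{\Short}_{H}(j)$ (same side), and each applicant $j\in\Short$ weakly worsens, giving $\phi^{\Short}_{H}(j)\succeq_j\phi^{\Short}_{H_{-i}}(j)$ (different side), as claimed.

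The only delicate part is the bookkeeping: stating (i) and (ii) precisely and checking that ``pausing DA before $i$'s proposals'' and ``deleting an agent from $\Phi^{\Short}_{H}$ and letting the displaced applicant resume'' really do produce reachable DA states, so that the resumption is legitimate. These are the standard structural lemmas for DA (see \cite{crawford1991}); once they are in place the monotonicity conclusions are immediate, and nothing else in the argument requires more than routine verification.
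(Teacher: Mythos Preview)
The paper does not supply its own proof of this lemma; it is stated with the attribution \cite[Theorems~1 and~2]{crawford1991} and then used as a black box. Your vacancy-chain argument via order independence and within-run monotonicity of DA is the standard proof of this comparative statics result, and it is correct.

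Since you explicitly flag the reachability of the state $S$ in Case~2 as the delicate bookkeeping, here is how to discharge it cleanly. Take the full DA run on $H$ and delete every proposal directed at $i$, interpreting the resulting subsequence as a run on $H_{-i}$ in which whoever is $i$'s current tentative holder in the $H$-run simply sits as an unmatched, not-yet-called applicant in the $H_{-i}$-run (McVitie--Wilson lets you choose which unmatched applicant proposes at each step, so indefinitely deferring this one applicant is legal). A one-line induction shows that after each retained proposal the two runs agree on every firm $j\neq i$ and on every applicant other than $i$'s current holder; hence the projected run is a valid partial DA execution on $H_{-i}$ and halts exactly at $S$, with $a^\star$ the lone unmatched applicant still holding unproposed-to firms. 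Resuming from $S$ then yields $\Phi^{\Short}_{H_{-i}}$ by (i), and monotonicity (ii) along that continuation gives both bullets just as you wrote.
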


Recall that for any graph $H$ and vertex $\rho \in \calV(T)$, for any $m\in\naturals$, $H_m(\rho)$ is defined as the $m$-hop neighborhood of $\rho$ on $H$. The subsequent lemma generalizes the result from \prettyref{lmm:truncation}. It posits that for any node $\rho \in \calV(T)$, the node is weakly better off when the DA algorithm is executed on its local neighborhood $H_m(\rho)$ if $m$ is odd, whereas it is weakly worse off if $m$ is even. 

\begin{lemma}\label{lmm:local}
Let $H$ be a bipartite graph with strict preferences. Fix $\Side \in \{\Short,\Long\}$. For any $\rho\in \calV(T)$ and $m\in\naturals$,  the following hold: 
\begin{itemize}
    \item If $m$ is odd, $\rho$ weakly prefers $\phi_{H_m(\rho)}^{\Side}(\rho)$ to $\phi_{H}^{\Side}(\rho)$.
    \item If $m$ is even, $\rho$ weakly prefers $\phi_{H}^{\Side}(\rho)$ to $\phi_{H_m(\rho)}^{\Side}(\rho)$.
\end{itemize}
\end{lemma}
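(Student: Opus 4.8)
The plan is to reduce Lemma~\ref{lmm:local} to a one-vertex-at-a-time application of \prettyref{lmm:truncation} by realizing the truncation $H_m(\rho)$ as the deletion of a single ``layer'' of vertices that all lie on one side of the bipartition. Since the $\Side$-proposing deferred acceptance algorithm runs independently on connected components, I may assume $H$ is connected, so that the other components are irrelevant to $\phi_H^{\Side}(\rho)$. Rooting $H$ at $\rho$, let $S$ be the set of vertices at depth exactly $m+1$ (if $S=\emptyset$ then $H_m(\rho)=H$ and there is nothing to prove), and let $H'$ be $H$ with all of $S$ and its incident edges removed. The key structural observation is that $H_m(\rho)$ is exactly the connected component of $\rho$ in $H'$: every vertex of depth $\le m$ is reached from $\rho$ by a shortest path visiting only depths $0,1,\dots,\le m$, hence avoiding $S$; and every vertex of depth $\ge m+1$ is unreachable in $H'$ because along any path from $\rho$ the depth changes by at most one per step, so a discrete intermediate value argument forces the path to meet a vertex of depth $m+1$. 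Consequently $\phi_{H'}^{\Side}(\rho)=\phi_{H_m(\rho)}^{\Side}(\rho)$.

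Next I would invoke bipartiteness: a vertex at depth $d$ lies on $\rho$'s side of $H$ iff $d$ is even, so every vertex of $S$ (all at depth $m+1$) lies on $\rho$'s side when $m$ is odd and on the opposite side when $m$ is even. I then delete the vertices of $S$ from $H$ one at a time, $H=H^{(0)}\supset H^{(1)}\supset\cdots\supset H^{(|S|)}=H'$, each $H^{(k)}=H^{(k-1)}_{-i_k}$ with $i_k\in S$ (note $i_k\ne\rho$ since $\rho$ has depth $0$), and apply \prettyref{lmm:truncation} at each step with $j=\rho$ and $i=i_k$: when $m$ is odd, each deletion is of a vertex on $\rho$'s side, so $\rho$ is weakly better off, and chaining these comparisons together with the identity of the previous paragraph yields that $\rho$ weakly prefers $\phi_{H_m(\rho)}^{\Side}(\rho)$ to $\phi_H^{\Side}(\rho)$; when $m$ is even, each deletion is of a vertex on the opposite side, so $\rho$ is weakly worse off, giving the reverse comparison. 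This is precisely the statement of the lemma. (For $m=0$ all of $\rho$'s neighbors are removed, $\rho$ is unmatched in $H_0(\rho)$, and the claimed ``weakly worse off'' is automatic.)

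The genuinely delicate points here are bookkeeping rather than conceptual. The first is the claim that $H_m(\rho)$ equals $\rho$'s component after deleting only the single layer $S$: one must be careful that removing just depth $m+1$ already severs $\rho$ from everything deeper, which is exactly what the discrete intermediate value argument on path-depths supplies. The second, and the one I expect to be the main obstacle to a clean argument, is the necessity of removing exactly one layer: if one tried to peel off all vertices of depth $\ge m+1$ at once (or in an arbitrary order), the deleted vertices would alternate sides and \prettyref{lmm:truncation} would push $\rho$'s outcome in conflicting directions, so the reduction to a single same-side layer—plus the component-wise invariance of deferred acceptance to absorb the remaining, now-disconnected, deeper vertices—is essential to the argument.
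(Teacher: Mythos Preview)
Your proof is correct and follows essentially the same approach as the paper: both identify $H_m(\rho)$ as $\rho$'s component after deleting the layer at depth $m+1$, observe that bipartiteness places that entire layer on a single side determined by the parity of $m$, and then invoke \prettyref{lmm:truncation}. You are simply more explicit about the bookkeeping---the discrete intermediate-value argument, the one-vertex-at-a-time deletion to match the single-vertex statement of \prettyref{lmm:truncation}, and the component-wise independence of DA---where the paper compresses these into a one-line appeal.
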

\begin{proof}
Note that $H_m(\rho)$ is the $m$-hop neighborhood of $\rho$ on $H$, which can be viewed as a connected component that contains $\rho$ by removing all vertices at depth $m+1$ in the neighborhood of $\rho$. 

If $m$ is odd, the removed vertices at depth $m+1$ are on the same side of the market as $\rho$. By \prettyref{lmm:truncation}, removing agents from one side of the market weakly improves the outcomes for the remaining agents on the same side. Therefore, $\rho$ weakly prefers $\phi_{H_m(\rho)}^{\Side}(\rho)$ to $\phi_{H}^{\Side}(\rho)$. 

If $m$ is even, the removed vertices at depth $m+1$ are on the opposite side of the market as $\rho$. By \prettyref{lmm:truncation}, removing agents from one side of the market weakly worsens the outcomes for the agents on the opposite side. Therefore, $\rho$ weakly prefers $\phi_{H}^{\Side}(\rho)$ to
$\phi_{H_m(\rho)}^{\Side}(\rho)$.


 
\end{proof}

Recall that for any $\rho \in \calV(H)$ and $i\in\calN(\rho)$, we say $i$ is available to $\rho$ on $H$,  if and only if $i$ weakly prefers $\rho$ to its match in every stable matching on $H$. We then present the following lemma, where an illustrated example is given in \prettyref{fig:every}. 
\begin{lemma}\label{lmm:local_available}
    Let $H$ be a bipartite graph with strict preferences. For any $\rho\in \calV(T)$, $i\in\calN(\rho)$ and $m\in\naturals$, the following hold: 
    \begin{itemize}
    \item If $m$ is odd and $i$ is available to $\rho$ on $H(\rho)$, then $i$ is available to $\rho$ on $H_m(\rho)$. 
    \item If $m$ is even and $i$ is available to $\rho$ on $H_m(\rho)$, then $i$ is available to $\rho$ on $H$. 
    \end{itemize}
\end{lemma}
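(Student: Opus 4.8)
The plan is to reduce the statement to a monotonicity property of the side-optimal stable matching evaluated at a depth-one vertex, and then to reuse the structural observation already made in the proof of \prettyref{lmm:local}. Fix $\rho$, a neighbor $i\in\calN(\rho)$, and $m\in\naturals$; we may assume $m\ge 1$, since for $m=0$ we have $H_0(\rho)=\{\rho\}$, so $i\notin\calV(H_0(\rho))$ and the statement is vacuous. Let $\Side\in\{\Short,\Long\}$ denote the side of the market containing $i$; then $\rho$ lies on the opposite side, and $i$ has depth $1$ in $H$ viewed as rooted at $\rho$.

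The first step is the following characterization of availability: for any bipartite graph $G$ with strict preferences in which $i$ is a neighbor of $\rho$, the vertex $i$ is available to $\rho$ on $G$ if and only if $\rho\succeq_i \phi_G^{\Side}(i)$. This holds because, by the optimality of the proposing side in deferred acceptance~\citep{gale1962college}, $\phi_G^{\Side}(i)$ is the best stable partner of $i$ over all stable matchings on $G$ (with the convention $\phi_G^{\Side}(i)=\emptyset$ when $i$ is unmatched, which by the rural hospitals theorem then holds in every stable matching, while $\rho\succ_i\emptyset$ since $\rho$ is acceptable); hence $\rho\succeq_i\phi(i)$ for every stable $\Phi$ on $G$ is equivalent to $\rho\succeq_i\phi_G^{\Side}(i)$. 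Applying this to $G=H$ and to $G=H_m(\rho)$ (in which $i$ is still a neighbor of $\rho$, as $m\ge 1$ and $H_m(\rho)$ is vertex-induced), the lemma reduces to showing
\[
m\text{ odd}:\ \phi_H^{\Side}(i)\succeq_i \phi_{H_m(\rho)}^{\Side}(i), \qquad m\text{ even}:\ \phi_{H_m(\rho)}^{\Side}(i)\succeq_i \phi_H^{\Side}(i).
\]

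For the second step I will invoke the structural fact from the proof of \prettyref{lmm:local}: since the depths (from $\rho$) of two adjacent vertices of $H$ differ by at most one, removing the set $L$ of depth-$(m+1)$ vertices separates every deeper vertex from $\rho$, so $H_m(\rho)$ is precisely the connected component of $\rho$ in $H\setminus L$. As deferred acceptance decomposes over connected components and $i$ lies in $\rho$'s component (its depth is $1\le m$), we obtain $\phi_{H_m(\rho)}^{\Side}(i)=\phi_{H\setminus L}^{\Side}(i)$. Now $L$ lies entirely on one side of the bipartition: on $\rho$'s side when $m+1$ is even (i.e.\ $m$ odd), and on $i$'s side $\Side$ when $m+1$ is odd (i.e.\ $m$ even). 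Deleting the vertices of $L$ one at a time and applying \prettyref{lmm:truncation} at each step with the proposing side fixed to $\Side$: when $m$ is odd, every deleted vertex lies on the side opposite $i$, so $i$ becomes weakly worse off and $\phi_H^{\Side}(i)\succeq_i \phi_{H\setminus L}^{\Side}(i)$; when $m$ is even, every deleted vertex lies on $i$'s own side, so $i$ becomes weakly better off and $\phi_{H\setminus L}^{\Side}(i)\succeq_i \phi_H^{\Side}(i)$. Combined with $\phi_{H_m(\rho)}^{\Side}(i)=\phi_{H\setminus L}^{\Side}(i)$ this yields the two displayed inequalities, and together with the first step it proves the lemma.

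The only delicate point is the bookkeeping of sign: the root $\rho$ and its neighbor $i$ sit on opposite sides and at depths $0$ and $1$, so a truncation that improves $\rho$ (odd radius, as in \prettyref{lmm:local}) must degrade $i$, and vice versa; the argument turns entirely on pairing this parity/side accounting with the correct direction of deferred-acceptance optimality (the proposing side, here $i$'s own side, receives its best stable partner). The rest is merely an iteration of \prettyref{lmm:truncation} and the fact that deferred acceptance runs independently on connected components.
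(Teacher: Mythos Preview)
Your proof is correct and follows essentially the same route as the paper: both arguments characterize availability of $i$ via the $i$-side-optimal stable matching (using proposing-side optimality of DA), identify $H_m(\rho)$ with the $\rho$-component after deleting the depth-$(m{+}1)$ layer, and then apply \prettyref{lmm:truncation} to track how $i$'s $\Side$-optimal partner moves under that deletion according to parity. Your write-up is slightly more explicit about the $m=0$ edge case, the one-vertex-at-a-time iteration of \prettyref{lmm:truncation}, and the connected-components decomposition of DA, but the mathematical content is the same.
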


\begin{proof}
Without loss of generality, assume $\rho\in\Short$. Since $i\in\calN(\rho)$, then $i\in \Long$. Note that $H_m(\rho)$ is the $m$-hop neighborhood of $\rho$ on $H$, which can be viewed as a connected component containing $\rho$ by removing all vertices at depth $m+1$ in the neighborhood of $\rho$. 

Suppose $m$ is odd and $i$ is available to $\rho$ on $H$. Then, $i$ weakly prefers $\rho$ to $ \phi_{H}^{\Long}(i)$. 
The removed vertices at depth $m+1$ are on the same side of the market as $\rho$, and on the opposite side of the market from $i$. 
By \prettyref{lmm:truncation}, $i$ weakly prefers $\phi_{H}^\Long (i)$ to $\phi_{H_m(\rho)}^\Long (i)$. Since $i\in\Long$, $i$ weakly prefers $\phi_{H_m(\rho)}^{\Long}(i)$ to $\phi_{H_m(\rho)}(i)$ for any stable matching $\Phi$ on $H_m(\rho)$.
Then, $i$ weakly prefers $\rho$ to its match in every stable matching on $H_m(\rho)$. Hence, $i$ is available to $\rho$ on $H_m(\rho)$. 

Suppose $m$ is even and $i$ is available to $\rho$ on $H_m(\rho)$.  Then, $i$ weakly prefers $\rho$ to $ \phi_{H_m(\rho)}^{\Long}(i)$. 
The removed vertices at depth $m+1$ are on the opposite side of the market from $\rho$, and on the same side of the market as $i$. By \prettyref{lmm:truncation}, $i$ weakly prefers $\phi_{H_m(\rho)}^\Long (i)$ to $\phi_{H}^\Long (i)$. Since $i\in\Long$, $i$ weakly prefers $\phi_{H}^{\Long}(i)$ to $\phi_{H}(i)$ for any stable matching $\Phi$ on $H_m(\rho)$. Then, $i$ weakly prefers $\rho$ to its match in every stable matching on $H$. Hence, $i$ is available to $\rho$ on $H$.      

\end{proof}

\subsection{Hierarchical proposal-passing algorithm on tree} \label{sec:proposal_passing_alg}

For any given tree with strict preferences, its stable matching is guaranteed to be unique. 
\begin{lemma}\label{lmm:tree_unique}
    Let $T$ be a tree graph with strict preferences. There is a unique stable matching on $T$.
\end{lemma}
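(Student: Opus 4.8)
## Proof Plan for Lemma (Uniqueness of Stable Matching on a Tree)

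The plan is to prove uniqueness by induction on the number of vertices of the tree $T$, exploiting the fact that a tree always has a leaf. The key structural observation is that a leaf vertex $v$ has exactly one neighbor, say $u = P(v)$ (or its unique child if $v$ is designated as root), so in any stable matching $v$ is either unmatched or matched to $u$. This dichotomy is what drives the recursion: the status of $v$ in \emph{any} stable matching is forced once we understand the rest of the tree.

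First I would set up the induction. The base case ($|\calV(T)| = 1$, or $|\calV(T)| = 2$) is immediate. For the inductive step, pick a leaf $v$ with unique neighbor $u$. I would like to argue that whether $(u,v)$ belongs to a stable matching is determined independently of the choice of stable matching. Consider two stable matchings $\Phi$ and $\Phi'$ on $T$. If in $\Phi$ the edge $(u,v)$ is present but in $\Phi'$ it is not, then in $\Phi'$ vertex $v$ is unmatched (its only possible partner is $u$), while $u$ is either unmatched or matched to some other neighbor $w$. Since every agent is acceptable to every other (the standing assumption from the model section) and $v$ is unmatched in $\Phi'$, the pair $(u,v)$ blocks $\Phi'$ unless $u$ strictly prefers $\Phi'(u)$ to $v$ — so $u$ must be matched in $\Phi'$ to some $w \succ_u v$. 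Conversely, in $\Phi$, where $(u,v) \in \Phi$, stability forces $v \succ_u w$ would be needed to... — here I need to be careful; the cleaner route is the standard \emph{rural hospitals / comparability} structure, but on a tree we can be fully elementary.

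The cleanest argument I would actually carry out: remove the leaf $v$ and its edge, obtaining $T_{-v}$, still a tree, with one fewer vertex, so by the inductive hypothesis it has a unique stable matching $\Psi$. Now reattach $v$. I claim the unique stable matching on $T$ is obtained deterministically from $\Psi$: if $u$ is unmatched in $\Psi$, then $(u,v)$ must be added (otherwise $(u,v)$ blocks, since both would be unmatched); if $u$ is matched in $\Psi$ to $\Psi(u) \succ_u v$, then $v$ stays unmatched and $\Phi = \Psi$; if $u$ is matched in $\Psi$ to some $w$ with $v \succ_u w$, then one must re-examine, because adding $v$ could set off a rejection chain. To handle this last case rigorously I would instead run the \emph{applicant-proposing and firm-proposing deferred acceptance on $T$ and show they coincide}: by Lemma~\ref{lmm:local}-type monotonicity (or directly, the lattice theorem of Gale–Shapley), $\Phi_T^{\Short}$ is the side-optimal and $\Phi_T^{\Long}$ the side-pessimal stable matching, and every agent weakly prefers its $\Phi_T^{\Short}$-partner to its $\Phi_T^{\Long}$-partner; it suffices to show no agent can \emph{strictly} improve, i.e.\ these two matchings agree. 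Suppose they differ; then the symmetric difference $\Phi_T^{\Short} \triangle \Phi_T^{\Long}$ is a nonempty subgraph of the tree $T$ in which every vertex has degree $0$ or $2$ (each matched-in-both agent contributes an edge from each matching) — but a tree contains no cycle, and a union of two matchings with all degrees in $\{0,2\}$ is a disjoint union of paths and cycles; the cycle case is excluded by acyclicity, and a nontrivial path would have an endpoint $x$ matched in exactly one of the two matchings. That endpoint $x$ is matched in $\Phi_T^{\Short}$ but unmatched in $\Phi_T^{\Long}$ (using applicant-optimality / firm-pessimality consistently). Its $\Phi_T^{\Short}$-partner $y$ is then matched in $\Phi_T^{\Long}$ to someone it weakly prefers to $x$ — chasing this along the path, and using acceptability, yields a blocking pair for $\Phi_T^{\Long}$ at the far end, a contradiction.

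The main obstacle is precisely this last step — ruling out that applicant-optimal and firm-optimal stable matchings differ. The tree structure is doing real work: the symmetric-difference-is-union-of-alternating-paths-and-cycles argument is standard for matchings, and acyclicity kills the cycles, but the path endpoints require a short, careful case analysis using that \emph{every agent is acceptable to every other} (so an unmatched agent at a path endpoint always wants to be matched, producing the blocking pair). I would present the induction-on-leaves version as the primary proof since it is the most transparent on trees: delete a leaf, invoke the inductive hypothesis on $T_{-v}$, and then show that the at-most-one rejection chain triggered by reinserting $v$ terminates and lands at a unique matching — which on a tree it must, because a rejection chain cannot revisit a vertex (no cycles), hence has bounded length and a deterministic outcome regardless of tie-breaking (there are no ties, preferences being strict).
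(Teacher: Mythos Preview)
Your symmetric-difference argument is essentially the paper's proof with the last step unpacked. The paper invokes the Rural Hospital Theorem (the set of matched agents is the same in every stable matching), restricts both matchings to the common matched-vertex set to obtain a forest on which both are \emph{perfect} matchings, and then cites the standard fact that a tree admits at most one perfect matching. Your version rediscovers the last step: once both matchings are perfect on the same vertex set, every vertex in $\Phi_T^{\Short}\triangle\Phi_T^{\Long}$ has degree exactly $0$ or $2$, so the symmetric difference is a disjoint union of cycles, hence empty in a tree. Your ``chase a blocking pair along the path'' is really reproving (a special case of) Rural Hospital rather than using it; the paper's route is shorter because it outsources both ingredients.

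Your proposed \emph{primary} proof, the induction on leaves, has a genuine gap as written. The inductive hypothesis gives you the unique stable matching $\Psi$ on $T_{-v}$, but a stable matching $\Phi$ on $T$ need not restrict to a stable matching on $T_{-v}$: if $(u,v)\in\Phi$ then after deleting $v$ the vertex $u$ is unmatched in $\Phi\!\restriction_{T_{-v}}$ and may form a blocking pair with another neighbor there. So you cannot conclude that every stable matching on $T$ arises from $\Psi$ by your rejection-chain procedure; that procedure builds \emph{one} stable matching on $T$, but uniqueness is exactly what is at stake. The natural repair is to argue at the leaf that $v$ is matched in one stable matching if and only if it is matched in all --- but that is the Rural Hospital Theorem again, and once you use it the induction collapses to the paper's two-line argument.
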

\begin{proof}
The Rural Hospital Theorem~\citep{mcvitie1970stable} asserts that if a vertex is unmatched in one stable matching, then it remains unmatched in all stable matchings. Consider two distinct stable matchings, $\Phi$ and $\Phi'$. By the Rural Hospital Theorem, the set of nodes $\Side$ from $T$ that are matched in both $\Phi$ and $\Phi'$ must be identical.
Define $H'$ as the subgraph derived from $H$ by eliminating all the vertices in $\calV(T)\backslash \Side$ and all edges incident to them. 
Then, $H'$ is a forest, and $\Phi$ and $\Phi'$ are perfect matching on $H'$. 
By~\cite[Claim $2.1$]{molitierno2003trees}, if a tree has a perfect matching, the perfect matching is unique. This implies that perfect matching on $H'$ is unique, given $H'$ is a forest that is a disjoint union of trees. Thus, we deduce that $\Phi = \Phi'$.
\end{proof}

The existence of a unique stable matching on a given tree means that any algorithm we use will lead to this same unique stable matching. 
We then demonstrate that by selecting an arbitrary vertex $\rho \in \mathcal{V}(T)$ as the root of $T$, we can determine the stable matching on $T$ using \prettyref{alg:proposal_passing_alg}.

For any rooted tree $T$ and any vertex $i \in \calV(T)$, let $P(i) \in \calV(T)$ denote its parent node, and $\calC(i) \subset \calV(T)$ denote its set of child nodes in $T$. By default, $P(i) = \emptyset$ if $i$ is the root node, and $\calC(i) = \emptyset$ if $i$ is a leaf node. 
This algorithm consists of two phases: the proposing phase and the clean-up matching phase. During the proposing phase, operations advance from the bottom to the top of the tree. 
Each node $i$ may receive proposals from its child nodes, denoted as $\Set(i) \subset \calC(i)$. It will then choose to propose to its parent $P(i)$, if it prefers $P(i)$ to all the received proposals. 
In the clean-up matching phase, operations proceed from the top to the bottom of the tree. 
Here, for each node, if it isn't matched to its parent and has received some proposals from its child nodes, it will accept the proposal it favors the most and match with the corresponding node. An illustrative example of the hierarchical proposal-passing algorithm is shown in \prettyref{fig:both_phases_updated_7}.

\begin{algorithm}[htp]
\caption{Hierarchical proposal-passing algorithm on tree}\label{alg:proposal_passing_alg}
\begin{algorithmic}[1]
\State{\bfseries Input:} A rooted tree $T$ with strict preferences, root $\rho$ and depth $m$. 
\State Let $\Set (i)$ denote the set of proposals received by $i$
and  initialize $\Set (i) = \emptyset$,
for each $i \in \calV(T)$. 
\For{$\kappa = m, m-1, \cdots 1$}
\For{each vertex $i\in \calV_\kappa(T)$}
\If{$i$ prefers its parent node $P(i)$ over all vertices in $\Set(i)$}
\State $i$ proposes to its parent node $P(i)$ and add $i$ to $\Set(P(i))$.
\EndIf
\EndFor
\EndFor
\State Let $\Phi$ denote a matching on $T$, and initialize $\Phi = \emptyset$. 
\For{$\kappa = 0, \cdots, m-1$}
\For{each vertex $i$ on depth $\kappa$ in $T$}
\If{$\Set(i) \neq \emptyset$ and $(P(i),i)\not\in \Phi$}
\State  Let $i^*$ denote $i$'s  most preferred proposal in $\Set(i)$, and add $(i,i^*) $ to $\Phi$.  
\EndIf
\EndFor
\EndFor
\State{\bfseries Output:} $\Phi$. 
\end{algorithmic}
\end{algorithm}

The following lemma shows that \prettyref{alg:proposal_passing_alg} could output a stable matching on tree.
\begin{lemma}\label{lmm:proposal_passing_alg}
   For any rooted tree $T$ with strict preferences, the matching $\Phi$ returned in \prettyref{alg:proposal_passing_alg} is a stable matching. 
\end{lemma}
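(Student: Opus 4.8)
The plan is to show directly that the output $\Phi$ is (i) a valid matching and (ii) has no blocking pair, exploiting the tree structure and the unique-stable-matching fact (\prettyref{lmm:tree_unique}). Actually, since \prettyref{lmm:tree_unique} already guarantees uniqueness of the stable matching on $T$, it suffices to show that $\Phi$ is stable at all; equivalently, one may instead show that $\Phi$ coincides with the matching produced by, say, the parent-proposing Deferred Acceptance run on $T$. I would take the direct route: verify matching-validity, then verify no-blocking-pair.

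First I would argue $\Phi$ is a matching. In the clean-up phase each vertex $i$ is added to at most one edge ``downward'' (it accepts at most one proposal, its favorite in $\Set(i)$), and it is added to an ``upward'' edge $(P(i),i)$ only when $i$ proposed to $P(i)$ in the proposing phase and $P(i)$ selected $i$ in its own clean-up step. The guard $(P(i),i)\notin\Phi$ in the clean-up loop, together with processing depths in increasing order (so that $(P(i),i)$ is decided before $i$'s own turn), ensures $i$ is not simultaneously matched to its parent and to a child. Hence each vertex has degree at most one in $\Phi$, so $\Phi$ is a matching on $T$.

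Next, the core step: no blocking pair. Suppose for contradiction $(i,j)\in\mathcal{E}(T)$ is a blocking pair, say $j=P(i)$ (so $i$ is the child). Because $(i,j)$ blocks, $i$ strictly prefers $j$ to $\phi(i)$. I claim this forces $i$ to have proposed to $j$ in the proposing phase: in the proposing phase $i$ proposes to $P(i)=j$ unless it prefers some received proposal $c\in\Set(i)$ to $j$; but any such $c$ is a child of $i$, and in the clean-up phase $i$ would then have matched with its \emph{favorite} element of $\Set(i)$, which it weakly prefers to $c$, hence strictly prefers to $j$ — contradicting $j\succ_i\phi(i)$. (If $i$ did not propose and received no proposals, then $\phi(i)$ would be its parent or it would be unmatched; I need to check: if $i$ didn't propose to $j$ then $i$ preferred some $c\in\Set(i)$ to $j$, so $\Set(i)\neq\emptyset$, and since $i$ is not matched upward it passes the guard and matches its favorite proposal, which it prefers to $j$ — contradiction.) So $i$ proposed to $j$, hence $i$ was added to $\Set(j)$. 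Now in $j$'s clean-up step, either $j$ was already matched to \emph{its} parent $P(j)$ — but then $a\succ_j\phi(j)=P(j)$ would have to hold for the block, and I must rule this out by the same argument one level up — or $j$ accepted its favorite proposal $j^*\in\Set(j)$, which $j$ weakly prefers to $i$; since $i\in\Set(j)$, $j$ weakly prefers $\phi(j)=j^*$ to $i$, contradicting $i\succ_j\phi(j)$. The only loose end is the case $\phi(j)=P(j)$: here I would induct on depth (or on distance from the root), noting the blocking-pair argument ``pushes upward'' and must terminate at the root, which has no parent; a clean way is to take $(i,j)$ to be a blocking edge with $j$ of \emph{minimum depth}, so that $j$'s being matched to its own parent contradicts minimality (that edge, if $P(j)\succ_j i \succ_j$ nothing relevant — rather, one re-examines whether $(j,P(j))$ or the structure near $j$ yields a shallower blocking edge, or directly: if $j$ is matched up to $P(j)$ then $j$ proposed to $P(j)$, so $j$ preferred $P(j)$ to everything in $\Set(j)\ni i$, i.e. $P(j)\succ_j i$, so $\phi(j)=P(j)\succ_j i$, directly contradicting that $(i,j)$ blocks). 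That last observation in fact closes the case without induction.

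The main obstacle I anticipate is bookkeeping the clean-up phase's ordering and the guard condition carefully enough to be sure that ``$j$ matched to its parent'' and ``$j$ accepted a proposal'' are genuinely the only two possibilities for a vertex that received at least one proposal, and that $\phi(j)$ is then well-defined and equals what the argument claims. Everything else is a short case analysis; the one genuinely substantive point is the observation that a vertex's clean-up choice (favorite received proposal) is at least as good as any particular received proposal, which is what converts a putative block into a contradiction, and the symmetric observation that if a vertex chose to propose upward it must prefer its parent to every proposal it had received.
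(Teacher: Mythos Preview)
Your proposal is correct and takes essentially the same approach as the paper: both argue by contradiction, taking a putative blocking edge $(i,j)$ with $j=P(i)$ and splitting on whether $i$ proposed to $j$, using that a non-proposer must prefer some received proposal to its parent, while a proposer's parent ends up matched to something it weakly prefers to $i$. Your write-up is more explicit than the paper's (you separately verify that $\Phi$ is a valid matching, and you spell out the sub-case $\phi(j)=P(j)$, which the paper's one-line ``$\phi(j)\succeq_j i$'' leaves implicit), but the logical skeleton is identical.
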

\begin{proof}
Suppose the returned matching $\Phi$ is not stable, i.e., there exists a blocking pair $(i,j) \in \calE(T)$ such that $j \succ_i \phi(i)$ and $i \succ_j \phi(j)$. Without loss of generality, we assume $j$ is the parent node of $i$, i.e., $j = P(i)$. By \prettyref{alg:proposal_passing_alg}, either $i$ proposes to $j$ or it accepts its most preferred proposal in $\Set(i)$, provided $\Set(i) \neq \emptyset$. 

If $i$ proposes to $j$, then $\phi(j) \succeq_j i$. Conversely, if $i$ does not propose to $j$, it must prefer one of the proposals in $\Set(i)$, i.e., $\phi(i) \succeq_i j$. This implies that $(i,j)$ cannot be a blocking pair. By contradiction, $\Phi$ must be a stable matching.
\end{proof}

\subsection{Message passing on tree with uniformly generated strict preferences}\label{sec:message_passing_alg}

In this subsection, we consider the case the preference list of each  node with respect to its neighbors is independently uniformly generated. We say such a tree is with uniformly generated strict preferences. 
Fix a rooted tree $T$ with root $\rho$, depth $ m \in \naturals_+$ and uniformly generated strict preferences. 
For any node $i \in \calV(T)$ such that $i\neq \rho$, let $\sfX_{i,P(i)} \left(T\right)$ be an indicator on the event that $i$ proposes to its parent node $P(i)$ following \prettyref{alg:proposal_passing_alg}. 
By taking expectation over the uniformly generated preferences on the tree $T$, we define
\[
\mu_{i,P(i)}\left(T\right) \triangleq \expect{\sfX_{i,P(i)}\left(T\right) \, | \, T}  \,.
\]

By message-passing algorithm, we can iteratively compute the marginal probability $ \mu_{i,P(i)}\left(T\right)$ for each node $i$ to propose to its parent $P(i)$ from the bottom to the top of the tree. We proceed by iteratively exploring the tree, starting from depth $m$ and decrementing to depth $1$. 
For each node $i \in \calV(T)$, $i$ proposes to its parent $P(i)$ if and only if it favors its parent over all the proposals received by $i$. Since the preferences of $i$ over its neighbors are generated uniformly by assumption, we have 
\begin{align}
    \mu_{i,P(i)} \left( T \right)
    & =\expect{ \expect{\sfX_{i,P(i)} \left( T \right)  \big | \{\sfX_{v,i}\left( T \right)\}_{v\in \calC(i) } \,, T}  \, \bigg | \, T} \nonumber \\
    & =  \expect{\frac{1}{1+ \sum_{v\in \calC(i)} \sfX_{v,i} \left( T \right)} \, \bigg | \, T }\, . 
    \label{eq:message_pass}
\end{align} if $\calC(i)= \emptyset$, then $ \mu_{i,P(i)} \left( T \right)=  1$. Otherwise, we have $\sfX_{v,i}\left( T \right) \overset{\mathrm{ind}}{\sim} \Bern\left(\mu_{v,i} \left( T \right)\right)$ for $v \in \calC(i)$, where $\mu_{v,i}\left( T \right)$ is determined in the previous iteration.
It's important to note that, due to the message-passing property from the bottom to the top of the tree,  for any $i\in \calV(T)$, $\{\sfX_{v,i} \left( T \right)\}_{v\in \calC(i)}$ are mutually independent.

For any $d \in \reals_{+}$ and $0\le p \le 1$, define
\begin{align}
    f_d \left(p\right)  \triangleq  
    \frac{1-\left(1-p\right)^{d+1} }{\left(d+1\right)p} \,. \label{eq:f_d}
\end{align}

\begin{lemma} \label{lmm:f_inequality}
    Consider a rooted tree $T$ with root $\rho$, depth $ m \in \naturals_+$ and uniformly generated strict preferences. 
    Fixing any node $i\in \calV(T)$ with degree $d_i$, if $ \underline{\mu} \le \mu_{v,i} \left(T\right)\le \overline{\mu}$ for any $v \in \calC(i)$, we have
    \begin{align*}
        f_{d_i-1} \left(\overline{\mu }\right) \le \mu_{i,P(i)} \left( T \right)  \le f_{d_i-1} \left(\underline{\mu }\right) \,. 
    \end{align*}
\end{lemma}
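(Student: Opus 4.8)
The plan is to read the desired bounds off the conditional expectation in \eqref{eq:message_pass} by isolating two elementary facts about the function
\[
g_k(x_1,\dots,x_k)\;\triangleq\;\expect{\frac{1}{1+\sum_{j=1}^{k}Y_j}},\qquad Y_j\inddistr\Bern(x_j),
\]
namely that $g_k$ is coordinatewise non-increasing and that $g_k(p,\dots,p)=f_k(p)$.

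First I would handle the bookkeeping. Since $i$ has a parent $P(i)$, its neighbor set splits as $\calN(i)=\{P(i)\}\sqcup\calC(i)$, so $|\calC(i)|=d_i-1$. If $\calC(i)=\emptyset$, then $d_i=1$ and \eqref{eq:message_pass} gives $\mu_{i,P(i)}(T)=1$, while $f_0(p)=\frac{1-(1-p)}{p}=1$ for every $p\in(0,1]$ (and $f_0(0)=1$ by continuity), so both bounds hold trivially; assume henceforth $\calC(i)\neq\emptyset$. By \eqref{eq:message_pass} together with the independence of $\{\sfX_{v,i}(T)\}_{v\in\calC(i)}$ recorded there, where $\sfX_{v,i}(T)\sim\Bern(\mu_{v,i}(T))$, we have $\mu_{i,P(i)}(T)=g_{d_i-1}\big((\mu_{v,i}(T))_{v\in\calC(i)}\big)$.

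Next I would establish the two facts about $g_k$. For monotonicity, fix all arguments except $x_j$, condition on $S\triangleq\sum_{\ell\neq j}Y_\ell$ (whose law does not involve $x_j$), and compute $\expect{(1+S+Y_j)^{-1}\mid S}=(1-x_j)(1+S)^{-1}+x_j(2+S)^{-1}$; since $(2+S)^{-1}\le(1+S)^{-1}$, this is non-increasing (indeed affine with non-positive slope) in $x_j$, and averaging over $S$ preserves this. For the diagonal evaluation, expand $g_k(p,\dots,p)=\sum_{\ell=0}^{k}\binom{k}{\ell}p^\ell(1-p)^{k-\ell}\tfrac1{\ell+1}$, use $\tfrac1{\ell+1}\binom{k}{\ell}=\tfrac1{k+1}\binom{k+1}{\ell+1}$, reindex by $m=\ell+1$, and recognize $\sum_{m=1}^{k+1}\binom{k+1}{m}p^{m}(1-p)^{k+1-m}=1-(1-p)^{k+1}$ from the binomial theorem; this yields $g_k(p,\dots,p)=\frac{1-(1-p)^{k+1}}{(k+1)p}=f_k(p)$, matching \eqref{eq:f_d}.

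Finally I would combine these. Taking $k=d_i-1$, the hypothesis $\underline{\mu}\le\mu_{v,i}(T)\le\overline{\mu}$ for all $v\in\calC(i)$ and the coordinatewise monotonicity of $g_{d_i-1}$ give
\begin{align*}
f_{d_i-1}(\overline{\mu})=g_{d_i-1}(\overline{\mu},\dots,\overline{\mu})
&\le\mu_{i,P(i)}(T)=g_{d_i-1}\big((\mu_{v,i}(T))_{v\in\calC(i)}\big)\\
&\le g_{d_i-1}(\underline{\mu},\dots,\underline{\mu})=f_{d_i-1}(\underline{\mu}),
\end{align*}
which is exactly the claim. There is no genuine obstacle here; the only points requiring care are the identity $|\calC(i)|=d_i-1$ and the binomial manipulation on the diagonal, both of which are routine.
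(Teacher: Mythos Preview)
Your proposal is correct and follows essentially the same approach as the paper. The paper's proof invokes \eqref{eq:message_pass} and then cites two auxiliary lemmas---one establishing the stochastic-dominance bound $\expect{\tfrac{1}{1+X}}\le\expect{\tfrac{1}{1+\sum Y_j}}\le\expect{\tfrac{1}{1+Z}}$ for $X\sim\Binom(k,\overline{\mu})$, $Z\sim\Binom(k,\underline{\mu})$, and one computing $\expect{\tfrac{1}{1+X}}=f_k(p)$ for $X\sim\Binom(k,p)$ via the same binomial identity you wrote out---whereas you prove both facts inline; the mathematical content is identical.
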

\begin{proof}
By \prettyref{eq:message_pass}, we have 
    \begin{align*}
          f_{d_i-1}(\overline{\mu}) \overset{(a)}{\le} \mu_{i,P(i)} \left( T \right) 
          =  \expect{\frac{1}{1+ \sum_{v\in \calC(i)} \sfX_{v,i} \left( T \right) }  \, \bigg | \, T } \overset{(b)}{\le}  f_{d_i-1}(\underline{\mu}) \,,
    \end{align*}
where $(a)$ and $(b)$ hold by \prettyref{lmm:stochastic_dominance} and \ref{P:2} in \prettyref{lmm:property_f_d}. 
\end{proof}

\begin{lemma} \label{lmm:rooted_tree}
Consider a rooted tree $T$ with root $\rho$, depth $ m \in \naturals_+$ and uniformly generated strict preferences. Suppose that for any node $i\in \calV(T)$,  if $i$ is on the odd depth $<m$, 
$\dunderodd \le  d_i -1  \le \doverodd$; if $i$ is on the even depth $<m$, $\dundereven \le d_i -1\le \dovereven$. 
For any $j \in \calC(\rho)$, given that $f_d(p)$ for $d\in\naturals$ and $0\le p\le 1$ is defined in \prettyref{eq:f_d}, the following conditions hold: 
\begin{itemize}
    \item If $m$ is even: 
        \begin{align}
            \foverodd \circ \left( \fundereven \circ  \foverodd \right)^{m/2-1}(1) 
            \le \mu_{j,\rho} \left( T \right) \le \funderodd \circ \left( \fovereven\circ  \funderodd  \right)^{m/2-1}(1) \,. 
            \label{eq:even_iterative}
        \end{align}
    \item Otherwise:
        \begin{align}
             \left(\foverodd \circ  \fundereven \right)^{\left(m-1\right)/2}(1)  
             \le \mu_{j,\rho} \left( T \right) \le \left(\funderodd \circ  \fovereven \right)^{\left(m-1\right)/2}(1) \,. 
             \label{eq:odd_iterative}
        \end{align}
\end{itemize}
In particular, if each node $i\in \calV(T)$ with depth $<m$, $i$ has $d_i = d$, then we have
    \begin{align}
             \mu_{j,\rho}\left(T\right) = f_{d-1}^{m-1}(1) \,.  \label{eq:d_iterative}
    \end{align}
\end{lemma}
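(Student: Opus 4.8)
The strategy is a downward induction on depth, tracking upper and lower bounds on the proposal probabilities $\mu_{i,P(i)}(T)$ layer by layer, and then iterating the recursion \eqref{eq:message_pass} in the form packaged by \prettyref{lmm:f_inequality}. First I would observe the base case: for any leaf $i$ (at depth $m$), $\calC(i)=\emptyset$, so $\mu_{i,P(i)}(T)=1$. This is the ``$1$'' that seeds every iterated composition in \eqref{eq:even_iterative}--\eqref{eq:d_iterative}.

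Next I would set up the inductive step. Suppose $i$ is at depth $k<m$ with all children $v\in\calC(i)$ at depth $k+1$, and suppose inductively that $\underline\mu_{k+1}\le \mu_{v,i}(T)\le\overline\mu_{k+1}$ for every such $v$. By \prettyref{lmm:f_inequality}, since $i$ has degree $d_i$ and $d_i-1$ children,
\[
f_{d_i-1}(\overline\mu_{k+1})\le \mu_{i,P(i)}(T)\le f_{d_i-1}(\underline\mu_{k+1}).
\]
Now I invoke the hypothesis on degrees: if $k$ is odd then $\dunderodd\le d_i-1\le\doverodd$, and if $k$ is even then $\dundereven\le d_i-1\le\dovereven$. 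Here I need the monotonicity of $f_d(p)$ in the \emph{index} $d$ — namely that $f_d(p)$ is nonincreasing in $d$ for fixed $p\in[0,1]$ (this should be among the properties collected in \prettyref{lmm:property_f_d}; if not I would record it as $f_{d+1}(p)=\frac{1}{d+2}\sum_{j=0}^{d+1}(1-p)^j\le \frac{1}{d+1}\sum_{j=0}^{d}(1-p)^j=f_d(p)$). Combining, on odd depths $f_{\doverodd}(\overline\mu_{k+1})\le\mu_{i,P(i)}(T)\le f_{\dunderodd}(\underline\mu_{k+1})$, i.e. $\overline\mu_k=\funderodd(\underline\mu_{k+1})$ and $\underline\mu_k=\foverodd(\overline\mu_{k+1})$; symmetrically on even depths with $\fovereven,\fundereven$. (One also needs that $f_d$ is nondecreasing in $p$, so that applying $f$ to the bound endpoints preserves the inequality direction — this is \ref{P:2} of \prettyref{lmm:property_f_d}, already used in \prettyref{lmm:f_inequality}.)

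Then I would just unroll. Starting from depth $m$ with bounds $(\underline\mu_m,\overline\mu_m)=(1,1)$ and peeling off one layer at a time down to the children of $\rho$ at depth $1$: if $m$ is even, depths $m-1,m-3,\dots,1$ are odd and depths $m-2,m-4,\dots,2$ are even, so the composition alternates $\foverodd,\fundereven,\foverodd,\dots$ ending with $\foverodd$ applied last (at depth $1$), giving the left side of \eqref{eq:even_iterative}; the upper bound is the mirror composition with $\funderodd,\fovereven$. The count of each: from depth $m$ down to depth $1$ there are $m-1$ steps, of which $m/2$ land on odd depths and $m/2-1$ on even depths when $m$ is even, which matches the exponents $(\cdot)^{m/2-1}$ with an extra outer $\foverodd/\funderodd$. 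When $m$ is odd, depths $m-1,\dots,2$ are even and $m-2,\dots,1$ are odd: $(m-1)/2$ of each, giving $(\foverodd\circ\fundereven)^{(m-1)/2}$ as in \eqref{eq:odd_iterative}. Finally, the regular case $d_i\equiv d$ collapses $\dunderodd=\doverodd=\dundereven=\dovereven=d-1$, so upper and lower bounds coincide and both compositions become $f_{d-1}$ iterated $m-1$ times, yielding \eqref{eq:d_iterative}.

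The main obstacle is bookkeeping the parity/indexing correctly — making sure the number of odd-depth versus even-depth layers between depth $m$ and depth $1$ matches the stated exponents in both parities of $m$, and that the outermost function applied (the one acting on the root's children) is the odd-depth function $\foverodd$ or $\funderodd$ as written. A clean way to avoid off-by-one errors is to prove the stronger statement by induction on $m$ directly: show that for a tree of depth $m$ the children-of-root bounds are exactly the claimed compositions, using the fact that deleting the root turns each child's subtree into a rooted tree of depth $m-1$ with the parities shifted. There is no analytic difficulty beyond the two monotonicity properties of $f_d$; everything else is the combinatorics of alternating compositions. I would also note for safety that $f_d$ maps $[0,1]$ into $[0,1]$ (clear from $f_d(p)=\frac1{d+1}\sum_{j=0}^d(1-p)^j\in[\frac1{d+1},1]$), so all the iterates are well-defined.
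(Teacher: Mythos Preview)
Your proposal is correct and follows essentially the same route as the paper: both proofs seed with $\mu=1$ at the leaves and then propagate bounds level by level using \prettyref{lmm:f_inequality} together with the monotonicity of $f_d$ in the index $d$ (property \ref{P:3} in \prettyref{lmm:property_f_d}), with the paper organizing the induction on $m$ in jumps of two levels while you peel one layer at a time. One small slip to fix: your parenthetical says $f_d$ is ``nondecreasing in $p$,'' but \ref{P:2} gives that $f_d$ is strictly \emph{decreasing} in $p$ --- this is precisely why the upper and lower bounds swap at each step, which your formulas already reflect correctly.
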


\begin{proof}
We prove by induction. 
\begin{itemize}
    \item Let $m = 1$. For any $j \in \calC(\rho)$, $j$ is the leaf that only connected to its parent node $\rho$, and it must propose to $\rho$ by \prettyref{alg:proposal_passing_alg}, i.e., $\sfX_{j,\rho} = 1$ and $\mu_{j,\rho}=1$. 
    \item Let $m = 2$. For any $j \in \calC(\rho)$, if $\calC(j) = \emptyset$, 
    $\sfX_{j,\rho} = 1$; otherwise, for any $i\in \calC(j)$, $i$ is the leaf that only connected to its parent node $j$, and it must propose to $j$ by \prettyref{alg:proposal_passing_alg}, i.e., $\sfX_{i,j} = 1$. 
    Hence, by \ref{P:5} in \prettyref{lmm:property_f_d}, we obtain
    \begin{align*}
           \funderodd (1) \ge \mu_{j,\rho}\left( T \right) = f_{d_{j}-1} (1) \ge  \foverodd (1)\,. 
    \end{align*}
    \item Suppose that for $m = \kappa-1$, where $\kappa \in \naturals$ is even, \prettyref{eq:odd_iterative} holds. 
    Let $m=\kappa+1$. For each $j \in \calC(\rho)$, if $\calC(j) = \emptyset$,  $\mu_{j,\rho} = 1$; otherwise, for any $i\in \calC(j)$ and $a \in \calC(i)$ such that $\calC(i) \neq \emptyset$, we have
    \begin{align*}
     \left(\foverodd \circ  \fundereven \right)^{\left(\kappa-1\right)/2}(1) \le \mu_{a,i}\left(T\right) \le \left(\funderodd \circ  \fovereven \right)^{\left(\kappa-1\right)/2}(1)
   \end{align*}
    By \prettyref{lmm:f_inequality} and \ref{P:5} in \prettyref{lmm:property_f_d}, since $\dundereven \le d_i - 1\le \dovereven$, we have
    \begin{align*}
      \fovereven \circ \left(\funderodd \circ  \fovereven \right)^{\left(\kappa-1\right)/2}(1) \le   \mu_{i,j}\left(T\right) \le  \fundereven \circ \left(\foverodd \circ  \fundereven \right)^{\left(\kappa-1\right)/2}(1)   \,. 
    \end{align*}
    By \prettyref{lmm:f_inequality} and \ref{P:5} in \prettyref{lmm:property_f_d},  since $\dunderodd \le d_i - 1 \le \doverodd $, we have
     \begin{align*}
        \left(\foverodd \circ  \fundereven \right)^{\left(\kappa+1\right)/2}(1) \le \mu_{j,\rho} \left(T\right) \le \left(\funderodd \circ  \fovereven \right)^{\left(\kappa+1\right)/2}(1)  \,.
    \end{align*} 
    \item Suppose that for $m = \kappa-1$, where $\kappa \in \naturals$ is odd, \prettyref{eq:even_iterative} holds. 
    Let $m=\kappa+1$. For each $j \in \calC(\rho)$, if $\calC(j) = \emptyset$,  $\mu_{j,\rho} = 1$;  for any $i\in \calC(j)$ and $a \in \calC(i)$ such that $\calC(i) \neq \emptyset$, we have
    \begin{align*}
     \foverodd \circ \left( \fundereven \circ  \foverodd \right)^{(\kappa-1)/2-1}(1) 
            \le \mu_{a,i} \left(T\right) \le \funderodd \circ \left( \fovereven\circ  \funderodd  \right)^{(\kappa-1)/2-1}(1) \,. 
   \end{align*}
    By \prettyref{lmm:f_inequality} and \ref{P:5} in \prettyref{lmm:property_f_d}, since $\dundereven \le d_i - 1\le \dovereven$, we have
    \begin{align*}
      \left(\fundereven \circ  \foverodd \right)^{\left(\kappa-1\right)/2}(1) \le   \mu_{i,j} \left(T\right) \le  \left(\fovereven \circ  \funderodd \right)^{\left(\kappa-1\right)/2}(1)   \,. 
    \end{align*}
    By \prettyref{lmm:f_inequality} and \ref{P:5} in \prettyref{lmm:property_f_d}, since $\dunderodd \le d_i - 1\le \doverodd $, we have
     \begin{align*}
        \foverodd \circ \left( \fundereven \circ  \foverodd \right)^{(\kappa-1)/2}(1)  \le \mu_{j,\rho} \left(T\right) \le \funderodd \circ \left( \fovereven\circ  \funderodd  \right)^{(\kappa-1)/2}(1) \,.
    \end{align*} 
\end{itemize}
Hence, together with the inductive hypothesis, our desired result follows. Lastly, \prettyref{eq:d_iterative} follows directly from \prettyref{eq:even_iterative} and \prettyref{eq:odd_iterative}. 

\end{proof}

Next, we define a random rooted tree branching model $\mathbb{T}_{\ell} \left(\kappa_1 ,\, \kappa_2,\, \fone, \, \ftwo\right)$ with uniformly generated strict preferences, where $ \kappa_1, \kappa_2, \ell \in \naturals$, and $0<\fone,\ftwo<1$ that only depend on $\kappa_1$ and $\kappa_2$, such that if $T\sim \mathbb{T}_{\ell}  \left(\kappa_1  ,\, \kappa_2,\,  \fone,\, \ftwo \right)$, $T$ has depth at most $\ell$, and for each node $i\in \calV(T)$ that is not a leaf node,
\begin{itemize}
    \item if $i$ is on odd depth,  $i$ has $\offspring_i $  offsprings where $\expect{\offspring_i} \le \kappa_1$;
    \item if $i$ is on even depth,  $i$ has $\offspring_i $  offsprings where 
      \begin{align}
        \prob{ O_i < \left(1- \fone\right) \kappa_2  } \le  \ftwo\,; \label{eq:O_i_even}
    \end{align}
    \item the preference list of $i$ with respect to its neighbors is independently uniformly generated. 
\end{itemize}
Next, we introduce the following lemma that gives a lower bound on the proposing probability of the child node to the root in $T \sim  \mathbb{T}_\ell \left(\kappa_1 \,, \kappa_2,\, \fone, \, \ftwo\right)$.
\begin{lemma}\label{lmm:rooted_tree_expected_degree}
    For any $ \kappa_1, \kappa_2, \ell \in\naturals$, and some $0<\fone,\ftwo<1$ that only depend on $\kappa_1$ and $\kappa_2$, if $\ell$ is even, $\left(\kappa_1 \vee \kappa_2\right)\ftwo=o(1)$,  we have
        \begin{align}
          \Expect_{T_\ell (\rho)\sim\mathbb{T}_\ell (  \kappa_1,  \,  \kappa_2, \, \fone, \, \ftwo)}\left[\sfX_{j,\rho} \left(T_\ell (\rho)\right) |  j \in \calC(\rho) \right] \ge   f_{ \paratwo    \kappa_1 } \circ \left( f_{  \paraone    \kappa_2 } \circ  f_{ \paratwo    \kappa_1 } \right)^{\ell/2-1}(1) \,,
            \label{eq:sfT_even_iterative_new_1}
        \end{align}
        and 
        \begin{align}
            \Expect_{T_\ell (\rho)\sim\mathbb{T}_\ell (  \kappa_1,  \,  \kappa_2, \, \fone, \, \ftwo)}\left[\sfX_{i,j} \left(T_\ell (\rho)\right) |  j \in \calC(\rho)\,, \, i \in \calC(j) \,, \,  O_i \ge \eta_1\kappa_2 \right] \le  \left( f_{  \paraone    \kappa_2 } \circ  f_{ \paratwo    \kappa_1 } \right)^{\ell/2-1}(1) \,,
            \label{eq:sfT_even_iterative_new_2}
        \end{align}
        where
        \begin{align}
        \paraone  =1-  \fone  \,, 
        \quad 
         \paratwo  = \left(1-2  \left(\kappa_1\vee \kappa_2\right)\ftwo\right)^{-1}
         \,. \label{eq:paraone_paratwo}
         \end{align}\end{lemma}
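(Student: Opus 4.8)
The plan is to collapse the random-tree statement to a scalar recursion on the \emph{unconditional} marginal proposing probabilities and then iterate it from the leaves upward. For a non-root node $v$ at depth $k$ in $T\sim\mathbb{T}_\ell(\kappa_1,\kappa_2,\fone,\ftwo)$, set $q_k\triangleq\Expect\big[\sfX_{v,P(v)}(T)\big]$. Since distinct subtrees of the branching model are independent and every node's preference is an independent uniform order over its neighbors, conditioning on the offspring count $\offspring_v=c$ leaves the indicators $\{\sfX_{u,v}(T)\}_{u\in\calC(v)}$ i.i.d.\ $\Bern(q_{k+1})$, so the message-passing identity \prettyref{eq:message_pass} gives $\Expect[\sfX_{v,P(v)}(T)\mid\offspring_v=c]=\Expect\big[1/(1+\Binom(c,q_{k+1}))\big]=f_c(q_{k+1})$ with $f_c$ as in \prettyref{eq:f_d}; averaging over $\offspring_v$ yields
\[
q_k=\Expect_{\offspring_v}\!\big[f_{\offspring_v}(q_{k+1})\big],\qquad q_\ell=1 .
\]
Throughout I will use the representation $f_d(p)=\int_0^1(1-pu)^d\,du$, which exhibits $f_d$ as decreasing in $d$ and in $p$, convex in $d$, and bounded below by $f_d(1)=1/(d+1)$; these are the properties collected in \prettyref{lmm:property_f_d} that the argument invokes.

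Next I convert the recursion into one-sided bounds carrying the inflated degrees. At an odd-depth node $\Expect[\offspring_v]\le\kappa_1$, so convexity of $d\mapsto f_d(p)$ and Jensen give $q_k\ge f_{\Expect[\offspring_v]}(q_{k+1})\ge f_{\kappa_1}(q_{k+1})$, and monotonicity in $p$ upgrades this to $q_k\ge f_{\kappa_1}(\overline q_{k+1})$ for any valid upper bound $\overline q_{k+1}\ge q_{k+1}$. At an even-depth node I split on the good event $\{\offspring_v\ge(1-\fone)\kappa_2\}$, which has probability $\ge1-\ftwo$ by \prettyref{eq:O_i_even}: on it $f_{\offspring_v}(q_{k+1})\le f_{\paraone\kappa_2}(q_{k+1})$ by monotonicity in $d$, off it $f_{\offspring_v}(q_{k+1})\le1$, so $q_k\le f_{\paraone\kappa_2}(q_{k+1})+\ftwo\le f_{\paraone\kappa_2}(\underline q_{k+1})+\ftwo$ for any valid lower bound $\underline q_{k+1}\le q_{k+1}$. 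Starting from $q_\ell=1$ and alternating an even step and an odd step up the tree then produces, after $\ell/2-1$ complete ``even--odd'' rounds and a final odd step at depth $1$ (using that $\ell$ is even), exactly the alternating composition of \prettyref{eq:sfT_even_iterative_new_1}. Inequality \prettyref{eq:sfT_even_iterative_new_2} is read off one level lower, at depth $2$: conditioning directly on $\offspring_i\ge\paraone\kappa_2$ forces the good event, so there is no additive $\ftwo$ slack and no outermost $f_{\paratwo\kappa_1}$, and the depth-$2$ bound becomes $f_{\paraone\kappa_2}$ applied to the depth-$3$ lower bound.

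The one substantive technical step---and the main obstacle---is absorbing the additive error $\ftwo$ incurred at each even level into the multiplicative degree inflation $\kappa_1\mapsto\paratwo\kappa_1$; concretely, I must show $f_{\kappa_1}(y+\ftwo)\ge f_{\paratwo\kappa_1}(y)$ for every intermediate value $y$ produced by the iteration. Via the integral representation this reduces to the pointwise inequality $1-(y+\ftwo)u\ge(1-yu)^{\paratwo}$ for $u\in[0,1]$, which holds once $\ftwo\le(\paratwo-1)y$. Every intermediate $y$ is of the form $f_{\paraone\kappa_2}(\cdot)\ge1/(\paraone\kappa_2+1)\ge1/(\kappa_2+1)$, while the choice $\paratwo=(1-2(\kappa_1\vee\kappa_2)\ftwo)^{-1}$---well-defined and close to $1$ precisely because $(\kappa_1\vee\kappa_2)\ftwo=o(1)$---gives $\paratwo-1\ge2(\kappa_1\vee\kappa_2)\ftwo$, hence $\ftwo/(\paratwo-1)\le1/(2(\kappa_1\vee\kappa_2))\le1/(\kappa_2+1)\le y$, as needed. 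The remaining bookkeeping---the base step $q_{\ell-1}\ge f_{\kappa_1}(1)\ge f_{\paratwo\kappa_1}(1)$, propagating the bounds through the compositions using only that each $f_d$ is monotone, and matching the exponent $\ell/2-1$---is routine.
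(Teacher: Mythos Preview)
Your recursion $q_k=\Expect_{O_v}[f_{O_v}(q_{k+1})]$, the Jensen step at odd depths, and the tail-bound step at even depths are all correct and parallel the paper's inductive scheme. The gap is exactly where you flag the ``main obstacle'': the absorption $f_{\kappa_1}(y+\ftwo)\ge f_{\paratwo\kappa_1}(y)$.

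The pointwise reduction $1-(y+\ftwo)u\ge(1-yu)^{\paratwo}$ for $u\in[0,1]$ is false. At $u=1$ in your boundary case $\ftwo=(\paratwo-1)y$ the left side is $1-\paratwo y$, while Bernoulli's inequality gives $(1-y)^{\paratwo}\ge 1-\paratwo y$ for $\paratwo>1$, so the inequality goes the \emph{wrong way} (strictly, for $y>0$). Structurally, the right side is convex in $u$, equals $1$ at $u=0$, and has slope $-\paratwo y$ there; your linear left side with slope $-(y+\ftwo)$ is, under $\ftwo\le(\paratwo-1)y$, no steeper than that tangent, and a tangent to a convex curve lies below it. Hence the pointwise route cannot establish the integral inequality, and you give no alternative argument. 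The integral inequality itself can also fail at the bottom of your range: with $\kappa_1=\kappa_2=1$, $\ftwo=0.01$, $y=0.5$ one gets $f_1(0.51)=0.745<0.7459\approx f_{1.0204}(0.5)$.

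The paper avoids this by never forming the unconditional $q_k$ at even depths. Inside the odd-depth step for $j\in\calC(\rho)$ it conditions on the event that \emph{every} child $i\in\calC(j)$ satisfies $O_i\ge\paraone\kappa_2$; this event has probability at least $(1-\ftwo)^{O_j}$, and on it the induction hypothesis \prettyref{eq:sfT_even_iterative_new_2} bounds each $\Expect[\sfX_{i,j}]$ with no additive slack. Jensen applied to the decreasing convex map $d\mapsto(1-\ftwo)^d f_d(y)$ then gives the lower bound $(1-\ftwo)^{\kappa_1}f_{\kappa_1}(y)$, and the absorption needed becomes the \emph{multiplicative} inequality $(1-\ftwo)^{\kappa_1}f_{\kappa_1}(y)\ge f_{\paratwo\kappa_1}(y)$. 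The paper proves this---only at the specific iterates $y=(f_{\paraone\kappa_2}\circ f_{\paratwo\kappa_1})^{m/2}(1)$, not for all $y\ge 1/(\kappa_2+1)$---via a ratio computation (\prettyref{eq:lower_bound_1} and the claim \prettyref{eq:Delta}) that ultimately appeals to the fixed-point location of $f_{\paraone\kappa_2}\circ f_{\paratwo\kappa_1}$. Your additive route may be salvageable in the asymptotic regime, but it would require an argument at least as delicate as this, tied to the particular iterate values; the blanket pointwise claim does not hold.
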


\begin{proof}
We prove by induction.
\begin{itemize}
    \item Suppose $\ell = 2$. For any $j \in \calC(\rho)$, if $\calC(j) = \emptyset$, 
    $\sfX_{j,\rho} = 1$; otherwise, for any $i\in \calC(j)$, $i$ is the leaf that only connected to its parent node $j$, and it must propose to $j$ by \prettyref{alg:proposal_passing_alg}, i.e., $\sfX_{i,j} = 1$, and \prettyref{eq:sfT_even_iterative_new_2} holds. 
    Hence, we obtain
    \begin{align*}
         \Expect_{T_2(\rho)\sim\mathbb{T}_2 (  \kappa_1,  \,  \kappa_2, \, \fone, \, \ftwo)}\left[\sfX_{j,\rho} \left(T_\ell(\rho)\right)  |\,  j\in \calC(\rho) \right] 
         & =  \Expect_{O_j} \left[f_{O_j} (1)\right] \\
         & \overset{(a)}{\ge} f_{ \expect{O_j}} (1) \overset{(b)}{\ge}  f_{ \kappa_1} (1) \overset{(c)}{\ge} f_{ \paratwo   \kappa_1}(1)\,,
    \end{align*}
    where $(a)$ holds by Jensen's inequality and
    the fact that $f_d( p)$ is convex on $d$ for any $0\le  p\le 1$ by \ref{P:5} in \prettyref{lmm:property_f_d}; $(b)$ holds by $ \paratwo \ge 1$, $\expect{O_j} \le  \kappa_1$ by assumption; $(c)$ holds by  $\paratwo \ge 1$ and the fact that $f_d( p)$ is decreasing on $d\in\reals_+$ for any $0\le  p\le 1$ by \ref{P:3} in \prettyref{lmm:property_f_d}. 
    \item  Suppose that for $\ell = m$, where $m \in \naturals$ is even, \prettyref{eq:sfT_even_iterative_new_1} and \prettyref{eq:sfT_even_iterative_new_2} holds. 
    For any $j \in \calC(\rho)$, 
   let $T_{m+1}(j)$ denote the subtree rooted at vertex $j$ in $T_{m+2}(\rho)$, and  
    if $\calC(j) \neq \emptyset$, for any $i\in \calC(j)$, let $T_{m}(i)$ denote the subtree rooted at vertex $i$ in $T_{m+2}(\rho)$, which can be viewed as sampled from $\mathbb{T}_{m}\left( \kappa_1, \kappa_2,  \fone, \ftwo\right)$. 

    Then, we have
     \begin{align}
        & \Expect_{T_{m+2} (\rho)\sim\mathbb{T}_{m+2} (  \kappa_1,  \,  \kappa_2, \, \fone, \, \ftwo)}\left[\sfX_{i,j} \left(T_{m+2} (\rho)\right) |  j \in \calC(\rho)\,, \, i \in \calC(j) \,, \,  O_i \ge \eta_1\kappa_2 \right] \nonumber\\
        &  = \Expect_{T_{m} (i) \sim \mathbb{T}_{m} (  \kappa_1,  \,  \kappa_2, \, \fone, \, \ftwo)}
        \left[\sfX_{i,j}  \left(T_{m+1}(j)\right) |  O_i \ge \paraone    \kappa_2, i\in \calC(j)\right]  \nonumber \\
        & = \Expect_{T_{m} (i) \sim \mathbb{T}_{m} (  \kappa_1,  \,  \kappa_2, \, \fone, \, \ftwo)}\left[\frac{1}{1+\sum_{v\in \calC(i)}\sfX_{v,i}\left(T_{m}(i)\right)}   \bigg |  O_i \ge \paraone    \kappa_2\, , i\in \calC(j)\right]  \nonumber \\
        & \overset{(a)}{\le}  \Expect_{O_i}  \left[f_{O_i}\circ f_{ \paratwo  \kappa_1} \circ \left( f_{ \paraone    \kappa_2 } \circ  f_{ \paratwo  \kappa_1} \right)^{m/2-1}(1)  
        \bigg |  O_i \ge \paraone    \kappa_2 \right] \nonumber \\
        & \overset{(b)}{\le}  \left( f_{\paraone    \kappa_2 } \circ  f_{ \paratwo  \kappa_1} \right)^{m/2}(1) \,,  \label{eq:X_i_j_kappa}
    \end{align}
     where $(a)$ holds because $\{\sfX_{v,i}\}_{v\in \calC(i)}$ are mutually independent, together with \prettyref{eq:sfT_even_iterative_new_1}, \prettyref{lmm:inverse_binomial} and \ref{P:2} in \prettyref{lmm:property_f_d}; 
    $(b)$ holds because $f_d(p)$ is decreasing on $d$ by \ref{P:3}~\prettyref{lmm:property_f_d}.

    Then, we have 
    \begin{align}
        & \Expect_{T_{m+2} (\rho) \sim \mathbb{T}_{m+2} (  \kappa_1,  \,  \kappa_2, \, \fone, \, \ftwo)}\left[\sfX_{j,\rho} \left(T_{m+2} (\rho)\right) |  \,  j \in \calC(\rho) \right] \nonumber \\
        & =   \Expect_{T_{m+2} (\rho) \sim \mathbb{T}_{m+2} (  \kappa_1,  \,  \kappa_2, \, \fone, \, \ftwo)}\left[\frac{1}{1+\sum_{i\in \calC(j)}\sfX_{ij} \left(T_{m+2} (\rho)\right) } \bigg|  \,  j \in \calC(\rho) \right]\nonumber \\
        & \overset{(a)}{\ge} \Expect_{O_j}
        \Bigg [ \Expect_{\{ T_{m} (i)\}_{i\in \calC(j)} \iiddistr \mathbb{T}_{m} (  \kappa_1,  \,  \kappa_2, \, \fone, \, \ftwo)}\left[\frac{1}{1+\sum_{i\in \calC(j)}\sfX_{ij} \left(T_{m+1}(j)\right) } \bigg | \,  O_i \ge \paraone    \kappa_2 \,,   \forall i \in \calC(j) \right] \nonumber \\
        &~~~~  \times \prod_{i \in \calC(j) }\prob{O_i \ge \paraone    \kappa_2 } \Bigg| \, j \in \calC(\rho) \Bigg ]  \nonumber \\
        & \overset{(b)}{\ge}  \Expect_{O_j}\left[\left(1- \ftwo \right)^{O_j }  f_{O_j } \circ \left( f_{ \paraone    \kappa_2 } \circ  f_{ \paratwo  
        \kappa_1} \right)^{m/2}(1) \right] \nonumber \\
        & \overset{(c)}{\ge}  \left(1- \ftwo \right)^{ \kappa_1}  
        f_{ \kappa_1} \circ \left( f_{ \paraone    \kappa_2 } \circ  f_{ \paratwo  \kappa_1} \right)^{m/2}(1)  \nonumber \\
        &\overset{(d)}{\ge}f_{ \paratwo \kappa_1} \circ \left( f_{ \paraone    \kappa_2 } \circ  f_{ \paratwo  \kappa_1} \right)^{m/2}(1) \,,\label{eq:calT_kappa_2_lower_bound} 
    \end{align}
    where $(a)$ holds because for any $j\in \calC(\rho)$, $\{ T_{m} (i)\}_{i\in \calC(j)} \iiddistr \mathbb{T}_{m} (  \kappa_1,  \,  \kappa_2, \, \fone , \, \ftwo)$; $(b)$ holds by \prettyref{eq:O_i_even} and \prettyref{eq:X_i_j_kappa}, 
    given that $i$ is on the even depth; $(c)$ holds by Jensen's inequality, $\expect{O_j} \le  \kappa_1$, and the fact that for any $0\le x,y \le 1$, 
    $ g_d \left(x, y\right) \triangleq x^{d} f_d(y) $ 
    is decreasing and convex on $d\in \reals_+$, which follows from 
    \begin{align*}
         \frac{\partial g_d \left(x, y\right)}{\partial d}
         & =  \left( d x^{d-1} \ln x \right)  f_d(y) + x^{d} + \frac{\partial f_d(y)}{\partial d} \le 0 \,, \\
        \frac{\partial^2 g_d \left(x, y\right)}{\partial d^2}
        & = \left( d x^{d-1} \ln x \right) \left(\frac{\partial f_d \left(y\right)}{\partial d}\right) + d\left(d-1\right) x^{d-2} \left(\ln x\right)^2 f_d(y)\\
        &~~~~  + x^{d} \left(\frac{\partial^2 f_d \left(y\right)}{\partial d^2}\right) \ge 0 \,, 
    \end{align*}
    in view of $ \frac{\partial^2 f_d \left(y\right)}{\partial d^2} \ge 0$ and $\frac{\partial f_d \left(y\right)}{\partial d} \le 0$ by \ref{P:3} and \ref{P:5} in \prettyref{lmm:f_inequality};
    $(d)$ holds because  by \prettyref{eq:f_d},  
    we obtain
    \begin{align}
         \frac{\left(1- \ftwo \right)^{ \kappa_1}f_{ \kappa_1} \circ \left( f_{ \paraone    \kappa_2 } \circ  f_{ \paratwo  \kappa_1} \right)^{m/2}(1)  }{f_{ \paratwo   \kappa_1}\circ \left( f_{ \paraone    \kappa_2 } \circ  f_{ \paratwo  \kappa_1} \right)^{m/2}(1) }  
        &  =  \paratwo  \left(1- \ftwo \right)^{ \kappa_1}  \cdot \frac{ 1- \left(1-\left( f_{ \paraone    \kappa_2 } \circ  f_{ \paratwo  \kappa_1} \right)^{m/2}(1)\right)^{ \kappa_1 + 1} }{1- \left(1-\left( f_{ \paraone    \kappa_2 } \circ  f_{ \paratwo  \kappa_1 } \right)^{m/2}(1)\right)^{ \paratwo   \kappa_1 + 1}}  \nonumber \\
        & \ge  \paratwo  \left(1-  \kappa_1 \ftwo \right) \cdot \frac{ 1- \left(1-\left( f_{ \paraone    \kappa_2 } \circ  f_{ \paratwo  \kappa_1} \right)^{m/2}(1)\right)^{ \kappa_1 + 1} }{1- \left(1-\left( f_{ \paraone    \kappa_2 } \circ  f_{ \paratwo  \kappa_1 } \right)^{m/2}(1)\right)^{ \paratwo   \kappa_1 + 1}}  \nonumber \\
        & \ge   \paratwo  \left(1-  \kappa_1 \ftwo \right) \cdot \frac{1- \Delta}{1-\Delta^{\paratwo}} \nonumber \\
        & \ge 1\,,  \label{eq:lower_bound_1}
    \end{align}
       where the first inequality holds $(1+x)^y \ge 1+xy$ for any $x\ge -1$ and $y\ge 1$, the second inequality holds by setting 
    \[
    \Delta \triangleq \left(1-\left( f_{ \paraone    \kappa_2 } \circ  f_{ \paratwo  \kappa_1} \right)^{m/2}(1)\right)^{ \kappa_1 + 1} \,,
    \]
    given that $0< \Delta < 1$ and $\paratwo \ge 1$ by \prettyref{eq:paraone_paratwo}, we have $\frac{\paratwo\kappa_1+1}{\kappa_1+1} \ge \paratwo$, and 
    \begin{align*}
         \frac{1- \Delta}{1- \Delta^{\frac{\paratwo\kappa_1+1}{\kappa_1+1}}}
        & \ge \frac{1- \Delta}{1-\Delta^{\paratwo}}\,, 
    \end{align*}
   and the last inequality holds by our claim that if \prettyref{eq:paraone_paratwo} holds,
    \begin{align}
         \paratwo \left(\frac{1- \Delta}{1-\Delta^{\paratwo}}  \right) \ge  \left(1-  \kappa_1 \ftwo\right)^{-1} \,. \label{eq:Delta}
    \end{align}

    We are left to prove our claim \prettyref{eq:Delta}. 
    Let $c_n = \frac{\paraone \kappa_2+1 }{\kappa_1 +1}$. Then, we get
    \begin{align*}
            \left( f_{ \paraone    \kappa_2 } \circ  f_{\paratwo \kappa_1} \right)^{m/2} \left(1\right) 
            \ge 
            \left( f_{ \paraone \kappa_2} \circ  f_{ \kappa_1} \right)^{m/2} \left(1\right)  \ge \lim_{m\diverge} \left( f_{c_n \left(\kappa_1+1\right) -1 } \circ  f_{ \kappa_1} \right)^{m/2} \left(1\right) \triangleq x^*\,,
    \end{align*}
    where the first inequality holds because $f_d(p)$ is monotone decreasing on $0\le p\le 1$ for any $d\in\reals_+$, and monotone decreasing on $d\in\reals_+$ for any  $0\le p\le 1$, in view of \ref{P:2} and \ref{P:3} in \prettyref{lmm:property_f_d}, and the second inequality holds because $\left( f_{ \paraone \kappa_2} \circ  f_{ \kappa_1} \right)^{m/2} \left(1\right)$ is monotone decreasing on $m\in\naturals$. By $(1+x)^y \le \exp(xy)$ for any $|x|\le 1$ and $y\ge 1$, we obtain
    \begin{align*}
          \Delta   \le \exp\left(  - \left(\kappa_1+1\right) x^* \right) \,.\label{eq:Delta}
    \end{align*}
    Since $x^*$ is monotone decreasing on $c_n$, $\exp\left(  - \left(\kappa_1+1\right) x^* \right)$  is monotone increasing on $c_n$. 
    Given that $\frac{1-x}{1-x^y}$ is monotone decreasing on $0<x < 1$ for any $y>1$, we get
    \begin{align*}
          \paratwo \left(\frac{1- \Delta}{1-\Delta^{\paratwo}}  \right) \ge \paratwo \frac{ 1- \exp\left(  - \left(\kappa_1+1\right) x^* \right)}{ 1- \exp\left(  - \left(\kappa_1+1\right) x^* \right)^{\paratwo}}\,.
    \end{align*}
   Together with the fact that  $\frac{1-x}{1-x^y}$ is monotone decreasing on $0<x < 1$ for any $y>1$ and the fact that $\exp\left(  - \left(\kappa_1+1\right) x^* \right)$  is monotone increasing on $c_n$, to prove \prettyref{eq:Delta}, it suffices to show that if $c_n \ge 1+\Omega\left(\frac{1}{\kappa_1+1}\right)$, 
   \begin{align*}
        \paratwo \frac{ 1- \exp\left(  - \left(\kappa_1+1\right) x^* \right)}{ 1- \exp\left(  - \left(\kappa_1+1\right) x^* \right)^{\paratwo}} \ge \left(1-\kappa_1\ftwo\right)^{-1}\,. 
   \end{align*}
   Given that $c_n \ge 1+\Omega\left(\frac{1}{\kappa_1+1}\right)$, 
   by \ref{F:3} in \prettyref{lmm:fixed_point_convergence}, $x^* = - \frac{\log \left(1-\frac{1}{c_n}\right)}{\kappa_1+1}$. Given that $\paratwo =\left(1-2\kappa_1\ftwo\right)^{-1}$ and the assumption $ \kappa_1 \ftwo = o(1)$, we have
         \[
          \paratwo \frac{ 1- \exp\left(  - \left(\kappa_1+1\right) x^* \right)}{ 1- \exp\left(  - \left(\kappa_1+1\right) x^* \right)^{\paratwo}} \ge \frac{\frac{1}{c_n}\paratwo}{1-\left(1-\frac{1}{c_n}\right)^{\paratwo}} \overset{(a)}{\ge} \paratwo^{\frac{1}{2c_n}} \overset{(b)}{\ge} \left(1-\kappa_1\ftwo\right)^{-1}\,,
         \]
         where $(a)$ holds because for any $x \le 1$ and $y >1$, 
         \[
         f_1(x,y) \triangleq \frac{xy}{1-\left(1-x\right)y} \ge  f_2(x,y) \triangleq y^{\frac{x}{2}} \,,
         \]
         given that $f_1(1,y)=f_2(1,y)= y$, for $y\ge 1$, 
         \[
         \frac{\partial f_1(x,y) }{\partial x} = -\dfrac{\left(y-1\right)y}{\left(yx-y+1\right)^2} <0 \,, \quad  \frac{\partial f_2(x,y) }{\partial x}  = \dfrac{y^\frac{x}{2}\ln\left(y\right)}{2} >0 \,,
         \]
         and then $f_1(x,y) \ge f_1(1,y) =y = f_2 (1,y) \ge f_2 (x,y)  $ for $x \ge 1$ and $y>1$; $(b)$ holds because
         \[
         \paratwo = \left(1-2\kappa_1  \left(1 \vee \frac{\kappa_2}{\kappa_1}\right) \ftwo \right)^{-1} \ge \left(1-\kappa_1 \ftwo\right)^{-2 c_n}  \,,
         \]
         where the inequality holds because $ \left(\kappa_1 \vee \kappa_2\right)\ftwo = o(1)$, $c_n \le \frac{\kappa_2+1}{\kappa_1+1} \le 1\vee \frac{\kappa_2}{\kappa_1}$, and $\left(1+ xy\right) \le \left(1+x\right)^y$ for any $x\ge-1$ and $ y \ge 1$. 
\end{itemize}
Hence, together with the inductive hypothesis, our desired result follows.

\end{proof}




\subsection{Stability analysis in random bipartite graphs}\label{sec:stable}

Let $H$ be a random one-sided $d$-regular bipartite graph  with uniformly generated strict preferences, where each $a\in \Short$ is independently connected to $d$ randomly chosen $j\in \Long$. Let $\Short' \subset \Short$ and $\Long' \subset \Long$ be subsets chosen independently of the connections in $H$, with $|\Short'|= \gamma_1 n_{\Short}$ and $|\Long'|= \gamma_2 n_{\Long}$ for some $0< \gamma_1 \le 1$ and $\Omega(1) \le \gamma_2 \le 1$. Let $H'$ denote the vertex-induced subgraph of $H$ on $\Short' \cup \Long'$. 
For any $i\in \calV(H')$, let $\calN(i)$ denote the neighbors of $i$ on $H'$. 
Recall that for any $u,v\in \calV(H')$ with $(u,v)\in \calE(H')$, we say $u$ is available to $v$ on $H'$ if and only if, for any stable matching $\Phi$ on $H'$,  $u$ weakly prefers $v$ to its current match, i.e., $v \succ_j \phi(u)$ or $v = \phi(u)$.

When $H'$ is relatively sparse, i.e., $d \le O\left(\polylog n_\Long\right) $,  
by applying truncation methods and message-passing algorithms to local neighborhoods, we introduce a series of propositions and corollaries that analyze the availability of neighboring nodes with respect to an arbitrary node on $H'$. These results provide insights into how the existence of available neighbors depends on the imbalance of graph $H'$ and the number of signals $d$. Specifically, we examine the probability that a node has at least one available neighbor, and how this probability varies with the graph's structure and parameters. These findings are crucial for understanding the stability properties of matchings in random bipartite graphs and how they are influenced by the graph's attributes. 

\begin{proposition}\label{prop:remove_gamma_d_omega_1}
    Suppose $\omega(1) \le d \le O\left(\polylog n_\Long\right)$ and $\gamma_1 n_{\Short} = \delta \gamma_2 n_{\Long}$ for some $\Omega(1) \le \delta \le 1+\frac{1}{d^\lambda}$ where $\lambda \ge \omega\left(\frac{1}{\log d}\right)$. 
    Define 
    \begin{align}
        \nu \triangleq \frac{\left(1\wedge \lambda\right) \log d}{d} \,. \label{eq:nu}
    \end{align}
    \begin{itemize}
        \item    For any $a\in\Short$ and $\calN'(a) \subset \calN(a)$, we have
           \begin{align}
                 \prob{\forall\text{ $j\in \calN'(a)$, $j$ is unavailable to $a$ on $H'$}
                 } 
                 & \le \left(1-  \underline{C} \cdot \nu \right)^{\left|\calN'(a)\right|-2}  + o\left(\frac{1}{n}\right) \,,  \label{eq:j_a_calN'_a}
            \end{align}
         where  $\underline{C}>0$ is some constant that only depends on $\frac{\log d}{\log \log n_{\Long}}$.
         
           \item 
            For any $j\in\Long$ and $\calN'(j) \subset \calN(j)$, we have
           \begin{align}
                 \prob{\forall\text{ $a\in \calN'(j)$, $a$ is unavailable to $j$ on $H'$}}
                 & \ge  \left(1-  \overline{C} \cdot \frac{1}{\nu d} \right)^{\left|\calN'(a)\right|} -o(1)\,,  \label{eq:a'_j_calN'_j}
            \end{align}
            where $\overline{C} >0$ is some constant that only depends on  $\frac{\log d}{\log \log n_{\Long}}$.  
    \end{itemize}
\end{proposition}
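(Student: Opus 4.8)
## Proof Plan for Proposition \ref{prop:remove_gamma_d_omega_1}

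The plan is to exploit the locally tree-like structure of the random one-sided $d$-regular bipartite graph $H'$ and reduce the availability question to a proposing-probability computation on a random branching tree, using the message-passing machinery developed earlier. For a fixed applicant $a$ (and symmetrically for a fixed firm $j$), the key observation from Section \ref{sec:message} is that a neighbor $i \in \calN(a)$ is available to $a$ on $H'$ if and only if $i$ proposes to $a$ in the hierarchical proposal-passing algorithm run on the local neighborhood, and that truncating at even depth gives a lower bound on availability probability while odd depth gives an upper bound (\prettyref{lmm:local_available}). So I would first choose a truncation depth $\overell \asymp \frac{\log n}{\log d}$ (specifically $\overell \le \frac{\log n}{16(\log d \vee \log\log n)}$) so that, by \prettyref{prop:one_side_ER_tree}, with probability $1 - o(1/n)$ the neighborhood $H'_{\overell}(a)$ has tree excess at most some constant $\gamma$ and size at most $n^{1/2}$; removing the constantly many excess edges yields an honest tree to which \prettyref{alg:proposal_passing_alg} and \prettyref{lmm:rooted_tree_expected_degree} apply. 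The $o(1/n)$ (resp. $o(1)$) error terms in \prettyref{eq:j_a_calN'_a} and \prettyref{eq:a'_j_calN'_j} absorb the bad event that the neighborhood fails to be tree-like, plus the effect of deleting the excess edges.

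Next I would identify the branching parameters. Since $H'$ is the vertex-induced subgraph of a one-sided $d$-regular graph on $\Short' \cup \Long'$ with $|\Short'| = \delta\gamma_2 n_\Long$ and $|\Long'| = \gamma_2 n_\Long$: each surviving applicant has exactly $d$ firm-neighbors that survive with probability $\gamma_2$ each, so an applicant node branches into $\approx d$ children (more precisely, into $\Binom(d,\gamma_2)$, concentrated near $\gamma_2 d$); each surviving firm has $\Binom(n_\Short \cdot \tfrac{d}{n_\Long}, \gamma_1)$-many applicant-neighbors in $H'$, i.e. roughly $\delta d$ in expectation, which concentrates by a Chernoff/Poisson-approximation argument (\prettyref{lmm:chernoff}, or \prettyref{lmm:chernoff_negative} for the negatively-correlated surviving edges). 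Thus the local neighborhood of $a$ is stochastically dominated by / dominates a branching tree $\mathbb{T}_{\overell}(\kappa_1,\kappa_2,\xi_1,\xi_2)$ with $\kappa_1 \asymp d$ (the firm-side fan-out, relevant at odd depth from $a$) and $\kappa_2 \asymp d$ as well (applicant-side fan-out), with concentration failure probabilities $\xi_2$ exponentially small in $d$, so that $(\kappa_1 \vee \kappa_2)\xi_2 = o(1)$ as required by \prettyref{lmm:rooted_tree_expected_degree}. Plugging into \prettyref{eq:sfT_even_iterative_new_1} and iterating the map $f_{d-1}$, the marginal proposing probability of a single neighbor to $a$ stabilizes near the fixed point $x^*$ of $p \mapsto f_{d-1}(p)$, which by the asymptotics of $f_d$ (\prettyref{eq:f_d}) behaves like $\Theta\!\big(\tfrac{\log d}{d}\big) = \Theta(\nu)$ when the market is (near-)balanced — this is exactly the source of the $\nu$ in \prettyref{eq:j_a_calN'_a} and of the $\tfrac{1}{\nu d}$ in \prettyref{eq:a'_j_calN'_j} (for the firm side, the relevant quantity is $1 - $ proposing probability in the other direction, giving the reciprocal scaling).

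For \prettyref{eq:j_a_calN'_a} (the applicant upper bound on the probability that \emph{all} of $\calN'(a)$ are unavailable), I would condition on the realization of the subtrees hanging off the $|\calN'(a)|$ children of $a$, which are conditionally independent given the tree structure; each child proposes to $a$ independently with probability $\ge \underline{C}\nu$ by the even-depth lower bound of \prettyref{lmm:rooted_tree_expected_degree}, so the probability none proposes is $\le (1-\underline{C}\nu)^{|\calN'(a)|}$; the ``$-2$'' in the exponent is slack I would give myself to handle the (at most two) children whose subtrees are disturbed by the constantly-many excess-edge deletions. For \prettyref{eq:a'_j_calN'_j} I would instead use the odd-depth \emph{upper} bound on the proposing probability — a neighboring applicant $a \in \calN'(j)$ proposes to $j$ with probability $\le \overline{C}\tfrac{1}{\nu d}$ (intuitively, $a$ must prefer $j$ over all $\approx d$ of its other signaled firms \emph{and} not be matched up the tree), so the probability all of $\calN'(j)$ are unavailable is $\ge \prod_{a\in\calN'(j)}(1 - \overline{C}\tfrac{1}{\nu d})$ up to the $o(1)$ slack from non-tree-likeness; here I should be careful that the directional asymmetry (applicants signal, firms receive) is what produces the $d$ in the denominator, since an applicant's preference over its $d$ signaled firms is uniform. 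The main obstacle I anticipate is the bookkeeping around (i) making the stochastic-domination between the true induced neighborhood and the idealized branching tree $\mathbb{T}_\ell$ fully rigorous — in particular handling the mild dependence introduced by conditioning on survival in $\Short'$/$\Long'$ and by the fixed (non-Poisson) applicant degree $d$ — and (ii) controlling how the constantly-many excess edges and the $n^{1/2}$ size bound interact with the iterated-map estimate so that the errors genuinely collapse to $o(1/n)$ and $o(1)$ respectively; the analytic behavior of the fixed point of $f_{d-1}$ is already packaged in the earlier lemmas, so that part should be routine.
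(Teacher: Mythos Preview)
Your proposal is correct and follows essentially the same route as the paper: truncate to depth $m \asymp \frac{\log n}{\log d \vee \log\log n}$, invoke \prettyref{prop:one_side_ER_tree} for tree-likeness (tree excess $\le 1$ yields exactly the ``$-2$'' via removal of at most two neighbors of $a$, cf.\ \prettyref{clm:claim_3}), couple the BFS spanning tree to the branching model $\mathbb{T}_\ell(\delta\gamma_2 d,\,\gamma_2 d,\,\xi_1,\,\xi_2)$, and apply \prettyref{lmm:rooted_tree_expected_degree} plus the fixed-point analysis of \prettyref{lmm:fixed_point_convergence} to extract the $\nu$ and $1/(\nu d)$ scalings. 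One minor refinement worth noting: the relevant fixed point is that of the \emph{composed} map $f_{\eta_1\kappa_2}\circ f_{\eta_2\kappa_1}$ rather than of $f_{d-1}$ alone, and it is the slight asymmetry $\kappa_1 = \delta\gamma_2 d$ versus $\kappa_2 = \gamma_2 d$ (with $\delta$ up to $1+d^{-\lambda}$) that makes the factor $(1\wedge\lambda)$ in $\nu$ appear.
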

\begin{corollary}\label{cor:calN_a_omega_1} 
    Suppose $\omega(1) \le d \le O\left(\polylog n_\Long\right)$ and $\gamma_1 n_{\Short} = \delta \gamma_2 n_{\Long}$ for some $\Omega(1) \le \delta \le 1+\frac{1}{d^\lambda }$ where $\lambda \ge \omega \left(\frac{1}{\log d}\right)$. 
    \begin{itemize}
        \item  For any $a\in\Short$ and $\calN'(a) \subset \calN(a)$ such that $ |\calN'(a)| \ge \omega \left( \frac{1}{\nu}\right)$, 
       then we have
       \begin{align*}
             \prob{\forall \text{ $j\in \calN'(a)$, $j$ is unavailable to $a$ on $H'$}} \le  o(1) \,.
       \end{align*}
       \item  For any $j\in\Long$ and $\calN'(j) \subset \calN(j)$ such that $|\calN'(j) | \le o\left(\nu d \right)$, 
       then we have
       \begin{align*}
             \prob{\forall \text{ $a\in \calN'(j)$, $a$ is unavailable to $j$ on $H'$}} \ge 1-o(1) \,.
       \end{align*}
    \end{itemize}
\end{corollary}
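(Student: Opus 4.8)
The corollary is an immediate consequence of Proposition~\ref{prop:remove_gamma_d_omega_1}, so the plan is simply to pass to the limit in the two bounds there, tracking how the exponents behave under the stated size conditions on $\calN'(a)$ and $\calN'(j)$.

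For the first bullet, I would start from \prettyref{eq:j_a_calN'_a}, which gives
\[
\prob{\forall\, j\in \calN'(a),\ j \text{ unavailable to } a \text{ on } H'} \le \left(1-\underline{C}\nu\right)^{|\calN'(a)|-2} + o(1/n).
\]
Using $1-x \le e^{-x}$, the main term is at most $\exp\!\big(-\underline{C}\nu(|\calN'(a)|-2)\big)$. Since $|\calN'(a)| \ge \omega(1/\nu)$, we have $\nu|\calN'(a)| = \omega(1)$, hence $\nu(|\calN'(a)|-2) = \omega(1)$ as well (note $\nu \to 0$ so subtracting the constant $2$ is negligible; more carefully, $\nu(|\calN'(a)|-2) = \nu|\calN'(a)| - 2\nu \to \infty$). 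Therefore $\exp(-\underline{C}\nu(|\calN'(a)|-2)) = o(1)$, and adding the $o(1/n) = o(1)$ term gives the claim.

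For the second bullet, I would start from \prettyref{eq:a'_j_calN'_j}:
\[
\prob{\forall\, a\in \calN'(j),\ a \text{ unavailable to } j \text{ on } H'} \ge \left(1-\overline{C}\,\tfrac{1}{\nu d}\right)^{|\calN'(j)|} - o(1).
\]
Since $|\calN'(j)| \le o(\nu d)$, the exponent times the base rate satisfies $\overline{C}\,\frac{1}{\nu d}\cdot|\calN'(j)| = o(1)$, and also $\frac{1}{\nu d} = \frac{1}{(1\wedge\lambda)\log d} = o(1)$ so the base is in $(0,1)$ for large $n$; thus $\left(1-\overline{C}\frac{1}{\nu d}\right)^{|\calN'(j)|} \ge 1 - \overline{C}\frac{|\calN'(j)|}{\nu d} = 1-o(1)$ by Bernoulli's inequality. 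Combining with the $-o(1)$ correction yields the desired $\ge 1-o(1)$.

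The only genuine work is in Proposition~\ref{prop:remove_gamma_d_omega_1} itself (which we may assume), so the ``main obstacle'' here is purely bookkeeping: making sure the quantities $\nu|\calN'(a)|$ and $|\calN'(j)|/(\nu d)$ diverge to $\infty$ and vanish respectively under the hypotheses $|\calN'(a)|\ge\omega(1/\nu)$ and $|\calN'(j)|\le o(\nu d)$, and that the additive $o(1/n)$ and $o(1)$ error terms from the proposition are dominated. There is nothing subtle beyond confirming that $\nu \to 0$ (which holds since $\nu = (1\wedge\lambda)\log d / d$ with $d = \omega(1)$) so that the $-2$ shift in the first exponent is harmless.
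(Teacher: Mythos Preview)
Your proposal is correct and takes essentially the same approach as the paper: the paper simply states that the corollary ``follows directly from \prettyref{prop:remove_gamma_d_omega_1} and is therefore omitted,'' and your write-up supplies precisely the routine bookkeeping (the $1-x\le e^{-x}$ bound for the first bullet and Bernoulli's inequality for the second) that the paper leaves implicit.
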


\begin{corollary}\label{cor:deleted_edges}
Suppose $H''$ is a subgraph of $H$ such that each edge of $H$ is included in $H''$ with probability $q$, independently from all other edges. Then, \prettyref{prop:remove_gamma_d_omega_1} and \prettyref{cor:calN_a_omega_1} hold by replacing $H'$ with $H''$, where by replacing $d$ with $d' \triangleq qd$. 
\end{corollary}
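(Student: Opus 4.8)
\textbf{Proof plan for Corollary \ref{cor:deleted_edges}.}

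The plan is to reduce the edge-sampled graph $H''$ to a situation where Proposition \ref{prop:remove_gamma_d_omega_1} (and hence Corollary \ref{cor:calN_a_omega_1}) can be invoked essentially verbatim, but with the effective degree $d$ replaced by $d' = qd$. First I would record what $H''$ looks like: for each applicant $a \in \Short$, its neighbor set in $H$ has size exactly $d$, and each of those $d$ edges survives into $H''$ independently with probability $q$; thus the degree of $a$ in $H''$ is $\Binom(d,q)$, concentrated around $d' = qd$. The key structural observation is that running the edge-percolation on a one-sided $d$-regular graph with uniformly random neighbor sets produces something that, conditioned on degrees, is again a one-sided random bipartite graph with (random but concentrated) degrees, and the preferences of every node remain i.i.d.\ uniform over its surviving neighbors (percolation does not disturb the uniform preference structure, since deleting an edge just deletes one entry from a uniformly random list). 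So $H''$ falls into the class of ``almost-regular'' random bipartite graphs that the local analysis of Section \ref{sec:local} already handles.

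The steps, in order, would be: (1) Show that with high probability every vertex of $H''$ has degree in the window $[(1-\epsilon)d', (1+\epsilon)d']$ for a slowly vanishing $\epsilon$; this is a union bound over $n$ vertices of the Chernoff bound \prettyref{eq:chernoff_binom_left}--\prettyref{eq:chernoff_binom_right}, and it is harmless because $d' = qd$ and we only ever need $d' = \omega(1)$ up to $O(\polylog n)$, which is exactly the regime assumed when we apply the corollary. (2) Re-run the local-neighborhood argument behind Proposition \ref{prop:remove_gamma_d_omega_1}: the locally-tree-like property still holds — Lemma \ref{lmm:one_sided_regular} and Proposition \ref{prop:one_side_ER_tree} apply because $H''$ is dominated by $\mathbb{G}(n_\Short,n_\Long,p)$ with $p \gtrsim (d'\vee \log n)/n$ once we condition on the degree window — so truncated neighborhoods have bounded tree excess and are sub-$\sqrt n$ in size. (3) Feed the degree window $[(1-\epsilon)d',(1+\epsilon)d']$ into the message-passing bounds (Lemmas \ref{lmm:rooted_tree} and \ref{lmm:rooted_tree_expected_degree}), which only ever use lower and upper bounds $\dunderodd \le d_i - 1 \le \doverodd$ on offspring counts; with $d_i \asymp d'$ these produce exactly the same $f_d$-iteration estimates as in the proof of Proposition \ref{prop:remove_gamma_d_omega_1} but evaluated at $d'$ rather than $d$, yielding the same conclusions with $\nu$ recomputed as $\nu = (1\wedge\lambda)\log d' / d'$. (4) Propagate these through the two bullet points of the proposition and, for Corollary \ref{cor:calN_a_omega_1}, through the thresholds $|\calN'(a)| \ge \omega(1/\nu)$ and $|\calN'(j)| \le o(\nu d')$ — note the second threshold naturally picks up $d'$ in place of $d$, which is why the statement says ``replacing $d$ with $d' = qd$.''

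The main obstacle I anticipate is not any single estimate but the bookkeeping around \emph{conditioning}: once we condition on the high-probability degree window, the surviving edges are no longer fully independent across vertices, so I would want to argue (as in Lemma \ref{lmm:one_sided_regular}) that $H''$ can be coupled to lie between two unconditioned models — a genuinely independent-edge model $\mathbb{G}(n_\Short,n_\Long,p_-)$ from below and $\mathbb{G}(n_\Short,n_\Long,p_+)$ from above, with $p_\pm = (1\pm\epsilon)d'/n_\Long$ — and then transfer the locally-tree-like and message-passing estimates through this sandwich using the monotonicity results of Section \ref{sec:truncation} (availability is monotone under adding/removing edges on one side). A secondary subtlety is that the additive $o(1/n)$ and $o(1)$ error terms in \prettyref{eq:j_a_calN'_a}--\prettyref{eq:a'_j_calN'_j} must absorb the extra $n\exp(-\Omega(d'))$ failure probability from the degree concentration step; since the corollary is only invoked when $d' = \omega(1)$, this is automatic, but I would state it explicitly so the error terms in the conclusion are genuinely of the claimed order.
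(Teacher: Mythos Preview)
Your high-level plan is sound and would lead to a correct proof, but it is considerably more elaborate than what the paper actually does, and one step you lean on is not supported by the tools you cite.

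The paper's proof is almost a one-liner: it simply re-runs the proof of Proposition~\ref{prop:remove_gamma_d_omega_1} with $\gamma_1=\gamma_2=1$ and modifies only the computation of the offspring distribution in the BFS exploration behind Lemma~\ref{lmm:claim_1}. Under edge percolation, when the BFS is at a firm node $w_t\in\Long$, each neutral applicant is hit independently with probability $\frac{d}{n_\Long}\cdot q$, so $O'_{w_t}\sim\Binom(n_\Short,\frac{dq}{n_\Long})$ with mean $\le\delta d'$; when $w_t\in\Short$, its $d$ original edges each survive independently with probability $q$, so $O'_{w_t}\sim\Binom(d,q)$, and a single Chernoff bound gives the required concentration $\prob{O'_{w_t}<(1-\xi_1)d'}\le\xi_2$ with $\xi_1=(d')^{-1/(2+\epsilon_0)}$, $\xi_2=\exp(-\tfrac12(d')^{1-2/(2+\epsilon_0)})$. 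This shows $T'_\ell(a)\sim\mathbb T_\ell(\delta d',d',\xi_1,\xi_2)$, after which every remaining line of the proof of Proposition~\ref{prop:remove_gamma_d_omega_1} applies verbatim with $d$ replaced by $d'$. No global degree concentration, no coupling to $\mathbb G(n_\Short,n_\Long,p_\pm)$, and no sandwiching are needed, because the branching model $\mathbb T_\ell$ already accommodates random offspring sizes directly.

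The specific step you should drop is the sandwich argument via ``availability is monotone under adding/removing edges on one side.'' The monotonicity results of Section~\ref{sec:truncation} (Lemmas~\ref{lmm:truncation}, \ref{lmm:local}, \ref{lmm:local_available}) concern removing \emph{vertices} from one side, not edges; deleting an edge $(a,j)$ hurts both $a$ and $j$ simultaneously, so there is no clean one-sided monotonicity for edge percolation, and the coupling you sketch between $\mathbb G(\cdot,p_-)$ and $\mathbb G(\cdot,p_+)$ does not transfer availability bounds in the way you suggest. Fortunately this machinery is unnecessary: the conditioning concern you raise evaporates once you work, as the paper does, with the \emph{unconditioned} offspring distribution inside the BFS rather than first conditioning on a global degree window.
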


\begin{proposition}\label{prop:remove_gamma_d_log_n}
    Suppose $\omega(1) \le d \le O\left(\polylog n_\Long\right)$ and $\gamma_1 n_{\Short} = \delta \gamma_2 n_{\Long}$ for some $\delta \le 1-\Omega(1)$ or $\delta \ge 1+\Omega(1)$. 
    \begin{itemize}
            \item If $\delta \le 1-\Omega(1)$, for any $a\in\Short$ and  $\calN'(a) \subset \calN(a)$,  
            \begin{align*}
                \left(1-\frac{ \left(1+o(1)\right) \delta }{ \log \left(\frac{1}{1-\delta}\right)}\right)^{\left|\calN'(a)\right|}  \left(1-o\left(1\right)\right)
                & \le \prob{\forall\text{ $j\in \calN'(a)$, $j$ is unavailable to $a$ on $H'$}} \\
                & \le  \left(1 -  \frac{ \left(1-o(1)\right) \delta}{\log \left(\frac{1}{1-\delta}\right)} \right)^{|\calN'(a)|-2} + o\left(\frac{1}{n}\right) \,.
            \end{align*}   
        \item If $\delta \ge 1+\Omega(1)$, for any $j\in\Long$ and  $\calN'(j) \subset \calN(j)$,  
        if $\delta \ge 1 + \Omega(1)$, 
         \begin{align*}
                \left(1 - \frac{ 1+o(1) }{ \delta \log \left(\frac{\delta}{\delta-1}\right)}\right)^{\left|\calN'(j)\right|}  \left(1-o\left(1\right)\right)
                & \le \prob{\forall\text{ $a\in \calN'(j)$, $a$ is unavailable to $j$ on $H'$}} \\
                & \le  \left(1 -  \frac{ 1-o(1) }{ \delta \log \left(\frac{\delta}{\delta-1}\right)} \right)^{|\calN'(j)|-2} + o\left(\frac{1}{n}\right) \,.
        \end{align*}  
    \end{itemize}
\end{proposition}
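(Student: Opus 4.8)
The plan is to run the argument of \prettyref{prop:remove_gamma_d_omega_1} essentially verbatim, replacing only the fixed-point computation at its heart. By the symmetry of the model under interchanging $\Short$ with $\Long$ and $\delta$ with $1/\delta$, it suffices to treat $\delta\le 1-\Omega(1)$ with $a\in\Short$; the case $\delta\ge 1+\Omega(1)$ with $j\in\Long$ is entirely analogous, merely swapping the composition order of the two occurrences of $f$. Fix $a\in\Short$ and $\calN'(a)\subset\calN(a)$, and let $P$ denote the probability $\prob{\text{every }j\in\calN'(a)\text{ is unavailable to }a\text{ on }H'}$. By \prettyref{lmm:local_available}, $P\le\prob{\text{every }j\in\calN'(a)\text{ is unavailable to }a\text{ on }H'_m(a)}$ whenever $m$ is even, and $P\ge\prob{\text{same on }H'_m(a)}$ whenever $m$ is odd. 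I would fix a truncation depth $m=m(n)\diverge$ growing slowly enough that, since $H'_m(a)$ is a subgraph of the corresponding neighborhood in the ambient one-sided $d$-regular graph, \prettyref{prop:one_side_ER_tree} applies: $H'_m(a)$ is a tree with probability $1-o(1)$, has tree excess at most $1$ with probability $1-o(1/n)$, and contains at most $n^{1/2}$ vertices.

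On the tree $H'_m(a)$ the stable matching is unique (\prettyref{lmm:tree_unique}), and a child $j$ of $a$ is available to $a$ if and only if $j$ proposes to $a$ in \prettyref{alg:proposal_passing_alg}; moreover the subtrees hanging off distinct children of $a$ are vertex-disjoint, so conditionally on the tree structure these proposal events are independent and the truncated probability equals $\prod_{j\in\calN'(a)}\bigl(1-\mu_{j,a}(T)\bigr)$. Everything thus reduces to estimating a single-child proposal probability $\mu_{j,a}(T)$. Writing $d':=\gamma_2 d=\omega(1)$ for the effective degree of $H'$, the layers of $T$ have degrees that are $\Binom(d,\gamma_2)$-distributed at applicant layers (mean $d'$) and $\Binom(\gamma_1 n_\Short,d/n_\Long)$-distributed at firm layers (mean $\delta d'$, since $\gamma_1 n_\Short/n_\Long=\delta\gamma_2$), so the layer-to-layer ratio is exactly $\delta$, independent of $\gamma_1,\gamma_2$, and by \prettyref{lmm:chernoff} all these degrees lie within a $(1\pm o(1))$ factor of their means simultaneously over $H'_m(a)$ with probability $1-o(1)$ (a union bound over $\le n^{1/2}$ vertices, exploiting $d'=\omega(1)$ and $m$ slowly growing). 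On this event \prettyref{lmm:rooted_tree} sandwiches $\mu_{j,a}(T)$ between iterates built from $\fundereven\circ\foverodd$ and $\fovereven\circ\funderodd$ with $\dunderodd,\doverodd=(1\pm o(1))\delta d'$ and $\dundereven,\dovereven=(1\pm o(1))d'$; if only the layer means are controlled one instead invokes \prettyref{lmm:rooted_tree_expected_degree}.

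The crux is to evaluate these iterated compositions. Up to the $(1\pm o(1))$ slack in the subscripts both reduce to the map $g:=f_{\delta d'-1}\circ f_{d'-1}$, and I would use \prettyref{lmm:fixed_point_convergence}, together with the monotonicity and convexity of $f_{d}$ in \prettyref{lmm:property_f_d}, to show that for $\delta\le 1-\Omega(1)$ the iterates $g^{k}(1)$ converge fast enough that $k\asymp m$ iterations suffice, with limit the fixed point $p^\star$ of $g$ satisfying $p^\star=\dfrac{\delta}{\log(1/(1-\delta))}(1+o(1))$. This value is pinned down by the heuristic: with $u=f_{d'-1}(p^\star)\approx 1/(d'p^\star)$ one has $(1-u)^{\delta d'}\approx e^{-\delta/p^\star}$, so $p^\star=f_{\delta d'-1}(u)$ forces $1-e^{-\delta/p^\star}=\delta$, that is, $p^\star=\delta/\log(1/(1-\delta))$; the degree slack and finite $m$ perturb this only by $o(1)$. (For $\delta\ge 1+\Omega(1)$ one iterates $f_{d'-1}\circ f_{\delta d'-1}$ and the same computation gives $1/\bigl(\delta\log(\delta/(\delta-1))\bigr)$.) Consequently $\mu_{j,a}(T)=\dfrac{(1+o(1))\delta}{\log(1/(1-\delta))}$ with probability $1-o(1)$, and taking the product over $\calN'(a)$ yields both halves of the claim: for the upper bound, on the $1-o(1/n)$ event that $H'_m(a)$ has tree excess at most $1$ at most two children of $a$ lie on a cycle, for which we use $1-\mu_{j,a}\le 1$, while the other $\ge|\calN'(a)|-2$ children contribute $1-\mu_{j,a}\le 1-\frac{(1-o(1))\delta}{\log(1/(1-\delta))}$ and the complementary event costs at most $o(1/n)$; for the lower bound (odd $m$), on the $1-o(1)$ good event $\prod_{j}(1-\mu_{j,a})\ge\bigl(1-\frac{(1+o(1))\delta}{\log(1/(1-\delta))}\bigr)^{|\calN'(a)|}$ and the bad event merely costs a factor $1-o(1)$. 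I expect the fixed-point step to be the main obstacle: one must quantify the convergence rate of $g^{k}(1)$ to $p^\star$ and control the $d'$-dependent and degree-slack corrections to the mean-field value $\delta/\log(1/(1-\delta))$ tightly enough that all error terms are genuinely $o(1)$; the graph-theoretic bookkeeping (cycles, and the degree shift from passing to the induced subgraph $H'$) is routine given \prettyref{prop:one_side_ER_tree}.
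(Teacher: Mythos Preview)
Your proposal is the paper's proof: truncate via \prettyref{lmm:local_available}, invoke \prettyref{prop:one_side_ER_tree} for local tree structure (sacrificing two neighbors on the tree-excess-$1$ event for the upper bound, paying a $1-o(1)$ factor on the exact-tree event for the lower bound), factorize over independent subtrees, and compute the single-child proposal probability as the fixed point of the two-layer recursion, with the $\delta\ge 1+\Omega(1)$ case obtained by rooting at $j$ and swapping the composition order. Your fixed-point heuristic yielding $\delta/\log(1/(1-\delta))$ (resp.\ $1/(\delta\log(\delta/(\delta-1)))$) is correct and is exactly what the paper establishes.

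One step as written does not go through: the union bound enforcing $(1\pm o(1))$-degree control at \emph{every} vertex of $H'_m(a)$. With per-vertex Chernoff failure $\exp(-c\epsilon^2 d')$, $d'=\gamma_2 d$ merely $\omega(1)$, and tree size up to $n^{1/2}$, you cannot make the total failure $o(1)$ while keeping $\epsilon=o(1)$ unless $d'\gtrsim\log n$; even replacing $n^{1/2}$ by a direct $(Cd)^m$ bound this remains delicate across the full range $d=\omega(1)$. The paper does not attempt this route: it uses \prettyref{lmm:rooted_tree_expected_degree} (your ``fallback'') as the \emph{primary} tool, which needs only an expectation bound on odd-layer offspring and a per-vertex tail $\xi_2$ on even-layer offspring with $(\kappa_1\vee\kappa_2)\xi_2=o(1)$, absorbing failures inside the iteration rather than union-bounding them away. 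Second, you need $m=\omega(\log d)$, not just $m\diverge$: the paper reaches $(1\pm o(1))p^\star$ via case \ref{F:3} of \prettyref{lmm:fixed_point_convergence}, where the contraction factor $\Gamma_\epsilon=(1+\epsilon(1-1/c_n))^{-1}$ with $c_n\approx 1/\delta$ forces $m\gtrsim(\log d)/\epsilon$, so $\epsilon=o(1)$ requires $m/\log d\diverge$. Finally, the original one-sided-regular graph is not literally $\Short\leftrightarrow\Long$ symmetric (applicant degrees in $H'$ are hypergeometric, firm degrees binomial), so the paper verifies the branching-tree law $T'_\ell(j)\sim\mathbb{T}_\ell(\gamma_2 d,\delta\gamma_2 d,\xi_1',\xi_2')$ separately rather than appealing to symmetry---though you are right that once inside the $\mathbb{T}_\ell$ framework the two cases are dual.
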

\begin{corollary}\label{cor:calN_a_logn} 
Suppose $\delta \le 1-\Omega(1)$.
    For any $\{\calN'(a)\}_{a\in \Short}$ where $\calN'(a) \subset \calN(a)$ with  $|\calN'(a)| \ge \left(1+\epsilon\right)  \frac{1}{\delta}\log \left(\frac{1}{1-\delta}\right) \log n_\Short$ for any constant $\epsilon>0$, 
    \begin{align}
        \prob{\exists \, a\in \Short\,,  \text{ s.t. } \forall\text{ $j\in \calN'(a)$, $j$ is unavailable to $a$ on $H'$}} \le o(1) \,. \label{eq:delta_constant_union_1}
    \end{align}
\end{corollary}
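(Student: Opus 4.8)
The plan is to combine the per-applicant upper bound from Proposition \ref{prop:remove_gamma_d_log_n} (the strongly imbalanced case with $\delta \le 1-\Omega(1)$) with a union bound over all applicants $a \in \Short$. From Proposition \ref{prop:remove_gamma_d_log_n}, for a fixed applicant $a$ and a fixed choice of $\calN'(a) \subseteq \calN(a)$,
\begin{align*}
\prob{\forall j \in \calN'(a),\ j \text{ is unavailable to } a \text{ on } H'}
\le \left(1 - \frac{(1-o(1))\delta}{\log\!\left(\frac{1}{1-\delta}\right)}\right)^{|\calN'(a)|-2} + o\!\left(\frac{1}{n}\right).
\end{align*}
First I would substitute the hypothesis $|\calN'(a)| \ge (1+\epsilon)\frac{1}{\delta}\log\!\left(\frac{1}{1-\delta}\right)\log n_\Short$ into the exponent. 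Writing $\beta \triangleq \frac{(1-o(1))\delta}{\log(1/(1-\delta))}$, the main term is at most $(1-\beta)^{|\calN'(a)|-2} \le \exp(-\beta(|\calN'(a)|-2))$, and $\beta \cdot |\calN'(a)| \ge (1+\epsilon)(1-o(1))\log n_\Short$. Hence each individual probability is bounded by $\exp(-(1+\epsilon)(1-o(1))\log n_\Short) + o(1/n) = n_\Short^{-(1+\epsilon)+o(1)} + o(1/n)$, which for $n$ large is at most $n_\Short^{-1-\epsilon/2}$, say.

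Next I would take the union bound over the $n_\Short$ applicants:
\begin{align*}
\prob{\exists\, a \in \Short \text{ s.t. } \forall j \in \calN'(a),\ j \text{ unavailable to } a \text{ on } H'}
\le n_\Short \cdot \left(n_\Short^{-1-\epsilon/2} + o(1/n)\right) = o(1),
\end{align*}
using that $n_\Short \le n$ so the $o(1/n)$ term contributes $o(1)$ after multiplication. One subtlety to address: the $o(1/n)$ error term in Proposition \ref{prop:remove_gamma_d_log_n} comes from the rare event that the local neighborhood $H'_{\overell}(a)$ deviates from being almost tree-like (via Proposition \ref{prop:one_side_ER_tree} / Lemma \ref{lmm:n_sqaure_ER}); I would note that this bad event is controlled uniformly over all vertices with probability $1 - o(n^{-1})$ (indeed $1 - o(n^{-\gamma})$ for the relevant tree-excess parameter), so the union bound over applicants of these error terms is still $o(1)$, and one does not need independence across applicants for this step.

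The main obstacle — really a bookkeeping point rather than a deep one — is handling the $o(1)$ and $o(1/n)$ qualifiers carefully so that the exponent $(1+\epsilon)$ genuinely beats the union-bound factor $n_\Short$ with room to spare: one must verify that the $(1-o(1))$ multiplying $\delta$ in the Proposition's bound, together with $\log n_\Short = (1+o(1))\log n$, still leaves the exponent strictly above $1$ for all large $n$, which holds because $\epsilon$ is a fixed positive constant while all the error terms vanish. A secondary point is that the statement should hold simultaneously for an arbitrary (adversarial but fixed, or measurable) family $\{\calN'(a)\}_{a \in \Short}$ chosen independently of the edges of $H$; since the bound in Proposition \ref{prop:remove_gamma_d_log_n} holds for every fixed $\calN'(a)$ and the union bound is over the $n_\Short$ events indexed by $a$ (not over choices of $\calN'(a)$), this causes no difficulty. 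Assembling these estimates yields \eqref{eq:delta_constant_union_1}.
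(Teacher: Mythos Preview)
Your proposal is correct and follows essentially the same route as the paper: apply the per-applicant upper bound from Proposition~\ref{prop:remove_gamma_d_log_n}, substitute the lower bound on $|\calN'(a)|$, use $(1-\beta)^k \le \exp(-\beta k)$ to obtain a probability of $o(1/n_\Short)$ for each $a$, and conclude via the union bound over $a\in\Short$. Your additional remarks on the uniformity of the $o(1/n)$ tree-excess error and on the choice of $\{\calN'(a)\}$ are accurate but not needed beyond what the paper does, since the per-applicant bound already absorbs these issues.
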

\begin{corollary}\label{cor:unmatched_constant}
Suppose $\gamma_1=\gamma_2 = 1$ and $0< \delta \le 1-\Omega(1)$. 
Let $\Short_U$ denote the set of unmatched applicants on $H$. If $d \le \left(1-\epsilon\right) \frac{1}{\delta} \log \left(\frac{1}{1-\delta}\right)\log n_\Short$ for any constant $\epsilon>0$, then we have
\begin{align}
     \prob{ \exp\left(\frac{ \left(1- o(1)\right)\delta  d }{\log \left(1-\delta \right)} \right) n_\Short \ge |\Short_U| \ge \exp\left(\frac{ \left(1+ o(1)\right)\delta  d }{\log \left(1-\delta \right)} \right) n_\Short } \ge 1-o(1)\,.  \label{eq:unmatched_constant}
\end{align}
Moreover, if $d \ge \left(1+\epsilon\right) \frac{1}{\delta} \log \left(\frac{1}{1-\delta}\right)\log n_\Short$, every applicants are matched on $H'$. 
\end{corollary}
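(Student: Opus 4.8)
The plan is to write $|\Short_U|=\sum_{a\in\Short}\indc{a\in\Short_U}$ and run a first/second moment argument, using \prettyref{prop:remove_gamma_d_log_n} for the expectation and the tree‑likeness results of \prettyref{sec:loc_tree_random} for the concentration. Since $\gamma_1=\gamma_2=1$ we are in the case $H'=H$. First I would record that $|\Short_U|$ is well defined (by the Rural Hospital theorem the unmatched applicants coincide in every stable matching on $H$) and that $a\in\Short_U$ if and only if no $j\in\calN(a)$ is available to $a$ on $H$: if $a$ is unmatched in the applicant‑optimal stable matching $\Phi_H^{\Short}$, then stability forces $\phi_H^{\Short}(j)\succ_j a$ for every $j\in\calN(a)$, and as $\Phi_H^{\Short}$ is firm‑pessimal this holds in every stable matching, so no neighbor is available; conversely, if $a$ is matched then $j^\star:=\phi_H^{\Long}(a)$, being firm‑optimally matched to $a$, weakly prefers $a$ to its partner in every stable matching, hence is available to $a$. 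Thus $|\Short_U|=\sum_{a\in\Short}\indc{\text{no }j\in\calN(a)\text{ is available to }a}$.

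For the first moment I would apply \prettyref{prop:remove_gamma_d_log_n} with $\calN'(a)=\calN(a)$, so $|\calN'(a)|=d$ (the hypotheses $\omega(1)\le d\le O(\polylog n)$ — the displayed bound on $d$ forces $d=O(\log n)$ — and $\delta\le1-\Omega(1)$ place us in its regime). Combining its two bounds with $1-x\le e^{-x}$ and $\log(1-\delta)=-\log\tfrac1{1-\delta}$ gives
\begin{align*}
n_{\Short}\exp\left(\tfrac{(1+o(1))\delta d}{\log(1-\delta)}\right)(1-o(1)) \;\le\; \expect{|\Short_U|} \;\le\; n_{\Short}\exp\left(\tfrac{(1-o(1))\delta d}{\log(1-\delta)}\right)+o(1),
\end{align*}
the additive $o(1)$ coming from $n_{\Short}\cdot o(1/n)$. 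If $d\le(1-\epsilon)\tfrac1\delta\log\tfrac1{1-\delta}\log n_{\Short}$ then $\tfrac{\delta d}{\log(1/(1-\delta))}\le(1-\epsilon)\log n_{\Short}$, whence $\expect{|\Short_U|}\ge n_{\Short}^{\epsilon-o(1)}\to\infty$; if instead $d\ge(1+\epsilon)\tfrac1\delta\log\tfrac1{1-\delta}\log n_{\Short}$ then $\expect{|\Short_U|}=o(1)$, and Markov's inequality yields both the ``moreover'' statement (no unmatched applicants whp) and, since the claimed upper bound in \eqref{eq:unmatched_constant} comfortably exceeds $\expect{|\Short_U|}$, the upper bound on $|\Short_U|$ in \eqref{eq:unmatched_constant}.

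The remaining, and hardest, part is the lower bound in \eqref{eq:unmatched_constant}, \ie that $|\Short_U|$ concentrates near its (polynomially large) mean. I would prove this by a second moment estimate exploiting locality: by \prettyref{lmm:local_available} and the proposal‑passing description, $\indc{a\in\Short_U}$ is sandwiched between two functions of the truncated neighborhood $H_m(a)$ (even vs.\ odd $m$), and by \prettyref{prop:one_side_ER_tree} all these neighborhoods are tree‑like and of size at most $n^{1/2}$ for every $m\le\tfrac{\log n}{16(\log d\vee\log\log n)}$; because $\delta\le1-\Omega(1)$ the proposal‑passing recursion is a contraction, so the even/odd sandwich closes as $m$ grows. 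Applicants with vertex‑disjoint such neighborhoods are independent up to an $O(n^{-1/2})$ perturbation of the residual random graph, and there are at most $n^{1+o(1)}$ pairs with overlapping neighborhoods, giving $\var(|\Short_U|)\le\expect{|\Short_U|}+n^{1+o(1)}$ plus a negligible cross term; Chebyshev's inequality then gives $|\Short_U|=(1+o(1))\expect{|\Short_U|}$ whp, completing \eqref{eq:unmatched_constant}. The main obstacle is precisely this step when $d$ is close to the threshold, since then $\expect{|\Short_U|}$ is only a small polynomial in $n$ and the crude count of overlapping pairs need not be $o(\expect{|\Short_U|}^2)$; handling this requires bounding the covariance of \emph{every} pair directly — \eg by coupling the two rooted neighborhoods and quantifying how fast the proposal‑passing messages decorrelate with distance — or by restricting the second moment to the ``good'' applicants whose neighborhoods are genuinely small trees and checking that the rest are negligible.
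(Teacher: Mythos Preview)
Your first-moment computation via \prettyref{prop:remove_gamma_d_log_n} and the Markov steps for the upper bound on $|\Short_U|$ and for the ``moreover'' clause are correct and match the paper. The substantive difference is the lower tail. The paper does not use locality here at all; instead it exploits a global monotonicity to get negative correlation of the indicators $X_a=\indc{a\in\Short_U}$. On the event $\{X_{a'}=1\}$ the matching on $\Short\setminus\{a'\}$ coincides with that on $H_{-a'}$, and by \prettyref{lmm:truncation} every remaining applicant is weakly better off in $H_{-a'}$ than in $H$; this gives $\prob{\forall a\in I\setminus\{a'\},X_a=1\mid X_{a'}=1}\le\prob{\forall a\in I\setminus\{a'\},X_a=1}$, and iterating yields $\prob{\bigcap_{a\in I}\{X_a=1\}}\le\prod_{a\in I}\prob{X_a=1}$ for every $I\subset\Short$. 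In particular $\Cov(X_a,X_b)\le 0$, so $\var(|\Short_U|)\le\expect{|\Short_U|}$ and Chebyshev delivers the lower tail whenever $\expect{|\Short_U|}=\omega(1)$, which is exactly the sub-threshold regime; the full product bound also feeds \prettyref{lmm:chernoff_negative} for an exponential upper tail. This one-line use of Crawford's comparative statics replaces your entire tree-neighborhood/overlap program and is uniform in $d$, so the near-threshold obstacle you correctly flagged simply does not arise. Your route could in principle be pushed through, but the negative-correlation argument is both shorter and sharper, and it is precisely the ``bound the covariance of every pair directly'' option you mention at the end---just done via monotonicity rather than via message decorrelation.
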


When $H'$ is relatively dense, we introduce the following proposition that characterizes the conditions under which applicants are guaranteed to be matched with partners within a certain top range of their preference lists
in the stable matchings.

\begin{proposition}\label{prop:perfect_stable}
     Suppose $\gamma_1 n_{\Short} = \delta \gamma_2 n_{\Long}$ for some $1-o(1) \le \delta \le 1$.
     Let 
     \[ 
     r_n \triangleq \frac{1}{\delta} \log \left( \frac{1}{1-\delta + \frac{\delta^2}{\gamma_2 n_\Long}}\right) \log n_\Short\,.
     \]
     Then, with high probability, for any constant $\epsilon>0$, we have 
     \begin{itemize}
         \item   If $d > \frac{1+2\epsilon }{\gamma_2}r_n$ and $\delta \le 1$,  in the $\Short'$-optimal stable matching on $H'$:
            \begin{itemize}
            \item every $a \in \Short'$ is matched with one of its top $\left(1+\epsilon\right) r_n$ preferred firms;
            \item there exists some $a\in\Short'$ that is not matched with any of its top $\left(1-\epsilon\right)r_n$ preferred firms.
            \end{itemize} 
         \item  If $d > \frac{2+2\epsilon }{\gamma_2} r_n$ and $\delta < 1$,  every $a \in \Short'$ is matched with one of its top $\left(2+\epsilon \right) r_n$ preferred firms in any stable matching on $H'$. 
     \end{itemize}
\end{proposition}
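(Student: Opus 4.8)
The plan is to reduce each item to a tail estimate on an applicant's rank in a fixed extremal stable matching of $H'$, and to establish those estimates through a principle-of-deferred-decisions analysis of the Gale--Shapley process. First I would fix the relevant scales of $H'$. Each $a\in\Short'$ retains a $\mathrm{Binom}(d,\gamma_2)$-concentrated number $d'\approx\gamma_2 d$ of its $d$ signalled firms inside $\Long'$, and each $j\in\Long'$ receives a concentrated number $\approx\delta d'$ of signals from $\Short'$, so $H'$ behaves like a random one-sided $d'$-regular bipartite graph on $\Short'\cup\Long'$ with $|\Short'|=\delta|\Long'|$ and firm surplus $k^\ast\triangleq|\Long'|-|\Short'|=(1-\delta)\gamma_2 n_\Long\ge0$. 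With $\mu\triangleq\frac{1}{\delta}\log\frac{\gamma_2 n_\Long}{k^\ast+\delta^2}$ one has $r_n=\mu\log n_\Short$, where $\mu$ is the ``typical applicant rank'' and $\log n_\Short$ the union-bound exponent. By the lattice structure and the Rural Hospital theorem, the $\Short'$-optimal stable matching $\Phi^{\Short'}$ (the outcome of $\Short'$-proposing deferred acceptance) is best for every applicant and $\Phi^{\Long'}$ is worst; so the first bullet's two items concern $\mathrm{rank}_{\Phi^{\Short'}}(a)$, while the second bullet concerns $\max_\Phi \mathrm{rank}_\Phi(a)=\mathrm{rank}_{\Phi^{\Long'}}(a)$.

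The core is the two-sided tail bound, valid for a fixed $a\in\Short'$ and all $k\le(1+\epsilon)r_n$: $\exp(-(1+o(1))k/\mu)\ \le\ \Prob[\mathrm{rank}_{\Phi^{\Short'}}(a)>k]\ \le\ \exp(-(1-o(1))k/\mu)$. To prove it I would run $\Short'$-proposing deferred acceptance while revealing each applicant's ordering of its firm-neighbours entry by entry as it proposes, and tracking for each firm the (self-)rank of the applicant it currently holds. A fresh proposal by $a$ then lands on an essentially uniform not-yet-tried neighbour and is permanently rejected exactly when that firm's incumbent outranks $a$, an event of probability equal to the incumbent's normalised rank. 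The self-consistency step — that the incumbent ranks concentrate, by the Chernoff and hypergeometric tail bounds of \prettyref{sec:preliminary}, around a level that makes the per-proposal acceptance probability $(1+o(1))/\mu=(1+o(1))\delta/\log\frac{\gamma_2 n_\Long}{k^\ast+\delta^2}$ — uses that $d>\frac{1+2\epsilon}{\gamma_2}r_n$ keeps firm lists long enough that only a vanishing fraction of applicants exhaust theirs. Summing the geometric upper tail over the $\le n_\Short$ applicants yields the first item of the first bullet; the lower tail, combined with a second-moment/negative-association argument (via \prettyref{lmm:chernoff_negative}) over a linear-sized family of applicants with vertex-disjoint neighbourhoods, produces an applicant of rank $>(1-\epsilon)r_n$, the second item. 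When $d=O(\polylog n)$ the truncation results \prettyref{lmm:local} and \prettyref{lmm:local_available} together with the message-passing availability estimates of \prettyref{sec:message_passing_alg} and \prettyref{sec:stable} give an alternative packaging of the same tails; larger $d$ reduces to $d=\Theta(r_n)$, since lengthening one applicant's signal list with the others fixed can only improve that applicant's optimal-stable partner.

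For the second bullet I would bound $\mathrm{rank}_{\Phi^{\Long'}}(a)$. Since $\delta<1$ the surplus $k^\ast\ge1$ forces every applicant to be matched in every stable matching (the unbalanced-market phenomenon transported to $H'$). Relative to $\Phi^{\Short'}$, the extra depth an applicant is pushed down the stable lattice — equivalently the number of its firm-neighbours that can displace it through rotations, or the number of additional distinct proposals it absorbs when $\Long'$-proposing deferred acceptance is run after $\Short'$-proposing deferred acceptance — is again at most $(1+\epsilon)r_n$ with high probability, by a rejection-chain argument symmetric to the one above; a list of length $d'>(2+2\epsilon)r_n$ then keeps every applicant within its top $(2+\epsilon)r_n$ in every stable matching. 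The main obstacle throughout is pinning the exact constant $\mu=\frac{1}{\delta}\log\frac{\gamma_2 n_\Long}{k^\ast+\delta^2}$: one must control the coupled evolution of applicant ranks and firm incumbent ranks with $o(1)$ error uniformly in $k$ up to $(1+\epsilon)r_n$, and handle the delicate sub-regime $\delta=1-o(1)$ in which the additive $\delta^2/(\gamma_2 n_\Long)$ inside the logarithm is not negligible — the regime interpolating between the strongly imbalanced analysis of \prettyref{prop:remove_gamma_d_log_n} and the exactly balanced case.
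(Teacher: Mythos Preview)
For the first bullet the paper does not reprove anything: after the hypergeometric concentration $|\calN(a)|>(1+\epsilon)r_n$, it invokes \cite[Theorems~2 and~10]{potukuchi2024unbalanced} as a black box for both the upper and the lower rank bound. Your self-consistency sketch (``incumbent ranks concentrate \ldots around a level that makes the per-proposal acceptance probability $(1+o(1))/\mu$'') is precisely the nontrivial heart of those external theorems, not something that follows from the tools developed in this paper. For the second bullet you and the paper agree in outline: the paper runs the rejection-chain Algorithm~1 of \cite{ashlagionline} from $\Phi^{\Short'}$ toward $\Phi^{\Long'}$, bounds separately the number of runs per applicant (stochastically $\le 1+\mathrm{Geometric}(1-\delta)$, since each proposal lands in the unmatched-firm pool with probability $\ge 1-\delta$) and the length of each run, and multiplies to obtain $|R(a)|\le(1+\epsilon/2)r_n$ additional proposals on top of the first bullet's $(1+\epsilon/2)r_n$.

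Two shortcuts in your proposal would not go through. The message-passing alternative is explicitly set aside by the paper (``Since $H'$ is relatively dense, this approach diverges from the truncation methods and message-passing algorithms\ldots''): \prettyref{prop:remove_gamma_d_omega_1} delivers only non-sharp constants depending on $\frac{\log d}{\log\log n}$, and \prettyref{prop:remove_gamma_d_log_n} requires $\delta\le 1-\Omega(1)$ or $\delta\ge 1+\Omega(1)$, exactly excluding the regime $1-o(1)\le\delta\le1$ here. And your reduction ``larger $d$ reduces to $d=\Theta(r_n)$, since lengthening one applicant's signal list \ldots can only improve that applicant's optimal-stable partner'' conflates utility with rank: the partner's \emph{utility} is monotone in the list length, but the \emph{rank within the enlarged neighbour set} is not, so this does not transfer the second item (existence of a rank-$>(1-\epsilon)r_n$ applicant) to larger $d$. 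The reduction the paper uses is different---truncate each applicant's preference list \emph{inside} $H'$ to its top $(1+\epsilon)r_n$ entries and observe that if DA on the truncation matches everyone then DA on the full $H'$ returns the \emph{same} matching---and that version does carry both items simultaneously.
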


The first bullet point follows from~\cite[Theorem 2 and Theorem 10]{potukuchi2024unbalanced}, while for the second bullet point, we employ a rejection chain algorithm (as described in~\cite[Algorithm 2]{ashlagi2017unbalanced}) to transform the applicant-optimal stable matching into the firm-optimal stable matching, enabling a comprehensive analysis of stable matching outcomes in all stable matchings. Since $H'$ is relatively dense, this approach diverges from the truncation methods and message-passing algorithms used for local neighborhoods in previous results. 

The proof of \prettyref{cor:calN_a_omega_1} follows directly from \prettyref{prop:remove_gamma_d_omega_1} and is therefore omitted here. The proofs of Propositions \ref{prop:remove_gamma_d_omega_1}, \ref{prop:remove_gamma_d_log_n}, and \ref{prop:perfect_stable}, as well as the proofs of Corollaries \ref{cor:deleted_edges}, \ref{cor:calN_a_logn} and \ref{cor:unmatched_constant} are presented in the following subsections.

\subsubsection{Proof of \prettyref{prop:remove_gamma_d_omega_1}}
\label{sec:post_prop_remove_gamma_d_omega_1}
For any $a\in \Short$ and $\ell \in \naturals_+$, let $H_\ell(a)$ (resp. $H'_{\ell}(a)$) denote the vertex-induced subgraph of $H$ (resp. $H'$) on its $\ell$-hop neighborhood of $a$. Let $T'_{\ell}(a)$ denote the spanning tree rooted at $a$ with depth $\ell$ explored by the bread-first search exploration on $H'_{\ell}(a)$. 
\begin{lemma} \label{lmm:claim_1}
    For any $a\in\Short$, $\ell\in\naturals_+$ with $\ell \le \frac{\log n}{16  \left( \log d  \vee  \log \log n \right) } $, we have
    \begin{align}
        T_{\ell}'(a) \sim \mathbb{T}_{\ell} \left( \delta \gamma_2 d,\,  \gamma_2 d, \, \fone,\, \ftwo\right)\,, \label{eq:claim_tree_1}
    \end{align}
    where for any arbitrarily small but fixed constant $\epsilon_0>0$, 
    \begin{align}
        \fone = 
             \left(\gamma_2 d\right)^{-\frac{1}{2+\epsilon_0}} 
        \,, 
        \quad \ftwo = \exp\left(- \frac{1}{2}\gamma_2^{2-\frac{2}{2+\epsilon_0}}  \left(\gamma_2 d\right)^{1-\frac{2}{2+\epsilon_0}}\right)  \,. \label{eq:fone_ftwo_1}
    \end{align}
    Moreover, if $\ell$ is even and $\ell \ge \Omega \left(\frac{\log n}{\log d \vee \log \log n}\right)$ and  $ \Omega(1) \le \delta \le 1+\frac{1}{d^\lambda}$ for some $\lambda>0$, 
    we have
    \begin{align}
      \expect{\sfX_{j,a}\left(T'_\ell(a)\right) |  \, j\in \calC(a)} 
      \ge \underline{C} \cdot \nu
      \,, \label{eq:claim_1}
    \end{align}
    and for any $j\in\calN(a)$, 
    \begin{align}
     \expect{\sfX_{a',j}\left(T'_{\ell-1}(j)\right)| a' \in \calC(j)}
     \le   \overline{C} \cdot \frac{1}{\nu d}
      \,, \label{eq:claim_2}
   \end{align}
   where $\nu$ is defined in \prettyref{eq:nu}, and $\underline{C},\overline{C} >0$ are some constants that only depend on $\frac{\log d}{\log \log n_{\Long}}$. 
\end{lemma}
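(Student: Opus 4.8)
The plan is to run a breadth‑first‑search (BFS) exploration from $a$ inside the truncated neighbourhood $H'_\ell(a)$, read off the offspring laws it produces, and then feed the result into \prettyref{lmm:rooted_tree_expected_degree}.

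\textbf{Step 1 (the BFS tree fits the branching model).} Since $H'$ is a vertex‑induced subgraph of $H$, the tree‑excess can only drop, so \prettyref{prop:one_side_ER_tree} (with tree excess $\gamma=0$) together with \prettyref{lmm:n_sqaure_ER} gives that, for $\ell\le \frac{\log n}{16(\log d\vee\log\log n)}$, with probability $1-o(1)$ the neighbourhood $H'_\ell(a)$ is a genuine tree on at most $n^{1/2}$ vertices; on this event $T'_\ell(a)=H'_\ell(a)$ and we may generate it by revealing edges in BFS order. Because $\Short'$ and $\Long'$ are chosen independently of the edges and each applicant draws its $d$ firms uniformly and independently, when BFS reaches a firm $v$ (odd depth) its not‑yet‑explored children form a subset of the applicants of $\Short'$ that drew $v$, whose number is stochastically dominated by $\Binom(\gamma_1 n_\Short, d/n_\Long)$ with mean $\gamma_1 n_\Short d/n_\Long=\delta\gamma_2 d$ (using $\gamma_1 n_\Short=\delta\gamma_2 n_\Long$); hence its expectation is $\le\delta\gamma_2 d=\kappa_1$. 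When BFS reaches an applicant $v$ (even depth) its children are the firms of $\Long'$ drawn by $v$, minus the parent and minus at most a $\Binom(d,n^{1/2}/(\gamma_2 n_\Long))$ number of already‑explored vertices, which is $0$ with probability $1-o(1)$; the fresh count is $\Hyp(\gamma_2 n_\Long,n_\Long,d)$ with mean $\gamma_2 d$, so the hypergeometric lower tail \prettyref{lmm:hyper} yields $\prob{O_v<(1-\xi_1)\gamma_2 d}\le \exp(-\xi_1^2\gamma_2^2 d)$, and substituting $\xi_1=(\gamma_2 d)^{-1/(2+\epsilon_0)}$ (after absorbing the collision correction) gives the stated $\xi_2$. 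The preference lists along $T'_\ell(a)$ are i.i.d.\ uniform by the standing assumption on $H$. Thus $T'_\ell(a)$ satisfies all three defining properties of $\mathbb{T}_\ell(\delta\gamma_2 d,\gamma_2 d,\xi_1,\xi_2)$, which is \prettyref{eq:claim_tree_1}–\prettyref{eq:fone_ftwo_1}.

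\textbf{Step 2 (the two proposal probabilities).} Put $\kappa_1=\delta\gamma_2 d$, $\kappa_2=\gamma_2 d$. Since $\kappa_1\vee\kappa_2\le(1+o(1))\gamma_2 d=O(\polylog n)$ while $\xi_2=\exp(-\Theta((\gamma_2 d)^{1-2/(2+\epsilon_0)}))$ with $\gamma_2 d=\omega(1)$, we have $(\kappa_1\vee\kappa_2)\xi_2=o(1)$, so $\eta_1=1-\xi_1$ and $\eta_2=(1-2(\kappa_1\vee\kappa_2)\xi_2)^{-1}$ from \prettyref{eq:paraone_paratwo} are both $1+o(1)$. Applying \prettyref{lmm:rooted_tree_expected_degree}, for even $\ell$,
\[
\expect{\sfX_{j,a}(T'_\ell(a))\mid j\in\calC(a)}\ \ge\ f_{\eta_2\kappa_1}\circ\big(f_{\eta_1\kappa_2}\circ f_{\eta_2\kappa_1}\big)^{\ell/2-1}(1).
\]
For the second claim note that $\sfX_{a',j}$ is determined by the sub‑tree hanging below $a'$, and so has the same law computed in $T'_{\ell-1}(j)$ or in $T'_\ell(a)$ (where $j\in\calC(a)$ at depth $1$ and $a'\in\calC(j)$ at depth $2$); conditioning on the typical event $O_{a'}\ge\eta_1\kappa_2$, \prettyref{eq:sfT_even_iterative_new_2} gives $\expect{\sfX_{a',j}\mid a'\in\calC(j),\,O_{a'}\ge\eta_1\kappa_2}\le\big(f_{\eta_1\kappa_2}\circ f_{\eta_2\kappa_1}\big)^{\ell/2-1}(1)$, and removing the conditioning costs at most $\prob{O_{a'}<\eta_1\kappa_2}\le\xi_2$. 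It then remains to estimate the iterated map. By the monotonicity/convexity properties of $f_d$ (\prettyref{lmm:property_f_d}) and the fixed‑point analysis of \prettyref{lmm:fixed_point_convergence}, $f_{\eta_1\kappa_2}\circ f_{\eta_2\kappa_1}$ has a unique attracting fixed point $x^\star$ with $x^\star\asymp \log d^{\,1\wedge\lambda}/d=\nu$ (here $c_n\asymp\kappa_2/\kappa_1=1/\delta$, and $\Omega(1)\le\delta\le 1+d^{-\lambda}$ forces $|c_n-1|\lesssim d^{-\lambda}$, so $x^\star\asymp\log(1/|c_n-1|)/\kappa_1$), and because $\ell\ge\Omega(\log n/(\log d\vee\log\log n))=\omega(1)$, after $\ell/2-1$ iterations from $1$ the iterate is within a constant factor of $x^\star$. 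This gives $f_{\eta_2\kappa_1}\circ(\cdots)(1)\gtrsim\nu$, i.e.\ \prettyref{eq:claim_1}. For \prettyref{eq:claim_2}, applying one more step $f_{\eta_1\kappa_2}$ to a point of order $\nu$ and using $(\eta_1\kappa_2+1)\nu\asymp\nu d=(1\wedge\lambda)\log d\to\infty$,
\[
f_{\eta_1\kappa_2}(x^\star)=\frac{1-(1-x^\star)^{\eta_1\kappa_2+1}}{(\eta_1\kappa_2+1)x^\star}=\frac{1+o(1)}{(\eta_1\kappa_2+1)x^\star}\asymp\frac{1}{\nu d},
\]
with $\xi_2=o(1/(\nu d))$ negligible; the constants $\underline C,\overline C$ depend only on $\frac{\log d}{\log\log n_\Long}$ (through the convergence rate) and on the $\Omega(1)$ quantities $\gamma_2,\delta$.

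\textbf{Main obstacle.} Step 1 is routine: a BFS coupling plus binomial/hypergeometric concentration. The delicate part is Step 2 — establishing that $\ell=\Theta(\log n/(\log d\vee\log\log n))$ iterations already bring the iterated map $f_{\eta_1\kappa_2}\circ f_{\eta_2\kappa_1}$ within a constant factor of its fixed point, and pinning that fixed point at $\Theta(\nu)$ uniformly over the whole regime $\Omega(1)\le\delta\le 1+d^{-\lambda}$ with $\lambda\ge\omega(1/\log d)$, including the borderline where $c_n$ is extremely close to $1$ and the asymptotics of $f_d$ must be tracked carefully. This is exactly the work that \prettyref{lmm:property_f_d} and \prettyref{lmm:fixed_point_convergence} are designed to carry out.
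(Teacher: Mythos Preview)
Your Step 1 is correct and matches the paper's argument. Step 2 has a genuine error in the fixed-point identification. You assert that the fixed point of $f_{\eta_1\kappa_2}\circ f_{\eta_2\kappa_1}$ is $x^\star\asymp\nu$, but it is actually $\asymp 1/(\nu d)$. At the worst-case value $\delta=1+d^{-\lambda}$ (which maximizes the iterate, by monotonicity of $f_d$ in $d$), one has $c_n=(\eta_1\kappa_2+1)/(\eta_2\kappa_1+1)\approx(1-\xi_1)/\delta<1$, so it is case~\ref{F:1} of \prettyref{lmm:fixed_point_convergence} that applies, giving $x^\star=-c_n/\log(1-c_n)\asymp 1/((1\wedge\lambda)\log d)=1/(\nu d)$. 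Your formula $x^\star\asymp\log(1/|c_n-1|)/\kappa_1$ is the \ref{F:3} expression, valid only for $c_n>1$; in effect you have computed the fixed point of the \emph{other} composition $f_{\eta_2\kappa_1}\circ f_{\eta_1\kappa_2}$, which does sit at $\asymp\nu$. Since the iterate $(f_{\eta_1\kappa_2}\circ f_{\eta_2\kappa_1})^{\ell/2-1}(1)$ decreases from $1$ and stays $\ge x^\star\asymp 1/(\nu d)\gg\nu$, it cannot be $\asymp\nu$.

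This swap propagates through both bounds. For \prettyref{eq:claim_2}, the iterate itself already bounds $\expect{\sfX_{a',j}}$ via \prettyref{eq:sfT_even_iterative_new_2}; there is no ``one more $f_{\eta_1\kappa_2}$'' step to take. One shows the iterate is $\le C/(\nu d)$ and adds $\xi_2=o(1/(\nu d))$. For \prettyref{eq:claim_1}, one applies the decreasing map $f_{\eta_2\kappa_1}$ to that upper bound, obtaining $f_{\eta_2\kappa_1}(C/(\nu d))\asymp \nu$; with your iterate pinned at the wrong scale the argument does not go through. A secondary gap: ``$\ell=\omega(1)$'' alone does not guarantee convergence to a constant multiple of $x^\star$ when $1-c_n\asymp d^{-(\lambda\wedge 1/(2+\epsilon_0))}$; the paper takes $\epsilon=1\vee\log_{\log n_\Long}d$ in \ref{F:1} and checks $\log(\epsilon x^\star)/\log\Gamma_\epsilon=o(\ell)$, which is exactly where $\underline C,\overline C$ acquire their dependence on $\log d/\log\log n_\Long$.
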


 Pick $m \in \naturals$ such that
\begin{align}
 m = 2 \left \lfloor \frac{\log n}{32 \left( \log d \vee  \log \log n \right) }  \right \rfloor\,. \label{eq:m}
\end{align}



\begin{itemize}
    \item First, we prove \prettyref{eq:j_a_calN'_a}. 
    Denote $\sfE_{m}$ as the indicator for the event such that $H'_{m}(a)$ has tree excess at most $1$ for any $a \in \Short$. By \prettyref{prop:one_side_ER_tree}, with probability $1- o\left(\frac{1}{n}\right)$, there does not exist any $a \in \Short$ such that $H_{m}(a)$ exhibits tree excess greater than $1$. Since $H'_{m}(a)$ is a subgraph of $H_m(a)$ for any $a \in \Short$, we get
    \begin{align}
        \prob{\sfE_{m} = 1} = 1- o\left(\frac{1}{n}\right) \,. \label{eq:sfE_m}
    \end{align}
    
    \begin{claim}\label{clm:claim_3}
        Given any graph $G$ has tree excess at most $1$, for any vertex $i \in \calV(G)$, there exists a subset of neighbors of $i$ on $G$, denoted as $\calV_i$ with $|\calV_i|\le 2$, such that the connected component containing $i$ in the vertex-induced subgraph of $G$ on $\calV(G) \backslash \calV_i$ is a tree. 
    \end{claim}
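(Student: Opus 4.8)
The plan is a short structural case analysis on $G$. I would first note that the notion of tree excess used in the paper is defined only for connected graphs — it counts how many edges can be deleted while keeping the graph connected — so $G$ is connected, and having tree excess at most $1$ forces $G$ to be either a tree or \emph{unicyclic} (to contain exactly one cycle). This is immediate from $\ex(G) = |\calE(G)| - |\calV(G)| + 1 \in \{0,1\}$, the value being $0$ precisely when $G$ is a tree and $1$ precisely when $G$ has exactly one independent cycle.

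If $G$ is a tree, take $\calV_i = \emptyset$ and there is nothing to prove. Otherwise let $\Gamma$ be the unique cycle of $G$. The key point is that deleting a \emph{single} well-chosen neighbor of $i$ already severs $i$ from $\Gamma$, so $|\calV_i| = 1 \le 2$ will suffice. There are two cases. \emph{(i) $i$ lies on $\Gamma$:} pick a neighbor $u$ of $i$ on $\Gamma$ and set $\calV_i = \{u\}$; since $u$ is on the only cycle of $G$, the induced subgraph on $\calV(G) \setminus \{u\}$ is a forest, so the component of $i$ in it is a tree. \emph{(ii) $i$ does not lie on $\Gamma$:} contracting $\Gamma$ to a point turns $G$ into a tree (one checks the excess drops to $0$, using that in a unicyclic graph no two cycle vertices share an outside neighbor), so there is a unique path in $G$ from $i$ to $\calV(\Gamma)$; letting $x$ be the neighbor of $i$ on this path and $\calV_i = \{x\}$, deleting $x$ separates $i$ from $\calV(\Gamma)$, hence from $\Gamma$, so the component of $i$ in the induced subgraph on $\calV(G) \setminus \{x\}$ contains no vertex of $\Gamma$ and therefore no cycle — it is a tree. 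In every case $\calV_i \subseteq \calN(i)$ and $|\calV_i| \le 2$, which is exactly the form needed for the local neighborhoods $H'_m(a)$ in the proof of \prettyref{prop:remove_gamma_d_omega_1}.

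What remains is only elementary verification of the two facts used in parentheses (a connected non-tree with tree excess $\le 1$ has exactly one cycle; contracting that cycle yields a tree, so the $i$-to-$\Gamma$ path is unique), which involves no computation. The single point I would be careful to record is that connectedness of $G$ is genuinely used: for a disconnected graph the total excess could equal $1$ while the component of $i$ carries several cycles (the excess being ``supplied'' by isolated vertices elsewhere), and then no choice of two neighbors of $i$ would make the component of $i$ acyclic — but since the paper's definition of tree excess already presupposes connectedness, this is not an issue here, and it is essentially the only role the hypothesis plays. Beyond that there is no real obstacle: the argument is immediate once the tree/unicyclic dichotomy is in hand.
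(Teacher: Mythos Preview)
Your proof is correct and in fact yields the sharper bound $|\calV_i|\le 1$ (rather than $\le 2$). The paper's route is somewhat different: it fixes a single edge $e=(u,v)$ whose deletion makes $G$ a tree, then takes $\calV_i$ to be the set of all neighbors $j$ of $i$ for which some path in $G$ uses both the edge $(i,j)$ and $e$; that removing these vertices disconnects $i$ from the unique cycle is clear, and the bound $|\calV_i|\le 2$ follows by observing that three such neighbors would produce a second cycle avoiding $e$. Your approach instead exploits the unicyclic structure directly, casing on whether $i$ lies on the cycle $\Gamma$ and exhibiting one explicit neighbor (a cycle-neighbor of $i$, or the first step on the unique path toward $\Gamma$) whose removal already separates $i$ from $\Gamma$. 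The paper's definition of $\calV_i$ is more uniform across positions of $i$ but pays for this with the weaker bound and a short contradiction argument; your case analysis is more constructive and gives the stronger conclusion, though the improvement is immaterial downstream since only $|\calN'(a)\setminus\calV_a|\ge|\calN'(a)|-2$ is used in \prettyref{eq:expect_j_calN'_a_backslash_calV_a}. Your observation that connectedness of $G$ is genuinely needed (and is implicit in the paper's definition of tree excess) is correct and worth recording.
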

    \begin{proof}[Proof of \prettyref{clm:claim_3}] 
    Since $G$ has tree excess at most $1$, 
    for any $i\in \calV(G)$, there exists an edge set $\calE_i\subset \calE(G)$ with $|\calE_i|\le 1$ such that the edge-induced subgraph of $G$ on $\calE(G)\backslash\calE_i$ forms a tree.  If $\calE_i \neq \emptyset$, define $\calV_i \subset \calN(i)$ as a subset of neighboring nodes of $i$ such that for every $j \in \calV_i$, a path in $G$ includes both the edge $(i,j)$ and an edge $(u,v)\in \calE_i$. If $\calE_i = \emptyset$, then assign $\calV_i = \emptyset$. Let $G'(i)$ represent the connected component rooted at $i$ in the vertex-induced subgraph of $G$ on $\calV(G) \backslash \calV_i$. The absence of any path from $i$ to any edge $(u,v)\in \calE_i$ in $G'(i)$ implies that $G'(i)$ inherently forms a tree. 

    Next, we show $|\calV_i|\le 2$.  Notably, if $ \calE_i = \emptyset$, then $|\calV_i| = 0$.
    If $\calE_i = \{ (u,v) \}$ for certain $u,v \in \calV(G)$, it follows that $|\calV_i| \le 2$. A contradiction arises if not, indicating the presence of a cycle in $G$ not encompassing $(u,v)$. This contradicts the premise that the remaining edge-induced subgraph of $G$, after the removal of $(u,v)$, does not contain any cycle. Thus, our claim follows. 
    \end{proof}
    
    Conditional on $\sfE_{m}  = 1$, $H'_m(a)$ has at most tree excess $1$. By \prettyref{clm:claim_3}, conditional on $\sfE_{m}  = 1$, let $T''_m(a)$ denote the corresponding rooted tree with root $a$, which is the connected component containing $a$ in the vertex-induced subgraph of $H'_m(a)$ on $\calV(H'_m(a)) \backslash \calV_a$ for some $\calV_a \subset \calN(a)$ with $|\calV_a|\le 2$.

    For any $\calN'(a)\subset \calN(a)$, we have 
    \begin{align}
        & \prob{\forall \text{ $j\in \calN'(a)$, $j$ is unavailable to $a$ on  $H'$}}  \nonumber\\
        & \overset{(a)}{\le}  \prob{\forall \text{ $j\in \calN'(a)$, $j$ is unavailable to $a$ on $H'_m(a)$}}  \nonumber\\
        & \overset{(b)}{\le}  \prob{\forall \text{ $j\in \calN'(a) \backslash \calV_a$, $j$ is unavailable to $a$ on $T''_m(a)$} | \sfE_{m}  = 1 } \prob{\sfE_{m}  = 1} + \prob{\sfE_{m}  = 0} \nonumber\\
        & \overset{(c)}{\le}\Prob \left\{\sum_{j\in \calN'(a)\backslash \calV_a}\sfX_{j,a}\left(T''_m(a)\right) = 0  \bigg | \sfE_{m}  = 1 \right\}  +  \prob{\sfE_{m}  = 0} \nonumber\\
        & \overset{(d)}{=} \Expect \left[\prod_{j\in \calN'(a)\backslash \calV_a} \left(1-\sfX_{j,a}\left(T''_m(a)\right) \right) \bigg |  \sfE_{m}  = 1 \right]   +   o\left(\frac{1}{n}\right) \,. \label{eq:j_calN'_a_unavailable}
    \end{align}
    where $(a)$ holds by \prettyref{lmm:local_available} and $m$ is even; $(b)$ holds because conditional on $\sfE_m=1$, $a$ is weakly worse off in $T''_m(a)$ compared with $H'_m(a)$, given that $T''_m(a)$ is the connected component containing $a$ in the vertex-induced subgraph of $H'_m(a)$ on $\calV(H'_m(a)) \backslash \calV_a$ where $\calV_a \subset \calN(a) \subset \Long'$; $(c)$ holds because if $j$ is available to $a$ in $T''_m(a)$, $j$ must propose to $a$ when running \prettyref{alg:proposal_passing_alg} on $T''_m(a)$; $(d)$ holds by \prettyref{eq:sfE_m}. 
    
    Given that  $T'_m(a)$ is a spanning tree on $H'_m(a)$, conditional on $\sfE_{m}  = 1$, 
     it follows that $T''_m(a)$ is a subtree of $T'_m(a)$ such that $T''_m(a)$ can be viewed as the subtree rooted at $a$ by removing $\calV_a$ from $T'_m(a)$. Then, for any $j \in  \calN'(a)\backslash \calV_a$, $j$ proposes to $a$ by running \prettyref{alg:proposal_passing_alg} on $T''_m(a)$, if and only if $j$ proposes to $a$  by running \prettyref{alg:proposal_passing_alg} on $T'_m(a)$, i.e.,
     \[
     \sfX_{j,a}\left(T''_m(a)\right) = \sfX_{j,a}\left(T'_m(a)\right) \,.
     \]
    For any $\calN'(a) \subset \calN(a)$,  we obtain
    \begin{align}
         \Expect \left[\prod_{j\in \calN'(a)\backslash \calV_a} \left(1-\sfX_{j,a}\left(T''_m(a)\right) \right) \bigg |  \sfE_{m}  = 1 \right] 
         & = \Expect \left[\prod_{j\in \calN'(a)\backslash \calV_a} \left(1-\sfX_{j,a}\left(T'_m(a)\right)\right) \right]  \nonumber\\
         & \overset{(a)}{=}\prod_{j\in \calN'(a)\backslash \calV_a}\left(1- \expect{\sfX_{j,a}\left(T'_m(a)\right)| j \in \calC(a)} \right) \nonumber \\
         & \overset{(b)}{\le} \left(1- \underline{C} \cdot  \nu\right)^{\left|\calN'(a)\right|-2} \,,
         \label{eq:expect_j_calN'_a_backslash_calV_a}
    \end{align}
    where  $(a)$ holds because $\{\sfX_{j,a} \left(T'_m(a)\right)\}_{j\in \calC(a)}$ are mutually independent by the property of the message passing algorithm on tree~\prettyref{alg:proposal_passing_alg}; $(b)$ holds by $\left|\calN'(a)\backslash \calV_a \right| \ge \left|\calN'(a)\right| -2 $ and \prettyref{eq:claim_1} in \prettyref{lmm:claim_1}.
    Together with \prettyref{eq:j_calN'_a_unavailable} and \prettyref{eq:expect_j_calN'_a_backslash_calV_a}, \prettyref{eq:j_a_calN'_a} follows. 
    
    \item Second, we prove \prettyref{eq:a'_j_calN'_j}. Denote $\sfE'_{m}$ as the indicator for the event such that $H'_{m}(a)$ is a tree for any $a\in\Short$. By \prettyref{prop:one_side_ER_tree}, with probability $1-o(1)$, $H_{m}(a)$ is a tree for any $a\in\Short$. Since $H'_{m}(a)$ is a subgraph of $H_m(a)$ for any $a \in \Short$, we get
    \begin{align}
        \prob{\sfE'_{m} = 1} = 1- o\left(1\right) \,. \label{eq:sfE_m'}
    \end{align}
    For any $j\in\Long$ and $\calN'(j)\subset \calN(j)$, 
    we have 
    \begin{align}
        & \prob{\forall \text{ $a'\in \calN'(j)$, $a'$ is unavailable to  $j$ on $H'$}}  \nonumber \\
        &  \overset{(a)}{\ge} \prob{\forall \text{ $a'\in \calN'(j)$, $a'$ is unavailable to  $j$ on $H_{m-1}'(j)$} \,  | \, \sfE'_{m} =1} \prob{\sfE'_{m} =1} \nonumber \\
        & \overset{(b)}{\ge} \prob{\forall \text{ $a'\in \calN'(j)$, $a'$ is unavailable to  $j$ on $T'_{m-1}(j)$}}\prob{\sfE'_{m} =1}\nonumber \\
        & \overset{(c)}{=}\Prob \left\{\sum_{a'\in \calN'(j)}\sfX_{a',j}\left(T'_{m-1}(j)\right) = 0  \right\} \left(1-o(1)\right) \nonumber\\
        & = \Expect \left[\prod_{a'\in \calN'(j)} \left(1-\sfX_{a',j}\left(T'_{m-1}(j)\right) \right) \right]  \left(1-o(1)\right) \,, \label{eq:a'_calN'_j_unavailable}
    \end{align}
where $(a)$ holds by \prettyref{lmm:local_available} and $m-1$ is odd; $(b)$ holds because conditional on $\sfE'_m=1$, $T'_{m-1}(a)=H'_{m-1}(a)$; $(c)$ holds by \prettyref{eq:sfE_m'}. Next,  we obtain
    \begin{align}
         \Expect \left[\prod_{a'\in \calN'(j)} \left(1-\sfX_{a',j}\left(T'_{m-1}(j)\right) \right) \right] 
         & \overset{(a)}{=}\prod_{a'\in \calN'(j)} \left(1- \expect{\sfX_{a',j}\left(T'_{m-1}(j)\right)| a' \in \calC(j)} \right) \nonumber \\
         & \overset{(b)}{\ge}  \left(1- \overline{C} \cdot \frac{1}{\nu d} \right)^{\left|\calN'(j)\right|} \,,
         \label{eq:expect_a'_calN'_j}
    \end{align}
    where  $(a)$ holds because $\{\sfX_{a',j} \left(T'_{m-1}(j)\right)\}_{a'\in \calC(j)}$ are mutually independent by the property of the message passing algorithm on tree~\prettyref{alg:proposal_passing_alg}; $(b)$ holds because by \prettyref{eq:claim_2} in \prettyref{lmm:claim_1}. 
    Together with \prettyref{eq:a'_calN'_j_unavailable} and \prettyref{eq:expect_a'_calN'_j}, \prettyref{eq:a'_j_calN'_j} follows. 

\end{itemize}



Lastly, we are left to prove \prettyref{lmm:claim_1}. 
\begin{proof}[Proof of \prettyref{lmm:claim_1}]
    
    First, we prove \prettyref{eq:claim_tree_1}. 
    During the breadth-first search exploration of the spanning tree rooted at $a$ on the local neighborhood around $a$, vertices have one of three states: active, neutral, or inactive. The state of a vertex is updated as the exploration of the connected component containing $a$ progresses. 
    For any $t\ge 0$, let $w_t$ denote active vertex that initiates the exploration at time $t$. 
    Initially, at $t=0$, let $w_0= a$ such that $a$ is active, while all others are neutral. At each subsequent time $t$, the active vertex $w_t$ is selected at random among all active vertices with the smallest depth at time $t$. After $w_t$ is selected, let $S_{\Short'} (t)$ (resp. $S_{\Long'} (t)$) denote the number of neutral vertices in $\Short'$ (resp. $\Long'$) that $w_t$ could possibly explore, and $O_{w_t}$ denote the total number of neutral vertices that are explored by $w_t$. 
    All edges $(w_t,w')$ are examined, where $w'$ spans all neutral vertices: 
    \begin{itemize}
        \item Suppose $w_t \in \Long'$. For any $w'$ that is neutral in $\Short$, $w_t$ connects to $w'$ with probability $\frac{d}{n_{\Long}}$ independently, and then we have
        \[
         \offspring'_{w_t} \sim  \Binom\left( S_{\Short'}(t)\,, \frac{d}{n_{\Long}} \right)  
         \,,
        \]
        where $\expect{\offspring'_{w_t} } \le \delta \gamma_2 d$, given that $ S_{\Short'}(t) \le \gamma_1 n_{\Short}$ and $\gamma_1 n_{\Short}= \delta \gamma_2 n_{\Long}$. 
        \item Suppose $w_t\in \Short$. $w_t$ connects to neutral vertices in $\Long'$ uniformly at random such that $w_t$ has $\offspring'_{w_t}$ offspring, where
        \[
            \offspring'_{w_t} \sim \begin{cases}
                 \Hyper\left( S_{\Long'}(t) \, , n_{\Long} \, , d  \right) & t =0\\
                  \Hyper\left( S_{\Long'}(t)\, , n_{\Long} -1\,, d -1 \right) & t>0 
            \end{cases} \,. 
        \]
        By \prettyref{eq:one_sided_square} in \prettyref{prop:one_side_ER_tree}, for any $w_t$ with depth at most $\ell-1$, where $\ell \le \frac{\log n}{16  \left( \log d  \vee  \log \log n \right) }$, 
        \[
         \prob{ \gamma_2 n_{\Long} -S_{\Long'}(t) >  n^{\frac{1}{2}}} \le   2 \exp\left(- 2 \left( d \vee \log n \right) \right))\,. 
        \]
        For any arbitrarily small but fixed constant $\epsilon_0>0$, we have
        \begin{align*}
            \prob{\offspring'_{w_t} < \gamma_2 d \left(1- \fone \right) }
            & \le \prob{\offspring'_{w_t} < \gamma_2 d \left(1-  \left( \gamma_2 d\right)^{-\frac{1}{2+\epsilon_0}} \right) \big |S_{\Long'}(t) \ge \gamma_2  n_{\Long} - n^{\frac{1}{2}}  } \\
            &~~~~ + \prob{S_{\Long'}(t) < \gamma_2 n_{\Long} - n^{\frac{1}{2}} }\\
            & \le  \exp\left( - \frac{1}{1+\epsilon} \gamma_2^{2-\frac{2}{2+\epsilon_0}} d^{1-\frac{2}{2+\epsilon_0}}  \right) +  2 \exp\left(- 2 \left( d \vee \log n \right) \right)) \\
            & \le \exp\left( - \frac{1}{2}\gamma_2^{2-\frac{2}{2+\epsilon_0}} d^{1-\frac{2}{2+\epsilon_0}}  \right) = \ftwo \,.
        \end{align*}
        where the second inequality holds by  \prettyref{eq:hyper_lower} in \prettyref{lmm:hyper}, $ \frac{n^{\frac{1}{2}}}{\gamma_2 n_{\Long}} = o\left( \left(\gamma_2 d\right)^{-\frac{1}{2+\epsilon_0}}\right)$ given that $ d \le O\left(\polylog n\right)$, and conditional on $S_{\Long'}(t) \ge  \gamma_2  n_{\Long} - n^{\frac{1}{2}}  $, 
        \begin{align*}
                \offspring_{w_t}' 
                & \sim  \Hyper\left( S_{\Long'}(t), n_{\Long} - n^{\frac{1}{2}} \, , n_{\Long} - \indc{t>0} \, , d - \indc{t>0}\right)\\
                & \overset{\mathrm{s.t.}}{\succeq}\Hyper\left(\gamma_2 n_{\Long} - n^{\frac{1}{2}} \, , n_{\Long} - \indc{t>0} \,, d - \indc{t>0}\right) \,.
        \end{align*}
    \end{itemize}
    If  $w_t $ and $w'$ is connected, then $w'$ becomes active; if not, $w'$ remains neutral. Once all edges from $w_t$ have been explored, $w_t$ becomes inactive. The exploration ends if there is no active nodes with depth $< \ell$. 
    
    Since the preference list of $i \in \calV(H)$ with respect to its neighbors on $H$ is independently uniformly generated, then the preference list of $i \in \calV(T'_{\ell}(a))$ with respect to its neighbors on $T'_{\ell}(a)$ can also be viewed independently uniformly generated, given that $ T'_{\ell}(a)$ is a subgraph of $H$. 
    Hence, \prettyref{eq:claim_tree_1} follows.


Next, we proceed to prove \prettyref{eq:claim_1} and \prettyref{eq:claim_2}. Let $\kappa_1=\delta \gamma_2 d$ and $\kappa_2 = \gamma_2 d$, and 
\[
\paraone = 1- \fone \,, \quad \paratwo =  \left(1-2 \left( \kappa_1 \vee \kappa_2 \right)\ftwo\right)^{-1} \,,
\]
where $\fone$ and $\ftwo$ are defined in \prettyref{eq:fone_ftwo_1}.
Then, we claim that
\begin{align}
      \left( f_{\paraone \kappa_2 } \circ f_{\paratwo \kappa_1}\right)^{\ell/2-1}(1) 
     & \le C \cdot \nu
      \label{eq:mu_j_rho_lower_bound_2} \,,
\end{align}
where $\nu$ is defined in \prettyref{eq:nu}, and $C$ is some constant that only depends on $\log_{\log n_{\Long}} d$. 
For every $j\in\Long$, we obtain
 \begin{align*}
     \expect{\sfX_{a,j}\left(T'_{\ell-1}(j)\right)| a \in \calC(j)} 
    & \overset{(a)}{=}  \expect{\sfX_{a',j}\left(T'_{\ell}(a)\right)|a'\in \calC(j)\,,j \in \calC(a)}\\
    & \overset{(b)}{=}  \Expect_{T_\ell (\rho)\sim \mathbb{T}_\ell \left( \delta \gamma_2 d,\,  \gamma_2 d, \, \fone,\, \ftwo\right)}\left[\sfX_{i,j} \left(T_\ell (\rho)\right) |  j \in \calC(\rho)\,, \, i \in \calC(j)  \right]\\
    & \le  \prob{ O_i < \eta_1\kappa_2  } + \prob{ O_i \ge \eta_1\kappa_2  } \times\\
    & ~~~~ \Expect_{T_\ell (\rho)\sim \mathbb{T}_\ell \left( \delta \gamma_2 d,\,  \gamma_2 d, \, \fone,\, \ftwo\right)}\left[\sfX_{i,j} \left(T_\ell (\rho)\right) |  j \in \calC(\rho)\,, \, i \in \calC(j) \,, \,  O_i \ge \eta_1\kappa_2 \right] \\ 
    & \overset{(c)}{\le} \left(1-\ftwo\right) \frac{C}{\left(1\wedge \lambda\right) \log d}  + \ftwo \\
    & \overset{(d)}{\le} \overline{C} \cdot \frac{1}{\nu d}\,,
\end{align*} 
where $(a)$ holds because by symmetry and the property of \prettyref{alg:proposal_passing_alg}; $(b)$ holds by \prettyref{eq:claim_tree_1}; $(c)$ holds by \prettyref{eq:claim_tree_1}, we have
\[
 \prob{ O_i < \left(1- \fone\right) \kappa_2  } \le  \ftwo \,, 
\]
and by applying \prettyref{eq:mu_j_rho_lower_bound_2} and \prettyref{eq:sfT_even_iterative_new_2} in \prettyref{lmm:rooted_tree_expected_degree}, given that $\left(\kappa_1 \vee \kappa_2\right) \ftwo=o(1)$ by \prettyref{eq:fone_ftwo_1}, we have
\[
 \Expect_{T_\ell (\rho)\sim \mathbb{T}_\ell \left( \delta \gamma_2 d,\,  \gamma_2 d, \, \fone,\, \ftwo\right)}\left[\sfX_{i,j} \left(T_\ell (\rho)\right) |  j \in \calC(\rho)\,, \, i \in \calC(j)  \right] \le   \left( f_{\paraone \kappa_2 } \circ f_{\paratwo \kappa_1}\right)^{\ell/2-1}(1) \le \frac{\overline{C}}{\left(1\wedge \lambda\right) 
     \log d}   \,;
\]
$(d)$ holds by $\ftwo = o \left(\frac{1}{d}\right) $, $\nu = \omega \left(\frac{1}{d}\right)$ and picking $\overline{C}$ as some constant that only depends on $\log_{\log n_\Long} d$. Then, \prettyref{eq:claim_2} follows. 

For any $a\in\Short$, we obtain
  \begin{align*}
    \expect{\sfX_{j,a}\left(T'_\ell(a)\right) |  \, j\in \calC(a)} 
    & =   \Expect_{ T_\ell (\rho) \sim \mathbb{T}_\ell \left(\delta \gamma_2 d,\, \gamma_2 d ,\, \fone \,, \ftwo \right)}\left[\sfX_{j,\rho} \left(T_m (\rho)\right) | j\in \calC(\rho)\right]\\
    & \ge   f_{ \paratwo    \kappa_1 } \circ \left( f_{  \paraone    \kappa_2 } \circ  f_{ \paratwo    \kappa_1 } \right)^{\ell/2-1}(1)  \\
    & \ge  \underline{C} \cdot \nu \,,
    \end{align*} 
where the first equality holds by \prettyref{eq:claim_tree_1}, and the first inequality hold by \prettyref{eq:sfT_even_iterative_new_2} in \prettyref{lmm:rooted_tree_expected_degree}, and the second inequality holds by \prettyref{eq:mu_j_rho_lower_bound_2}, \ref{P:2} in \prettyref{lmm:property_f_d} and \prettyref{eq:f_d}, and picking  $\underline{C}$ as some constant that only depends on $\log_{\log n_{\Long}} d$. Then, \prettyref{eq:claim_1} follows.

It remains to prove our claim \prettyref{eq:mu_j_rho_lower_bound_2}. 
By \ref{P:3} in \prettyref{lmm:property_f_d} and \ref{P:7} in \prettyref{lmm:property_f_a_f_b}, 
 \begin{align}
    \left( f_{\paraone \kappa_2 } \circ f_{\paratwo \kappa_1}\right)^{\ell/2-1}(1) \le \left( f_{c_n \left(\paratwo \kappa_1 \right)-1} \circ f_{\paratwo \kappa_1}\right)^{\ell/2-1}(1) \,, \label{eq:c_n_f}
 \end{align} 
 where the inequality holds because $\paraone< \paratwo$, and $\frac{x+1}{y+1}\ge \frac{x}{y}$ for any $x<y$ and then
\begin{align*}
      \frac{\paraone \kappa_2 + 1}{\paratwo \kappa_1 + 1} \ge \frac{\paraone \kappa_2 }{\paratwo \kappa_1}  = \frac{\paraone}{\delta \paratwo} \triangleq c_n\,.
\end{align*}
Recall that we have $\delta \le 1+ \frac{1}{d^\lambda}$ for some $\lambda>0$, and $c_n$ is monotone decreasing on $\delta$. Hence, it suffices to consider the case when $\delta =1+ \frac{1}{d^\lambda} $. Given that $ \left(\kappa_1 \vee \kappa_2\right)\ftwo = o\left(\frac{1}{d}\right)$, we have 
\begin{align}
    \frac{1 - 2 \fone }{\delta} \le c_n = \frac{\paraone }{\delta} \left(1-2 \left(\kappa_1 \vee \kappa_2\right)\ftwo \right) \le \frac{1-\fone}{\delta} \,. \label{eq:cn_bound}
\end{align}
Pick $\epsilon = 1 \vee \log_{\log n_{\Long}} d$. Following from \ref{F:1} in \prettyref{lmm:fixed_point_convergence}, let
    \[
    x^*  = - \frac{ \frac{1 - 2 \fone }{\delta}}{\log \left(1- \frac{1 - 2 \fone }{\delta}\right)} 
    \,, \quad 
    \Gamma_\epsilon = \frac{1-\left(1- \frac{1 - 2 \fone }{\delta}\right)^{\frac{1}{1+\epsilon/2} } }{ \frac{1 - 2 \fone }{\delta}} \,.
    \] 
By \prettyref{eq:c_n_f} and \prettyref{eq:cn_bound}, together with \ref{P:3} in \prettyref{lmm:property_f_d} and \ref{P:7} in \prettyref{lmm:property_f_a_f_b}, we obtain
        \begin{align}
         \left( f_{\paraone \kappa_2 } \circ f_{\paratwo \kappa_1}\right)^{\ell/2-1}(1) 
          \le \left( f_{ \left(\frac{1 - 2 \fone }{\delta}\right) \left(\paratwo \kappa_1 \right)-1} \circ f_{\paratwo \kappa_1}\right)^{\ell/2-1}(1)
        & \le \left(1+2\epsilon\right)x^* \,, \label{eq:prop_1_claim}
    \end{align}
      where the second inequality holds by \prettyref{eq:g_a_b_m} and \prettyref{eq:g_epsilon} in \prettyref{lmm:fixed_point_convergence}, and the following fact that
    \begin{align*}
        \frac{\log \left(\epsilon x^*\right)}{\log \Gamma_\epsilon} 
         \le \frac{\log\left( - \frac{ \epsilon \left( \frac{1 - 2 \fone }{\delta} \right) }{\log \left(1- \frac{1 - 2 \fone }{\delta}\right)}\right)}{ \log \left(\frac{1-\left(1- \frac{1 - 2 \fone }{\delta}\right)^{\frac{1}{1+\epsilon/2} } }{ \frac{1 - 2 \fone }{\delta}} \right)}
        & =  \frac{\log\left( - \frac{\log \left(1- \frac{1 - 2 \fone }{\delta}\right)}{ \epsilon \left( \frac{1 - 2 \fone }{\delta} \right)}\right)}{ \log \left( \frac{ \frac{1 - 2 \fone }{\delta}}{1-\left(1- \frac{1 - 2 \fone }{\delta}\right)^{\frac{1}{1+\epsilon/2} } }  \right)} \\
        & \overset{(a)}{\le}  O\left(d^{^{ \frac{2\left(\lambda\wedge \frac{1}{2+\epsilon_0} \right)}{2+\epsilon} } }   \log \log d\right) \nonumber \\
        & \overset{(b)}{=} O\left(\left(\log n_{\Long}\right)^{ \frac{2\epsilon \left(\lambda\wedge \frac{1}{2+\epsilon_0} \right)}{2+\epsilon} }\log \log \log n_{\Long}\right)  \nonumber \\
        & \overset{(c)}{=} o\left(\ell\right)\,,
    \end{align*}
    where $(a)$ holds because  
    $\log\left( - \frac{\log \left(1- \frac{1 - 2 \fone }{\delta}\right)}{ \epsilon \frac{1 - 2 \fone }{\delta}}\right) \le O\left(\log \log d\right)$, and  
    \[
    \log \left( \frac{ \frac{1 - 2 \fone }{\delta}}{1-\left(1- \frac{1 - 2 \fone }{\delta}\right)^{\frac{1}{1+\epsilon/2} } }  \right)
    \ge \frac{\left(1- \frac{1 - 2 \fone }{\delta}\right)^{\frac{1}{1+\epsilon/2} } - \frac{1 - 2 \fone }{\delta}+1}{ \frac{1 - 2 \fone }{\delta}}
    \ge \Omega\left(\left(\frac{1}{d}\right)^{ \frac{2}{2+\epsilon}\left(\lambda\wedge \frac{1}{2+\epsilon_0} \right) }\right) \,,
    \]
    in view of $\frac{1 - 2 \fone }{\delta}  \ge 1- \Omega \left(\fone \vee \frac{1}{d^\lambda}\right) = 1- \Omega \left( \left(\frac{1}{d}\right)^{\lambda\wedge \frac{1}{2+\epsilon_0}}\right) $, 
      $\log (1+x) \ge \frac{x}{1+x}$ for $x > -1$, $d=\omega(1)$, $\epsilon\ge 1$ and $\Omega(1)\le \gamma_2 \le 1$; $(b)$ holds because 
    \[
     \log_{\log n_{\Long}} \left(d ^{ \frac{2}{2+\epsilon}\left(\lambda\wedge \frac{1}{2+\epsilon_0} \right) }\right) = \frac{2\left(\lambda\wedge \frac{1}{2+\epsilon_0} \right)}{2+\epsilon} \log_{\log  n_{\Long}}d \le\frac{2 \epsilon \left(\lambda\wedge \frac{1}{2+\epsilon_0} \right)}{2+\epsilon} \,,
    \]
    in view of 
    $\omega(1)\le d=O\left(\polylog n\right)$, and $\epsilon = 1 \vee \log_{\log n_{\Long}} d$; $(c)$ holds by $\ell \ge \Omega \left(\frac{\log n}{\log d \vee \log \log n}\right)$, where $n=n_{\Short}+n_{\Long}$. Since
    \[
    x^* = - \frac{ \frac{1 - 2 \fone }{\delta}}{\log \left(1- \frac{1 - 2 \fone }{\delta}\right)} =\frac{1+o(1)}{ \left(\lambda \wedge \frac{1}{2+\epsilon_0} \right) \log d} \,,
    \]
    where $\epsilon_0$ can be arbitrarily small but fixed constant, our claim follows by \prettyref{eq:prop_1_claim}, where 
    \[
     \left( f_{\paraone \kappa_2 } \circ f_{\paratwo \kappa_1}\right)^{\ell/2-1}(1)   \le \left(1+2\epsilon\right)x^* = \frac{C}{\left(1\wedge \lambda\right)  \log d} \,, 
    \]
    by picking some constant $C$ that only depends on $\log_{\log n_\Long} d$. 

\end{proof}
\subsubsection{Proof of \prettyref{cor:deleted_edges}}\label{sec:post_cor_deleted_edges}
The proof is analogous to \prettyref{prop:remove_gamma_d_omega_1} by setting $\gamma_1=\gamma_2=1$ and replacing $d$ as $d'$, where $d'\triangleq d q$.
Then, $\fone= \left(d'\right)^{-\frac{1}{2+\epsilon_0}}$ and $\ftwo = \exp\left( - \frac{1}{2}\left(d'\right)^{1-\frac{2}{2+\epsilon_0}}  \right) $. 
Note that \prettyref{lmm:claim_1} holds, where the proof remains the same except for the proof of \prettyref{eq:fone_ftwo_1}: 
\begin{itemize}
        \item Suppose $w_t \in \Long'$. For any $w'$ that is neutral in $\Short$, $w_t$ connects to $w'$ with probability $\frac{d}{n_{\Long}}q$ on $H''$ independently, and then we have
        \[
         \offspring'_{w_t} \sim  \Binom\left( n_\Short, \frac{d}{n_{\Long}}q \right)  
         \,,
        \]
        where $\expect{\offspring'_{w_t} } \le \delta d q = \delta d'$.
        \item Suppose $w_t\in \Short$. We have 
        \[
            \offspring'_{w_t} \sim  \Binom\left(d, q\right)\,. 
        \]
        For any arbitrarily small but fixed constant $\epsilon>0$, we have
        \begin{align*}
            \prob{\offspring'_{w_t} < d' \left(1- \fone \right) }
            & = \prob{\offspring'_{w_t} < d' \left(1-  \left(d'\right)^{-\frac{1}{2+\epsilon_0}} \right) } \le \exp\left( - \frac{1}{2}\left(d'\right)^{1-\frac{2}{2+\epsilon_0}}  \right) = \ftwo \,.
        \end{align*}
        where the first equality holds because $\fone= \left(d'\right)^{-\frac{1}{2+\epsilon_0}}$, and 
        the second inequality holds by  \prettyref{eq:hyper_lower} in \prettyref{lmm:hyper}. 
    \end{itemize}

\subsubsection{Proof of \prettyref{prop:remove_gamma_d_log_n}}\label{sec:post_prop_remove_gamma_d_log_n}

Suppose $\delta \le 1- \Omega(1)$ or $\delta \ge 1+\Omega(1)$. 
For any $a\in \Short$ and $\ell \in \naturals_+$, let $H_\ell(a)$ (resp. $H'_{\ell}(a)$) denote the vertex-induced subgraph of $H$ (resp. $H'$) on its $\ell$-hop neighborhood of $a$.  Let $T'_{\ell}(a)$  (resp. $T'_{\ell}(j)$) denote the spanning tree rooted at $a \in \Short$ (resp. $j\in\Long$) with depth $\ell$ explored by the bread-first search exploration on $H'_{\ell}(a)$ (resp. $H'_{\ell}(j)$).

\begin{lemma} \label{lmm:claim_1_logn} 
    \begin{itemize}
        \item For any $a\in\Short$ and $\ell\in\naturals_+$ with $\ell \le \frac{\log n}{16  \left( \log d  \vee  \log \log n \right) } $, we have
        \begin{align}
            T_{\ell}'(a) \sim \mathbb{T}_{\ell} \left( \delta \gamma_2 d,\,  \gamma_2 d, \, \fone,\, \ftwo\right)\,, \label{eq:claim_tree_1_logn}
        \end{align}
       where for any arbitrarily small but fixed constant $\epsilon_0>0$, 
        \begin{align}
            \fone = \left( \gamma_2 d\right)^{-\frac{1}{4}} \,, \quad \ftwo =   \exp\left( -\frac{1}{8} \gamma_2^{\frac{3}{2}} {d}^{\frac{1}{2}}  \right)   \,.  \label{eq:fone_ftwo_1_logn}
        \end{align}       
        Moreover, for any $\ell\in\naturals_+$ such that $\ell$ is even, if $\ell = \omega\left( \log d\right)$, we have
        \begin{align}
          \expect{\sfX_{j,a}\left(T'_\ell(a)\right) |  \, j\in \calC(a)} 
          \ge 
            \begin{cases}
                  \frac{ \left(1-o(1)\right) \delta}{\log \left(\frac{1}{1-\delta}\right)} &  \text{if }\delta \le 1-\Omega(1) \\
                \frac{\left(1-o(1)\right) }{\gamma_2 d} \log \left(\frac{\delta}{\delta-1}\right) &  \text{if }\delta \ge 1 + \Omega(1) 
             \end{cases} \,, \label{eq:claim_1_logn}
        \end{align}
        and
        \begin{align}
           \expect{\sfX_{j,a}\left(T'_{\ell-1}(a)\right)| j \in \calC(a)} 
           & \le   
            \begin{cases}
                    \frac{\left(1+o(1)\right) \delta }{ \log \left(\frac{1}{1-\delta}\right)}  &  \text{if }\delta \le 1-\Omega(1) \\
                    \frac{ 1+o(1) }{\gamma_2 d} \log \left(\frac{\delta}{\delta-1}\right) &  \text{if }\delta \ge 1+ \Omega(1) 
             \end{cases}  \,.  \label{eq:claim_2_logn}
       \end{align}
       
       \item For any $j\in\Long$ and $\ell\in\naturals_+$ with $\ell \le \frac{\log n}{16  \left( \log d  \vee  \log \log n \right) } $, we have 
         \begin{align}
         T_{\ell}'(j) \sim \mathbb{T}_{\ell} \left( \gamma_2 d,\,  \delta \gamma_2 d, \, \fone',\, \ftwo'\right)\,, \label{eq:claim_tree_1_logn_j}
       \end{align} 
          where for any arbitrarily small but fixed constant $\epsilon_0>0$, 
           \begin{align}
                \fone'= \left(\delta \gamma_2 d\right)^{\frac{1}{2+\epsilon_0}} \,, \quad  \ftwo' =\exp\left(- \frac{1}{2}{\left(\delta \gamma_2\right)}^{2-\frac{2}{2+\epsilon_0}}  \left(\delta \gamma_2 d\right)^{1-\frac{2}{2+\epsilon_0}}\right) \,. \label{eq:fone_ftwo_1_logn_j}
            \end{align}
        Moreover, for any $\ell\in\naturals_+$ such that $\ell$ is even, if $\ell = \omega\left( \log d\right)$, we have
          \begin{align}
              \expect{\sfX_{j,a}\left(T'_\ell(a)\right) |  \, j\in \calC(a)} 
              \ge 
                \begin{cases}
                     \frac{\left(1-o(1)\right) }{ \delta \gamma_2 d} \log \left(\frac{1}{1-\delta}\right) &  \text{if }\delta \le 1-\Omega(1) \\
                    \frac{ 1-o(1) }{ \delta \log \left(\frac{\delta}{\delta-1}\right)}  &  \text{if }\delta \ge 1 + \Omega(1) 
                 \end{cases} \,, \label{eq:claim_1_logn_j}
            \end{align}
            and
            \begin{align}
               \expect{\sfX_{j,a}\left(T'_{\ell-1}(a)\right)| j \in \calC(a)} 
               & \le   
        \begin{cases}
                \frac{\left(1+o(1)\right) }{ \delta \gamma_2 d} \log \left(\frac{1}{1-\delta}\right) &  \text{if }\delta \le 1-\Omega(1) \\
              \frac{ 1+o(1) }{ \delta \log \left(\frac{\delta}{\delta-1}\right)}  &  \text{if }\delta \ge 1 + \Omega(1) 
         \end{cases}  \,.  \label{eq:claim_2_logn_j}
   \end{align}
    \end{itemize}

\end{lemma}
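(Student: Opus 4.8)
The plan is to follow, essentially verbatim, the two-step strategy used in the proof of \prettyref{lmm:claim_1}: (i) identify the breadth-first-search spanning tree $T'_\ell(\cdot)$ of the local neighborhood with an explicit two-type branching model $\mathbb{T}_\ell$, and then (ii) run the message-passing recursion of \prettyref{sec:message_passing_alg} on that model, reducing each marginal proposing probability to a fixed point of an iterated composition of the maps $f_d$ from \eqref{eq:f_d} and evaluating it through \prettyref{lmm:fixed_point_convergence}. The only new content relative to \prettyref{lmm:claim_1} is that $\delta$ is now bounded away from $1$, which both permits a coarser tail bound in step (i) and places the fixed point of step (ii) away from the phase transition, so that $\omega(\log d)$ rather than $\Omega(\log n / \log\log n)$ iterations suffice.

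For step (i) I would expose the exploration one active vertex at a time. A firm active vertex connects to each still-neutral applicant independently with probability $d/n_\Long$, so its number of offspring is Binomial with mean at most $\gamma_1 n_\Short\cdot d/n_\Long = \delta\gamma_2 d$ (this is the odd-depth parameter when the root is an applicant); an applicant active vertex samples $d$ (or $d-1$) firms without replacement, so its offspring count is hypergeometric with parameter $S_{\Long'}(t)$. Using the $n^{1/2}$ bound on neighborhood size from \eqref{eq:one_sided_square} of \prettyref{prop:one_side_ER_tree} I would condition on $S_{\Long'}(t)\ge \gamma_2 n_\Long - n^{1/2}$, dominate the applicant offspring from below by $\Hyper(\gamma_2 n_\Long - n^{1/2}, n_\Long-1, d-1)$, and apply the hypergeometric lower tail \eqref{eq:hyper_lower} with deviation $\asymp \gamma_2^{3/4}d^{-1/4}$; this makes the offspring at least $(1-\xi_1)\gamma_2 d$ except with probability $\xi_2$ for the $\xi_1,\xi_2$ in \eqref{eq:fone_ftwo_1_logn} (the exponent $1/4$ instead of $1/(2+\epsilon_0)$ is affordable because $\delta$ is bounded away from $1$, and it buys a much faster-decaying $\xi_2$). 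Since $T'_\ell(\cdot)\subseteq H$, the induced preferences remain i.i.d.\ uniform, so the tree is distributed as $\mathbb{T}_\ell(\delta\gamma_2 d,\gamma_2 d,\xi_1,\xi_2)$, proving \eqref{eq:claim_tree_1_logn}. For a firm root the argument is the mirror image: the roles of $\Short'$ and $\Long'$ swap so the branching parameters become $(\gamma_2 d,\delta\gamma_2 d)$, and the even-depth deviation now comes from Binomial counts via the Chernoff bound \eqref{eq:chernoff_binom_left} rather than the hypergeometric tail, which accounts for the different exponent in $\xi'_1$ and gives \eqref{eq:claim_tree_1_logn_j}.

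For step (ii) and the lower bounds \eqref{eq:claim_1_logn}, \eqref{eq:claim_1_logn_j} I would feed $\mathbb{T}_\ell(\delta\gamma_2 d,\gamma_2 d,\xi_1,\xi_2)$ into \prettyref{lmm:rooted_tree_expected_degree} (its hypothesis $(\kappa_1\vee\kappa_2)\xi_2=o(1)$ holds since $d\le O(\polylog n)$ and $\xi_2 = e^{-\Omega(\sqrt d)}$), obtaining $\expect{\sfX_{j,a}(T'_\ell(a))\mid j\in\calC(a)}\ge f_{\eta_2\kappa_1}\circ(f_{\eta_1\kappa_2}\circ f_{\eta_2\kappa_1})^{\ell/2-1}(1)$ with $\eta_1 = 1-\xi_1 = 1-o(1)$ and $\eta_2 = (1-2(\kappa_1\vee\kappa_2)\xi_2)^{-1} = 1+o(1)$. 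The inner and outer degrees are in the constant ratio $c = 1/\delta$ for an applicant root (and $c=\delta$ for a firm root), bounded away from $1$ by hypothesis, so \prettyref{lmm:fixed_point_convergence} applies: $(f_{\eta_1\kappa_2}\circ f_{\eta_2\kappa_1})^m(1)$ converges to the fixed point of $f_{\eta_1\kappa_2}\circ f_{\eta_2\kappa_1}$, and $m=\ell/2-1=\omega(\log d)$ iterations already bring it to within a $1+o(1)$ factor of the limit. A short computation with $f_d(p)=\frac{1-(1-p)^{d+1}}{(d+1)p}$ pins the quantity: when $\delta<1$ the fixed point is $\Theta(1/d)$ and its image under $f_{\eta_2\kappa_1}$ equals $\frac{(1+o(1))\delta}{\log(1/(1-\delta))}$, while when $\delta>1$ it is the constant $\frac{1+o(1)}{\delta\log(\delta/(\delta-1))}$ with image $\frac{(1+o(1))}{\gamma_2 d}\log(\delta/(\delta-1))$; the firm-rooted case is identical with $\kappa_1\leftrightarrow\kappa_2$, giving \eqref{eq:claim_1_logn_j}. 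For the matching upper bounds \eqref{eq:claim_2_logn}, \eqref{eq:claim_2_logn_j} I would not use the $\mathbb{T}_\ell$ abstraction but the sharper fact that the Binomial and hypergeometric counts also concentrate from above, so that with probability $1-o(1)$ over the polynomial-size tree every degree lies within $1+o(1)$ of its mean; sandwiching the recursion between the two extremes via \prettyref{lmm:f_inequality} and the monotonicity/convexity of $f_d$ from \prettyref{lmm:property_f_d}, and again invoking \prettyref{lmm:fixed_point_convergence}, gives the same value up to $1+o(1)$, with the parity $\ell-1$ chosen so the bound feeds correctly into the odd-depth truncation inequality of \prettyref{lmm:local_available} where this lemma is applied.

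The step I expect to be the main obstacle is the quantitative fixed-point analysis: one must verify that $m=\omega(\log d)$ compositions suffice for convergence to within $1+o(1)$, and, crucially, that the $1+o(1)$ perturbations introduced by $\eta_1,\eta_2$ (and, for the upper bounds, by the degree fluctuations) do not compound into a constant after that many iterations. This comes down to controlling the derivative of $f_{\eta_1\kappa_2}\circ f_{\eta_2\kappa_1}$ near its fixed point in the two qualitatively different regimes — a $\Theta(1)$ fixed point when $\delta<1$ versus a $\Theta(1/d)$ one when $\delta>1$ — which is exactly the content \prettyref{lmm:fixed_point_convergence} is designed to supply; the remaining work is checking its hypotheses (ratio $c$ bounded away from $1$, iterate count, smallness of $(\kappa_1\vee\kappa_2)\xi_2$) separately in the four cases $\delta\lessgtr 1$ times applicant/firm root.
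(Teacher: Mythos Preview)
Your plan for step~(i) and for the lower bounds \eqref{eq:claim_1_logn}, \eqref{eq:claim_1_logn_j} is correct and matches the paper: the BFS exploration identifies $T'_\ell(a)$ with $\mathbb{T}_\ell(\delta\gamma_2 d,\gamma_2 d,\xi_1,\xi_2)$, inequality \eqref{eq:sfT_even_iterative_new_1} of \prettyref{lmm:rooted_tree_expected_degree} gives the lower bound $f_{\eta_2\kappa_1}\circ(f_{\eta_1\kappa_2}\circ f_{\eta_2\kappa_1})^{\ell/2-1}(1)$, and since $c_n\approx 1/\delta$ is bounded away from~$1$ the relevant case \ref{F:1} or \ref{F:3} of \prettyref{lmm:fixed_point_convergence} drives the iterate to the stated limit in $\omega(\log d)$ steps. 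Your fixed-point identifications are right.

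The gap is in your route to the upper bounds \eqref{eq:claim_2_logn}, \eqref{eq:claim_2_logn_j}. The assertion that ``with probability $1-o(1)$ over the polynomial-size tree every degree lies within $1+o(1)$ of its mean'' does not hold across the full regime $\omega(1)\le d\le O(\polylog n)$: for $\ell$ near the ceiling $\tfrac{\log n}{16\log d}$ the tree has $n^{\Theta(1)}$ vertices while the per-vertex tail is only $e^{-\Theta(\sqrt d)}$, so the union bound is vacuous whenever $d=o(\log^2 n)$ and the sandwich via \prettyref{lmm:f_inequality} cannot be set up. The paper avoids any union bound by a re-rooting trick: it identifies $\sfX_{j,a}$ on the depth-$(\ell-1)$ applicant-rooted tree with $\sfX_{i,u}$ at depth two of a depth-$\ell$ \emph{firm}-rooted tree, whose law is $\mathbb{T}_\ell(\gamma_2 d,\delta\gamma_2 d,\xi_1',\xi_2')$ by \eqref{eq:claim_tree_1_logn_j}. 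On that model, \eqref{eq:sfT_even_iterative_new_2} of \prettyref{lmm:rooted_tree_expected_degree} supplies the upper bound $(f_{\eta_1'\kappa_2'}\circ f_{\eta_2'\kappa_1'})^{\ell/2-1}(1)$ conditionally on $O_i\ge\eta_1'\kappa_2'$; the point is that this inequality is proved inductively by feeding the depth-three \emph{lower} bound into $f_{O_i}$, so only the single-node event $\{O_i<\eta_1'\kappa_2'\}$ (probability $\le\xi_2'=o(1/d)$) must be excised. The same fixed-point computation, now with $\delta'=1/\delta$ swapping the two cases, then yields \eqref{eq:claim_2_logn}; \eqref{eq:claim_1_logn_j} and \eqref{eq:claim_2_logn_j} follow by the symmetric argument.
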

Here, we prove the case for $\delta \le 1-\Omega(1)$. The proof for $\delta\ge 1+\Omega(1)$ is analogous and hence omitted.
Fix $m$ is even such that
\begin{align}
  \omega\left(\log d\right)\le  m \le  \frac{\log n}{\left(16 \log d \right) \vee \left( 4 \log \log n \right) } \,. \label{eq:m_logn}
\end{align}
\begin{itemize}
    \item  Denote $\sfE_{m}$ as the indicator for the event such that $H'_{m}(a)$ has tree excess at most $1$ for any $a \in \Short$. By \prettyref{prop:one_side_ER_tree}, with high probability there does not exist any $a \in \Short$ such that $H_{m}(a)$ has tree excess greater than $1$. Since $H'_{m}(a)$ is a subgraph of $H_m(a)$ for any $a \in \Short$, we get
    \begin{align}
        \prob{\sfE_{m}  = 1} = 1- o\left(\frac{1}{n}\right) \,. \label{eq:sfE_m_logn}
    \end{align}
    Conditional on $\sfE_{m}  = 1$, $H'_m(a)$ has at most tree excess $1$, and  by \prettyref{clm:claim_3}, let $T''_m(a)$ denote the corresponding rooted tree with root $a$, which is the connected component containing $a$ in the vertex-induced subgraph of $H'_m(a)$ on $\calV(H'_m(a)) \backslash \calV_a$ for some $\calV_a \subset \calN(a)$ with $|\calV_a|\le 2$. For any $\calN'(a)\subset \calN(a)$, we have 
    \begin{align}
        & \prob{\forall \text{ $j\in \calN'(a)$, $j$ is unavailable to $a$ on  $H'$}}  \nonumber\\
        & \overset{(a)}{\le}  \prob{\forall \text{ $j\in \calN'(a)$, $j$ is unavailable to $a$ on $H'_m(a)$}}  \nonumber\\
        & \overset{(b)}{\le}  \prob{\forall \text{ $j\in \calN'(a) \backslash \calV_a$, $j$ is unavailable to $a$ on $T''_m(a)$} | \sfE_{m}  = 1 } \prob{\sfE_{m}  = 1} + \prob{\sfE_{m}  = 0} \nonumber\\
        & \overset{(c)}{\le}\Prob \left\{\sum_{j\in \calN'(a)\backslash \calV_a}\sfX_{j,a}\left(T''_m(a)\right) = 0  \bigg | \sfE_{m}  = 1 \right\}  +  \prob{\sfE_{m}  = 0} \nonumber\\
        & \overset{(d)}{=} \Expect \left[\prod_{j\in \calN'(a)\backslash \calV_a} \left(1-\sfX_{j,a}\left(T''_m(a)\right) \right) \bigg |  \sfE_{m}  = 1 \right]   +   o\left(\frac{1}{n}\right) \,. \label{eq:j_calN'_a_unavailable_logn}
    \end{align}
    where $(a)$ holds by \prettyref{lmm:local_available} and $m$ is even; $(b)$ holds because conditional on $\sfE_m=1$, $a$ is weakly worse off in $T''_m(a)$ compared with $H'_m(a)$, given that $T''_m(a)$ is the connected component containing $a$ in the vertex-induced subgraph of $H'_m(a)$ on $\calV(H'_m(a)) \backslash \calV_a$ where $\calV_a \subset \calN(a) \subset \Long'$; $(c)$ holds because if $j$ is available to $a$ in $T''_m(a)$, $j$ must propose to $a$ when running \prettyref{alg:proposal_passing_alg} on $T''_m(a)$; $(d)$ holds by \prettyref{eq:sfE_m_logn}. 
    
    Given that  $T'_m(a)$ is a spanning tree on $H'_m(a)$, conditional on $\sfE_{m}  = 1$, 
     it follows that $T''_m(a)$ is a subtree of $T'_m(a)$ such that $T''_m(a)$ can be viewed as the subtree rooted at $a$ by removing $\calV_a$ from $T'_m(a)$. Then, for any $j \in  \calN'(a)\backslash \calV_a$, $j$ proposes to $a$ by running \prettyref{alg:proposal_passing_alg} on $T''_m(a)$, if and only if $j$ proposes to $a$  by running \prettyref{alg:proposal_passing_alg} on $T'_m(a)$, i.e.,
     \[
     \sfX_{j,a}\left(T''_m(a)\right) = \sfX_{j,a}\left(T'_m(a)\right) \,.
     \]
    For any $\calN'(a) \subset \calN(a)$,  we obtain
    \begin{align}
         \Expect \left[\prod_{j\in \calN'(a)\backslash \calV_a} \left(1-\sfX_{j,a}\left(T''_m(a)\right) \right) \bigg |  \sfE_{m}  = 1 \right] 
         & = \Expect \left[\prod_{j\in \calN'(a)\backslash \calV_a} \left(1-\sfX_{j,a}\left(T'_m(a)\right)\right) \right]  \nonumber\\
         & \overset{(a)}{=}\prod_{j\in \calN'(a)\backslash \calV_a}\left(1- \expect{\sfX_{j,a}\left(T'_m(a)\right)| j \in \calC(a)} \right) \nonumber \\
         & \overset{(b)}{\le} \left( 1- \frac{ \left(1-o(1)\right) \delta}{\log \left(\frac{1}{1-\delta}\right)}\right)^{\left|\calN'(a)\right|-2} \,,
         \label{eq:expect_j_calN'_a_backslash_calV_a_logn}
    \end{align}
    where  $(a)$ holds because $\{\sfX_{j,a} \left(T'_m(a)\right)\}_{j\in \calC(a)}$ are mutually independent by the property of the message passing algorithm on tree~\prettyref{alg:proposal_passing_alg}; $(b)$ holds by $\left|\calN'(a)\backslash \calV_a \right| \ge \left|\calN'(a)\right| -2$ and  \prettyref{eq:claim_tree_1_logn} in \prettyref{lmm:claim_1_logn}. 

    
    \item 
    Denote $\sfE'_{m}$ as the indicator for the event such that $H'_{m}(a)$ is a tree for any $a\in\Short$. By \prettyref{prop:one_side_ER_tree}, with probability $1-o(1)$, $H_{m}(a)$ is a tree for any $a\in\Short$. Since $H'_{m}(a)$ is a subgraph of $H_m(a)$ for any $a \in \Short$, we get
    \begin{align}
        \prob{\sfE'_{m} = 1} = 1- o\left(1\right) \,. \label{eq:sfE_m'_logn}
    \end{align}
    For any $a\in\Short$ and $\calN'(a)\subset \calN(a)$, 
    we have 
    \begin{align}
        & \prob{\forall \text{ $j\in \calN'(a)$, $j$ is unavailable to  $a$ on $H'$}}  \nonumber \\  
        &  \overset{(a)}{\ge} \prob{\forall \text{ $j\in \calN'(a)$, $j$ is unavailable to  $a$ on $H_{m-1}'(a)$} \,  | \, \sfE'_{m} =1} \prob{\sfE'_{m} =1} \nonumber \\
        & \overset{(b)}{\ge} \prob{\forall \text{ $j\in \calN'(a)$, $j$ is unavailable to  $a$ on $T'_{m-1}(a)$}}\prob{\sfE'_{m} =1}\nonumber \\
        & \overset{(c)}{=}\Prob \left\{\sum_{j\in \calN'(a)}\sfX_{j,a}\left(T'_{m-1}(a)\right) = 0  \right\} \left(1-o\left(1 \right)\right) \nonumber\\
        & = \Expect \left[\prod_{j\in \calN'(a)} \left(1-\sfX_{j,a}\left(T'_{m-1}(a)\right) \right) \right]  \left(1-o\left(1\right)\right) \,, \label{eq:a'_calN'_j_unavailable_logn}
    \end{align}
where $(a)$ holds by \prettyref{lmm:local_available} and $m-1$ is odd; $(b)$ holds because conditional on $\sfE'_m=1$, $T'_{m-1}(a)=H'_{m-1}(a)$; $(c)$ holds by \prettyref{eq:sfE_m'_logn}. Next,  we obtain
    \begin{align}
         \Expect \left[\prod_{j\in \calN'(a)} \left(1-\sfX_{j,a}\left(T'_{m-1}(a)\right) \right) \right] 
         & \overset{(a)}{=}\prod_{j\in \calN'(a)} \left(1- \expect{\sfX_{j,a}\left(T'_{m-1}(a)\right)| j \in \calC(a)} \right) \nonumber \\
         & \overset{(b)}{\ge}  \left(1-\frac{ \left(1+o(1)\right) \delta}{\log \left(\frac{1}{1-\delta}\right)} \right)^{\left|\calN'(a)\right|} \,,
         \label{eq:expect_a'_calN'_j_logn}
    \end{align}
    where  $(a)$ holds because $\{\sfX_{j,a} \left(T'_{m-1}(a)\right)\}_{j\in \calC(a)}$ are mutually independent by the property of the message passing algorithm on tree~\prettyref{alg:proposal_passing_alg}; $(b)$ holds because by \prettyref{eq:claim_2_logn} in \prettyref{lmm:claim_1_logn}. 

\end{itemize}
Together with \prettyref{eq:j_calN'_a_unavailable_logn}, \prettyref{eq:expect_j_calN'_a_backslash_calV_a_logn}, \prettyref{eq:a'_calN'_j_unavailable_logn} and \prettyref{eq:expect_a'_calN'_j_logn}, the result follows. 

Lastly, we are left to prove \prettyref{lmm:claim_1_logn}.
\begin{proof}[Proof of \prettyref{lmm:claim_1_logn}]
    The proof of \prettyref{eq:claim_tree_1_logn} and \prettyref{eq:fone_ftwo_1_logn} are omitted here, since it is analagous as the proof of \prettyref{eq:claim_tree_1} and \prettyref{eq:fone_ftwo_1} in \prettyref{lmm:claim_1}, as long as $\delta \le O(1)$. Next, we proceed to prove \prettyref{eq:claim_1_logn} and \prettyref{eq:claim_2_logn}.

  Next, we proceed to prove \prettyref{eq:claim_1_logn} and \prettyref{eq:claim_2_logn}. Let  $\kappa_1 =\delta\gamma_2 d$, $\kappa_2 = \gamma_2 d$, and
    \[
    \paraone =
     1- \fone  \,, \quad \paratwo =  \left(1-2 \left(\kappa_1 \vee \kappa_2\right) \ftwo\right)^{-1} \,.
    \]
   Then, we claim that
   \begin{align}
          \left( f_{\paraone \kappa_2 } \circ f_{\paratwo \kappa_1}\right)^{\ell/2-1}(1) 
         & \le  
          \begin{cases}
               \frac{1+o(1)}{\delta \gamma_2 d} \log \left(\frac{1}{1-\delta}\right)  &  \text{if }\delta \le 1-\Omega(1) \\
                \frac{ 1+o(1)}{\delta \log \left(\frac{\delta}{\delta-1}\right)}   &  \text{if }\delta \ge 1+\Omega(1) 
         \end{cases} 
          \label{eq:mu_j_rho_lower_bound_2_logn} \,,
    \end{align}    
    For any $a\in\Short$, we obtain
      \begin{align*}
        \expect{\sfX_{j,a}\left(T'_\ell(a)\right) |  \, j\in \calC(a)} 
        & =   \Expect_{ T_\ell (\rho) \sim \mathbb{T}_\ell \left(\delta \gamma_2 d,\, \gamma_2 d ,\, \fone \,, \ftwo \right)}\left[\sfX_{j,\rho} \left(T_m (\rho)\right) | j\in \calC(\rho)\right]\\
        & \ge   f_{ \paratwo    \kappa_1 } \circ \left( f_{  \paraone    \kappa_2 } \circ  f_{ \paratwo    \kappa_1 } \right)^{\ell/2-1}(1)  \\
        & \ge \begin{cases}
              \frac{ \left(1-o(1)\right) \delta}{\log \left( \frac{1}{1-\delta}\right)}&  \text{if }\delta \le 1-\Omega(1) \\
            \frac{\left(1-o(1)\right)  \log \left(\frac{\delta}{\delta-1}\right)}{\gamma_2 d} &  \text{if }\delta \ge 1+\Omega(1) 
         \end{cases} \,,
        \end{align*} 
    where the first equality holds by \prettyref{eq:claim_tree_1_logn}, and the second inequality hold by \prettyref{eq:sfT_even_iterative_new_2} in \prettyref{lmm:rooted_tree_expected_degree} given that  $ \left(\kappa_1 \vee \kappa_2\right)\ftwo=o(1)$ by \prettyref{eq:fone_ftwo_1_logn}, and the last inequality holds by \prettyref{eq:mu_j_rho_lower_bound_2_logn}, \ref{P:2} in \prettyref{lmm:property_f_d} and \prettyref{eq:f_d}. Then, \prettyref{eq:claim_1_logn} follows.

    Next, we proceed to prove \prettyref{eq:claim_tree_1_logn_j} and \prettyref{eq:fone_ftwo_1_logn_j}. 
    Set $\gamma_2' = \delta \gamma_2$, $\delta' = \frac{1}{\delta}$,  $\kappa_1' =\delta' \gamma_2' d$, $\kappa_2' = \gamma_2' d$, $\paraone'=1-\fone'$, $\paratwo' = \left(1-\kappa_1'\ftwo'\right)^{-1}$, $\fone'= \left(\gamma_2'd\right)^{\frac{1}{2+\epsilon_0}}$, and $\ftwo' =\exp\left(- \frac{1}{2}{\gamma'_2}^{2-\frac{2}{2+\epsilon_0}}  \left(\gamma_2' d\right)^{1-\frac{2}{2+\epsilon_0}}\right)$. It suffices to show 
    \begin{align}
      T_{\ell}'(j) \sim \mathbb{T}_{\ell} \left( \delta' \gamma_2' d,\,  \gamma_2' d, \, \fone',\, \ftwo'\right)\,. \label{eq:T_ell'_j}
    \end{align}
    During the breadth-first search exploration of the spanning tree rooted at $j$ on the local neighborhood around $a$, vertices have one of three states: active, neutral, or inactive. The state of a vertex is updated as the exploration of the connected component containing $j$ progresses. 
    For any $t\ge 0$, let $w_t$ denote active vertex that initiates the exploration at time $t$. 
    Initially, at $t=0$, let $w_0= a$ such that $a$ is active, while all others are neutral. At each subsequent time $t$, the active vertex $w_t$ is selected at random among all active vertices with the smallest depth at time $t$. After $w_t$ is selected, let $S_{\Short'} (t)$ (resp. $S_{\Long'} (t)$) denote the number of neutral vertices in $\Short'$ (resp. $\Long'$) that $w_t$ could possibly explore, and $O_{w_t}$ denote the total number of neutral vertices that are explored by $w_t$. 
    All edges $(w_t,w')$ are examined, where $w'$ spans all neutral vertices: 
    \begin{itemize}
        \item Suppose $w_t\in \Short$. $w_t$ connects to neutral vertices in $\Long'$ uniformly at random such that $w_t$ has $\offspring'_{w_t}$ offspring, where
        \[
            \offspring'_{w_t} \sim  \Hyper\left( S_{\Long'}(t), n_{\Long}, d\right)\,, 
        \]
        where $\expect{ \offspring'_{w_t} } \le \gamma_2 d = \delta' \gamma_2' d$,  given that $ S_{\Long'}(t) \le \gamma_2 n_{\Short}$ and $\gamma_1 n_{\Short}= \delta \gamma_2 n_{\Long}$. 
    
        \item Suppose $w_t \in \Long'$. For any $w'$ that is neutral in $\Short$, $w_t$ connects to $w'$ with probability $\frac{d}{n_{\Long}}$ independently, and then we have
        \[
         \offspring'_{w_t} \sim  \Binom\left( S_{\Short'}(t), \frac{d}{n_{\Long}} \right)  
         \,.
        \]
        By \prettyref{eq:one_sided_square} in \prettyref{prop:one_side_ER_tree}, for any $w_t$ with depth at most $\ell-1$, where $\ell \le \frac{\log n}{16  \left( \log d  \vee  \log \log n \right) }$, 
        \[
         \prob{ \gamma_2' n_{\Long} -S_{\Short'}(t) >  n^{\frac{1}{2}}} = \prob{ \gamma_1 n_{\Short} -S_{\Short'}(t) >  n^{\frac{1}{2}}} \le   2 \exp\left(- 2 \left( d \vee \log n \right) \right))\,. 
        \]
        For any arbitrarily small but fixed constant $\epsilon>0$, we have
        \begin{align*}
            \prob{\offspring'_{w_t} < \gamma_2' d \left(1- \fone' \right) }
            & \le \prob{\offspring'_{w_t} < \gamma_2' d \left(1-  \left( \gamma_2' d\right)^{-\frac{1}{2+\epsilon_0}} \right) \big |S_{\Short'}(t) \ge \gamma_2'  n_{\Long} - n^{\frac{1}{2}}  } \\
            &~~~~ + \prob{S_{\Short'}(t) < \gamma_2' n_{\Long} - n^{\frac{1}{2}} }\\
            & \le  \exp\left( - \frac{1}{2 +\epsilon} {\gamma_2'}^{1-\frac{2}{2+\epsilon_0}} d^{1-\frac{2}{2+\epsilon_0}}  \right) +  2 \exp\left(- 2 \left( d \vee \log n \right) \right)) \\
            & \le \exp\left( - \frac{1}{2}{\gamma_2'}^{2-\frac{2}{2+\epsilon_0}} d^{1-\frac{2}{2+\epsilon_0}}  \right) = \ftwo \,.
        \end{align*}
        where the second inequality holds by  \prettyref{eq:chernoff_binom_left} in \prettyref{lmm:chernoff}, $ \frac{n^{\frac{1}{2}}}{\gamma_2' n_{\Long}} = o\left( \left(\gamma_2' d\right)^{-\frac{1}{2+\epsilon_0}}\right)$ given that $d \le O\left(\polylog n\right)$, and conditional on $S_{\Short'}(t) \ge  \gamma_2'  n_{\Long} - n^{\frac{1}{2}}  $, 
        \[
            \offspring'_{w_t} \sim  \Binom\left( S_{\Short'}(t), \frac{d}{n_{\Long}} \right) 
            \overset{\mathrm{s.t.}}{\succeq}  \Binom\left(\gamma_2' n_{\Long} - n^{\frac{1}{2}}, \frac{d}{n_{\Long}} \right) \,,
            \,.
        \]
        and the third inequality holds because $\gamma_2' = \delta \gamma_2 =o(1)$. 
    \end{itemize}
    If  $w_t $ and $w'$ is connected, then $w'$ becomes active; if not, $w'$ remains neutral. Once all edges from $w_t$ have been explored, $w_t$ becomes inactive. The exploration ends if there is no active nodes with depth $< \ell$.  Hence, \prettyref{eq:T_ell'_j} follows.

    For any $a\in\Short$, we obtain
     \begin{align*}
         \expect{\sfX_{j,a}\left(T'_{\ell-1}(a)\right)| j \in \calC(a)} 
        & \overset{(a)}{=}  \expect{\sfX_{j,a}\left(T'_{\ell}(j')\right)| j\in \calC(a)\,,a \in \calC(j')}\\
        & \overset{(b)}{=}  \Expect_{T_\ell (\rho)\sim \mathbb{T}_\ell \left( \delta' \gamma_2' d,\,  \gamma_2' d, \, \fone',\, \ftwo'\right)}\left[\sfX_{i,u} \left(T_\ell (\rho)\right) |  u \in \calC(\rho)\,, \, i \in \calC(u)  \right]\\
        & \le  \prob{ O_i < \eta_1'\kappa_2'  } + \prob{ O_i \ge \eta_1'\kappa_2'  } \times\\
        & ~~~~ \Expect_{T_\ell (\rho)\sim \mathbb{T}_\ell \left( \delta' \gamma_2' d,\,  \gamma_2' d, \, \fone',\, \ftwo'\right)}\left[\sfX_{i,u} \left(T_\ell (\rho)\right) |  u \in \calC(\rho)\,, \, i \in \calC(u) \,, \,  O_i \ge \eta_1'\kappa_2' \right] \\ 
        & \overset{(c)}{\le}  
        \begin{cases}
                \frac{\left(1+o(1)\right) \delta }{ \log \left(\frac{1}{1-\delta}\right)} + \ftwo'  &  \text{if }\delta \le 1-\Omega(1) \\
                \frac{ 1+o(1) }{\gamma_2 d} \log \left(\frac{\delta}{\delta-1}\right) + \ftwo'  &  \text{if }\delta \ge 1+ \Omega(1) 
         \end{cases}  \\
        & \overset{(d)}{\le}  
        \begin{cases}
                \frac{\left(1+o(1)\right) \delta }{ \log \left(\frac{1}{1-\delta}\right)}   &  \text{if }\delta \le 1-\Omega(1) \\
                \frac{ 1+o(1) }{\gamma_2 d} \log \left(\frac{\delta}{\delta-1}\right)  &  \text{if }\delta \ge 1+ \Omega(1) 
         \end{cases}    \,,
    \end{align*} 
    where $(a)$ holds because by symmetry and the property of \prettyref{alg:proposal_passing_alg}; $(b)$ holds by \prettyref{eq:T_ell'_j}; $(c)$ holds by \prettyref{eq:T_ell'_j}, we have
    \[
     \prob{ O_i < \left(1- \fone'\right) \kappa_2'  } \le  \ftwo' \,, 
    \]
    and by \prettyref{eq:sfT_even_iterative_new_2} in \prettyref{lmm:rooted_tree_expected_degree}, given that $\kappa_1'\ftwo'=o(1)$ and $\exp\left(  - \sqrt{\kappa_1'+1}  \right) = o\left(\kappa_1'\ftwo'\right)$, we have
    \begin{align*}
    &  \Expect_{T_\ell (\rho)\sim \mathbb{T}_\ell \left( \delta' \gamma_2' d,\,  \gamma_2' d, \, \fone',\, \ftwo'\right)}\left[\sfX_{i,u} \left(T_\ell (\rho)\right) |  u \in \calC(\rho)\,, \, i \in \calC(u)  \right] 
      \le   \left( f_{\paraone' \kappa_2' } \circ f_{\paratwo' \kappa_1'}\right)^{\ell/2-1}(1) \\
     &  \le 
     \begin{cases}
               \frac{1+o(1)}{\delta' \gamma_2' d} \log \left(\frac{1}{1-\delta'}\right)  &  \text{if }\delta' \le 1-\Omega(1) \\
                \frac{ 1+o(1)}{\delta' \log \left(\frac{\delta'}{\delta'-1}\right)}   &  \text{if }\delta' \ge 1+\Omega(1) 
         \end{cases}   
    = \begin{cases}
               \frac{ 1+o(1) }{\gamma_2 d} \log \left(\frac{\delta}{\delta-1}\right)  &  \text{if }\delta \ge 1+ \Omega(1) \\
                \frac{\left(1+o(1)\right) \delta }{ \log \left(\frac{1}{1-\delta}\right)}   &  \text{if }\delta \le 1-\Omega(1) 
         \end{cases}    \,,
       \end{align*}
    in view of $\delta'= \frac{1}{\delta}$ and $\gamma_2'= \delta \gamma_2$, $\paraone'=1-\fone'$, $\paratwo' = \left(1-\kappa_1'\ftwo'\right)^{-1}$, and 
    \[
    \left( f_{\paraone' \kappa_2' } \circ f_{\paratwo' \kappa_1'}\right)^{\ell/2-1}(1) 
         \le  
          \begin{cases}
               \frac{1+o(1)}{\delta' \gamma_2' d} \log \left(\frac{1}{1-\delta'}\right)  &  \text{if }\delta' \le 1-\Omega(1) \\
                \frac{ 1+o(1)}{\delta' \log \left(\frac{\delta'}{\delta'-1}\right)}   &  \text{if }\delta' \ge 1+\Omega(1) 
         \end{cases} 
    \]
    following from \prettyref{eq:mu_j_rho_lower_bound_2_logn}; 
    $(d)$ holds because $\ftwo = o \left(\frac{1}{d}\right) $ and $\Omega\left( \frac{\log d}{d} \right) \le \nu \le O(1)$. Then, \prettyref{eq:claim_2_logn} follows. 
    Analogously, we can show that \prettyref{eq:claim_1_logn_j} and \prettyref{eq:claim_2_logn_j} hold by applying \prettyref{eq:claim_tree_1_logn}, \prettyref{eq:fone_ftwo_1_logn}, \prettyref{eq:claim_tree_1_logn_j}, and \prettyref{eq:fone_ftwo_1_logn_j}; hence, the proof is omitted here.

    It remains to prove \prettyref{eq:mu_j_rho_lower_bound_2_logn}.
    By \ref{P:3} in \prettyref{lmm:property_f_d} and \ref{P:7} in \prettyref{lmm:property_f_a_f_b}, 
     \begin{align}
        \left( f_{\paraone \kappa_2 } \circ f_{\paratwo \kappa_1}\right)^{\ell/2-1}(1) \le \left( f_{c_n \left(\paratwo \kappa_1 \right)-1} \circ f_{\paratwo \kappa_1}\right)^{\ell/2-1}(1) \,, \label{eq:c_n_f_logn}
     \end{align} 
     where the inequality holds because $\paraone< \paratwo$, and $\frac{x+1}{y+1}\ge \frac{x}{y}$ for any $x<y$ and then
    \begin{align*}
          \frac{\paraone \kappa_2 + 1}{\paratwo \kappa_1 + 1} \ge \frac{\paraone \kappa_2 }{\paratwo \kappa_1}  = \frac{\paraone}{\delta \paratwo} \triangleq c_n\,,
    \end{align*}

   \begin{itemize}
      \item Suppose $ \delta \le 1 - \Omega(1)$. 
       Then, 
      $c_n \ge 1+  \Omega(1) $. Pick $\epsilon= \frac{8 \left(\log d \vee \log \ell\right) }{ \ell\left(1-\frac{1}{c_n}\right) } = o(1)$.  By \ref{F:3} in \prettyref{lmm:fixed_point_convergence},  we obtain
      \[
        x^* =  -\frac{ \log \left(1-\frac{1}{c_n}\right) }{\paratwo \kappa_1+1} 
        \,, \quad \Gamma_\epsilon = \frac{1}{1+\epsilon \left(1-\frac{1}{c_n}\right)}\,. 
      \]
    Then, we have
    \begin{align*}
             \frac{\log \left(\epsilon x^*\right)}{\log \Gamma_\epsilon}
             & = \frac{\log \left(-\frac{\paratwo \kappa_1+1}{ \epsilon  \log \left(1-\frac{1}{c_n}\right) } \right)}{ \log \left(1+\epsilon \left(1- \frac{1}{c_n}\right)\right)}
             \le \frac{\log \left(-\frac{\paratwo \kappa_1+1}{\epsilon \log \left(1-\frac{1}{c_n}\right) } \right)}{\frac{\epsilon \left(1-\frac{1}{c_n}\right)}{1+\epsilon \left(1-\frac{1}{c_n}\right)}}
            \le \ell \,,
    \end{align*}
    where the first inequality holds 
    because $ \log \left(1+x\right) \ge \frac{x}{1+x}$, and the second equality
    holds because 
   $\ell  = \omega\left( \log d \right)$ and $\log \left(-\frac{\paratwo \kappa_1+1}{ \epsilon \log \left(1-\frac{1}{c_n}\right) } \right) \le 4 \left( \log d \vee  \log \ell \right)$ and $\paratwo \kappa_1 \le d$, in view of  $ c_n \ge 1+  \Omega(1)$, $\paratwo = 1+o(1)$ and $\delta \le 1- \Omega(1)$. 
   By \prettyref{eq:c_n_f_logn} and \ref{F:3} in \prettyref{lmm:fixed_point_convergence}, it yields that
        \begin{align*}
            \left( f_{\paraone \kappa_2 } \circ f_{\paratwo \kappa_1}\right)^{m/2-1}(1)
            & \le \left( f_{c_n \left(\paratwo \kappa_1 \right)-1} \circ f_{\paratwo \kappa_1}\right)^{m/2-1}(1)  \le \left(1+2 \epsilon\right) x^*   \\
            &\le  \left(1+2\epsilon \right) \frac{1}{\delta \gamma d} \log \left(\frac{1}{1-\delta}\right) \,,
        \end{align*}
    where the last inequality holds because $\epsilon=o(1)$ and 
    \[\log \left(1-\frac{1}{c_n}\right)=\log \left(1-\frac{\delta \paratwo}{\paraone}\right) = \left(1+o(1)\right)  \log \left(1-\delta\right)\,,
    \]
    given that $\paratwo = 1+o(1)$ and $\paraone = 1-o(1)$.

   \item Suppose $ \delta \ge 1+\Omega \left(1\right)$. Then, $c_n \le 1- \Omega\left(1\right) $.  Pick $\epsilon = \frac{1}{\log \ell} = o(1)$.  By \ref{F:1} in \prettyref{lmm:fixed_point_convergence}, given that $d=\omega(1)$, we have
    \[
    x^*  = - \frac{c_n}{\log \left(1-c_n\right)} \,, \quad 
    \Gamma_\epsilon =\frac{1-\left(1- c_n\right)^{\frac{1}{1+\epsilon/2} } }{ c_n } \,.
    \] 
    Then, we have
    \begin{align*}
        \frac{\log \left(\epsilon x^*\right)}{\log \Gamma_\epsilon} 
         \le \frac{\log\left(- \frac{\epsilon c_n}{ \log \left(1-c_n\right)}\right)}{ \log \left(\frac{1-\left(1- c_n\right)^{\frac{1}{1+\epsilon/2} } }{ c_n }\right)}
        &=  \frac{\log\left(- \frac{ \log \left(1-c_n\right)}{\epsilon c_n}\right)}{ \log \left(\frac{ c_n }{1-\left(1- c_n\right)^{\frac{1}{1+\epsilon/2} } } \right)}  = O\left(\frac{1}{\epsilon}\log \left(\frac{1}{\epsilon}\right)\right) = o\left(\ell\right)\,,
    \end{align*}
    where the second equality holds because 
    $\log\left(- \frac{ \log \left(1-c_n\right)}{\epsilon c_n}\right) = O\left( \log \left( \frac{1}{\epsilon} \right)\right)$, and 
    \[
    \log \left(\frac{c_n}{1-\left(1-c_n\right)^{\frac{1}{1+\epsilon/2}}}\right) \ge \log \left(1+\frac{\epsilon}{2}\right) \ge \frac{\epsilon}{2 \left(1+\frac{\epsilon}{2}\right)}\,,
    \]
    in view of $(1-x)^y \le 1-xy$ for any $x \ge -1$ and $0< y <1$, and  $\log (1+x) \ge \frac{x}{1+x}$ for any $x > -1$, and the last equality holds because $\ell =\omega(1)$. 
    Then, by \prettyref{eq:c_n_f_logn}, \ref{F:1} in \prettyref{lmm:fixed_point_convergence} and $\omega(1) \le d \le  O\left(\polylog n\right)$, we have
    \begin{align*}
        \left( f_{\paraone \kappa_2 } \circ f_{\paratwo \kappa_1}\right)^{\ell /2-1}(1)
         \le \left( f_{c_n \left(\paratwo \kappa_1 \right)-1} \circ f_{\paratwo \kappa_1}\right)^{\ell /2-1}(1) 
         & \le \left(1+2\epsilon\right) x^* \\
         & \le \left(1+o(1)\right)  \left( \frac{1}{\delta \log \left(\frac{\delta}{\delta-1}\right)} \right) \,,
    \end{align*}
    where the last inequality holds because $\epsilon=o(1)$ and $\delta \ge 1+\Omega(1)$ and $c_n= \delta (1-o(1))$, given that $\paraone = 1-o(1)$ and $\paratwo= 1+o(1)$. 
\end{itemize}
\end{proof}
\subsubsection{Proof of \prettyref{cor:calN_a_logn} }\label{sec:cor_calN_a_logn}

Fix $\{\calN'(a)\}_{a\in \Short}$ where $\calN'(a) \subset \calN(a)$ with $|\calN'(a)|\ge \left(1+\epsilon\right)  \frac{1}{\delta}\log \left(\frac{1}{1-\delta}\right) \log n_\Short$ for any arbitrarily small but fixed constant  $\epsilon>0$. By \prettyref{cor:calN_a_logn}, for any $a\in\Short$, we get
\begin{align*}
    & \prob{\forall\text{ $j\in \calN'(a)$, $j$ is unavailable to $a$ on $H'$}} \\
    & \le \left(1 -  \frac{ \left(1-o(1)\right) \delta}{\log \left(\frac{1}{1-\delta}\right)} \right)^{|\calN'(a)|-2}  + o\left(\frac{1}{n}\right) \\
    & \le\left(1 - \frac{ \left(1-o(1)\right) \delta}{\log \left(\frac{1}{1-\delta}\right)} \right)^{\left(1+\epsilon\right)\frac{1}{\delta} \log \left(\frac{1}{1-\delta}\right) \log n_\Short-2}  + o\left(\frac{1}{n}\right) \\
    & \le \exp\left(-\left(1+\epsilon\right) \left(1-o(1)\right)\log n_{\Short}\right)  = o\left(\frac{1}{n_{\Short}}\right) \,,
\end{align*}
where the last inequality holds by $(1+x)^y\le \exp(xy) $ for any $|x|\le 1$ and $y\ge 1$.  
By applying the union bound, \prettyref{eq:delta_constant_union_1} follows. 

\subsubsection{Proof of \prettyref{cor:unmatched_constant}}
Given $\gamma_1=\gamma_2 =1$, we have $\Short'=\Short$ and $\Long'=\Long$. 
Let $X_a$ denote the indicator of $a$ being unmatched on $H'$ for every $a\in\Short$. Then, for any $I \subset \Short$, we have
\begin{align*}
    \prob{ \forall a \in I, X_a = 1}
    & =\prob{ \forall a \in I \backslash \{a'\}, X_a = 1 \, |\, X_{a'}=1 } \prob{X_{a'}=1} \\
    & \le \prob{ \forall a \in I \backslash \{a'\}, X_a = 1} \prob{X_{a'}=1} \,,
\end{align*}
where the inequality holds by the fact that conditional on $a'$ being unmatched, which is equivalent as removing $a'$ from the market, every $a\in\Short \backslash  \{a'\}$ is weakly better off by \prettyref{lmm:truncation}, and hence 
\[
\prob{ \forall a \in I \backslash \{a'\}, X_a = 1 \, |\, X_{a'}=1 }  \le  \prob{ \forall a \in I \backslash \{a'\}, X_a = 1}\,.
\]
By iteratively applying the above inequality, we have
\[
 \prob{ \forall a \in I, X_a = 1} \le \prod_{a\in I} \prob{X_a= 1} \,.
\]
By \prettyref{prop:remove_gamma_d_log_n}, 
\begin{align}
    \exp\left( - \frac{ \left(1+o(1)\right)\delta d }{\log \left(\frac{1}{1-\delta}\right)}\right) n_\Short  \le \expect{\left|\Short_U \right|}= \expect{\sum_{a\in\Short}X_a}  \le \exp\left( - \frac{ \left(1-o(1)\right)\delta d }{\log \left(\frac{1}{1-\delta}\right)}\right) n_\Short \,. \label{eq:expect_calA_U}
\end{align}
By Markov's inequality, for any $\epsilon>0$, 
\begin{align}
    \prob{\left|\Short_U \right|  \le  \left(1-\epsilon\right)\expect{\left|\Short_U \right|} } = o(1) \,.  \label{eq:Markov_unmatched}
\end{align}
\begin{itemize}
    \item If $d\le \left(1-\epsilon\right) \frac{1}{\delta} \log \left(\frac{1}{1-\delta}\right)\log n_\Short$ for any constant $\epsilon>0$, 
    $  \expect{\left|\Short_U \right|}= \omega(1)$. Then, by \prettyref{lmm:chernoff_negative}, 
    \begin{align}
        \prob{\left|\Short_U \right|  \ge \left(1+\epsilon\right) \expect{\left|\Short_U \right|}} \le \left( \frac{\exp\left(\epsilon\right)}{\left(1+\epsilon\right)^{1+\epsilon}} \right)^{\expect{\left|\Short_U \right|}} = o(1) \,.  \label{eq:chernoff_unmatched}
    \end{align}
    Together with \prettyref{eq:chernoff_unmatched} and \prettyref{eq:Markov_unmatched}, \prettyref{eq:unmatched_constant} follows. 
    \item If $d\ge \left(1+\epsilon\right) \frac{1}{\delta} \log \left(\frac{1}{1-\delta}\right)\log n_\Short$ for any constant $\epsilon>0$, by \prettyref{eq:expect_calA_U}, $\left(1-\epsilon\right)\expect{\left|\Short_U \right|} = o(1)$. Then by \prettyref{eq:Markov_unmatched}, every applicant is matched with high probability. 
\end{itemize}
Hence, our desired result follows. 

\subsubsection{Proof of \prettyref{prop:perfect_stable}}\label{sec:post_prop_perfect_stable}
Fix any arbitrarily small but fixed constant $0<\epsilon<1$.
 For every $a\in\Short'$, let $\calN(a)$ denote its neighbors on $H'$, then
\[
    |\calN(a)| \sim \Hyper\left(\gamma_2 n_{\Long} , n_{\Long}, d \right) 
\]
by \prettyref{eq:hyper_lower} in \prettyref{lmm:hyper},
\begin{align}
    \prob{|\calN(a)| \le \left( 1- \frac{\epsilon}{4} \right)\gamma_2  d } \le  \exp\left( - \frac{\epsilon^2 \gamma_2^2 }{32}   d   \right) = o\left(\frac{1}{n_{\Short}}\right)\,, \label{eq:prob_calN_a}
\end{align}
where the last equality holds because $\gamma_2 d = \omega \left(\log n_\Short \right)$.

\paragraph{Case $1$: $d \ge \frac{1+2\epsilon}{\gamma_2} r_n$ and $\delta \le 1$.} By \prettyref{eq:prob_calN_a}, we have
\begin{align*}
    \prob{ \forall a \in \Short' \text{ s.t. } |\calN(a)| > \left( 1+\epsilon \right)r_n}  
    & \ge \prob{ \forall a \in \Short' \text{ s.t. } |\calN(a)| > \left( 1- \frac{\epsilon}{4} \right)  \gamma_2  d } \\
    & \ge 1-o(1)\,,
\end{align*}
where the first inequality holds because $\left( 1+\epsilon \right)r_n \le \left( 1- \frac{\epsilon}{4} \right)  \gamma_2 d$, given that  $d > \frac{1+2\epsilon}{\gamma_2 } r_n$, and the last inequality holds by applying the union bound.  

By~\cite[Theorem 2 and Theorem 10]{potukuchi2024unbalanced}, it states that for any one-sided regular bipartite graph on $\Short'\cup \Long'$ with $|\Short'|= \gamma_1  n_{\Short}$ and $|\Long'|= \gamma_2  n_{\Long}$, if every nodes has degree $d'\ge  \left(1+\epsilon\right) r_n $
then by running short-side proposing DA algorithm on the bipartite graph,  with probability $1-o(1)$, there exists a stable matching such that every node on the short side is matched, which implies that every node on the short side has at least one proposal get accepted by some $j\in \Long'$. 
Given that with probability $1-o(1)$, $H'$ is a bipartite graph such that every $a\in\Short'$ has degree larger than $\left(1+\epsilon\right) r_n$, our proof is complete. 

\paragraph{Case $2$: $d \ge \frac{2+2\epsilon}{\gamma_2} r_n$ and $\delta<1$.} 
By \prettyref{eq:prob_calN_a}, we have
\begin{align*}
    \prob{ \forall a \in \Short' \text{ s.t. } |\calN(a)| > \left(2+\epsilon \right)r_n}  
    & \ge \prob{ \forall a \in \Short' \text{ s.t. } |\calN(a)| > \left( 2 - \frac{\epsilon}{4} \right)  \gamma_2  d } \\
    & \ge 1-o(1)\,,
\end{align*}
where the first inequality holds because $\left( 2+\epsilon \right)r_n \le \left( 2- \frac{\epsilon}{4} \right)  \gamma_2 d$, given that  $d > \frac{2+2\epsilon}{\gamma_2 } r_n$, and the last inequality holds by applying the union bound.

Let $\hat{\Long'}$ denote the set of firms that are unmatched under $\Short'$-proposing DA algorithm. Then, we apply~\cite[Algorithm $1$]{ashlagionline} by replacing men and women with applicants and firms, respectively.
We say that an applicant $a$ starts a \emph{run of proposals} when $a$ is rejected by a firm at step $4(b)$ or is divorced from $j'$ at step $2$ in~\cite[Algorithm $1$]{ashlagionline}. We say that a \emph{failure} occurs if an applicant starts more than $(\log n)^2$ runs or if the length of any run exceeds $ \frac{\left(1+\epsilon/8\right) \log \left(1/\left( 1-\delta \right) \right)  \log \left( n_{\Short}\right)} { \left( 1-\delta\right) \gamma_2 n_{\Long} }$ proposals. We associate a failure with a particular proposal $t$, when for the first time, an applicant starts his $ \frac{\left(1+\epsilon/8 \right)\left(\log n_\Short \right)}{ \log\left(1/\left( 1-\delta \right) \right) } + 1$-th run, or the proposal is the $ \frac{\left(1+\epsilon/8\right) \log \left(1/\left( 1-\delta \right) \right)  \log \left( n_{\Short}\right)} { \left( 1-\delta\right) \gamma_2 n_{\Long} }+ 1$-th proposal in the current run.

Consider the number of runs of a given applicant $a$. Applicant $a$ starts at most one run at step 2. The other runs start when the proposing applicant $a' \neq a$ proposes to the firm $j$ that $a$ is currently matched with and $a'$ is accepted. At any proposal the probability that $a'$ proposes to any particular firm is no more than the probability that he proposes to $\hat{\Long}$. Now if the latter happens, Part II ends. Therefore, it follows that the number of runs applicant $a$ has in part II is stochastically dominated by $1 + \text{Geometric} \left(1-\delta\right)$. Hence, the probability that an applicant has more than $ \frac{\left(1+\epsilon/8 \right)\log n_\Short }{ \log \left(1/ \left( 1-\delta \right)\right)}$ runs is bounded by $\left(1-\delta\right)^{\frac{\left(1+\epsilon/8 \right)\log n_\Short }{ \log \left(1/ \left( 1-\delta \right)\right)}-1} \leq o\left( n_\Short^{-1} \right) $, showing that applicant $a$ has fewer than $\frac{\left(1+\epsilon/8 \right)\log n_\Short }{ \log \left(1/ \left( 1-\delta \right)\right)}$ runs in Part II with probability at least $1 - o\left(n_\Short^{-1}\right)$. By taking the union bound, it follows that all applicants $a \in \Short'$ the failure due to number of runs does not occur with probability $1-o(1)$.

Assume failure did not occur before or at the beginning of a run of applicant $a$. The number of proposals applicant $a$ accumulates until either the run ends or a failure occurs is bounded by
\[
\frac{\left(1+\epsilon/8 \right)\left(\log n_\Short \right)}{ \log \left(\frac{1}{1-\delta}\right)} 
\cdot 
\frac{\left(1+\epsilon/8\right) \log \left( \frac{1}{ 1-\delta } \right)  \log \left( n_{\Short}\right)} { \left( 1-\delta\right) \gamma_2 n_{\Long} } 
\le
\frac{\left(1+3\epsilon/8\right)\left(\log n_\Short \right)^2 }{\left( 1-\delta\right) \gamma_2 n_{\Long} } 
=o(\gamma_2 n_{\Long}) \,, 
\]
given that $0<\epsilon\le 1$ and  $\gamma_2 \ge \Omega (1)$. Then, we have $ | \Long' \backslash R(a)|\ge \left(1-o(1)\right) \gamma_2 n_{\Long} $. 
In each proposal in the run before failure, applicant $a$ proposes to a uniformly random firm in $\Long'\backslash R(a)$. By \prettyref{lmm:part_II_total}, there were at most $\frac{1}{1-\delta} \log \left( \frac{1}{ 1-\delta } \right) $ proposals so far with probability $\delta = 1- o(1)$, we have that 
\[
\nu(\Long'\backslash R(a)) \leq \frac{\frac{1}{1-\delta} \log \left( \frac{1}{ 1-\delta } \right)}{\left(1-o(1)\right)\gamma_2 n_{\Long}} \,. 
\]
By~\cite[Lemma B.2]{ashlagionline}, we have that the probability of acceptance at each proposal is at least 
\[
\frac{1}{\nu(\Long'\setminus R(a))+1} \geq  \frac{ \left( 1-\delta\right)\left(1-o(1)\right) \gamma_2 n_{\Long} }{ \log \left( \frac{1}{ 1-\delta } \right)}\,.
\]
Therefore, the probability of man $m$ making $  \frac{\left(1+\epsilon/8\right) \log \left( \frac{1}{ 1-\delta } \right)  \log \left( n_{\Short}\right)} { \left( 1-\delta\right) \gamma_2 n_{\Long} } $ proposals without being accepted is bounded by
\[
\left(1-  \frac{  \left( 1-\delta\right)\left(1-o(1)\right) \gamma_2 n_{\Long} }{ \log \left( \frac{1}{ 1-\delta } \right)}\right)^{ \frac{\left(1+\epsilon/8\right) \log \left( \frac{1}{ 1-\delta } \right)  \log \left( n_{\Short}\right)} { \left( 1-\delta\right) \gamma_2 n_{\Long} }
} \leq  O\left(\frac{1}{n_{\Short}^{1+\epsilon/8}} \right) = o\left(\frac{1}{d n_{\Short}} \right) \,,
\]
where the last inequality holds because $d \le O\left(\polylog n_{\Short}\right)$. 

Thus, the run has length no more than $\frac{\left(1+\epsilon/8\right) \log \left( \frac{1}{ 1-\delta } \right)  \log \left( n_{\Short}\right)} { \left( 1-\delta\right) \gamma_2 n_{\Long} }$ with probability at least $1 - o\left(\frac{1}{d n_{\Short}} \right)$. Now the number of runs is bounded by $n_\Short d$, so we conclude that with probability $1-o(1)$, the failure due to number of runs does not occur. Finally, assuming no failure,
\[
|R(a)| \leq \frac{\left(1+\epsilon/8 \right)\left(\log n_\Short \right) }{ \log \left( \frac{1}{ 1-\delta } \right)}  \cdot \frac{ \left(1+\epsilon/8\right) \log \left( \frac{1}{ 1-\delta } \right)  \log \left(n_{\Short}\right)} { \left( 1-\delta\right) \gamma_2 n_{\Long} } \le \frac{\left(1+ 3\epsilon/8 \right)\left(\log n_\Short \right)^2 }{\left( 1-\delta\right) \gamma_2 n_{\Long} } \,.
\]
By running $\Short'$-proposing DA, we get the $\Short'$-optimal stable matching. Then, with probability $1-o(1)$, every applicant is matched with one of its top $\left(1+\epsilon/2\right) r_n$ preferred firms, given that $\epsilon$ can be arbitrarily small but fixed constant. We have shown that with probability $1-o(1)$, every applicant proposes at most $\frac{\left(1+3\epsilon/8\right)\left(\log n_\Short \right)^2 }{\left( 1-\delta\right) \gamma_2 n_{\Long}}$ to get the $\Long'$-optimal stable matching. It implies that in all stable matching, with probability $1-o(1)$,  every applicant is matched with its top 
\[
\frac{\left(1+3\epsilon/8\right)\left(\log n_\Short \right)^2 }{\left( 1-\delta\right) \gamma_2 n_{\Long}}+\left(1+\epsilon/2\right) r_n \le \left(2+\epsilon\right) r_n  \,,
\]
where the inequality holds because 
\[
\frac{\log n_\Short}{\left( 1-\delta\right) \gamma_2 n_{\Long}} \le  \left(1+o(1)\right)\frac{1}{\delta}\log \left(\frac{1}{1-\delta + \frac{\delta^2}{\gamma_2 n_\Long}}  \right) \,, 
\]
and $r_n =\frac{1}{\delta} \log \left( \frac{1}{1-\delta + \frac{\delta^2}{\gamma_2 n_\Long}}\right) \log n_\Short$.

   \begin{lemma}\label{lmm:part_II_total}
   Part II completes in no more than $ C_{n_{\Short}}\cdot \frac{1}{1-\delta}$ proposals with probability $1-\exp\left(-C_{n_{\Short}}\right) = 1-o(1)$ for any $C_{n_{\Short}}=\omega(1) $. 
    \end{lemma}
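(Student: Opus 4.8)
The plan is to reduce \prettyref{lmm:part_II_total} to a geometric tail bound on the number of proposals made in Part~II. The first step is to recall the mechanics of Part~II of the rejection-chain algorithm (\cite[Algorithm~1]{ashlagionline}): it terminates exactly when a proposal reaches a firm that is unmatched in the $\Short'$-optimal stable matching, and at each step before termination the proposing applicant $a$ proposes to a firm drawn uniformly at random from $\Long'\setminus R(a)$, where $R(a)$ is the set of firms $a$ has already proposed to. I would then lower-bound, uniformly over the history, the conditional probability that a single proposal completes Part~II.

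To get this bound, let $U$ be the set of firms unmatched in the $\Short'$-optimal stable matching, so $|U|=|\Long'|-|\Short'|=\gamma_2 n_{\Long}-\gamma_1 n_{\Short}=(1-\delta)\gamma_2 n_{\Long}$ by the hypothesis $\gamma_1 n_{\Short}=\delta\gamma_2 n_{\Long}$. The key point is that while Part~II is still running no firm in $U$ has ever been proposed to — the first proposal hitting one of them ends Part~II — and none of them is matched; hence $U\subseteq\Long'\setminus R(a)$ at every step. Since $|\Long'\setminus R(a)|\le|\Long'|=\gamma_2 n_{\Long}$, a uniform proposal lands in $U$ with conditional probability at least $|U|/|\Long'|=1-\delta$, so each proposal completes Part~II with conditional probability at least $1-\delta$.

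Writing $N$ for the number of proposals in Part~II, this yields $\prob{N>t+1\mid N>t}\le\delta$ for all $t\ge 0$, hence $\prob{N>T}\le\delta^{T}$ by telescoping. Plugging in $T=C_{n_{\Short}}/(1-\delta)$ and using $\log\delta=\log(1-(1-\delta))\le-(1-\delta)$ gives
\[
\prob{N>\frac{C_{n_{\Short}}}{1-\delta}}\le\delta^{C_{n_{\Short}}/(1-\delta)}=\exp\!\left(\frac{C_{n_{\Short}}}{1-\delta}\log\delta\right)\le\exp(-C_{n_{\Short}})\,,
\]
which is $o(1)$ for any $C_{n_{\Short}}=\omega(1)$, as required.

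The main obstacle — really the only nonroutine point — is establishing the per-proposal lower bound of $1-\delta$ cleanly from the cited algorithm: one must verify that proposals within Part~II are indeed uniform over the untried firms and that the originally-unmatched firms remain untouched throughout Part~II, so that no $o(1)$ slack is lost. Even if only a bound of $(1-o(1))(1-\delta)$ were available per step, the estimate above would weaken only to $\exp(-(1-o(1))C_{n_{\Short}})=o(1)$, so the statement would still go through.
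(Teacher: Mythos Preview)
Your proposal is correct and follows essentially the same argument as the paper: both lower-bound the per-proposal termination probability by $(1-\delta)$ via $|U|/|\Long'\setminus R(a)|\ge (1-\delta)$ and then apply a geometric tail bound, with the only cosmetic difference being that the paper writes the final estimate as $(1-(1-\delta))^{C_{n_\Short}/(1-\delta)}\le\exp(-C_{n_\Short})$ using $(1+x)^y\le e^{xy}$ rather than your equivalent $\log\delta\le-(1-\delta)$. Your extra sentence justifying $U\subseteq\Long'\setminus R(a)$ is a welcome clarification that the paper leaves implicit.
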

    
    \begin{proof}
    For each proposal (Step 3) in Part II, the probability of Step 4(d), which will end Part II, is the probability that the applicant $ a $ proposes to an unmatched firm 
    \[
    \frac{\gamma_2 n_{\Long}-\gamma_1 n_{\Short}}{|\Long' \setminus R(a)|} \geq \frac{\gamma_2 n_{\Long}-\gamma_1 n_{\Short}}{\gamma_2 n_{\Long}} = 1-\delta.
    \]
    Therefore the probability that the number of proposals in part II exceeds $ C_{n_{\Short}}\cdot \frac{1}{1-\delta} $ is at most 
    \[
    \left(1 -\left( 1-\delta\right) \right)^{C_{n_{\Short}}\cdot \frac{1}{1-\delta} } \leq \exp\left(-C_{n_{\Short}}\right) = o(1)\, ,
    \] 
    where the inequality holds by $(1+x)^y \le \exp\left(xy\right)$ for any $|x|<1$ and $y>1$, and the equality holds by $C_{n_\Short} = \omega(1)$. 
    \end{proof}

\section{Extension to correlated post-interview scores}\label{sec:correlated}
In the previous subsections, we assumed that post-interview scores are independently and identically distributed ($\iid$) for all agents. However, in some scenarios, Assumption~\ref{assump:general} may not hold due to heterogeneity in post-interview scores among applicants and firms.

This heterogeneity can be attributed to pre-interview non-observable attributes, which can significantly influence post-interview outcomes. For instance, some applicants might excel in interviews due to innate charisma or strong communication skills, resulting in consistently higher post-interview evaluations from firms. Similarly, certain firms may have unadvertised benefits or a particularly positive work environment that consistently elicits more favorable responses from interviewees than their pre-interview expectations suggested. These pre-interview non-observable attributes can lead to correlated post-interview scores, where an applicant or firm consistently generates positive (or negative) impressions across multiple interviews, deviating from the $\iid$ assumption.

To illustrate the impact of heterogeneous post-interview scores on the interim stability of matchings, we present the following example, which considers a market with two types of applicants: good interviewers and bad interviewers. Despite this heterogeneity, we show that the results from \prettyref{thm:single_signal_sparse} and \prettyref{thm:single_signal_dense} still hold under certain conditions.

\begin{example}\label{ex:good_bad_interviewers}
 The applicants are categorized into two types: good interviewers and bad interviewers. An applicant is a good interviewer with probability $\alpha$ and a bad interviewer with probability $1-\alpha$ for some $0<\alpha<1$. Each agent does not know their own type or the types of other agents.

Suppose that $n_{\Short} \le n_{\Long}$, and Assumptions \ref{assump:general}, \ref{assump:non_vanishing}, and \ref{assump:bounded} hold, except that if an applicant $a$ is a good interviewer, the post-interview score of any firm $j$ with respect to $a$ is $\iid$ sampled from distribution $\DistA_g$; if $a$ is a bad interviewer, the post-interview score of any firm $j$ with respect to $a$ is $\iid$ sampled from distribution $\DistA_b$. We have 
\[
A_{j,a}\iiddistr
\begin{cases}
\DistA_g & \text{if $a$ is a good interviewer}\\
\DistA_b & \text{o.w.}
\end{cases} \,, \quad \text{where $\DistA_g \overset{\mathrm{s.t.}}{\succeq} \DistA_b$} \,,
\]
where $\DistA_g$ first-order stochastically dominates $\DistA_b$. Let $p_{\DistA_g}$ (resp. $p_{\DistA_b}$) denote the probability of a post-interview score from $\DistA_g$ (resp.  $\DistA_b$) being non-negative. Then, $p_{\DistA_g} \ge p_{\DistA_b}$. 

Let $H$ denote the interview graph constructed by the applicant-signaling mechanism such that every applicant signals its top $d$ preferred firms based on its pre-interview utilities. Then, we have:
\begin{itemize}
\item If $d=\omega(1)$ and $ p_{\DistA_b} \ge \Omega(1)$, every stable matching on $H$ is almost interim stable with high probability.
\item If $d \ge \frac{8+\epsilon}{p_{\DistA_b} \left(1-\alpha\right)} \log^2 n$ for any constant $\epsilon>0$, the applicant-optimal stable matching on $H$ is perfect interim stable with high probability.
\end{itemize}
The above results align with \prettyref{thm:single_signal_sparse} and \prettyref{thm:single_signal_dense}. To provide intuition for the proof, we consider the worst-case scenarios for both types of interviewers:
For agents who are good interviewers, the worst outcome for them is when $\DistA_g = \DistA_b$, i.e., there are no distinctions between bad interviewers and good interviewers. For agents who are bad interviewers, the worst outcome for them is when all firms strictly prefer good interviewers to bad interviewers after the interviews are conducted. 
Hence, by applying a peeling argument, we can show that the above results hold. 
\end{example}

To generalize our analysis, we consider a two-sided market with $m$ types of applicants and $\ell$ types of firms, where agents differ in their non-observable attributes while sharing the same observable intrinsic values.
The probability of an applicant belonging to type $\ess$ is $\alpha_\ess$ for $1\le \ess \le m$, with $\sum_{\ess=1}^m \alpha_\ess =1$. Similarly, the probability of a firm belonging to type $\kappa$ is $\beta_\kappa$ for $1\le \kappa \le \ell$, with $\sum_{\kappa=1}^{\ell}\beta_\kappa =1$. For an applicant $a$ of type $\ess$, the post-interview score of any firm $j$ with respect to $a$ is $\iid$ according to distribution $\DistA_\ess$, where $p_{\DistA_\ess}$ denotes the probability of a post-interview score from $\DistA_\ess$ being non-negative. 
For a firm $j$ of type $\kappa$, the post-interview score of any applicant $a$ with respect to $j$ is $\iid$ sampled from distribution $\DistA_\kappa'$, where $p_{\DistA_\kappa'}$ denotes the probability of a post-interview score from $\DistA_\kappa'$ being non-negative. 
 
The following remark establishes that even in markets with correlated post-interview scores, where applicants and firms exhibit heterogeneous attributes influencing the score distributions, our signaling mechanisms can still achieve perfect interim stability for either the applicant-optimal or firm-optimal stable matching, provided a sufficient number of signals, 
under a mild assumption that $\{\alpha_\ess\}_{\ess =1}^m, \{\beta_\kappa\}_{\kappa=1}^{\ell} \ge \gamma$ and $\{p_{\DistA_\ess} \}_{\ess =1}^m \,, \{p_{\DistA_\kappa'} \}_{\kappa=1}^\ell \ge p$ for some constants $\gamma, p>0$.

\begin{remark}\label{rmk:correlated}
    Under the relaxation of Assumption~\ref{assump:general}, consider a two-sided market with correlated post-interview scores with applicants $\Short$ and firms $\Long$, where $n_{\Short} \le n_{\Long}$. Let $H$ denote the interview graph constructed by both-side-signaling mechanism. Suppose that $\{\alpha_\ess\}_{\ess =1}^m, \{\beta_\kappa\}_{\kappa=1}^{\ell} \ge \gamma$ and $\{p_{\DistA_\ess} \}_{\ess =1}^m \,, \{p_{\DistA_\kappa'} \}_{\kappa=1}^\ell \ge p$ for some constants $\gamma, p>0$. Then, if $d \ge \frac{8+\epsilon}{\gamma p} \log^2 n $, the applicant-optimal or the firm-optimal stable matching is perfect interim stable with high probability.
\end{remark}

By accounting for correlated post-interview scores, this framework captures the realistic market scenarios where agents' pre-interview non-observable attributes shape preference evolution during interviews.  


\subsection{Proof of \prettyref{ex:good_bad_interviewers}}\label{sec:good_bad_interviewers}
Suppose $d =\omega(1)$ and $p_{\DistA_b} \ge \Omega(1)$. For agents who are good interviewers, the worst outcome for them is when $\DistA_g = \DistA_b$, i.e., there are no distinctions between bad interviewers and good interviewers. 
By \prettyref{thm:single_signal_sparse}, by removing a vanishingly small fraction  of interviewers, all the remaining good interviewers are interim stable. 

For agents who are bad interviewers, the worst outcome for them is when all firms strictly prefers good interviewers to bad interviewers after the interviews are conducted. 
 And for each remaining bad interviewer, it conducts at least $d=\omega(1)$ interviews. By \prettyref{thm:single_signal_sparse}, by removing another vanishingly small fraction  of bad interviewers, all the the remaining good interviewers are interim stable with high probability. Hence, every stable matching is almost interim stable with high probability. 

 Suppose $d \ge \frac{8+\epsilon}{p_{\DistA_b} \left(1-\alpha\right)} \log^2 n$. 
 For agents who are good interviewers, the worst outcome is when $\DistA_g = \DistA_b$, i.e., there are no distinctions between bad interviewers and good interviewers. By \prettyref{thm:single_signal_dense}, all good interviewers are interim stable in the applicant-optimal stable matching.

  For agents who are bad interviewers, the worst outcome for them is when all firms strictly prefers good interviewers to bad interviewers after the interviews are conducted. Hence, after removing the good interviewers and their matched firms, by \prettyref{lmm:chernoff} and applying the union bound, with high probability every bad interviewer is connected to at least $\frac{8}{p_{\DistA_b}} \log^2 n$ firms in the remained interview graph. By \prettyref{thm:single_signal_dense}, each bad interviewer is interim stable in the applicant-optimal stable matching.

\subsection{Proof of \prettyref{rmk:correlated}}\label{sec:thm_correlated}

For applicants of type $\ess$, the worst outcome for them is when all firms strictly prefers applicants of other types over the type $\ess$. After removing applicants of other types and their matched firms, by \prettyref{lmm:chernoff} and applying the union bound, with high probability every applicant is connected to at least $\frac{8}{p} \log^2 n$ firms and every firm is connected to at least $\frac{8}{p} \log^2 n$ applicants in the remained interview graph. The probability of a non-negative post-interview score is at least $p$ for every applicant and firm. Hence, by \prettyref{cor:single_both_signal_dense}, every applicant of type $\ess$ is interim stable on the applicant-optimal stable matching  with high probability. 
in the remained interview graph. Hence, applicant-optimal stable matching on $H$ is perfect interim stable with high probability. 

Similarly, we can show that firm-optimal stable matching on $H$ is perfect interim stable with high probability.

\section{Analysis of the main results}\label{sec:analysis}

Before proving our main results, we first present several key lemmas. Consider a single-tiered two-sided market with applicants $\Short$ and firms $\Long$. Let $H$ denote an interview graph constructed based on the applicant-signaling mechanism, where each applicant signals its top $d=O\left(\polylog n_{\Long}\right)$ firms based on its pre-interview utilities with respect to all firms in $\Long$.

 
\begin{lemma}\label{lmm:interview_d_regular}
$H$ can be considered a randomly generated one-sided $d$-regular graph, where each applicant in $\Short$ is connected to $d$ randomly chosen firms in $\Long$, with every agent exhibiting uniformly generated strict preferences over their neighbors. 
\end{lemma}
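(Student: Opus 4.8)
The plan is to verify two assertions in turn: (i) the \emph{graph law} of $H$ coincides with that of a random one-sided $d$-regular bipartite graph, and (ii) \emph{conditionally on} $H$, each agent's interim preference ranking over its $H$-neighbors is a uniform random permutation, jointly independent across agents. Together these are exactly what ``randomly generated one-sided $d$-regular graph with uniformly generated strict preferences'' means in the paper's terminology.

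For (i): in the single-tiered market all intrinsic values vanish, so by Assumption~\ref{assump:general} the pre-interview utility is $U^B_{a,j}=B_{a,j}$ with $\{B_{a,j}\}_{a\in\Short,\,j\in\Long}$ i.i.d.\ from the continuous distribution $\DistB$. Fix $a$; since $\DistB$ is continuous the values $\{B_{a,j}\}_{j\in\Long}$ are a.s.\ distinct, and by exchangeability of this i.i.d.\ family the set $\Long_a$ of the $d$ largest is a uniformly random $d$-subset of $\Long$. Moreover $\Long_a$ is measurable with respect to $\{B_{a,j}\}_{j\in\Long}$ alone, and these blocks are disjoint over $a$, so $\{\Long_a\}_{a\in\Short}$ are mutually independent. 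Under the applicant-signaling mechanism a pair $(a,j)$ is an edge of $H$ iff $j\in\Long_a$, which is precisely the generative description of a random one-sided $d$-regular bipartite graph; in particular $\calN(a)=\Long_a$ for every $a\in\Short$ and $\calN(j)=\{a:\,j\in\Long_a\}$ for every $j\in\Long$.

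For (ii): condition on the realization of $H$, equivalently on $\{\Long_a\}_{a\in\Short}$. Every $H$-edge is an interview, so by the definition of $U^H$ each agent's interim utility for a neighbor equals its post-interview utility $U^A=B+A$, which is a.s.\ free of ties (a sum of a continuous $B$ with an independent $A$). For a firm $j$, the relevant quantities are $\{B_{j,a}+A_{j,a}\}_{a\in\calN(j)}$; the conditioning event (which pairs are edges) is measurable with respect to the applicant-side scores $\{B_{a,j'}\}$, independent of the firm-side scores and of all post-interview scores, so $\{B_{j,a}+A_{j,a}\}_{a\in\calN(j)}$ is still i.i.d.\ after conditioning and its induced ranking is uniform on permutations of $\calN(j)$. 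For an applicant $a$, the event $\{\Long_a=S\}$ does involve $\{B_{a,j}\}_{j\in S}$, but the statement ``$S$ is the top-$d$ set'' is invariant under relabeling the elements of $S$, so conditioning on it preserves exchangeability of $(B_{a,j})_{j\in S}$; adding the independent i.i.d.\ family $(A_{a,j})_{j\in S}$ keeps $(B_{a,j}+A_{a,j})_{j\in S}$ exchangeable, whence $a$'s interim ranking over $\calN(a)=S$ is uniform. Finally, the applicant rankings are functions of pairwise-disjoint score blocks and the firm rankings of the (independent) firm-side scores, all mutually independent given $H$, so the rankings are jointly independent.

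The only genuinely delicate point — and the one I expect to need the most care in the write-up — is the exchangeability claim for applicant preferences: one might worry that selecting the top $d$ firms biases the ordering \emph{within} that set. The resolution is the symmetry observation above, a conditional version of the elementary fact that the order statistics of an i.i.d.\ sample are independent of the permutation that realizes them; I would isolate it as a one-line sub-claim. Everything else — translating ``signal the top $d$'' into ``uniform random $d$-subset,'' the product structure over applicants, and the mutual independence of the two sides' idiosyncratic scores — is routine bookkeeping with Assumption~\ref{assump:general}.
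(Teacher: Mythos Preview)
Your proposal is correct and follows essentially the same two-step approach as the paper: first argue that i.i.d.\ continuous pre-interview scores make each $\Long_a$ a uniform $d$-subset independently across applicants, then invoke the mutual independence of all idiosyncratic scores in Assumption~\ref{assump:general} to conclude that interim rankings over neighbors are uniform and independent. Your treatment is in fact more careful than the paper's on the one nontrivial point---that conditioning on $\{\Long_a=S\}$ preserves exchangeability of $(B_{a,j})_{j\in S}$ and hence leaves the within-$S$ ranking unbiased---which the paper's proof simply asserts without isolating.
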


\begin{proof}
    According to Assumption~\ref{assump:general}, the pre-interview scores $\{B_{a,j}\}_{a\in \Short,j\in \Long}$ are independently and identically distributed (\iid) from the distribution $\DistB$. Consequently, for each applicant $a \in \Short$, their top $d$ partners can be viewed as being independently and uniformly chosen at random. This arrangement configures $H$ as a random one-sided $d$-regular bipartite graph, in which each applicant in $\Short$ is connected to $d$ randomly selected firms in $\Long$.
    
    Furthermore, under Assumption~\ref{assump:general}, $\Post_{a,j}, \Pre_{a,j}, \Post_{j,a}, \Pre_{j,a}$ are mutually independent across all applicants $a \in \Short$ and firms $j \in \Long$, where $\{B_{a,j}\}_{a\in \Short,j\in \Long}$ and $\{B_{j,a}\}_{a\in \Short,j\in \Long}$ are drawn \iid from $\DistB$, and $\{A_{a,j}\}_{a\in \Short,j\in \Long}$ and $\{A_{j,a}\}_{a\in \Short,j\in \Long}$ are drawn $\iid$ from $\DistA$.  As a result, after conducting interviews on $H$, the preferences formed by any applicant $a \in \Short$ towards firms $j \in \calN(a)$, and vice versa for any firm $j \in \Long$ towards applicants $a \in \calN(j)$ in $H$, can be viewed as being generated uniformly at random. 
    Therefore, our desired result follows. 
\end{proof}
Let $\calN(a)$ denote the set of neighbors of $a$ on $H$, and define
\begin{align}
    \calN_+(a)\triangleq \{j \in \calN(a): A_{a,j} \ge 0\}  \, .  \label{eq:calC_+} 
\end{align}
The following lemma shows to determine if an applicant $a$ is interim stable in any stable matching on the interview graph $H$, it is sufficient to check whether if there exists a firm $j \in \calN_+(a)$ that is available that is available to $a$. 
\begin{lemma}\label{lmm:positive_available}
    To determine if an applicant $a$ is interim stable on any stable matching on $H$, it suffices to check if there exists $j \in  \calN_+(a)$ such that $j$ is available to $a$ on $H$. 
\end{lemma}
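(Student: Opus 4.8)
The plan is to show that the ``obvious'' necessary condition for interim stability of an applicant $a$---namely, that $a$ cannot profitably block with any firm it did not interview---is automatically implied by the existence of an available neighbor $j \in \calN_+(a)$, and conversely that if no such $j$ exists then $a$ is part of an interim blocking pair (so the check is also sufficient in the relevant direction). The key observation is the one already spelled out in the proof sketch of \prettyref{thm:single_signal_sparse} in \prettyref{sec:message}: under the applicant-signaling mechanism, the $d$ firms in $\calN(a)$ are precisely $a$'s top $d$ firms by pre-interview utility, so $\min_{j \in \calN(a)} U_{a,j}^B > \max_{j' \notin \calN(a)} U_{a,j'}^B$. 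Hence for any $j \in \calN_+(a)$ (so $A_{a,j}\ge 0$) and any firm $j'$ that $a$ did not interview,
\[
U_{a,j}^H = U_{a,j}^B + A_{a,j} \ge U_{a,j}^B > U_{a,j'}^B = U_{a,j'}^H.
\]

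First I would fix a stable matching $\Phi$ on $H$ and suppose $j \in \calN_+(a)$ is available to $a$ on $H$. By the definition of availability, $a$ weakly prefers $\phi(a)$ to $j$ under the interim preferences induced by $H$, i.e.\ $U_{a,\phi(a)}^H \ge U_{a,j}^H$. Chaining this with the displayed inequality gives $U_{a,\phi(a)}^H > U_{a,j'}^H$ for every firm $j' \notin \calN(a)$; thus $a$ cannot be part of any interim blocking pair with a firm outside its interview set. For a firm $j' \in \calN(a)$, any blocking pair $(a,j')$ would already be a blocking pair of the stable matching $\Phi$ (since $(a,j')\in\calE(H)$ and utilities agree with $H$-interim utilities on edges of $H$), contradicting stability of $\Phi$. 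Therefore $a$ is in no interim blocking pair, i.e.\ $a$ is interim stable in $\Phi$. Since $\Phi$ was arbitrary, this holds for every stable matching on $H$.

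Conversely, I would show that if \emph{no} firm in $\calN_+(a)$ is available to $a$ on $H$, then $a$ is \emph{not} interim stable in the $\Short$-optimal (or any relevant) stable matching---this is what makes the criterion a genuine check rather than merely a sufficient condition. Here one uses that if $j$ is not available to $a$ on $H$, there is a stable matching in which $j$ strictly prefers $a$ to its partner; combined with the fact that in such a matching $a$ is matched below $j$ in its pre-interview order only if $a$'s match has a lower $U^H_{a,\cdot}$ value, one produces a firm $j\in\calN_+(a)$ with $U_{j,a}^H > U_{j,\phi(j)}^H$ and $U_{a,j}^H > U_{a,\phi(a)}^H$, i.e.\ an interim blocking pair. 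The main obstacle is precisely this converse direction: I need to argue carefully that when every $j\in\calN_+(a)$ prefers its own partner to $a$, $a$'s own match is good enough to avoid blocking, and otherwise exhibit the blocking firm explicitly; this requires invoking the structure of stable matchings (e.g.\ via the availability characterization and \prettyref{lmm:local_available}) rather than a one-line utility comparison. The forward direction above is routine; the reduction of ``interim stability'' to ``existence of an available $j\in\calN_+(a)$'' is the substantive content, and I expect the bookkeeping around which stable matching witnesses non-availability to be where care is needed.
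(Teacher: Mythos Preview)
Your forward direction is correct and is exactly what the paper proves: fix any stable matching $\Phi$, use that availability of $j$ forces $\phi(a)\succeq_a j$, then chain $U_{a,\phi(a)}^H\ge U_{a,j}^H=U_{a,j}^B+A_{a,j}\ge U_{a,j}^B>\max_{j'\notin\calN(a)}U_{a,j'}^B$, and observe that blocking with a firm inside $\calN(a)$ is already ruled out by stability of $\Phi$ on $H$. That is the entire content of the paper's proof of this lemma.

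The converse paragraph, however, is both unnecessary and wrong. The lemma says ``it suffices to check'': this is a pure sufficiency statement, and the paper proves only that direction. You have also inverted the definition of availability: if $j$ is \emph{not} available to $a$, then there is a stable matching in which $j$ strictly prefers \emph{its partner} to $a$, not the other way around. More importantly, the converse you are trying to establish is false in general: take $\calN_+(a)=\emptyset$ (all post-interview scores negative), so the check fails vacuously, yet $a$ can still be interim stable whenever $U_{a,\phi(a)}^A$ happens to exceed the $(d{+}1)$-th highest pre-interview score. So drop the converse entirely; the forward argument you gave is already a complete proof.
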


\begin{proof}  
Suppose there exists $j\in \calN_+(a)$ that is available to $a$. Fix any stable matching $\Phi$ on $H$. 
It follows that
\[
U_{a,\phi(a)}^A \overset{(a)}{\ge} U_{a,j}^A \overset{(b)}{\ge} U_{a,j}^B \overset{(c)}{>}\max_{j'\not\in \calN(a)}U_{a,j'}^B \,,
\]
where $(a)$ holds because $ \phi(a) \succeq_{a} j$, given that $j$ is available to $a$;  
$(b)$ holds because $A_{a,j}\ge 0$ and $U_{a,j}^A  =  U_{a,j}^B  + A_{a,j}$, in view of $j\in \calN_+(a)$; 
$(c)$ holds because $j_2$ belongs to the top $d$ preferred partners of $a$ and then $a$'s pre-interview utility of $j_2$ is strictly higher than the pre-interview utility of any other partner outside the top $d$ partners, by Assumption~\ref{assump:general}. Hence, if $a$ is stable in $H$ with non-negative post-interview score, $a$ must also be interim stable. 
\end{proof}

The following lemma provides a more relaxed result compared to \prettyref{lmm:positive_available} for determining whether an agent is interim stable in a given stable matching on $H$. 

\begin{lemma}\label{lmm:positive_available_addition}
    For a stable matching $\Phi$ on $H$, if $a \in \Short$ is matched with one of its top $|\calN_+(a)|$ partners in $\calN(a)$ on $H$, then $a$ must be interim stable on $\Phi$. 
\end{lemma}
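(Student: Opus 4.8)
The plan is to show that $a$ being matched to one of its top $|\calN_+(a)|$ partners in $\calN(a)$ on $H$ already forces $a$'s matched utility to exceed its pre-interview utility for every firm outside $\calN(a)$, which rules out any interim blocking pair involving $a$. The key point is a counting argument: the set $\calN_+(a)$ consists of exactly those neighbors $j$ of $a$ with $A_{a,j} \ge 0$, hence those with $U_{a,j}^H = U_{a,j}^B + A_{a,j} \ge U_{a,j}^B$. Since $a$ signaled its top $d$ firms by pre-interview utility, by Assumption~\ref{assump:general} we have $\min_{j \in \calN(a)} U_{a,j}^B > \max_{j' \notin \calN(a)} U_{a,j'}^B$.

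First I would let $\Phi$ be the given stable matching on $H$ and suppose $a$ is matched with $\phi(a) = j^*$, where $j^*$ is among the top $|\calN_+(a)|$ of $a$'s neighbors in $\calN(a)$ under $a$'s (post-interview, i.e., interim) preference over $\calN(a)$. Because the post-interview preference over $\calN(a)$ ranks firms by $U_{a,j}^H = U_{a,j}^A$, and $|\calN_+(a)|$ is the number of neighbors with non-negative post-interview score, the top $|\calN_+(a)|$ neighbors of $a$ are precisely the firms $j \in \calN(a)$ with $A_{a,j} \ge 0$ — that is, $\calN_+(a)$ itself. (This uses that a non-negative post-interview score weakly increases interim utility relative to pre-interview utility, while a negative one strictly decreases it, so the $|\calN_+(a)|$ highest-ranked neighbors under interim preferences are exactly those in $\calN_+(a)$; any ties are resolved by continuity of $\DistB$ and can be handled via the convention that there are no exact ties.) Hence $j^* \in \calN_+(a)$, so $A_{a,j^*} \ge 0$.

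Next I would chain the inequalities: $U_{a,\phi(a)}^H = U_{a,j^*}^A = U_{a,j^*}^B + A_{a,j^*} \ge U_{a,j^*}^B > \max_{j' \notin \calN(a)} U_{a,j'}^B = \max_{j' \notin \calN(a)} U_{a,j'}^H$, where the middle inequality uses $A_{a,j^*} \ge 0$ and the strict inequality uses that $j^*$ is in $a$'s top-$d$ pre-interview list. This shows $a$ strictly prefers its match to every firm it did not interview with. Combined with the fact that $\Phi$ is a stable matching on $H$ — so there is no blocking pair among interviewed pairs — no firm (interviewed or not) can form an interim blocking pair with $a$: interviewed firms are ruled out by stability of $\Phi$, and non-interviewed firms are ruled out because $a$ strictly prefers $\phi(a)$ to all of them. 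Therefore $a$ is interim stable on $\Phi$.

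I do not expect a serious obstacle here; the only subtlety is the claim that the top $|\calN_+(a)|$ neighbors under interim preferences coincide with $\calN_+(a)$. The argument is that if $j_1 \in \calN_+(a)$ and $j_2 \in \calN(a) \setminus \calN_+(a)$, then $U_{a,j_1}^A = U_{a,j_1}^B + A_{a,j_1}$ with $A_{a,j_1}\ge 0$ and $U_{a,j_2}^A = U_{a,j_2}^B + A_{a,j_2}$ with $A_{a,j_2} < 0$; this is not automatically enough for $j_1 \succ_a j_2$ since the $B$-scores differ. However, the statement only requires that $\phi(a)$ is among the top $|\calN_+(a)|$ \emph{and} that $\phi(a) \in \calN_+(a)$ need not literally hold — rereading the hypothesis, "matched with one of its top $|\calN_+(a)|$ partners in $\calN(a)$" means $a$ prefers $\phi(a)$ to all but at most $|\calN_+(a)|-1$ neighbors. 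The cleanest route, which I would adopt, is: since at least $|\calN_+(a)|$ neighbors lie in $\calN_+(a)$ and $a$ prefers $\phi(a)$ to all but $|\calN_+(a)|-1$ of its neighbors, by pigeonhole $a$ weakly prefers $\phi(a)$ to at least one firm in $\calN_+(a)$, say $j_0$, so $U_{a,\phi(a)}^H \ge U_{a,j_0}^A \ge U_{a,j_0}^B > \max_{j'\notin \calN(a)} U_{a,j'}^B$, and the conclusion follows exactly as above. This pigeonhole step is where I would focus the write-up's care.
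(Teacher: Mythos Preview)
Your proposal is correct, and the pigeonhole argument you arrive at in the final paragraph is exactly the paper's approach: the paper takes $j_2$ to be the $m$th-ranked firm in $\calN_+(a)$ (with $m=|\calN_+(a)|$), observes that $j_2$ has rank at least $m$ in $\calN(a)$ so $U_{a,\phi(a)}^H \ge U_{a,j_2}^H$, and then uses $A_{a,j_2}\ge 0$ and the top-$d$ signaling property to chain the inequalities. Your initial attempt to argue that the top $|\calN_+(a)|$ neighbors under interim preferences coincide with $\calN_+(a)$ is indeed false in general (as you note), but your self-correction via pigeonhole is precisely the right fix and is all that is needed.
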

\begin{proof}
     For ease of notation, let $m= |\calN_+(a)|$. 
     Let $j_1$ denote the firm that ranks on the $m$th place in $\calN(a)$, and $j_2$ denote the firm that ranks on the $m$th place in $\calN_+(a)$, with respect to $a$'s post-interview preferences. Then, either $j_1 = j_2$, or $j_2$ ranks at a lower place compared to $j_1$ on $\calN(a)$, and hence $U_{a,j_1}^H > U_{a,j_2}^H$. Then, if $a \in \Short$ is matched with one of its top $m$ partners in $\calN(a)$ on $H$, 
     \[
     U_{a,j_1}^H > U_{a,j_2}^H  \overset{(a)}{=}  U_{a,j_2}^A \overset{(b)}{\ge} U_{a,j_2}^B  \overset{(c)}{>} \max_{j'\not\in \calN(a)}U_{a,j'}^B  \,,
     \]
     where $(a)$ holds because $a$ interviewed with $j_2$ on $H'$; 
     $(b)$ holds because $A_{a,j_2}\ge 0$, in view of $j\in \calN_+(a)$; 
     $(c)$ holds because $j_2$ belongs to the top $d$ preferred partners of $a$ and then $a$'s pre-interview utility of $j_2$ is strictly higher than the  pre-interview utility of any other partner outside the top $d$ partners, by Assumption~\ref{assump:general}. Hence, $a$ must be interim stable on $\Phi$. 
\end{proof}

Then, for any $a\in \Short$, we have
\begin{align}
     & \prob{\forall \text{ $j\in \calN_+(a)$, $j$ is unavailable to $a$ on $H$}}  \nonumber \\
     & \overset{(a)}{\le}   \prob{\forall \text{ $j\in \calN_+(a)$, $j$ is unavailable to $a$ on $H$} \, \bigg | \, |\calN_+(a)| > \frac{1.1}{4} d p  }  +  \prob{ |\calN_+(a)| \le  \frac{1.1}{4} d  p  }
     \nonumber \\
     &  \overset{(b)}{\le}\prob{\forall \text{ $j\in \calN_+(a)$, $j$ is unavailable to $a$ on $H$} \, \bigg | \, |\calN_+(a)| >  \frac{1.1}{4} d p  } +  \exp \left(- \frac{8.41}{32} d  p \right) \,, \label{eq:calN_+_single_upper}
\end{align}
where $(a)$ holds because $\prob{ |\calN_+(a)| >  \frac{1.1}{4} d  p  } \le 1$; 
$(b)$ holds because for any $a\in\Short$, $\{A_{a,j}\}_{j\in \calN(a)}$ are mutually independent under Assumption~\ref{assump:general}, and then by \prettyref{eq:calC_+}, we have $ |\calN_+(a)|\sim \Binom\left(d, p \right)$, and  by applying Chernoff bound \prettyref{eq:chernoff_binom_left} in \prettyref{lmm:chernoff}, 
\begin{align}
     \prob {|\calN_+(a)| \le \frac{1}{4} d p } \le \exp\left(- \frac{8.41}{32} d  p \right) \,. \label{eq:calN_+_1_single}
\end{align}

\subsection{Proofs of almost interim stability results}\label{sec:almost_interim_proof}

\subsubsection{Proof of \prettyref{thm:single_signal_sparse}} \label{sec:single_signal_sparse}
Here, we prove a more general result that with high probability, every stable matching on $H$ is almost interim stable, if any of the following conditions hold,
\begin{enumerate}[label=(T\arabic*)]
    \item \label{T:1} $\omega(1) \le d \le O\left(\polylog n\right)$, $n_{\Short} \le \left(1+d^{-\lambda} \right)n_\Long$ and $p = \omega \left( \frac{1}{\left(1\wedge \lambda \right)\log d} \right)$ for any  $\lambda \ge \omega\left( \frac{1}{\log d}\right)$;
    \item \label{T:2}  $\omega(1) \le d \le o\left(\log^2 n\right)$, $n_{\Short} \le \left(1+n^{-\lambda}\right)n_\Long$ and $p = \omega \left( \frac{1}{\sqrt{d}} \right)$ for any $\lambda \ge \Omega(1)$. 
    \item \label{T:3} $\omega(1) \le d \le O\left(\polylog n\right)$, $n_{\Short} \le \left(1- \Omega(1)\right)n_\Long$ and $p = \omega \left( \frac{1}{d}  \right)$.
\end{enumerate}
Note that for any $n_\Short$, if $n_{\Short} = \left(1+o(1)\right) n_{\Long}$, it is equivalent as $n_{\Short} = \left(1+d^{-\lambda} \right)n_\Long$ for some $\lambda\ge \omega\left(1/\log d\right)$. Then,  \prettyref{thm:single_signal_sparse} follows from \ref{T:1}. 

We are left to prove \ref{T:1}-\ref{T:3}. 
\begin{itemize}
    \item  Suppose $n_{\Short} \le \left(1+ d^{-\lambda}\right)n_{\Long}$ and $p = \omega\left(1/ \left(\left(\lambda\wedge 1\right)\log d\right)\right)$ for any $\lambda \ge \omega\left(1/\log d\right)$. By \prettyref{cor:calN_a_omega_1}, 
    \begin{align}
        \prob{\forall \text{ $j\in \calN_+(a)$, $j$ is unavailable to $a$ on $H$} \, \bigg | \, |\calN_+(a)| >  \frac{1}{4} d p  } = o(1) \,,\label{eq:interim_H}
    \end{align}
    in view of $|\calN_+(a)| = \frac{1}{4} d p  \ge \omega\left(\frac{1}{\nu} \right)$ by \prettyref{eq:nu} and $p = \omega\left(\frac{1}{\left(\lambda\wedge 1\right)\log d}\right)$. 
    For any $a\in\Short$, by \prettyref{eq:calN_+_single_upper} and \prettyref{lmm:positive_available}, given that $d=\omega(1)$, 
    \begin{align*}
        &  \prob{\text{$a$ is interim stable on every stable matching on $H$}} \nonumber \\
        & \ge 1- \prob{\forall \text{ $j\in \calN_+(a)$, $j$ is unavailable to $a$ on $H$}} = 1- o(1) \,. 
    \end{align*}
    By Markov's inequality, we can show that almost all but a vanishingly small fraction  of applicants in $\Short$ are interim stable in any stable matching on $H$, with high probability. Hence, \ref{T:1} follows. 
    \item Suppose $\omega(1)\le d\le o\left(\log^2 n\right)$, $n_{\Short} \le \left(1+ n^{-\lambda}\right)n_{\Long}$ and $p = \omega(1/\sqrt{d})$ for any $\lambda \ge \Omega(1)$. By applying Chernoff bound \prettyref{eq:chernoff_binom_left} in \prettyref{lmm:chernoff}, for every $a\in\Short$, 
    \[
     \prob{|\calN_+(a)| \le  \frac{1}{2} p d } \ge 1-o(1)\,. 
    \]
    By \prettyref{prop:unmatched}, the applicants' average rank of firms in all stable matchings on $H$ is $\Theta (\sqrt{d})$. By Markov's inequality, almost all but a vanishingly small fraction  of applicants are matched with their top $\frac{1}{2} p d $ partners in every stable matching on $H$ with high probability. By \prettyref{lmm:positive_available_addition}, almost all but a vanishingly small fraction  of applicants are interim stable on every stable matching on $H$ with high probability. Hence, our desired result follows.
    \item Suppose $\omega(1) \le d \le O\left(\polylog n\right)$, $n_{\Short} \le \left(1- \Omega(1)\right)n_\Long$ and $p = \omega \left( \frac{1}{d}  \right)$. Then, for every $a\in\Short$, we have $\prob{|\calN_+(a)| = \omega(1)} = 1-o(1)$.
    By applying \prettyref{prop:remove_gamma_d_log_n}, for every $a\in\Short$, we obtain
    \begin{align*}
        \prob{\forall\text{ $j\in \calN'_+(a)$, $j$ is unavailable to $a$ on $H$}} = o(1) \,. 
    \end{align*}
    By \prettyref{lmm:positive_available_addition}, almost all but a vanishingly small fraction  of applicants are interim stable on every stable matching on $H$ with high probability. Hence, our desired result follows. 
\end{itemize}

    

\subsubsection{Proof of \prettyref{rmk:single_signal_sparse_identify}}\label{sec:single_signal_sparse_identify}

Let $\Short'$ denote the set of applicants $a\in\Short$ that there does not exist any $j\in\calN_+(a)$ such that $j$ is available to $a$ on $H$. 
By \prettyref{eq:interim_H} and Markov's inequality, $|\Short'|=o\left(n_\Short\right)$ with high probability. 
Let $H'$ denote the vertex-induced subgraph of $H$ on $\left(\Short\backslash\Short'\right)\cup \Long$. 
Analogous to \prettyref{lmm:local_available}, we claim that if $j\in \Long$ is available to $a$ on $H$, then $j$ is also available to $a$ on $H'$. Then, for any $a\in\Short\backslash \Short'$, there must exist $j\in\calN_+(a)$ such that $j$ is available to $a$ on $H'$. Hence, every $a\in\Short\backslash \Short'$ is interim stable on every stable stable matching on $H'$ with high probability. Therefore, every stable matching on $H'$ is perfect interim stable on $H'$ with high probability.

We are left to prove our claim. Let $ \Phi_H^{\Long}$ and $ \Phi_{H'}^{\Long}$ denote the firm-optimal stable matching on $H$ and $H'$ respectively. By definition of availability,  if $j\in \Long$ is available to $a$ on $H$, $j$ weakly prefers $a$ to its matches in all stable matchings on $H$, and hence $j$ weakly prefers $a$ to $ \phi_H^{\Long}(j)$. 
Given that $H'$ is a vertex-induced subgraph of $H$ on $\left(\Short\backslash \Short'\right)\cup \Long$.  By \prettyref{lmm:truncation}, $j$ weakly prefers $\phi_{H}^\Long (j)$ to $\phi_{H'}^\Long (j)$. Since $j\in\Long$, $j$ weakly prefers $\phi_{H'}^{\Long}(j)$ to $\phi_{H'}(j)$ for any stable matching $\Phi$ on $H'$. Then, for any stable matching $\Phi$ on $H'$, $j$ weakly prefers $a$ to its current match, i.e., $a\succ_j \phi(j)$ or $a = \phi(j)$. Hence, $j$ is available to $a$ on $H'$.

\subsubsection{Proof of \prettyref{thm:single_firm_signal_sparse}}\label{sec:single_firm_signal_sparse}

Before proving the main result, we first introduce the following lemma. 
\begin{lemma}\label{lmm:beta}
  For any distribution $\mathbb{D}$, let $F_{\mathbb{D}}\left(X\right)$ denote the CDF of $X$ on $\mathbb{D}$, where $X$ is the maximum sample from $\{X_i\}_{i=1}^\kappa$ for some $\kappa \in \naturals$. Then, we have $F_{\mathbb{D}}\left(X\right) \sim \mathrm{Beta}\left(\kappa,1\right)$.
\end{lemma}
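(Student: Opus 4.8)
\textbf{Proof proposal for Lemma~\ref{lmm:beta}.}

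The plan is to use the classical fact that applying a continuous cumulative distribution function to its own random variable yields a uniform random variable on $[0,1]$ (the probability integral transform), together with the standard identification of the maximum of $\kappa$ i.i.d.\ uniforms with a $\mathrm{Beta}(\kappa,1)$ law. First I would set $U_i \triangleq F_{\mathbb{D}}(X_i)$ for each $i \in \{1,\ldots,\kappa\}$. Since the $X_i$ are i.i.d.\ from $\mathbb{D}$ and $F_{\mathbb{D}}$ is the CDF of $\mathbb{D}$, each $U_i$ is uniformly distributed on $[0,1]$, and the $U_i$ are mutually independent because they are measurable functions of independent random variables. (If $\mathbb{D}$ has atoms one must be slightly careful, but in the application $\mathbb{D}$ is one of the continuous score distributions, so $F_{\mathbb{D}}$ is continuous and the probability integral transform applies verbatim; I would state this continuity hypothesis explicitly.)

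Next I would observe that $F_{\mathbb{D}}$ is nondecreasing, so the maximum sample $X = \max_{1 \le i \le \kappa} X_i$ satisfies $F_{\mathbb{D}}(X) = \max_{1 \le i \le \kappa} F_{\mathbb{D}}(X_i) = \max_{1 \le i \le \kappa} U_i$. It therefore suffices to compute the law of $M \triangleq \max_{1 \le i \le \kappa} U_i$. For any $t \in [0,1]$,
\begin{align*}
\mathbb{P}\{M \le t\} = \mathbb{P}\{U_1 \le t, \ldots, U_\kappa \le t\} = \prod_{i=1}^{\kappa} \mathbb{P}\{U_i \le t\} = t^\kappa,
\end{align*}
using independence and uniformity. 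The density of $M$ on $[0,1]$ is thus $\frac{d}{dt} t^\kappa = \kappa t^{\kappa-1}$, which is exactly the density of $\mathrm{Beta}(\kappa,1)$. Hence $F_{\mathbb{D}}(X) = M \sim \mathrm{Beta}(\kappa,1)$, as claimed.

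There is essentially no hard step here; the lemma is a bookkeeping fact. The only point that warrants a sentence of care is the probability integral transform: it requires that $F_{\mathbb{D}}$ be continuous for $F_{\mathbb{D}}(X_i)$ to be exactly $\mathrm{Uniform}[0,1]$, which holds under Assumption~\ref{assump:general} for the pre-interview distribution $\DistB$ and, more generally, whenever the relevant distribution is continuous; I would simply record this assumption in the statement or at the start of the proof so the identity $\mathbb{P}\{U_i \le t\} = t$ is unambiguous.
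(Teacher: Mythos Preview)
Your proof is correct and matches the paper's approach essentially line for line: define $U_i = F_{\mathbb{D}}(X_i)$, invoke the probability integral transform to get i.i.d.\ uniforms, identify $F_{\mathbb{D}}(X)$ with $\max_i U_i$, and recognize the maximum of $\kappa$ i.i.d.\ uniforms as $\mathrm{Beta}(\kappa,1)$. You include a bit more detail (the explicit CDF computation and the continuity caveat for the integral transform), but the argument is the same.
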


Suppose $ n_\Short = \delta n_\Long$ for some $\delta \ge 1+ \Omega(1)$. 
By the Rural Hospital Theorem~\cite{mcvitie1970stable}, the unmatched applicants remains unmatched in all stable matchings on $H$. Let $\calU \subset \Short$ denote the set of unmatched applicants on $H$. 
Given that $n_\Short =\delta n_\Long$, we have $|\calU| \ge n_{\Short}-n_{\Long} = \left(\delta -1\right)n_{\Long}$. 
For any $i \in \Short \cup \Long$, let $\calN(i)$ denote the set of $a$'s neighbors on $H$. By applying Chernoff bounds in \prettyref{lmm:chernoff}, given that $d=\omega(1)$ and $|\calN(j)| \sim \Binom\left(n_{\Short}, \frac{d}{n_\Long}\right)$, we have
\begin{align}
    \prob{ \delta d/2 < |\calN(j)| < 2\delta d } = 1- o(1) \,. \label{eq:calN_a_remark_1}
\end{align}
For every $j\in\Long$, let $a_j$ denote $j$'s most preferred applicant in $\calN(j)$ based on the post-interview utilities, and $j_a'$ denote $a$'s most preferred unmatched applicant in $\calU \backslash \calN(j)$ based on the pre-interview utilities. Then, by \prettyref{eq:calN_a_remark_1} and $|\calU| = \left(\delta-1\right) n_{\Long}$, for any $j \in \Long$, with high probability, $|\calN(j)| < \kappa_1 \triangleq \ceil{2\delta d}$ and $|\calU \backslash \calN(j)| > \kappa_2 \triangleq  \left(\delta -1\right) n_\Long - \ceil{2\delta d}$.


Let $X$ denote the maximum sample from $\{X_i\}_{i=1}^{\kappa_1-1}$, which are $\iid$ sampled from $\DistA * \DistB$. Let $F_{\DistA * \DistB}\left(X\right)$ denote the CDF of $X$ on $\DistA * \DistB$. By \prettyref{lmm:beta}, $F_{\DistA * \DistB}\left(X\right) \sim \mathrm{Beta}\left(\kappa_1-1,1\right)$, and then $F_{\DistA * \DistB}\left(X\right) \le 1 - \frac{1}{\kappa_1+1}$  with high probability, given that $\kappa_1 = \omega(1)$. 
For any $j\in\Long$, we have $|\calN(j)|< \kappa_1$ with high probability, and then $X$ has first-order stochastic dominance over $U_{a,j_a}^A$ with high probability, i.e.,
\begin{align}
 \prob{X \overset{\mathrm{s.t.}}{\succeq}   U_{a,j_a}^A } = 1-o(1)\,. \label{eq:X_dominate}
\end{align}

Let $Y$ denote the maximum sample from $\{Y_i\}_{i=1}^{\kappa_2+1}$ which are $\iid$ sampled from $\DistB$. Let $F_{\DistB}\left(Y\right)$ denote the CDF of $Y$ on $\DistB$.  By \prettyref{lmm:beta}, $F_{\DistB}\left(Y\right) \sim \mathrm{Beta}\left(\kappa_2+1,1\right)$, and then $F_{\DistB}\left(Y\right) \ge 1- \frac{1-o(1)}{\kappa_2+2} \ge 1 - \frac{1}{\kappa_2+1}$ with high probability, given that $\kappa_2=\omega(1)$. 
For any $j\in\Long$, we have $|\calU \backslash \calN(j)| > \kappa_2 $ with high probability, and $\{U_{j,a}^B\}_{j\in \calU \backslash \calN(j)}$ are $\iid$ sampled from $\DistB$. It follows that for any $j\in\Long$, with high probability, $U_{a,j_a'}^B$ has first-order stochastic dominance over $Y$, i.e.,
\begin{align}
 \prob{U_{a,j_a'}^B \overset{\mathrm{s.t.}}{\succeq} Y } = 1-o(1)\,. \label{eq:Y_dominate}
\end{align}


Recall that we say we say that $\DistB$ outweighs $\DistA$ in the $(\kappa_1,\kappa_2)$ range if the $\kappa_1$-th to $(\kappa_1+1)$-th quantile of the convolution distribution $\DistA * \DistB$ is strictly smaller than the $\kappa_2$-th to $(\kappa_2+1)$-th quantile of $\DistB$.
Since $F_{\DistA * \DistB}\left(X\right) \le 1- \frac{1}{\kappa_1+1 }$ and $F_{\DistB}\left(Y\right) \ge 1- \frac{1}{\kappa_2+1}$ with high probability, we have
\begin{align}
    \prob{Y>X} \ge 1-o(1) \,. \label{eq:Y_X}
\end{align}
\begin{itemize}
    \item If $\DistB$ is any continuous distribution and $\DistA$ is a degenerate distribution at zero ($\DistA=\boldsymbol{\delta}_0$), then $\DistA * \DistB = \DistB$ and $1-\frac{1}{\kappa_1 + 1} < 1- \frac{1}{\kappa_2 + 1}$ given $d = o(n)$, where the condition of $\DistB$ outweighing $\DistA$ in the $(\kappa_1,\kappa_2)$ range is trivially satisfied. 
    
    \item If $\DistB$ is a normal distribution and $\DistA$ is any bounded distribution with finite support, by \prettyref{lmm:tight_gaussian}, the $\kappa_1$-th to $\left(\kappa_1+1\right)$-th quantile of $\DistA * \DistB$ is at most $  \sqrt{2 \log \kappa_1} \left(1+o(1)\right) $, and the $\kappa_2$-th to $\left(\kappa_2+1\right)$-th quantile of $\DistB$ is at least $\sqrt{2 \log \kappa_2}\left(1+o(1)\right) $.  we have 
\[
 \sqrt{2 \log \kappa_1} \left(1+o(1)\right) <  \sqrt{2 \log \kappa_2} \left(1+o(1)\right) \,,
\]
where the second inequality holds because $ \alpha<1$ is some constant and $n_\Short \le C n_\Long$. Hence, $\DistB$ outweighs $\DistA$ in the $(\kappa_1,\kappa_2)$ range. 
\end{itemize}
Given that $Y > X$ with high probability, by \prettyref{eq:X_dominate}, \prettyref{eq:Y_dominate} and \prettyref{eq:Y_X}, we get
\[
\prob{U_{a,j_a'}^B > U_{a,j_a}^A} = 1-o(1) \,. 
\]
Then, for any $j\in\Long$, with high probability, there exists some $j_a' \in \calU$ such that $a$ and $j_a'$ forms an interim blocking pair.

Since the preferences are independently generated across different pairs of agents, and the number of unmatched applicants $|\calU| = \left(\delta-1\right)n_\Long \ge \Omega(n)$, there does not exist a vanishingly small fraction of agents such that the stable matching on $H$ becomes perfect interim stable when these agents are excluded. In other words, even if we remove a vanishingly small fraction of agents from the interview graph $H$, the resulting stable matching on the remaining graph will still have a significant number of interim blocking pairs with high probability. Consequently, no stable matching is almost interim stable with high probability when the market is strongly imbalanced, and the firm-signaling mechanism is used to construct the interview graph $H$.

We are left to prove \prettyref{lmm:beta}. 
\begin{proof}[Proof of \prettyref{lmm:beta}]
    For every $1\le i\le \kappa$, let $U_i = F_{\mathbb{D}}(X_i)$. Then, $U_1, \ldots, U_\kappa$ are $\iid$ uniform random variables on $\calU \left[0, 1\right]$.
    Let $U = \max\{U_1, \ldots, U_\kappa\}$. Then, $U = F_{\mathbb{D}}(X)$. 
    Since $U$ is the maximum of $\kappa$ $\iid$ uniform random variables, it is the $\kappa$-th order statistic from this sample. Then, $U$ follows a beta distribution with parameters $\kappa $ and $1$, i.e., $U \sim \mathrm{Beta}(\kappa, 1)$. 
\end{proof}



\subsubsection{Proof of \prettyref{cor:single_both_signal_sparse}} \label{sec:single_both_signal_sparse}

For every pair of applicant and firm, their pre-interview utility equals their post-interview utility. Let $H_1$ denote the interview graph constructed by the applicant-signaling mechanism, and let $H_2$ denote the interview graph constructed by the firm-signaling mechanism. Then, $H$ is the union graph of $H_1$ and $H_2$. For any $i\in\Short\cup\Long$, let $\calN_1(i)$ and $\calN_2(i)$ denote the set of neighbors of $i \in \Short\cup \Long$ on $H_1$ and $H_2$, respectively. Then, every applicant $a\in\Short$ strictly prefers $\calN_1(a)$ to $\calN_2(a)\backslash \calN_1(a)$, and every firm $j\in \Long$ strictly prefers $\Short_2(j)$ to $\Short_1(j)\backslash \Short_2(j)$.

Fix a stable matching $\Phi$ on $H$. 
Let $\Short_1 \subset \Short$ denote the set of applicants $a$ with $\phi(a) \in \calN_1(a)$, let $\Short_2 \subset \Short$ denote the set of applicants $a$ with $\phi(a) \in \calN_2(a)$, and let $\Short_U \subset \Short$ denote the set of applicants that are unmatched on $\Phi$. Then, we get $\Short_1\cup\Short_2 \cup \Short_U =\Short$. 
Similarly, let $\Long_1 \subset \Long$ denote the set of firms $j$ with $\phi(j) \in \calN_1(j)$, let $\Long_2 \subset \Long$ denote the set of firms $j$ with $\phi(j) \in \calN_2(j)$, and let $\Long_U \subset \Long$ denote the set of firms that are unmatched on $\Phi$. Then, we get $\Long_1\cup\Long_2 \cup \Long_U =\Long$. 

\begin{lemma}\label{lmm:both_applicant}
    Suppose $\Phi$ denote the applicant-optimal stable matching on $H$. Then, we have
    \begin{align*}
        \prob{|\Short_2\cup\Short_U|,|\Long_1\cup\Long_U| \ge \Omega(n) } \ge 1- o(1) \,. 
    \end{align*}
\end{lemma}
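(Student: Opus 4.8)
The plan is to analyze the applicant-proposing DA on $H = H_1 \cup H_2$ by exploiting the fact that, when post-interview scores are absent ($\DistA = \boldsymbol\delta_0$), every applicant strictly prefers the firms it signaled ($\calN_1(a)$) to the firms that signaled it ($\calN_2(a) \setminus \calN_1(a)$), and symmetrically every firm strictly prefers applicants it signaled ($\Short_2(j)$) to those that signaled it. So DA effectively runs in two phases: in the first phase every applicant proposes only along $H_1$-edges, and the resulting matching is exactly the applicant-optimal stable matching of the one-sided $d$-regular graph $H_1$; in the second phase, the applicants left unmatched by phase one (a constant fraction, by sparsity) propose along $H_2$-edges, triggering rejection chains in which firms abandon $H_1$-partners for $H_2$-partners.

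**Key steps.**
First I would establish that after phase one, the number of unmatched applicants is $\Omega(n)$: since $H_1$ is a random one-sided $d$-regular bipartite graph with $n_\Short = n_\Long$ and $d = o(\log n)$, Corollary~\ref{cor:unmatched_constant} (with $\delta = 1$, handled as a limiting case, or more directly the balanced-market analysis underlying Proposition~\ref{prop:unmatched}) gives that a constant fraction $\exp(-\Theta(\sqrt d))$ — in any case $\Omega(n)$ when $d = o(\log n)$ — of applicants are unmatched in the $\Short_1$-optimal stable matching on $H_1$; alternatively one can invoke Remark~\ref{rmk:insuffcient_signal}/Corollary~\ref{cor:unmatched_constant} machinery, since $d \le o(\log n) \le (1-\epsilon)\log n_\Short$. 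Call this set $\Short_U^{(1)}$; note $\Short_U^{(1)} \subseteq \Short_U \cup \Short_2$ because any such applicant, if matched at all in $\Phi$, is matched along an $H_2$-edge. Hence $|\Short_2 \cup \Short_U| \ge |\Short_U^{(1)}| = \Omega(n)$ with high probability, which is the first half of the claim.

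Second, for the firm side I would argue $|\Long_1 \cup \Long_U| = \Omega(n)$ by a counting/symmetry argument. Each applicant in $\Short_2$ is matched to a firm that signaled it, i.e.\ to a firm $j$ with $a \in \Short_2(j)$, so $|\Long_2| \le $ (number of firms that get matched ``downward'' along their own signals). Since $|\Short_2| \le n_\Short - |\Short_U^{(1)}| \le (1 - \Omega(1)) n_\Short$, and every firm in $\Long_2$ consumes a distinct applicant from $\Short_2$, we get $|\Long_2| \le |\Short_2| \le (1-\Omega(1)) n_\Long$, hence $|\Long_1 \cup \Long_U| = n_\Long - |\Long_2| \ge \Omega(n)$. (If one wants the cleaner bound, note $|\Short_2| + |\Short_U| \le n_\Short$ forces $|\Short_2| \le n_\Short - |\Short_U^{(1)}| - $ (applicants of $\Short_U^{(1)}$ that end unmatched), but the crude $|\Short_2| \le n_\Short - |\Short_U^{(1)}|$ already suffices since $|\Short_U^{(1)}| = \Omega(n)$.) Finally I would take a union bound over the two high-probability events.

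**Main obstacle.**
The delicate point is justifying rigorously that phase one of DA on $H$ reproduces the $\Short_1$-optimal stable matching on $H_1$ — i.e.\ that the $H_2$-edges are genuinely irrelevant until all $H_1$-proposals are exhausted. This requires care because a firm might hold an $H_2$-proposal and later receive an $H_1$-proposal; but by the preference structure ($H_2$-applicants are dominated by $H_1$-applicants at every firm is \emph{false} — it's the reverse: firms prefer their \emph{signalees}), one must instead track that an applicant never proposes along an $H_2$-edge while it still has an unexhausted $H_1$-edge. The standard order-independence of DA lets us schedule all $H_1$-proposals first, and since no applicant in $\Short \setminus \Short_U^{(1)}$ ever becomes unmatched again, their $H_1$-partners are locked; the unmatched set after this scheduling is exactly $\Short_U^{(1)}$. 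Making this scheduling argument airtight, and confirming that $|\Short_U^{(1)}| = \Omega(n)$ holds with high probability in the \emph{balanced} $d$-regular model for all $d = o(\log n)$ (not merely $d = \omega(1)$ but bounded below), is where most of the real work lies; the rest is bookkeeping and a union bound.
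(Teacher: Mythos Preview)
Your phase-decomposition of DA is correct and matches the paper's opening observation. The fatal gap is your claim that $|\Short_U^{(1)}| = \Omega(n)$. In the setting of \prettyref{cor:single_both_signal_sparse} we have $\omega(1) \le d \le o(\log n)$ and $n_\Short = n_\Long$, so \prettyref{prop:unmatched} gives that the fraction of applicants left unmatched after phase one (DA on $H_1$ alone) is $\exp(-\Theta(\sqrt d))$, which is $o(1)$ since $d = \omega(1)$. Thus $|\Short_U^{(1)}| = o(n)$, and the containment $\Short_U^{(1)} \subseteq \Short_2 \cup \Short_U$ tells you nothing useful. Your second step inherits this error (and the inequality $|\Short_2| \le n_\Short - |\Short_U^{(1)}|$ is in any case unjustified: applicants displaced during phase two can migrate from $\Short_1$ into $\Short_2$, so there is no clean inclusion of $\Short_2$ in the phase-one matched set).

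The paper's proof is genuinely more delicate. After recording the weak bound $|\Short_2 \cup \Short_U| \ge \exp(-\sqrt d/2)\, n_\Short$ from phase one, it sets $\theta_1 = |\Short_1 \cup \Short_U|/n_\Short$ and $\theta_2 = |\Short_2 \cup \Short_U|/n_\Short$ and sandwiches $|\Short_U|$ between two estimates. For the lower bound, the restriction $\Phi_1$ of $\Phi$ to $H_1' = H_1[(\Short_1\cup\Short_U)\cup(\Long_1\cup\Long_U)]$ is stable, and comparing with the one-sided $d$-regular graph $H_1'' = H_1[(\Short_1\cup\Short_U)\cup\Long]$ via \prettyref{lmm:truncation} and \prettyref{cor:unmatched_constant} yields $|\Short_U| \ge \exp\bigl(\frac{(1+\epsilon)\theta_1 d}{\log(1-\theta_1)}\bigr)\theta_1 n_\Short$. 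For the upper bound, the restriction $\Phi_2$ to $H_2' = H_2[(\Short_2\cup\Short_U)\cup\Long]$ is applicant-optimal stable, and \prettyref{prop:remove_gamma_d_log_n} plus Markov give $|\Short_U| \le \exp\bigl((1-\epsilon)(\frac{\theta_2}{2\log(1-\theta_2)} \vee (-\tfrac18)) d\bigr)\theta_2 n_\Short$. The compatibility of these two bounds forces $\theta_1 \le 1-\Omega(1)$ and $\theta_2 \ge \Omega(1)$; combined with $\theta_1 \ge \Omega(1)$ (from applicant-optimality) this gives both halves of the lemma. The missing idea in your proposal is precisely this two-sided squeeze on $|\Short_U|$, which is what converts the $o(n)$ seed from phase one into an $\Omega(n)$ conclusion.
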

By symmetry, if $\Phi$ is the firm-optimal stable matching on $H$, we can show that $|\Short_2\cup\Short_U|,|\Long_1\cup\Long_U| \ge \Omega(n) $ with high probability. Hence, it follows that for every stable matching on $H$, $|\Short_2\cup\Short_U|,|\Long_1\cup\Long_U| \ge \Omega(n)$ with high probability. Similar as the proof of \prettyref{thm:single_firm_signal_sparse}, we can show that for every $a\in\Short_2$, there exists some $j\in \Long_1$ such that $a$ and $j$ forms an interim blocking pair. Hence, no stable matching on $H$ is almost interim stable with high probability. 

\begin{proof}[Proof of \prettyref{lmm:both_applicant}]
    Note that if we run applicant-proposing DA on $H$, for every applicant $a\in\Short$, $a$ proposes to $j\in\calN_2(a)$ if and only if $a$ has been rejected by all $j\in\calN_1(a)$. Since the proposal sequence (i.e., the order in which applicants propose) does not affect the final result of the applicant-proposing DA,  we can view that running applicant-proposing DA on $H$ is equivalent as first running applicant-proposing DA on $H_1$ and then the unmatched applicants continue proposing until every applicant has either been accepted by a firm or has exhausted their preference list without acceptance. Then, it follows that $|\Short_2\cup\Short_U|$ is at least the number of unmatched applicants on the applicant-optimal stable matching on $H_1$. Together with \prettyref{eq:unmatched_fraction} in \prettyref{prop:unmatched}, with high probability, we have
\begin{align}
    |\Short_2\cup\Short_U| =|\Long_2\cup \Long_U| \ge \exp\left(-\sqrt{d}/2 \right)n_{\Short} \,. \label{eq:Short_2_Short_U_lowerbound}    
\end{align}

Let $H_1'$ denote the vertex-induced subgraph of $H_1$ on $ \left( \Short_1\cup \Short_U \right)\cup \left( \Long_1\cup\Long_U \right)$, and $H_1''$ denote the vertex-induced subgraph of $H_1$ on $ \left( \Short_1\cup \Short_U \right)\cup \Long$. Note that $H_1'$ is a subgraph of $H_1''$, and $H_1''$ can be viewed as a one-sided random $d$-regular bipartite graph on $ \left( \Short_1\cup \Short_U \right)\cup \Long$, where every $a\in \Short_1\cup \Short_U$ randomly connects to $d$ firms in $\Long$.
Recall that $\Phi$ is a stable matching on $H$. Let $\Phi_1$ denote the induced matching of $\Phi$ on $H_1'$.
We claim that $\Phi_1$ must also be stable on $H_1'$. Suppose, for contradiction, that there exists a blocking pair on $\Phi_1$. Then, this blocking pair of $\Phi_1$ is also a blocking pair of $\Phi$, contradicting the stability of $\Phi$ on $H$. Therefore, $\Phi_1$ is stable on $H_1'$.

Suppose $|\Short_1\cup\Short_U| = \theta_1 n_{\Short}$ for some $0< \theta_1 < 1$. 
By \prettyref{lmm:truncation}, the number of unmatched applicants on $H_1'$ must be lower bounded by the number of unmatched applicants on $H_1''$. By \prettyref{cor:unmatched_constant}, given  $H_1''$ can be viewed as a one-sided random $d$-regular bipartite graph on $ \left( \Short_1\cup \Short_U \right)\cup \Long$ with $\left|  \Short_1\cup \Short_U \right|= \theta_1 n_\Short = \theta_1 n_\Long$, for any constant $\epsilon>0$, 
\begin{align}
     \prob{|\Short_U| \ge \exp\left(\frac{ \left(1+ \epsilon \right)\theta_1 d }{\log \left(1-\theta_1 \right)} \right) \theta_1 n_\Short} \ge 1-o(1)\,. \label{eq:lower_bound_both}
\end{align}

Let $H_2'$ denote the vertex-induced subgraph of $H_2$ on $ \left( \Short_2 \cup \Short_U \right)\cup \Long $. 
Let $\Phi_2$ denote the induced matching of $\Phi$ on $H_2'$. 
We claim that $\Phi_2$ must also be stable on $H_2'$. To see this, recall that we are running the applicant-proposing DA on $H$. When an applicant $a \in \Short_2 \cup \Short_U$ proposes to a firm $j \in \Long$, it means that $a$ has been rejected by all firms in $\calN_1(a)$. In other words, $a$ has exhausted all its edges in $H_1$ before proposing to any firm in $H_2$.
If there were a blocking pair $(a, j)$ in $H_2'$, it would imply that $a$ prefers $j$ to its current match in $\Phi$, and $j$ prefers $a$ to its current match in $\Phi$ (or is unmatched). However, this is impossible because $a$ would have already proposed to $j$ during the applicant-proposing DA on $H$ before matched to $\phi(a)$ (or ending up unmatched). Therefore, $\Phi_2$ must be stable on $H_2'$.
Furthermore, since the applicant-proposing DA on $H$ can be viewed as running the applicant-proposing DA on $H_1$ first, followed by the unmatched applicants proposing to firms in $H_2$, $\Phi_2$ can be seen as an applicant-optimal stable matching on $H_2'$. 

Note that $H_2'$ is a subgraph of $H_2$, where $H_2$ can be viewed as a one-sided random $d$ regular graph on $\Short\cup \Long$ with each node $j\in\Long$ randomly connects to $d$ applicants. Suppose $|\Short_2\cup \Short_U|= \theta_2 n_\Short$ for some $ \theta_2>0 $. For every $a\in \Short$, if $0< \theta_2 \le 1-\Omega(1)$, we have
    \begin{align*}
        \prob{\text{$a$ is unmatched on $H_2'$}}  
        & \le  \prob{\text{$a$ is unmatched on $H_2'$} \, \bigg |\, |\calN_2(a)| \ge \frac{d}{2}} + \prob{ |\calN_2(a)| \le \frac{d}{2} } \\
        & \stepa{\le} \left(1 + \frac{ \theta_2  \left(1-o(1)\right) }{ \log \left( 1- \theta_2\right)} \right)^{\frac{d}{2}-2} + o\left( \frac{1}{n} \right)+\exp \left(- \frac{1}{8}d\right) \\
        & \stepb{\le} 2 \exp\left( \left(\frac{  \left(1-o(1)\right)\theta_2  }{2 \log \left( 1- \theta_2\right)} \vee \frac{- 1}{8} \right) d \right)  \,,
    \end{align*}
where $(a)$ holds by \prettyref{prop:remove_gamma_d_log_n} and the fact that $|\calN_2(a)| \sim \Binom \left(n_\Long , \frac{d}{n_\Short} \right)$ where $n_\Short =n_\Long$, and then by applying \prettyref{eq:chernoff_binom_left} in \prettyref{lmm:chernoff}, 
\begin{align*}
    \prob{|\calN_2(a)| \le \frac{d}{2}} \le \exp \left(- \frac{1}{8}d\right)\,; 
\end{align*}
$(b)$ holds by $(1+x)^y \le \exp(xy)$ for any $|x|\le 1, y\ge 1$, and $d =o\left(\log n\right)$. Then, by Markov inequality, for any constant $\epsilon>0$,  we get
\begin{align}
     \prob{  |\Short_U| \le  \exp\left( \left(1-\epsilon\right) \left(\frac{  \left(1-o(1)\right)\theta_2  }{2 \log \left( 1- \theta_2\right)} \vee \frac{- 1}{8} \right) d \right) \theta_2 n_\Short} \ge 1-o(1)\,. \label{eq:upper_bound_both}
\end{align}

By \prettyref{eq:Short_2_Short_U_lowerbound}, we have $\theta_1<1$ and $\theta_2>0$. Given $\Phi$ is the applicant-optimal stable matching on $H$, we have $\theta_1 \ge \Omega(1)$ and $\theta_2 \le 1-\Omega(1)$ with high probability. 
Together with \prettyref{eq:lower_bound_both} and \prettyref{eq:upper_bound_both}, for any constant $\epsilon>0$, we have with high probability
\[
\exp\left(\frac{ \left(1+\epsilon \right)\theta_1 d }{\log \left(1-\theta_1 \right)} \right) \theta_1 n_\Short \le  \exp\left( \left(1-\epsilon\right) \left(\frac{  \left(1-o(1)\right)\theta_2  }{2 \log \left( 1- \theta_2\right)} \vee \frac{- 1}{8} \right) d \right) \theta_2 n_\Short\,,
\]
which implies $ \theta_1 \le 1-\Omega(1)$ and $\theta_2 \ge \Omega(1)$. Then, it follows that $\theta_1,\theta_2 \ge \Omega(1)$ with high probability, and hence the result follows. 
\end{proof}

\subsubsection{Proof of \prettyref{cor:single_both_signal_sparse_2}}
Analogous to \prettyref{lmm:interview_d_regular}, $H$ can be considered as a union graph of $H_1$ and $H_2$ with uniformly generated strict preferences, where $H_1$ is an interview graph in which each applicant randomly selects $d$ firms to signal, and $H_2$ is an interview graph in which each firm randomly selects $d$ applicants to signal.

The rest of the proof of \prettyref{cor:single_both_signal_sparse_2} follows a similar approach to the proof of \prettyref{thm:single_signal_sparse}. However, instead of applying \prettyref{cor:calN_a_omega_1}, we apply \prettyref{cor:calN_a_omega_1_both}, which takes into account the structure of the union graph $H$ formed by $H_1$ and $H_2$. The details of the proof are omitted here.


\subsection{Proofs of perfect interim stability results}\label{sec:perfect_interim_proof}

\subsubsection{Proof of \prettyref{thm:single_signal_dense}}\label{sec:single_signal_dense}
Suppose $\delta \le 1- \Omega(1)$. By \prettyref{eq:calN_+_single_upper}, we get
 \begin{align*}
       \prob{\forall \text{ $j\in \calN_+(a)$, $j$ is unavailable to $a$ on $H$ } } 
        & \overset{(a)}{\le} \exp \left(- \frac{8.41}{32} d  p \right) + o\left(\frac{1}{n}\right)  \overset{(b)}{=} o\left(\frac{1}{n}\right) \,,
    \end{align*}
where $(a)$ follows from \prettyref{cor:calN_a_logn}, 
\[
\prob{\forall \text{ $j\in \calN_+(a)$, $j$ is unavailable to $a$ on $H$} \, \bigg | \, |\calN_+(a)| >  \frac{1.1}{4} d p  }  = o\left(\frac{1}{n}\right) \,,
\]
given that 
\[
|\calN_+(a)|\ge - \frac{2.2}{ \delta} \log \left(1-\delta +\delta^2/n_\Long  \right) \ge - \frac{2}{ \delta} \log \left(  1-\delta  \right) \,;
\]
$(b)$ holds by $ \frac{1.1}{4} d p \ge \frac{2.2}{ \delta} \log \left( \frac{1}{1-\delta + \delta^2/ n_\Long} \right) \log n_{\Short} > 2 \log n_\Short$. By applying the union bound,
\begin{align*}
      \prob{\forall\ a\in\Short,\ \exists \text{ $j\in \calN_+(a)$, $j$ is available to $a$ on $H$}} \ge 1-o(1) \,.
\end{align*}
Together with above inequality and \prettyref{lmm:positive_available}, with probability $1-o(1)$, every $a\in\Short$ is perfect interim stable on any stable matching on $H$. Hence, it follows that every stable matching on $H$ is perfect interim stable with high probability.

Suppose $1-o(1)\le \delta\le 1$. 
By \prettyref{eq:calN_+_1_single}, 
\begin{align*}
     \prob {|\calN_+(a)| \le - \frac{2.2}{ \delta} \log \left(1-\delta +\delta^2/n_\Long  \right) } 
    & =  \prob {|\calN_+(a)| \le \frac{1.1}{4} d p }\\
    & \le  \exp \left(- \frac{8.41}{32} d  p \right)\\
    & = o\left(\frac{1}{n_\Short}\right) \,,
\end{align*}
where the last equality holds by $d \ge - \frac{8}{ \delta p} \log \left( 1-\delta + \delta^2/ n_\Long \right) \log n_{\Short} $. By applying union bound,
\begin{align}
    \prob {\exists a \in \Short \text{ s.t. } |\calN_+(a)| \le - \frac{2.2}{ \delta} \log \left(1-\delta +\delta^2/n_\Long  \right) } =  o(1) \,. \label{eq:addition_union_calN_+}
\end{align}
By \prettyref{prop:perfect_stable}, \prettyref{lmm:positive_available_addition} and \prettyref{eq:addition_union_calN_+}, every stable matching on $H$ is perfect interim stable with high probability if $\delta<1$, and the applicant-optimal stable matching on $H$ is perfect interim stable  with high probability if $\delta=1$.

\subsubsection{Proof of \prettyref{rmk:firm_optimal_signal}}
Consider $d= \frac{8}{p} \log^2 n_\Short$ with $\DistB=\calU[0,1]$ and $\DistA$ following a Rademacher distribution where
\[
\prob{A=1}=\prob{A=-1}=\frac{1}{2} \text{ for } A \sim \DistA.
\]
Since $p=1/2$, we set $d= 16\log^2 n_\Short$.
For each $j\in \Long$, let $\calN_j\subset \Short$ denote its neighbors on $H$. Under the applicant-signaling mechanism:
\[
{A_{j,a}+B_{j,a} }_{a\in\calN_j} \iiddistr \DistA * \DistB = \calU[-1,1].
\]
By \prettyref{prop:perfect_stable}, with high probability, there exists some $j\in\Long$ that is not matched to any partner in its preferred top $\left(1-\epsilon\right) \log^2 n_\Short$ applicants in $\calN_j$ (based on post-interview utilities) for any constant $\epsilon>0$. Hence, with high probability, there exists some $j$ where:
\[
A_{j,\phi(j)} +B_{j,\phi(j)} \le 1-\frac{1}{32}.
\]
Moreover, there must exist some $a\in \Short$ that receives only one proposal and matches with that firm. For this $a$, with probability $\frac{1}{2}-\epsilon'$ for any constant $\epsilon'>0$:
\[
A_{a,\phi(a)} +B_{a,\phi(a)} < 0.
\]
Therefore, with non-vanishing probability, $j$ and $a$ have not interviewed with each other and each prefers the other over their current match (specifically, $A_{j,a}> 1-\frac{1}{32}$ and $A_{a,j}>0$).

\subsubsection{Proof of \prettyref{thm:single_firm_signal_dense}}\label{sec:single_firm_signal_dense}

For any $j\in\Long$, let $\calN(j)$ denote the set of neighbors of $j$ on $H$, where $|\calN(j)| \sim \Binom\left(n_\Short, \frac{d}{n_\Long} \right)$.  
By \prettyref{eq:chernoff_binom_right} in \prettyref{lmm:chernoff}, by letting $ \kappa_1 \triangleq  2 \delta d \vee  \log^2 n$ where $\delta = n_\Short\backslash n_\Long$, we have
\begin{align*}
    \prob{ \exists j\in\Long \,, \, |\calN(j)| < \kappa_1 } \ge 1- o \left(1\right) \,. 
\end{align*}
There must exist at least one applicant $a\in \Short$ that is unmatched. By the Rural Hospital Theorem~\cite{roth1986allocation}, the set of unmatched applicants remain the same in every stable matching on $H$, and all
Fix an unmatched applicant $a^*\in\Short$ on $H$. 
Let $\calU$ denote the set of firms that are not matched with $a^*$ on $H$. Then, $|\calU| = n_{\Long} - d \triangleq \kappa_2 + 1 $. 
Let $j^*$ denote the firm $j\in\calU$ that has the highest pre-interview utility with respect to $a^*$, i.e., $j^* = \argmax_{\{j\in\calU\}} U_{j,a}^{B}$. For any $j\in\Long$, let $a_j$ denote $j$'s most preferred applicant in $\calN(j)$ based on post-interview utilities. 

Let $X$ denote the maximum sample from $\{X_i\}_{i=1}^{\kappa_1-1}$, which are $\iid$ sampled from $\DistA * \DistB$. Let $F_{\DistA * \DistB}\left(X\right)$ denote the CDF of $X$ on $\DistA * \DistB$. 
By \prettyref{lmm:beta}, $F_{\DistA * \DistB}\left(X\right) \sim \mathrm{Beta}\left(\kappa_1-1,1\right)$, and then $F_{\DistA * \DistB}\left(X\right) \le 1 - \frac{1}{\kappa_1+1}$  with high probability, given that $\kappa_1 = \omega(1)$. 
Given that $|\calN(j)| < \kappa_1$ for any $j\in\Long$, with high probability, $X$ has first-order stochastic dominance over $U_{j,a_{j}}^A$ for any $j\in\Long$, i.e.,
\begin{align}
 \prob{X \overset{\mathrm{s.t.}}{\succeq}   U_{j,a_{j}}^A} = 1-o(1)\,. \label{eq:X_dominate_dense}
\end{align}

Let $Y$ denote the maximum sample from $\{Y_i\}_{i=1}^{\kappa_2+1}$ which are $\iid$ sampled from $\DistB$. Let $F_{\DistB}\left(Y\right)$ denote the CDF of $Y$ on $\DistB$.  By \prettyref{lmm:beta}, $F_{\DistB}\left(Y\right) \sim \mathrm{Beta}\left(\kappa_2+1,1\right)$, and then $F_{\DistB}\left(Y\right) \ge 1- \frac{1-o(1)}{\kappa_2+2} \ge 1 - \frac{1}{\kappa_2+1}$ with high probability, given that $\kappa_2=\omega(1)$.  Given that $|\calU| = \kappa_2 +1$, and $\{U_{j,a^*}^B\}_{j\in \calU}$ are $\iid$ sampled from $\DistB$, it follows that $U_{j^*,a^*}^B$ has first-order stochastic dominance over $Y$, i.e.,
\begin{align}
 \prob{U_{j^*,a^*}^B \overset{\mathrm{s.t.}}{\succeq} Y  } = 1-o(1) \,. \label{eq:Y_dominate_dense}
\end{align}


Recall that we say $\DistB$ outweighs $\DistA$ in the $(\kappa_1,\kappa_2)$ range, if the $\kappa_1$-th to $\left(\kappa_1+1\right)$-th quantile of $\DistA * \DistB$ is strictly smaller than the $\kappa_2$-th to $\left(\kappa_2+1\right)$-th quantile of $\DistB$.
Since $F_{\DistA * \DistB}\left(X\right) \le 1- \frac{1}{\kappa_1+1 }$ and $F_{\DistB}\left(Y\right) \ge 1- \frac{1}{\kappa_2+1}$ with high probability, we have
\begin{align}
    \prob{Y>X} \ge 1-o(1) \,. \label{eq:Y_X_2}
\end{align}
Here are two examples:
\begin{itemize}
\item If $\DistB$ is any continuous distribution and $\DistA$ is a degenerate distribution at zero ($\DistA=\boldsymbol{\delta}_0$), then $\DistA * \DistB = \DistB$ and $ 1- \frac{1}{\kappa_1+1} < 1- \frac{1}{\kappa_2+1}$ given $d < \frac{1}{4C} n_\Long$ and $ n_\Long <n_\Short \le C n_\Long$ for some arbitrarily large constant $C>1$. The condition of $\DistB$ outweighing $\DistA$ in the $(\kappa_1,\kappa_2)$ range is trivially satisfied.

\item If $\DistB$ is a normal distribution and $\DistA$ is any bounded distribution with finite support, by \prettyref{lmm:tight_gaussian}, the $\kappa_1$-th to $\left(\kappa_1+1\right)$-th quantile of $\DistA * \DistB$ is at most $  \sqrt{2 \log \kappa_1} \left(1+o(1)\right) $, and the $\kappa_2$-th to $\left(\kappa_2+1\right)$-th quantile of $\DistB$ is at least $\sqrt{2 \log \kappa_2}\left(1+o(1)\right) $. If $d \le n^{\alpha}$ for any constant $\alpha<1$, we have 
\[
 \sqrt{2 \log \kappa_1} \left(1+o(1)\right) = \sqrt{2 \alpha \log n_\Short } \left(1+o(1)\right) <  \sqrt{2 \log n_\Long } \left(1+o(1)\right) =  \sqrt{2 \log \kappa_2} \left(1+o(1)\right) \,,
\]
where the second inequality holds because $ \alpha<1$ is some constant and $n_\Short \le C n_\Long$. Hence, $\DistB$ outweighs $\DistA$ in the $(\kappa_1,\kappa_2)$ range. 
\end{itemize}
 Given that $Y > X$ with high probability, by \prettyref{eq:X_dominate_dense}, \prettyref{eq:Y_dominate_dense} and \prettyref{eq:Y_X_2}, we get
\[
\prob{U_{j^*,a^*}^B > U_{j^*,a_{j^*}}^A} = 1-o(1) \,. 
\]
Then, $j^*$ and $a^*$  forms an interim blocking pair on $H$ with high probability. Hence, no stable matching is perfect interim stable with high probability. 

\subsubsection{Proof of \prettyref{cor:single_both_signal_dense}}\label{sec:single_both_signal_dense}
Let $H_1$ denote the interview graph constructed by the applicant-signaling mechanism, and let $H_2$ denote the interview graph constructed by the firm-signaling mechanism. Then, $H$ is the union graph of $H_1$ and $H_2$. For any $i\in\Short\cup \Long$, let $\calN(i)$ denote the set of neighbors of $i$ on $H$, and let $\calN_1(i)$ and $\calN_2(i)$ denote the set of neighbors of $i$ on $H_1$ and $H_2$, respectively.

It suffices to consider the following two extreme cases. In the first case, each agent strictly prefers partners to whom they have signaled over partners who have signaled to them, which is equivalent to assuming that the post-interview scores are absent for all partners.
Suppose for every $a \in \Short$, $j_1\in\calN_1(a)$, and $j_2\in\calN_2(a) \backslash \calN_1(a)$, $a$ strictly prefers $j_1$ to $j_2$, i.e., $j_1 \succ_a j_2$. In this case, the applicant-optimal stable matching on $H$ is the same as the applicant-optimal stable matching on $H_1$, because each applicant will only consider firms in $H_1$ when making proposals. By \prettyref{thm:single_signal_dense}, the applicant-optimal stable matching on $H_1$ is perfect interim stable with high probability. Since $H_1$ is a subgraph of $H$, the applicant-optimal stable matching on $H$ must also be perfect interim stable with high probability. Similarly, if for every firm $j\in \Long$, $a_2\in\calN_2(j)$, and $a_1\in\calN_1(j) \backslash \calN_2(j)$, $j$ strictly prefers $a_2$ to $a_1$, i.e., $a_2 \succ_j a_1$, then the firm-optimal stable matching on $H$ is perfect interim stable with high probability.

In the second extreme case, each agent's post-interview preferences are uniformly generated across all the partners they have interviewed with, which is equivalent to assuming that the pre-interview scores are absent for all partners. In this case, $H$ can be considered as the union of two randomly generated one-sided $d$-regular graphs, $H_1$ and $H_2$. Following the proof of \prettyref{thm:single_signal_dense}, and using \prettyref{lmm:positive_available_addition}, \prettyref{prop:perfect_stable}, and \prettyref{eq:calN_+_1_single}, we can show that either the applicant-optimal stable matching or the firm-optimal stable matching is perfect interim stable with high probability.

Analogous to the above extreme cases, the selection correlation induced by signaling does not affect the DA evolution marginally in the general cases with $d \ge 8 \log^2 n/p$: (i) In applicant–proposing DA, the sequence (set and order) of  {distinct} proposers to a fixed firm $j$ is determined solely by the applicants’ preference lists and rejections at firms  {other than $j$}. Indeed, $\hat{a}$ proposes to $j$ exactly when she has been rejected by all firms she prefers to $j$, which is unaffected by $j$s own ranking w.r.t. $\hat{a}$ when $d$ is sufficiently large. Thus, conditional on $H$ and the full history outside of $j$’s internal comparisons, the arrival permutation of proposers is independent of $j$’s internal order $\prec_j$ over the set of its proposers. (ii) At the moment $\hat a$ arrives, $j$ accepts $\hat a$ iff $\hat a$ is the $\prec_j$-maximal element of the set of $j$'s proposers (because $j$ keeps its favorite among all proposers seen so far and rejects the rest). Consequently, the standard analysis of DA in random markets applies verbatim to the proposal process with $d \ge 8 \log^2 n/p$, each applicant makes at most $(1{+}\epsilon)\log^2 n$ proposals w.h.p.. 
Combining this with \prettyref{lmm:positive_available_addition} and \prettyref{eq:calN_+_1_single}, our desired result follows. 

\subsection{Proofs of incentive compatibility}\label{sec:incentive_proof}
\subsubsection{Proof of \prettyref{thm:incentive_single}}

Consider an applicant $a$ with truthful signal set $S^*$ (top $d$ preferred firms) and any alternative signal set $S'$. Since preference scores are i.i.d. and $a$ has only distributional knowledge of others' preferences, the probability of matching with any firm in the signal set is the same regardless of which set $a$ chooses:
$\prob{a \text{ matches with some firm in } S^*} = \prob{a \text{ matches with some firm in } S'}$. 

However, conditional on matching, $a$'s expected utility is higher when matched with firms from $S^*$, since these are $a$'s most preferred partners based on pre-interview utilities (the only information available when signaling).
Therefore, signaling to $S^*$ yields at least as high expected utility as signaling to any alternative set $S'$, so no profitable deviation exists. The same argument applies to firms by symmetry.

\subsection{Proofs for multi-tiered markets}\label{sec:multi_proof}

The analysis for the multi-tiered market utilizes a peeling argument to analyze the interim stability of stable matchings. The key idea is to start with the highest-ranked applicant and firm tiers and iteratively show the interim stability of the corresponding tier.
However, when we remove a tier and the matched agents in its target tier, the remaining market may no longer preserve the original tier structure. The removed agents can affect the imbalance between the remaining tiers, potentially leading to instability in the matching.

To address this issue, we apply \prettyref{thm:single_tier_general_sparse} and \prettyref{thm:single_tier_general_dense} (see \prettyref{sec:supp_reduced}), which consider a single-tiered market with a reduced interview graph. The reduced graph is constructed by removing a subset of agents from both sides of the market, representing the agents who have been matched with partners from higher-ranked tiers. 
By analyzing the single-tiered market on this reduced graph, we can account for the impact of the removed agents on the interim stability and the number of unmatched agents in the current tier.




\subsubsection{Proof of \prettyref{cor:multi_signal_sparse}}\label{sec:multi_signal_sparse}

For any applicant tier $\Short_\ess$ with $1\le \ess \le m$, recall that $\calT(\Short_\ess)$ denotes the target  tier of $\Short_\ess$, where $\calT(\Short_\ess) = \emptyset$ if $\Short_\ess$ does not have a target tier. Let $t_{\ess}$ denote the index of $\calT(\Short_\ess)$ in $\Long$ such that $\Long_{t_{\ess}}= \calT(\Short_\ess)$. 
Similarly, for any firm tier $\Long_\kappa$ with $1\le \kappa \le \ell$, recall that $\calT(\Long_\kappa)$ denote the target tier of $\Long_\kappa$, where $\calT(\Long_\kappa) = \emptyset$ if $\Long_\kappa$ does not have a target tier. Let $t_{\kappa}'$ denote the index of $\calT(\Long_\kappa)$ in $\Short$ such that $\Short_{t_{\kappa}'}= \calT(\Long_\kappa)$. Since the market is generally imbalanced, then any applicant tier and firm tier cannot simultaneously dominate each other, and there does not exist any pair of applicant tier and firm tier that are the target tiers of each other.

Note that applicant tier $\Short_m$ is the highest ranked applicant tier in $\Short$, while firm tier $\Long_\ell$ is the highest ranked firm tier in $\Long$. 
Then, we have either $\calT(\Short_m) = \Long_\ell$ and $t_{m}= \ell$, or $\Short_m = \calT(\Long_\ell)$ and $t_{\ell}'= m$.  
If $\calT(\Short_m) = \Long_\ell$ and $t_{m}= \ell$, 
by \prettyref{thm:single_tier_general_sparse}, it follows that every stable matching on the vertex induced subgraph $H$ on $\Short_m \cup \Long$ is almost interim stable with high probability. 
By  \prettyref{rmk:single_tier_general_sparse_identify}, there exists $\Short'_m \subset \Short_m$ such that every stable matching on the vertex induced subgraph $H$ on $\left( \Short_m \backslash \Short'_m \right)\cup \Long_{\ell}$ is perfect interim stable, where $|\Short'_m| = o\left(n_\Short\right)$, with high probability. 
By  \prettyref{rmk:single_tier_general_sparse_unmatched}, the number of unmatched applicants in $\Short_m$ is at most $d^{- \lambda_{m}} \cdot n_\Short$ for some constant $\lambda_{m}>0$ that depends on $\delta, \alpha_m, \beta_{t_{m}}$ and $\frac{\log d}{\log \log n}$ with high probability.  

Similarly, if $\Short_m = \calT(\Long_\ell)$ and $t_{\ell}'= m$, 
by \prettyref{thm:single_tier_general_sparse}, it follows that every stable matching on the vertex induced subgraph $H$ on $\Short_m \cup \Long_\ell$ is almost interim stable with high probability. By \prettyref{rmk:single_tier_general_sparse_identify}, there exists $\Long'_\ell \subset \Long_\ell $ such that every stable matching on the vertex induced subgraph $H$ on $\Short_m \cup \left(\Long_\ell \backslash \Long'_\ell \right)$ is perfect interim stable, where $|\Long'_\ell| = o \left(n_\Long\right)$, with high probability. By \prettyref{rmk:single_tier_general_sparse_unmatched}, the number of unmatched firms in $\Long_\ell$ is at most $d^{- \lambda_{\ell}'}  \cdot n_{\Long}$ for some constant $\lambda_{\ell}'>0$ that depends on $\delta, \beta_\ell, \alpha_{t_{\ell}'}$ and $\frac{\log d}{\log \log n}$ with high probability.

Suppose $\calT(\Short_\ess) \neq \emptyset$ for some $1\le \ess \le m$. Then, we have either $ \calT(\Long_{t_\ess+1}) = \Short_\ess$ or $ \calT(\Short_{\ess+1}) =  \Long_{t_{\ess}}$. Now, we proceed to consider the following two cases. 

\paragraph{Case 1:}
Suppose $\calT(\Short_\ess) = \Long_{t_{\ess}}$, $ \calT(\Long_{t_\ess+1}) = \Short_\ess$ for some $1\le \ess \le m$, and we have shown that the following holds with high probability: 
\begin{enumerate}[label=(S\arabic*)]
    \item \label{S:1} Every stable matching on the vertex-induced subgraph of $H$ on $\left( \cup_{\ess'\ge \ess} \Short_{\ess'} \right)\cup \left(\cup_{\kappa\ge t_\ess+1}\Long_{\kappa} \right)$ is almost interim stable.
    
    \item \label{S:2} For any  $\ess' \ge \ess+1$ and $\kappa \ge t_\ess +1 $, there exists $\Short'_{\ess'}\subset  \Short_{\ess'} $ where $|\Short'_{\ess'}| = o \left(n_\Short\right)$,  and $\Long'_{\kappa} \subset \Long_{\kappa} $ where $ |\Long'_\kappa| = o \left(n_\Long\right) $,  such that every stable matching on the vertex-induced subgraph $H$ on 
    \[
    \left(\left( \bigcup_{\ess'\ge \ess} \Short_{\ess'} \right)\bigcup \left(\bigcup_{\kappa\ge t_\ess+1} \Long_{\kappa} \right) \right) \backslash  \left( \left( \bigcup_{\ess'\ge \ess+1} \Short'_{\ess'} \right)\bigcup  \left(\bigcup _{\kappa\ge t_\ess+1} \Long'_{\kappa} \right)\right) \,,
    \]
    denoted as $H_{\ess,t_\ess+1}$, 
    is perfect interim stable.
    
    \item \label{S:3} The number of unmatched applicants in $\cup_{\ess'\ge \ess} \Short_{\ess'}$ on the vertex-induced subgraph of $H$ on $\left( \cup_{\ess'\ge \ess} \Short_{\ess'} \right)\cup \left(\cup_{\kappa\ge t_\ess+1}\Long_{\kappa} \right)$ 
    is at most $d^{- \lambda_{\ess+1}} \cdot n_\Short$ for some constant $\lambda_{\ess+1}>0$ that depends on $\delta,\{\alpha_{\ess'}\}_{\ess + 1\le \ess'\le m}$ and $\{\beta_{\kappa}\}_{t_{\ess}+1 \le \kappa \le \ell}$, and $\frac{\log d}{\log \log n}$. 
    \item  \label{S:4} The number of unmatched firms in $\cup_{\kappa\ge t_\ess+1}\Long_{\kappa}$ on the vertex-induced subgraph of $H$ on $\left( \cup_{\ess'\ge \ess} \Short_{\ess'} \right)\cup \left(\cup_{\kappa\ge t_\ess+1}\Long_{\kappa} \right)$ is at most $d^{-\lambda_{t_{\ess}+1}'} \cdot  n_\Long$ for some constant $\lambda_{t_{\ess}+1}'>0$, where $\lambda_{t_{\ess}+1}'>0$ depends on $\delta,\{\alpha_{\ess'}\}_{\ess \le \ess'\le m}$ and $\{\beta_{\kappa}\}_{t_{\ess}+1 \le \kappa \le \ell}$, and $\frac{\log d}{\log \log n}$. 
\end{enumerate}


Let $\widetilde{\Short}_\ess$ denote the set of remained applicants in $\Short_{\ess}$ after removing the applicants in $\Short_\ess$ matched with firms in higher tiers compared to its target tier $\Long_{t_{\ess}}$, i.e., $\cup_{\kappa\ge t_{\ess}+1}\Long_{\kappa}$. 
Given that $\calT(\Short_\ess) = \Long_{t_{\ess}}$, $ \calT(\Long_{t_\ess+1}) = \Short_\ess$, it follows that $\sum_{\ess'\ge \ess} |\Short_{\ess'}| \le  \sum_{\kappa \ge t_{\ess}} |\Long_{\kappa}|$, and $\sum_{\ess'\ge \ess} |\Short_{\ess'}| \ge  \sum_{\kappa \ge t_{\ess}+1} |\Long_{\kappa}|$. 
By \ref{S:3} and \ref{S:4}, $|\widetilde{\Short}_\ess| - |\Long_{t_{\ess}}| \le d^{-\lambda_{\ess+1}} \cdot n_{\Short}$. 
 Hence, we get 
\[
\delta_{\Short_\ess} \triangleq \frac{|\widetilde{\Short}_\ess|}{|\Long_{t_{\ess}}|}  \le 1 + d^{-\lambda_{\ess+1}} \,. 
\]  
By \prettyref{thm:single_tier_general_sparse}, with high probability, every stable matching on the vertex-induced subgraph of $H$ on $\widetilde{\Short}_\ess \cup \Long_{t_\ess}$ is almost interim stable with high probability. By \prettyref{rmk:single_tier_general_sparse_identify}, there exists $\Short'_\ess \subset \widetilde{\Short}_\ess$ such that every stable matching on the vertex-induced subgraph of $H$ on $ (\widetilde{\Short}_\ess 
 \backslash \Short'_\ess ) \cup \Long_{t_\ess}$ is perfect interim stable, where $|\Short'_\ess |= o \left(n_\Short \right)$, with high probability. 

Let $H_{\ess,t_{\ess}}$ denote the vertex-induced subgraph of $H$ on 
\begin{align} 
\left(\left( \bigcup_{\ess'\ge \ess} \Short_{\ess'} \right)\bigcup \left(\bigcup_{\kappa\ge t_\ess} \Long_{\kappa} \right) \right) \backslash  \left( \left( \bigcup_{\ess'\ge \ess} \Short'_{\ess'} \right)\bigcup  \left(\bigcup _{\kappa\ge t_\ess+1} \Long'_{\kappa} \right)\right)  \,. \label{eq:H_ess_t_ess}
\end{align}
Next, we claim that  every stable matching on $H_{\ess,t_{\ess}}$ is perfect interim stable with high probability. Let $H_{\ess,t_{\ess}+1}'$ denote the vertex-induced subgraph $H$ on 
\begin{align} 
\left(\left( \bigcup_{\ess'\ge \ess} \Short_{\ess'} \right)\bigcup \left(\bigcup_{\kappa\ge t_\ess+1} \Long_{\kappa} \right) \right) \backslash  \left( \left( \bigcup_{\ess'\ge \ess} \Short'_{\ess'} \right)\bigcup  \left(\bigcup _{\kappa\ge t_\ess+1} \Long'_{\kappa} \right)\right) \, ,\label{eq:H_ess_t_ess_1}
\end{align}
which can be viewed as a remained subgraph of $H_{\ess,t_{\ess}+1}$ by  removing $\Short'_{\ess}$ and its connected edges from $H_{\ess,t_{\ess}+1}$. Given that $\Short'_{\ess} \subset \widetilde{\Short}_\ess$, all applicants $a\in \Short'_{\ess}$ is unmatched on $H_{\ess,t_{\ess}}$. Then, if an agent is interim stable for every stable matching on  $H_{\ess,t_{\ess}}$, the agent must also be interim stable for every stable matching on  $H_{\ess,t_{\ess}}'$. Hence, every stable matching on $H_{\ess,t_{\ess}}'$ is perfect interim stable with high probability. 

By \prettyref{eq:H_ess_t_ess} and \prettyref{eq:H_ess_t_ess_1}, $H_{\ess,t_\ess+1}'$ is a subgraph of $H_{\ess,t_\ess}$. Hence, for every stable matching on $H_{\ess,t_\ess}$, its reduced matching on $H_{\ess,t_\ess+1}'$ must also be stable. By \ref{S:2}, every stable matching on $H_{\ess,t_\ess+1}'$ is perfect interim stable. Note that every applicant strictly prefers $\cup_{\kappa \ge t_\ess+1} \Long_{\kappa}$ to $\Long_{t_\ess}$, and every firm strictly prefers $ \cup_{\ess'\ge \ess+1} \Short_{\ess'}$ to $\Short_{\ess}$. Then, if an applicant on $H_{\ess,t_\ess+1}'$ is interim stable for every stable matching on $H_{\ess,t_\ess+1}'$, the applicant must also be interim stable on $H_{\ess,t_\ess}$ for every stable matching on $H_{\ess,t_\ess}$. Similarly, if a firm is interim stable for every stable matching on $H_{\ess,t_\ess+1}'$, the firm must also be interim stable on $H_{\ess,t_\ess}$ for every stable matching on $H_{\ess,t_\ess}$. Together with the fact that $H_{\ess,t_\ess}$ can be viewed as  a union graph of $H_{\ess,t_\ess+1}'$ and the vertex-induced subgraph of $H$ on $ (\widetilde{\Short}_\ess 
 \backslash \Short'_\ess ) \cup \Long_{t_\ess}$, and every stable matching on the vertex-induced subgraph of $H$ on $(\widetilde{\Short}_\ess 
 \backslash \Short'_\ess )  \cup \Long_{t_\ess}$ is perfect interim stable, our claim follows. 

Hence, every stable matching on the vertex-induced subgraph on $\left(\left( \cup_{\ess'\ge \ess} \Short_{\ess'} \right)\cup \left(\cup_{\kappa\ge t_\ess} \Long_{\kappa} \right) \right) $ is almost interim stable with high probability. By \prettyref{rmk:single_tier_general_sparse_unmatched} and \ref{S:3}, the number of unmatched applicants in $ \cup_{\ess'\ge \ess}\Short_{\ess'}$ is at most $ d^{-\lambda_{\ess}}$ that $\lambda_{\ess}>0$ depends on  $\delta,\{\alpha_{\ess'}\}_{\ess \le \ess'\le m}$ and $\{\beta_{\kappa}\}_{t_{\ess} \le \kappa \le \ell}$, and $\frac{\log d}{\log \log n}$. 

\paragraph{Case 2:}
Suppose $\calT(\Short_\ess) = \Long_{t_{\ess}}$, $ \calT(\Short_{\ess+1}) =  \Long_{t_{\ess}}$ for some $\ess \ge 1$, and we have shown that the following holds with high probability: 
\begin{enumerate}[label=(S\arabic*)]
   \setcounter{enumi}{4}
    \item \label{S:5} Every stable matching on the vertex-induced subgraph of $H$ on $\left( \cup_{\ess'\ge \ess+1} \Short_{\ess'} \right)\cup \left(\cup_{\kappa\ge t_\ess}\Long_{\kappa} \right)$ is almost interim stable.
    
    \item \label{S:6} For any  $\ess' \ge \ess+1$ and $\kappa \ge t_\ess+1 $, there exists $\Short'_{\ess'}\subset  \Short_{\ess'} $ where $|\Short'_{\ess'}| = o\left(n_\Short\right) $,  and $\Long'_{\kappa} \subset \Long_{\kappa} $ where $ |\Long'_\kappa| = o \left(n_\Long\right) $,  such that every stable matching on the vertex induced subgraph $H$ on
    \[
    \left( \left( \bigcup_{\ess'\ge \ess+1} \Short_{\ess'} \right)\bigcup \left(\bigcup_{\kappa\ge t_\ess} \Long_{\kappa} \right) \right) \backslash 
    \left( \left( \bigcup_{\ess'\ge \ess+1} \Short'_{\ess'} \right)\bigcup \left(\bigcup_{\kappa\ge t_\ess+1} \Long'_{\kappa} \right) \right)  
    \]
    is perfect interim stable.
    
    \item \label{S:7} The number of unmatched applicants in $\cup_{\ess'\ge \ess+1}\Short_{\ess'} $  on the vertex-induced subgraph of $H$ on $\left( \cup_{\ess'\ge \ess+1} \Short_{\ess'} \right)\cup \left(\cup_{\kappa\ge t_\ess}\Long_{\kappa} \right)$ is at most $d^{- \lambda_{\ess+1}} \cdot n_\Short$ for some constant $\lambda_{\ess+1}>0$ that depends on $\delta,\{\alpha_{\ess'}\}_{\ess -1 \le \ess'\le m}$ and $\{\beta_{\kappa}\}_{t_{\ess} \le \kappa \le \ell}$, and $\frac{\log d}{\log \log n}$. 
    \item  \label{S:8} The number of unmatched firms in $\cup_{\kappa\ge t_{\ess}+1}\Long_{\kappa}$  on the vertex-induced subgraph of $H$ on $\left( \cup_{\ess'\ge \ess+1} \Short_{\ess'} \right)\cup \left(\cup_{\kappa\ge t_\ess}\Long_{\kappa} \right)$ is at most $d^{-\lambda_{t_{\ess}+1}'} \cdot  n_\Long$ for some constant $\lambda_{t_{\ess}+1}'>0$, where $\lambda_{t_{\ess}+1}'>0$ depends on $\delta,\{\alpha_{\ess'}\}_{\ess \le \ess'\le m}$ and $\{\beta_{\kappa}\}_{t_{\ess}+1 \le \kappa \le \ell}$, and $\frac{\log d}{\log \log n}$. 
\end{enumerate}

Let $\widetilde{\Long}_{t_\ess}$ denote the set of remained firms in $\Long_{t_\ess}$ after removing the firms in $\Long_{t_\ess}$ matched with applicants in  $\cup_{\ess' \ge \ess}\Long_{\kappa}$. Analogous to the argument in case $1$, we can show that  there exists $\Long'_{t_\ess} \subset \widetilde{\Long}_{t_\ess}$ such that every stable matching on the vertex-induced subgraph of $H$ on $\Short_\ess \cup  \left(\Long_{t_\ess} \backslash \Long'_{t_\ess} \right)$ is perfect interim stable, where $|\Long'_{t_\ess}|= o\left(n_\Long\right)$, with high probability. 
Let $H_{\ess,t_{\ess}}$ denote the vertex-induced subgraph of $H$ on 
\begin{align*} 
\left(\left( \bigcup_{\ess'\ge \ess} \Short_{\ess'} \right)\bigcup \left(\bigcup_{\kappa\ge t_\ess} \Long_{\kappa} \right) \right) \backslash  \left( \left( \bigcup_{\ess'\ge \ess} \Short'_{\ess'} \right)\bigcup  \left(\bigcup _{\kappa\ge t_\ess+1} \Long'_{\kappa} \right)\right)  \,. 
\label{eq:H_ess_t_ess}
\end{align*}
Similarly, we can show that every stable matching is perfect interim stable on $H_{\ess,t_{\ess}}$ with high probability. 
Hence, every stable matching on the vertex-induced subgraph of $H$ on $\left( \cup_{\ess'\ge \ess} \Short_{\ess'} \right)\cup \left(\cup_{\kappa\ge t_\ess}\Long_{\kappa} \right)$ is almost interim stable with high probability. By \prettyref{rmk:single_tier_general_sparse_unmatched} and \ref{S:7}, the number of unmatched applicants in $ \cup_{\ess'\ge \ess}\Short_{\ess'}$ is at most $ d^{-\lambda_{\ess}}$ that $\lambda_{\ess}>0$ depends on  $\delta,\{\alpha_{\ess'}\}_{\ess \le \ess'\le m}$ and $\{\beta_{\kappa}\}_{t_{\ess} \le \kappa \le \ell}$, and $\frac{\log d}{\log \log n}$.

Analogous arguments can be applied to the cases when $\calT(\Long_\kappa)\neq \emptyset $ for some $\kappa \ge 1$. 
By induction hypothesis and the fact that there are finite number of applicant and firm tiers, every stable matching on $H$ is almost interim stable with high probability. There exists $\Short'\subset \Short$ and $\Long'\subset \Long$, where $|\Short'|=o\left(n_\Short\right)$ and $|\Long'|=o \left(n_\Long\right)$ such that every stable matching on the very stable matching on the vertex-induced subgraph of $H$ on $ \left(\Short\backslash \Short'\right)\cup \left(\Long\backslash\Long'\right)$ is perfect interim stable with high probability. 

\subsubsection{Proof of \prettyref{rmk:multi_signal_sparse_identify}}\label{sec:multi_signal_sparse_identify}

The proof of this remark is embedded within the proof of \prettyref{cor:multi_signal_sparse}. Therefore, we omit the proof here for brevity, as the result has already been established in the context of proving \prettyref{cor:multi_signal_sparse}.

\subsubsection{Proof of \prettyref{rmk:restricted_sparse}}\label{sec:restricted_sparse}

The proof of \prettyref{rmk:restricted_sparse} follows a similar approach to the proof of \prettyref{cor:multi_signal_sparse}. The key difference is that the general imbalancedness property is not required in this case, as we employ the ``restricted'' multi-tiered signaling mechanism. Consequently, there does not exist any pair of applicant tier and firm tier that are the target tiers of each other. Despite this difference, the main arguments used in the proof of \prettyref{cor:multi_signal_sparse} remain applicable. Therefore, for the sake of brevity, we omit the detailed proof here, as it can be readily adapted from the proof of \prettyref{cor:multi_signal_sparse}.

\subsubsection{Proof of \prettyref{rmk:relaxed_bounded_sparse}}\label{sec:relaxed_bounded_sparse}
Under the relaxation of Assumption~\ref{assump:bounded}, the pre-interview scores and post-interview scores may not always preserve the tier structure. It is possible that some agents prefer an agent from a lower tier over an agent from a higher tier.

Without loss of generality, we assume $\calT(\Short_m)=\Long_\ell$.
Let $H_{m,\ell}$ denote the vertex-induced subgraph of $H$ on $\Short_m\cup \Long_{\ell}$, and $H_{m,\ell}'$ denote the edge-induced subgraph of $H$ on $\Short_m\cup \Long_{\ell}$ obtained by only keeping the edges $(a,j) \in \calE(H_{m,\ell})$ such that $A_{j,a}+B_{j,a} > M_{\DistA}+M_{\DistB}-1$. Since $\calT(\Short_m)=\Long_\ell$, 
\[
 \{\indc{A_{j,a}+B_{j,a} > M_{\DistA}+M_{\DistB}-1} \}_{a\in\Short_m\,, \, j\in\Long_\ell}\iiddistr \Bern(q) \,.
\]
Then, $H_{m,\ell}'$ can be viewed as a subgraph of $H_{m,\ell}$ such that each edge of $H_{m,\ell}$ is included in $H_{m,\ell}'$ with probability $q$, independently from all other edges. 

Then, we claim that, with high probability, all but a vanishingly small fraction  of $a\in\Short_m$ prefers its current match in every stable matching on $H_{m,\ell}'$ to all partners with which $a$ has never interviewed in $\Long$. For any $a\in\Short_m$, let $\calN_{m,\ell}(a)$ denote the set of neighbors of $a$ in $H_{m,\ell}'$, 
\begin{align*}
    \calN_{m,\ell}'(a) 
    & \triangleq \{j \in  \calN_{m,\ell}(a) \, : \, A_{a,j}+B_{a,j}>M_{\DistA}+M_{\DistB}-1  \} \,,\\
        \calN_{m,\ell}''(a) 
    & \triangleq \{j \in  \calN_{m,\ell}(a) \, : \, A_{a,j} > 0 \} \,. 
\end{align*}
Given that $p = \omega(\frac{1}{\log d})$, $q = \omega(\frac{1}{\log d})$, for any $a\in\Short_m$, with high probability, 
\[
|\calN_{m,\ell}'(a)|,|\calN_{m,\ell}''(a)| = \omega\left(\frac{dq}{\log d} \right) = \omega \left(\frac{dq}{\log (dq)} \right) \,.
\]
By \prettyref{cor:deleted_edges}, for any $a\in\Short_m$, there are some $j_1\in \calN_{m,\ell}'(a) $ and $j_2\in \calN_{m,\ell}''(a)$ available to $a$ in $H_{m,\ell}'$ with high probability. Then, it follows that for any $a\in\Short_m$, with high probability, there must exist some $j\in \calN_{m,\ell}(a)$ that is available to $a$, and $a$ prefers $j$ to all partners with whom $a$ has never interviewed in $\Long$. 
By taking the union bound, our claim follows.

 

Let $H_{m-1,\ell}$ denote the vertex-induced subgraph of $H$ on $ \left(\Short_m\cup \Short_{m-1}\right) \cup \Long_{\ell}$. Note that if we run firm-proposing DA on $H_{m-1,\ell}$, for every $j\in\Long_{\ell}$, it starts to propose to $a\in\Short_{m-1}$ only if it has exhausted all its proposals to $a\in \Short_m$ with $A_{j,a}+B_{j,a}>M_{\DistA}+M_{\DistB}-1$. 
The proposal sequence (i.e., the order in which firms propose) does not affect the final result of the firm-proposing DA on $H_{m-1,\ell}$.
Then, every applicant is better off in the firm-optimal stable matching on $H_{m-1,\ell}$ compared with the firm-optimal stable matching on $H_{m,\ell}'$. Hence, every applicant is better off in every stable matching on $H_{m-1,\ell}$ compared with the firm-optimal stable matching on $H_{m,\ell}'$. 

By our claim, it follows that all but a vanishingly small fraction  of $a\in\Short_m$ prefers its current match in every stable matching on $H_{m-1,\ell}$ to all partners with which $a$ has never interviewed in $\Long$. By the multi-tiered structure, $H_{m-1,\ell}$ can be viewed as a connected component of the vertex-induced subgraph of $H$ obtained by removing $\cup_{\kappa\le \ell-1} \Long_{\kappa}$ from $H$. By \prettyref{lmm:truncation}, all but a vanishingly small fraction  of $a\in\Short_m$ are interim stable in every stable matching on $H$ with high probability.

Similar to the proof of \prettyref{cor:multi_signal_sparse}, we can prove the general result by induction. The rest of the proof follows the same structure and reasoning as in \prettyref{cor:multi_signal_sparse}, \prettyref{rmk:multi_signal_sparse_identify}, and \prettyref{rmk:restricted_sparse}, with the appropriate modifications to account for the relaxed assumptions on the pre-interview scores and post-interview scores. For brevity, the complete proof is omitted here.



\subsubsection{Proof of \prettyref{cor:multi_signal_dense}}\label{sec:multi_signal_dense}
 
Suppose $\calT(\Short_\ess) \neq \emptyset$ for some $1\le \ess \le m$. 
Then, $\calT(\Short_\ess) = \Long_{t_{\ess}}$ is the target tier of $\Short_\ess$.
     \begin{itemize}
         \item If 
         $ \Short_\ess$ is the target tier of some other firm tiers, then  $\Short_\ess$ must be the highest applicant tier that signals $\calT(\Short_\ess)$, and let
        \[
          N_\ess = \sum_{\kappa > t_\ess} |\Long_{\kappa}| = \sum_{\kappa > t_\ess} \beta_{\kappa} n_{\Long} ,, \quad M_\ess = \sum_{\ess' \ge \ess} |\Short_{\ess'}| = \sum_{\ess' \ge \ess} \alpha_{\ess'} n_{\Short}\,,
        \]
        and
        \[
        \gamma_\ess = 1
        \,, \quad 
        \delta_\ess  = \frac{M_\ess- N_\ess}{|\Long_{t_{\ess}}|}= \frac{  \delta\sum_{\ess'\ge \ess} \alpha_{\ess'} - \sum_{\kappa > t_\ess} \beta_{\kappa} }{\beta_{t_\ess}} \,.
        \]
        \item 
        If $\Short_\ess$ is not the target tier of any firm tiers, let
        \[
          N_\ess = \sum_{\kappa \ge t_\ess} |\Long_{\kappa}| = \sum_{\kappa \ge t_\ess} \beta_{\kappa} n_{\Long} ,, \quad M_\ess = \sum_{\ess' > \ess} |\Short_{\ess'}| = \sum_{\ess' > \ess} \alpha_{\ess'} n_{\Short}\,, 
        \]
        and 
        \[
        \gamma_\ess = \frac{N_\ess - M_\ess}{|\Long_{t_{\ess}}|}=\frac{\sum_{\kappa \ge t_\ess} \beta_{\kappa}- \delta \sum_{\ess' > \ess} \alpha_{\ess'}}{\beta_{t_{\ess}}}
        \,, \quad 
        \delta_\ess  = \frac{|\Short_{\ess}|}{N_\ess - M_\ess}=\frac{\delta \alpha_{\ess}}{\sum_{\kappa \ge t_\ess} \beta_{\kappa}- \delta \sum_{\ess' > \ess} \alpha_{\ess'}} \,.
        \]
     \end{itemize}
Suppose $\calT(\Long_t) \neq \emptyset$ for some $1\le \kappa \le \ell$. Then, $\calT(\Long_\kappa) = \Short_{t'_{\kappa}}$ is the target tier of $\Long_\kappa$. 
    \begin{itemize}
        \item If $\Long_\kappa$ is also the target tier of other applicant tiers, then $\Long_t$ must be the highest applicant tier that signals $\Short_{t_\kappa'}$, and let 
        \[
          N_\kappa' =\sum_{\ess> t'_{\kappa}} |\Short_{\ess}| = \sum_{\ess > t'_{\kappa}} \alpha_{\ess} n_{\Short}
          ,, \quad
          M_\kappa' =\sum_{\kappa' \ge \kappa} |\Long_{\kappa'}| = \sum_{\kappa' \ge \kappa} \beta_{\kappa'} n_{\Long}\,,
        \]
        and 
        \[
        \gamma_\kappa' =1
        \,, \quad 
        \delta_\kappa'  =\frac{M_\kappa'- N_\kappa'}{|\Short_{t_{\kappa}'}|}= \frac{\sum_{\kappa' \ge \kappa} \beta_{\kappa'} -  \delta\sum_{\ess > t'_{\kappa}} \alpha_{\ess} }{ \delta\alpha_{t_{\kappa}'}} \,.
        \]
    \item 
    If $\Long_\kappa$ is not the target tier of other applicant tiers, let
        \[
          N_\kappa' =\sum_{\ess \ge t'_{\kappa}} |\Short_{\ess}| = \sum_{\ess \ge t'_{\kappa}} \alpha_{\ess} n_{\Short}
          ,, \quad
          M_\kappa' =\sum_{\kappa' > \kappa} |\Long_{\kappa'}| = \sum_{\kappa' > \kappa} \beta_{\kappa'} n_{\Long}\,,
        \]
     and
        \[
        \gamma_\kappa' = 
        \frac{N_{\kappa}' - M_\kappa'}{|\Short_{t_\kappa'}|}=\frac{\delta \sum_{\ess \ge t'_{\kappa}} \alpha_{\ess} - \sum_{\kappa' > \kappa} \beta_{\kappa'} }{\delta \alpha_{t_\kappa'}}
        \,, \quad 
        \delta_\kappa'  = \frac{|\Long_{\kappa}|}{N_{\kappa}' - M_\kappa'}
        =\frac{ \beta_{\kappa}}{\delta \sum_{\ess \ge t'_{\kappa}} \alpha_{\ess} - \sum_{\kappa' > \kappa} \beta_{\kappa'} } \,.
        \]
    \end{itemize}
Next, we set 
\begin{align}
    C_{p,\boldsymbol{\alpha},\boldsymbol{\beta},\delta} 
    & = \max\Bigg\{   \max_{1\le s\le m} \left\{ \frac{16+\epsilon}{\gamma_s p}   \,, \frac{4+\epsilon}{\gamma_s^2} \,, - \frac{8+\epsilon}{ \delta_\ess p} \log \left( 1-\delta_\ess + \frac{\delta_\ess^2}{\gamma_\ess \beta_{t_{\ess}} n_\Long} \right) \right\}\,,  \nonumber \\
    &~~~~\quad\quad~~~  \max_{ 1\le \kappa \le \ell} \left\{ 
    \frac{16+\epsilon}{\gamma_\kappa' p}   \,, \frac{4+\epsilon}{\gamma_\kappa'^2} \,,  - \frac{8+\epsilon}{ \delta_\kappa' p} \log \left( 1-\delta_\kappa' + \frac{{\delta_\kappa'}^2}{\gamma_\kappa' \alpha_{t_{\kappa}'} n_\Short} \right) \right\} 
    \Bigg\} \label{eq:C_alpha_beta} \,.
\end{align}

Fix $d = \ceil{C_{p,\boldsymbol{\alpha},\boldsymbol{\beta},\delta} \log n}$. By \prettyref{eq:C_alpha_beta}, if the market is not generally imbalanced, we have $\overline{d}= \Theta\left(\log^2n\right)$; if the market is $\gamma$-generally imbalanced, then $\underline{d}= \Theta\left(\log \left(1/\gamma\right)\log n\right)$ for $\gamma>0$. 

\paragraph{Case $1$: the market is generally imbalanced.} 
Note that applicant tier $\Short_m$ is the highest ranked applicant tier in $\Short$, while firm tier $\Long_\ell$ is the highest ranked firm tier in $\Long$. Then, either $\calT(\Short_m) = \Long_\ell$ or $\calT(\Long_\ell) = \Short_m$, but not both. If both hold simultaneously, it would imply that $\Short_m$ and $\Long_\ell$ dominate each other, contradicting the definition of a generally imbalanced market.

Without loss of generality, we assume that $\calT(\Short_m) = \Long_\ell$. 
Then, $\gamma_m = 1$ and $|\Short_m| = \delta_m |\Long_\ell|$. 
By \prettyref{thm:single_tier_general_dense} and \prettyref{rmk:single_tier_general_both_dense}, together with \prettyref{eq:C_alpha_beta}, every stable matching on the vertex-induced subgraph of $H$ on $\Short_m\cup \Long_\ell$ is perfect interim stable with high probability. For any stable matching on $H$, its induced matching on $\Short_m \cup \Long_\ell$ must also be stable on the vertex-induced subgraph of $H$ on $\Short_m \cup \Long_\ell$. 


Suppose $\calT(\Short_\ess) \neq \emptyset$ for some $1\le \ess \le m$. Then, $\calT(\Short_\ess)= \Long_{t_\ess}$. 
\begin{itemize}
    \item Suppose  $\calT( \Long_{t_\ess+1})= \Short_\ess$, and we have shown that every stable matching on the vertex-induced subgraph of $H$ on $\left( \cup_{\ess'\ge \ess} \Short_{\ess'} \right)\cup \left(\cup_{\kappa\ge t_\ess+1}\Long_{\kappa} \right)$ is perfect interim stable. Let $\widetilde{\Short}_\ess \subset \Short_\ess $ denote the set of unmatched applicants in $  \cup_{\ess'\ge \ess} \Short_{\ess'}  $ on the vertex-induced subgraph of $H$ on $\left( \cup_{\ess'\ge \ess} \Short_{\ess'} \right)\cup \left(\cup_{\kappa\ge t_\ess+1}\Long_{\kappa} \right)$. Then, $|\widetilde{\Short}_m | = \gamma_m | \Short_m |$ and $|\widetilde{\Short}_m |= \delta_m |\Long_{t_\ess}|$.  For any stable matching on $H$, its induced matching on $\widetilde{\Short}_m  \cup \Long_{t_\ess}$ must also be stable on the vertex-induced subgraph of $H$ on $\widetilde{\Short}_m  \cup \Long_{t_\ess}$.  By \prettyref{thm:single_tier_general_dense} and \prettyref{rmk:single_tier_general_both_dense}, together with \prettyref{eq:C_alpha_beta}, every stable matching on the vertex-induced subgraph of $H$ on $\widetilde{\Short}_m  \cup \Long_{t_\ess}$ is perfect interim stable with high probability.

    If an applicant is interim stable on every stable matching on the vertex-induced subgraph of $H$ on $\left( \cup_{\ess'\ge \ess} \Short_{\ess'} \right)\cup \left(\cup_{\kappa\ge t_\ess+1}\Long_{\kappa} \right)$, it must also be interim stable on every stable matching on the vertex-induced subgraph of $H$ on $\left( \cup_{\ess'\ge \ess} \Short_{\ess'} \right)\cup \left(\cup_{\kappa\ge t_\ess}\Long_{\kappa} \right)$, given that every applicant strictly prefers $\Long_\ell$ to firms in lower tiers. Since 
    \[
    \left( \bigcup_{\ess'\ge \ess} \Short_{\ess'} \right)\bigcup \left(\bigcup_{\kappa\ge t_\ess}\Long_{\kappa} \right)  = \left( \left( \bigcup_{\ess'\ge \ess} \Short_{\ess'} \right)\bigcup \left(\bigcup_{\kappa\ge t_\ess+1}\Long_{\kappa} \right) \right) \bigcup \left(\widetilde{\Short}_m  \cup \Long_{t_\ess}\right) \,,
    \]
    the vertex-induced subgraph of $H$ on $\left( \cup_{\ess'\ge \ess} \Short_{\ess'} \right)\cup \left(\cup_{\kappa\ge t_\ess}\Long_{\kappa} \right)$ can be viewed as a union graph of the vertex-induced subgraph of $H$ on $\left( \cup_{\ess'\ge \ess+1} \Short_{\ess'} \right)\cup \left(\cup_{\kappa\ge t_\ess +1 }\Long_{\kappa} \right)$, and the vertex-induced subgraph of $H$ on $\widetilde{\Short}_m  \cup \Long_{t_\ess}$. Hence, every stable matching  on the vertex-induced subgraph of $H$ on $\left( \cup_{\ess'\ge \ess} \Short_{\ess'} \right)\cup \left(\cup_{\kappa\ge t_\ess}\Long_{\kappa} \right)$ is perfect interim stable with high probability. 
    
    \item Suppose  $\calT(\Short_{\ess+1})= \Long_{t_\ess}$, and we have shown that every stable matching on the vertex-induced subgraph of $H$ on $\left( \cup_{\ess'\ge \ess+1} \Short_{\ess'} \right)\cup \left(\cup_{\kappa\ge t_\ess}\Long_{\kappa} \right)$ is perfect interim stable. Let $\widetilde{\Long}_{t_\ess} \subset \Long_{t_\ess}  $ denote the set of unmatched firms in $ \cup_{\kappa\ge t_\ess}\Long_{\kappa} $ on the vertex-induced subgraph of $H$ on $\left( \cup_{\ess'\ge \ess+1} \Short_{\ess'} \right)\cup \left(\cup_{\kappa\ge t_\ess}\Long_{\kappa} \right)$. Then, $|\widetilde{\Short}_m | = \gamma_m | \Short_m |$ and $|\widetilde{\Short}_m |= \delta_m |\Long_{t_\ess}|$. Analogouly, we can show that every stable matching  on the vertex-induced subgraph of $H$ on $\left( \cup_{\ess'\ge \ess} \Short_{\ess'} \right)\cup \left(\cup_{\kappa\ge t_\ess}\Long_{\kappa} \right)$ is perfect interim stable with high probability. 

\end{itemize}

Analogous arguments can be applied to the cases when $\calT(\Long_\kappa)\neq \emptyset $ for some $\kappa \ge 1$. By induction hypothesis and the fact that there are finite number of applicant and firm tiers, every stable matching on $H$ is perfect interim stable with high probability.

\paragraph{Case $2$: the market is  generally imbalanced.} 

Note that applicant tier $\Short_m$ is the highest ranked applicant tier in $\Short$, while firm tier $\Long_\ell$ is the highest ranked firm tier in $\Long$. Then, either $\calT(\Short_m) = \Long_\ell$ or $\calT(\Long_\ell) = \Short_m$. 

Without loss of generality, we assume that $\calT(\Short_m) = \Long_\ell$. 
Then, $\gamma_m = 1$ and $|\Short_m| = \delta_m |\Long_\ell|$. 
Since that every applicant strictly prefers $\Long_\ell$ to firms in lower tiers, for the applicant-optimal stable matching on $H$, its induced matching on $\Short_m \cup \Long_\ell$ must also be applicant-optimal stable matching on the vertex-induced subgraph of $H$ on $\Short_m \cup \Long_\ell$. By \prettyref{thm:single_tier_general_dense} and \prettyref{rmk:single_tier_general_both_dense}, together with \prettyref{eq:C_alpha_beta}, the applicant-optimal stable matching on the vertex-induced subgraph of $H$ on $\Short_m\cup \Long_\ell$ is perfect interim stable with high probability. 

Suppose $\calT(\Short_\ess) \neq \emptyset$ for some $1\le \ess \le m$. Then, $\calT(\Short_\ess)= \Long_{t_\ess}$. 
\begin{itemize}
    \item Suppose  $\calT( \Long_{t_\ess+1})= \Short_\ess$, and we have shown that every stable matching on the vertex-induced subgraph of $H$ on $\left( \cup_{\ess'\ge \ess} \Short_{\ess'} \right)\cup \left(\cup_{\kappa\ge t_\ess+1}\Long_{\kappa} \right)$ is perfect interim stable with high probability. Let $\widetilde{\Short}_\ess \subset \Short_\ess $ denote the set of unmatched applicants in $  \cup_{\ess'\ge \ess} \Short_{\ess'}  $ on the vertex-induced subgraph of $H$ on $\left( \cup_{\ess'\ge \ess} \Short_{\ess'} \right)\cup \left(\cup_{\kappa\ge t_\ess+1}\Long_{\kappa} \right)$. Then, $|\widetilde{\Short}_m | = \gamma_m | \Short_m |$ and $|\widetilde{\Short}_m |= \delta_m |\Long_{t_\ess}|$.  For the applicant-optimal stable matching on $H$, its induced matching on $\widetilde{\Short}_m  \cup \Long_{t_\ess}$ must also be an applicant-optimal stable matching on the vertex-induced subgraph of $H$ on $\widetilde{\Short}_m  \cup \Long_{t_\ess}$. 
    By \prettyref{thm:single_tier_general_dense} and \prettyref{rmk:single_tier_general_both_dense}, together with \prettyref{eq:C_alpha_beta}, the applicant-optimal stable matching on the vertex-induced subgraph of $H$ on $\widetilde{\Short}_m  \cup \Long_{t_\ess}$ is perfect interim stable with high probability.

    If an applicant is stable on the applicant-optimal stable matching on the vertex-induced subgraph of $H$ on $\left( \cup_{\ess'\ge \ess} \Short_{\ess'} \right)\cup \left(\cup_{\kappa\ge t_\ess+1}\Long_{\kappa} \right)$, it must also be interim stable on the applicant-optimal stable matching on the vertex-induced subgraph of $H$ on $\left( \cup_{\ess'\ge \ess} \Short_{\ess'} \right)\cup \left(\cup_{\kappa\ge t_\ess}\Long_{\kappa} \right)$, given that every applicant strictly prefers $\Long_\ell$ to firms in lower tiers. Since 
    \[
    \left( \bigcup_{\ess'\ge \ess} \Short_{\ess'} \right)\bigcup \left(\bigcup_{\kappa\ge t_\ess}\Long_{\kappa} \right)  = \left( \left( \bigcup_{\ess'\ge \ess} \Short_{\ess'} \right)\bigcup \left(\bigcup_{\kappa\ge t_\ess+1}\Long_{\kappa} \right) \right) \bigcup \left(\widetilde{\Short}_m  \cup \Long_{t_\ess}\right) \,,
    \]
    the vertex-induced subgraph of $H$ on $\left( \cup_{\ess'\ge \ess} \Short_{\ess'} \right)\cup \left(\cup_{\kappa\ge t_\ess}\Long_{\kappa} \right)$ can be viewed as a union graph of the vertex-induced subgraph of $H$ on $\left( \cup_{\ess'\ge \ess+1} \Short_{\ess'} \right)\cup \left(\cup_{\kappa\ge t_\ess +1 }\Long_{\kappa} \right)$, and the vertex-induced subgraph of $H$ on $\widetilde{\Short}_m  \cup \Long_{t_\ess}$. Hence,  the applicant-optimal stable matching  on the vertex-induced subgraph of $H$ on $\left( \cup_{\ess'\ge \ess} \Short_{\ess'} \right)\cup \left(\cup_{\kappa\ge t_\ess}\Long_{\kappa} \right)$ is perfect interim stable with high probability. 
    
    \item Suppose  $\calT(\Short_{\ess+1})= \Long_{t_\ess}$, and we have shown that  the applicant-optimal stable matching  on the vertex-induced subgraph of $H$ on $\left( \cup_{\ess'\ge \ess+1} \Short_{\ess'} \right)\cup \left(\cup_{\kappa\ge t_\ess}\Long_{\kappa} \right)$ is perfect interim stable with high probability. Let $\widetilde{\Long}_{t_\ess} \subset \Long_{t_\ess}  $ denote the set of unmatched firms in $ \cup_{\kappa\ge t_\ess}\Long_{\kappa} $ on the vertex-induced subgraph of $H$ on $\left( \cup_{\ess'\ge \ess+1} \Short_{\ess'} \right)\cup \left(\cup_{\kappa\ge t_\ess}\Long_{\kappa} \right)$. Then, $|\widetilde{\Short}_m | = \gamma_m | \Short_m |$ and $|\widetilde{\Short}_m |= \delta_m |\Long_{t_\ess}|$. Analogouly, we can show that e the applicant-optimal stable matching on the vertex-induced subgraph of $H$ on $\left( \cup_{\ess'\ge \ess} \Short_{\ess'} \right)\cup \left(\cup_{\kappa\ge t_\ess}\Long_{\kappa} \right)$ is perfect interim stable with high probability. 

\end{itemize}

Analogous arguments can be applied to the cases when $\calT(\Long_\kappa)\neq \emptyset $ for some $\kappa \ge 1$. By induction hypothesis and the fact that there are finite number of applicant and firm tiers, the applicant-optimal stable matching on $H$ is perfect interim stable with high probability. By symmetry, we can also show that the firm-optimal stable matching on $H$ is perfect interim stable with high probability.

\subsubsection{Proof of \prettyref{rmk:relaxed_bounded_dense}}\label{sec:relaxed_bounded_dense}

Under the relaxation of Assumption~\ref{assump:bounded}, the pre-interview scores and post-interview scores may not always preserve the tier structure. It is possible that some agents prefer an agent from a lower tier over an agent from a higher tier. Let $d = 2  \ceil{\frac{C_{p,\boldsymbol{\alpha},\boldsymbol{\beta},\delta} \log n}{q^2}}$ where $ C_{p,\boldsymbol{\alpha},\boldsymbol{\beta},\delta}$ is defined in \prettyref{eq:C_alpha_beta}. Then, we can rewrite $d = \underline{d}/ \left( \left(p \wedge q \right) q\right)$ where $\underline{d}$ only depends on $\boldsymbol{\alpha},\boldsymbol{\beta},\delta$ and $n$. 

Here, we prove the case when the market is generally imbalanced.
Without loss of generality, we assume $\calT(\Short_m)=\Long_\ell$.
Let $H_{m,\ell}$ denote the vertex-induced subgraph of $H$ on $\Short_m\cup \Long_{\ell}$, and $H_{m,\ell}'$ denote the edge-induced subgraph of $H$ on $\Short_m\cup \Long_{\ell}$ obtained by only keeping the edges $(a,j) \in \calE(H_{m,\ell})$ with $A_{j,a}+B_{j,a} >M_{\DistA}+M_{\DistB}-1$. Since $\calT(\Short_m)=\Long_\ell$,
\[
 \{\indc{A_{j,a}+B_{j,a} > M_{\DistA}+M_{\DistB}-1} \}_{a\in\Short_m\,, \, j\in\Long_\ell}\iiddistr \Bern(q) \,.
\]
Then, $H_{m,\ell}'$ can be viewed as a subgraph of $H_{m,\ell}$ such that each edge of $H_{m,\ell}$ is included in $H_{m,\ell}'$ with probability $q$, independently from all other edges.

Then, we claim that with high probability, every $a\in\Short_m$ prefers its current match in every stable matching on $H_{m,\ell}'$ to all partners with which $a$ has never interviewed in $\Long$. 
For any $a\in\Short_m$, let $\calN_{m,\ell}(a)$ denote the set of neighbors of $a$ in $H_{m,\ell}'$,
\begin{align*}
    \calN_{m,\ell}'(a) 
    & \triangleq \{j \in  \calN_{m,\ell}(a) \, : \, A_{a,j}+B_{a,j}>M_{\DistA}+M_{\DistB}-1  \} \,,\\
        \calN_{m,\ell}''(a) 
    & \triangleq \{j \in  \calN_{m,\ell}(a) \, : \, A_{a,j} > 0 \} \,. 
\end{align*}
Since for any $a\in\Short_m$, $ |\calN_{m,\ell}(a)|\sim\Binom\left(d, q \right)$, by applying the union bound and \prettyref{eq:chernoff_binom_left} in \prettyref{lmm:chernoff},
\begin{align*}
    \prob{\exists a\in \Short_m \, \mathrm{s.t.}\, |\calN_{m,\ell}(a)| < \frac{\underline{d}}{q\wedge p } }  = o\left(1\right)\,.
\end{align*}
By \prettyref{prop:perfect_stable} and \prettyref{thm:single_tier_general_dense}, with high probability, for every $a\in\Short_m$, there are some $j_1\in \calN_{m,\ell}''(a)$ and $j_2\in \calN_{m,\ell}'(a)$ that are available to $a$ on $H_{m,\ell}'$. Hence, our claim follows. 

Let $H_{m-1,\ell}$ denote the vertex-induced subgraph of $H$ on $ \left(\Short_m\cup \Short_{m-1}\right) \cup \Long_{\ell}$. Note that if we run firm-proposing DA on $H_{m-1,\ell}$, for every $j\in\Long_{\ell}$, it might start to propose to $a\in\Short_{m-1}$ only if it has exhausted all its proposals to $a\in \Short_m$ with $A_{j,a}+B_{j,a}>M_{\DistA}+M_{\DistB}-1$. The proposal sequence (i.e., the order in which firms propose) does not affect the final result of the firm-proposing DA on $H_{m-1,\ell}$.
Then, every applicant is better off in the firm-optimal stable matching on $H_{m-1,\ell}$ compared with the firm-optimal stable matching on $H_{m,\ell}'$. Hence, every applicant is better off in every stable matching on $H_{m-1,\ell}$ compared with the firm-optimal stable matching on $H_{m,\ell}'$.

By our claim, it follows that  with high probability, 
every $a\in\Short_m$ prefers its current match in every stable matching on $H_{m-1,\ell}$ to all partners with which $a$ has never interviewed in $\Long$. 
By the multi-tiered structure, $H_{m-1,\ell}$ can be viewed as a subgraph of $H$ obtained by removing $\cup_{\kappa\le \ell-1} \Long_{\kappa}$ from $H$. By \prettyref{lmm:truncation}, every $a\in\Short_m$ is interim stable in every stable matching on $H$ with high probability. Similar to the proof of \prettyref{cor:multi_signal_dense}, we can prove the general result by induction, which is omitted here.

For the case when the market is not generally imbalanced, the analysis is analogous and hence omitted here.



\subsubsection{Proof of \prettyref{cor:incentive}}\label{sec:thm_incentive}
Under Assumption \ref{assump:bounded}, the utilities for every pair of agents are bounded, regardless of whether they interviewed with each other.
\paragraph{Suppose market is $\gamma$-generally imbalanced with $\gamma\ge \Omega(1)$.}
Fix an applicant $a\in\Short_{\ess}$ for some $1\le \ess \le m$.
Suppose every applicant and firm signals truthfully based on the multi-tiered signaling mechanism. Let $\kappa_1$ denote the largest integer such that $\calT(\Long_{\kappa_1})=\Short_{\ess}$, and $\kappa_2$ denote the smallest integer such that $\calT(\Long_{\kappa_2})=\Short_{\ess}$, where $\kappa_1\ge \kappa_2$. Fix any $\Long_\kappa$.
\begin{itemize}
\item
Suppose $\kappa >  \kappa_1$. 
If applicant $a$ does not deviate, any matched firm in $\Long_\kappa$ must strictly prefer its current match to $a$. 
Suppose applicant $a$ deviates and signals to $d'$ firms in $\Long_\kappa$ instead, where $d'\le d$. 
Hence, $a$ could only be matched with a firm in $\Long_\kappa$ that would be unmatched if $a$ did not deviate. By \prettyref{prop:remove_gamma_d_log_n} and $\gamma \ge \Omega(1)$, for any firm in $\Long_\kappa$, the probability that the firm was previously unmatched is at most $ \exp\left(-C d\right) = o(1)$ for some constant $C>0$ that only depends on $\delta, \boldsymbol{\alpha}, \boldsymbol{\beta}$ and $\frac{\log d}{\log
\log n}$, where $d= O\left(\polylog n\right)$. By applying the union bound, the probability that $a$ benefits from deviation is at most $d \exp\left(-C' p d \right) = o(1)$, given $p = \omega (\frac{\log d}{d} )$.
\item  Suppose $\kappa_1\ge \kappa \ge \kappa_2$. For any $j\in \Long_\kappa$, consider the case where $j$ is matched with one of its top $d$ preferred applicants with non-negative post-interview score. In this case, $j$ must strictly prefer its current match to $a$, if $a$ is not within $j$'s top $d$ preferred applicants.
By \prettyref{prop:remove_gamma_d_log_n} and $\gamma \ge \Omega(1)$, the probability that $j$ is not matched with any of its preferred top $d$ applicants with non-negative post-interview score is at most $ \exp\left(-C' p d \right)$ for some constant $C'>0$ that depends only on $\delta, \boldsymbol{\alpha}, \boldsymbol{\beta}$ and $\frac{\log d}{\log \log n}$, where $d= O\left(\polylog n\right)$. Suppose that applicant $a$ deviates and signals to $d'$ firms in $\Long_\kappa$, where $d'\le d$. The probability that $a$ benefits from this deviation is at most $d \exp\left(-C' p d \right) = o(1)$, given $p = \omega (\frac{\log d}{d} )$.

\item Suppose $\calT(\Short_\ess) = \Long_\kappa$. Given that every agent's utilities are independently generated, the marginal probability for $a$ to match with any firm it signals to in $\Long_\kappa$ is the same. Hence, if $a$ deviates by signaling to any $d$ firms in $\Long_\kappa$ other than its top $d$ preferred ones, the probability that $a$ benefits from this deviation is $o(1)$.

\item Suppose $\kappa <\kappa_2$ and $ \Long_\kappa \neq \calT(\Short_\ess)$. Applicant $a$ can only benefit from sending signals to $\Long_\kappa$ if $a$ would be unmatched without this deviation. By \prettyref{prop:remove_gamma_d_log_n} and $\gamma \ge \Omega(1)$, the probability that $a$ is unmatched, if  $a$ does not deviate, is at most $\exp\left(-C'' p d \right) = o(1)$ for some constant $C''>0$ that depends only on $\delta, \boldsymbol{\alpha}, \boldsymbol{\beta}$ and $\frac{\log d}{\log \log n}$, where $d= O\left(\polylog n\right)$. Then, the probability that $a$ benefits from deviation is at most $\exp\left(-C' p d \right) = o(1)$, given $p = \omega (\frac{\log d}{d} )$.
\end{itemize}
\paragraph{When $d\ge \underline{d}/p$.} By \prettyref{cor:multi_signal_dense}, every applicant is matched with high probability under the multi-signaling mechanism. Therefore, the gain from unilateral deviation is $o(1)$.


\section{Supplementary materials}\label{sec:supp}

\subsection{Fixed point convergence}

\begin{lemma} \label{lmm:stochastic_dominance}
Given any $Y_{i} \overset{\mathrm{ind}}{\sim} \Bern\left(p_i \right)$ for $ i\in [d]$ where $\underline{p}\le p_i\le \overline{p}$, we have
    \begin{align*}
         \expect{\frac{1}{1+ X} }\le  \expect{\frac{1}{1+ \sum_{i\in [d]} Y_{i}}} \le \expect{\frac{1}{1+ Z} } \, ,
    \end{align*}
    where $X \sim \Binom (d,\overline{p}) $ and $Z\sim \Binom (d,\underline{p})$.
\end{lemma}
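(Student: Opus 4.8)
## Proof proposal

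The plan is to prove the two inequalities by a coupling / stochastic-dominance argument, exploiting the fact that $x \mapsto \tfrac{1}{1+x}$ is a decreasing function on $\{0,1,\dots,d\}$. The key observation is that if $S = \sum_{i\in[d]} Y_i$ with $Y_i \inddistr \Bern(p_i)$ and $\underline p \le p_i \le \overline p$, then $S$ is sandwiched in the usual stochastic order between $Z \sim \Binom(d,\underline p)$ and $X \sim \Binom(d,\overline p)$, i.e. $Z \preceq_{\mathrm{st}} S \preceq_{\mathrm{st}} X$. Given this, applying the defining property of first-order stochastic dominance to the bounded decreasing function $g(x) = \tfrac{1}{1+x}$ immediately yields
\[
\Expect\!\left[\tfrac{1}{1+X}\right] \le \Expect\!\left[\tfrac{1}{1+S}\right] \le \Expect\!\left[\tfrac{1}{1+Z}\right].
\]
(Recall the monotone-function characterization: $U \preceq_{\mathrm{st}} V$ iff $\Expect[\phi(U)] \le \Expect[\phi(V)]$ for every nondecreasing bounded $\phi$; equivalently $\Expect[\psi(U)] \ge \Expect[\psi(V)]$ for every nonincreasing bounded $\psi$, which is exactly the direction we need since $g$ is decreasing and a monotone rearrangement of $X \preceq_{\mathrm{st}} S$ gives the ``larger stochastic order $\Rightarrow$ smaller expectation of $g$''.)

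To establish the stochastic sandwich, I would build an explicit monotone coupling. On a common probability space take $U_1,\dots,U_d \iiddistr \calU[0,1]$, and set $Y_i = \indc{U_i \le p_i}$, $X_i = \indc{U_i \le \overline p}$, $Z_i = \indc{U_i \le \underline p}$. Then pointwise $Z_i \le Y_i \le X_i$ for every $i$ (since $\underline p \le p_i \le \overline p$), hence $Z := \sum_i Z_i \le \sum_i Y_i = S \le \sum_i X_i =: X$ almost surely, while marginally $X \sim \Binom(d,\overline p)$, $Z \sim \Binom(d,\underline p)$, and $S$ has the distribution of $\sum_i Y_i$. Applying the decreasing function $g(x)=\tfrac{1}{1+x}$ to the pointwise chain $Z \le S \le X$ gives $g(X) \le g(S) \le g(Z)$ almost surely on this coupling, and taking expectations (all three random variables are bounded in $[0,1]$, so expectations are finite) yields the claim. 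Because the coupling preserves the marginal laws, the resulting inequality between expectations depends only on the laws, completing the proof.

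The only subtlety — and the ``main obstacle,'' though it is mild — is making sure the monotonicity is applied in the correct direction: a larger random variable in the stochastic (or pointwise) order produces a \emph{smaller} value of $\tfrac{1}{1+\cdot}$, so the ordering of the expectations is reversed relative to the ordering $Z \preceq S \preceq X$. The explicit $U_i$-coupling makes this transparent and avoids any sign error, which is why I would present it that way rather than quoting an abstract stochastic-dominance lemma. No further estimates are needed; the statement follows in a couple of lines once the coupling is written down.
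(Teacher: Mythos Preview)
Your proposal is correct and takes essentially the same approach as the paper: establish the stochastic ordering $Z \preceq_{\mathrm{st}} \sum_i Y_i \preceq_{\mathrm{st}} X$ and then apply the decreasing function $x \mapsto \tfrac{1}{1+x}$. The paper's proof is a two-line sketch asserting the dominance and monotonicity, whereas you spell out the standard $U_i$-coupling explicitly; this is the same argument with more detail filled in.
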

\begin{proof} 
    Since $ \Bern \left( \overline{p}\right) \overset{\mathrm{s.t.}}{\succeq}  Y_{i}  \overset{\mathrm{s.t.}}{\succeq}  \Bern \left(\overline{p} \right) $, we have $X  \overset{\mathrm{s.t.}}{\succeq} \sum_{i\in [d]} Y_{i} \overset{\mathrm{s.t.}}{\succeq}  Z$. 
    Since $\frac{1}{1+x}$ is decreasing and convex, our desired result follows. 
\end{proof}

\begin{lemma}\label{lmm:inverse_binomial}
    If $X \sim \Binom(d,p)$ for some $d\in \naturals_+$ and $0\le p\le 1$, then $ \expect{\frac{1}{1+X}} = f_x\left(p\right) \,.$
\end{lemma}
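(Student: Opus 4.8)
The statement is a direct computation with the binomial probability mass function, using the combinatorial identity $\frac{1}{k+1}\binom{d}{k}=\frac{1}{d+1}\binom{d+1}{k+1}$. I would first write out the expectation explicitly as
\[
\expect{\frac{1}{1+X}} = \sum_{k=0}^{d}\frac{1}{k+1}\binom{d}{k}p^{k}(1-p)^{d-k},
\]
then substitute the identity to pull a factor $\frac{1}{d+1}$ out front, reindex the sum by $j=k+1$, and recognize the result as a binomial expansion of $(p+(1-p))^{d+1}$ with the $j=0$ term removed.

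\textbf{Key steps, in order.} (i) Expand $\expect{1/(1+X)}$ over $k\in\{0,\dots,d\}$. (ii) Apply $\tfrac{1}{k+1}\binom{d}{k}=\tfrac{1}{d+1}\binom{d+1}{k+1}$, so that
\[
\expect{\frac{1}{1+X}} = \frac{1}{(d+1)p}\sum_{k=0}^{d}\binom{d+1}{k+1}p^{k+1}(1-p)^{(d+1)-(k+1)}.
\]
(iii) Reindex with $j=k+1$ to get $\frac{1}{(d+1)p}\sum_{j=1}^{d+1}\binom{d+1}{j}p^{j}(1-p)^{d+1-j}$. (iv) Complete the sum to $j=0$ and subtract the missing term $(1-p)^{d+1}$, using $\sum_{j=0}^{d+1}\binom{d+1}{j}p^{j}(1-p)^{d+1-j}=1$, yielding $\frac{1-(1-p)^{d+1}}{(d+1)p}$, which is exactly $f_{d}(p)$ by \prettyref{eq:f_d}. (v) Handle $p=0$ separately (where $X=0$ almost surely so the left side is $1$) by noting the right side extends continuously to $f_{d}(0)=\lim_{p\to 0}\frac{1-(1-p)^{d+1}}{(d+1)p}=1$, so the formula holds there as well.

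\textbf{Main obstacle.} There is essentially no obstacle: the only point requiring a word of care is the boundary case $p=0$, where $f_{d}(p)$ is defined by its continuous extension, and checking that this matches $\expect{1/(1+X)}=1$. Everything else is the standard binomial identity manipulation above, so the proof is short.

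\begin{proof}
If $p=0$ then $X=0$ almost surely, so $\expect{1/(1+X)}=1=f_d(0)$, where $f_d(0)$ is understood as the limit $\lim_{p\to 0}\frac{1-(1-p)^{d+1}}{(d+1)p}=1$. Assume now $0<p\le 1$. Using the identity $\frac{1}{k+1}\binom{d}{k}=\frac{1}{d+1}\binom{d+1}{k+1}$, valid for $0\le k\le d$, we compute
\[
\expect{\frac{1}{1+X}}
=\sum_{k=0}^{d}\frac{1}{k+1}\binom{d}{k}p^{k}(1-p)^{d-k}
=\frac{1}{(d+1)p}\sum_{k=0}^{d}\binom{d+1}{k+1}p^{k+1}(1-p)^{(d+1)-(k+1)}.
\]
Substituting $j=k+1$, the sum becomes $\sum_{j=1}^{d+1}\binom{d+1}{j}p^{j}(1-p)^{d+1-j}$, which by the binomial theorem equals $\bigl(p+(1-p)\bigr)^{d+1}-(1-p)^{d+1}=1-(1-p)^{d+1}$. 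Therefore
\[
\expect{\frac{1}{1+X}}=\frac{1-(1-p)^{d+1}}{(d+1)p}=f_{d}(p),
\]
by the definition of $f_d$ in \prettyref{eq:f_d}.
\end{proof}
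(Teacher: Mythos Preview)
Your proof is correct and follows essentially the same approach as the paper's: both expand the expectation using the binomial pmf, apply the identity $\frac{1}{k+1}\binom{d}{k}=\frac{1}{d+1}\binom{d+1}{k+1}$, and recognize the resulting sum as $1-(1-p)^{d+1}$ via the binomial theorem. Your explicit handling of the $p=0$ boundary case is a minor addition beyond the paper's version.
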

\begin{proof}
    Since $X \sim \Binom(d,p)$, we have
\begin{align*}
    \Expect\left[\frac{1}{1+X}\right] 
    & =\sum_{\ell=0}^{d}\frac{1}{1+\ell}\binom{d}{\ell} p^{\ell} (1-p)^{d-\ell} \\
    & =\frac{1}{(d+1)p} \sum_{\ell=0}^{d}\binom{d+1}{\ell+1} p^{\ell +1}(1-p)^{d-\ell }\\
    & =\frac{1-(1-p)^{d+1}}{(d+1)p} = f_d(p)\,,
\end{align*}
where the last equality holds by \prettyref{eq:f_d}. 
Note that if $d+1 \ge nx$
\end{proof}
\begin{lemma} \label{lmm:property_f_d}
$f_{d}(p)$ satisfies the following properties: 
\begin{enumerate}[label=(P\arabic*)]
    \item \label{P:1} $f_d(p)$ is continuous on $0 \le p\le 1$, and $ 0 \le f_d(p)\le 1$ for any $0\le p \le 1$ and $d \in \naturals_+$;
    \item\label{P:2} $f_d(p)$ is strictly decreasing on $0\le p\le 1$, for any $d\in \reals_+$;
    \item\label{P:3}  $f_d(p)$ is decreasing on $d \in \reals_+ $, for any $0\le p\le 1$;
    \item\label{P:4}  $f_d(p)$ is convex on $1\le p \le 1$, for any $d \in \reals_+ $; 
     \item\label{P:5}  $f_d(p)$ is convex on $d \in \reals_+$, for any $0\le p \le 1 $;
    \item\label{P:6}  For any $0 \le p \le 1$ and $d \in \reals_+$,
    \[ 
    \frac{\partial f_d(p) }{\partial p} = \dfrac{\left(1-p\right)^d \left(dp+1\right)-1}{\left(d+1\right)p^2} \ge \max\left\{- \frac{f_{d}(p)}{p}, - \frac{1}{\left(d+1\right) p^2} \right\}  \,
    \]
    where the inequality is strict if $0\le p<1$. 
\end{enumerate}
\end{lemma}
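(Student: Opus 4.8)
The plan is to work directly from the closed form
\[
f_d(p) = \frac{1-(1-p)^{d+1}}{(d+1)p},
\]
established in \prettyref{eq:f_d}, and verify each property \ref{P:1}--\ref{P:6} by elementary calculus, handling the apparent singularity at $p=0$ by the limit $f_d(0) = \lim_{p\to 0} f_d(p) = 1$ (which follows from the series expansion $1-(1-p)^{d+1} = (d+1)p - \binom{d+1}{2}p^2 + \cdots$). I would dispatch \ref{P:1} first: continuity on $(0,1]$ is clear since numerator and denominator are smooth and the denominator is nonzero there; continuity at $0$ follows from the limit computation; and the bounds $0 \le f_d(p) \le 1$ follow either from the probabilistic representation $f_d(p) = \Expect[1/(1+X)]$ with $X\sim\Binom(d,p)$ (\prettyref{lmm:inverse_binomial}) — since $1/(1+X) \in (0,1]$ — or directly from $0 \le (1-p)^{d+1} \le 1-p \le 1-p(d+1)/(d+1)$ type estimates; I would use the probabilistic representation as it is cleanest.

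Next I would prove \ref{P:6}, since the derivative formula there is the engine for \ref{P:2} and \ref{P:4}. Differentiating the quotient gives
\[
\frac{\partial f_d(p)}{\partial p} = \frac{(1-p)^d(dp+1) - 1}{(d+1)p^2},
\]
after simplification. The two lower bounds are then separate claims: $\partial_p f_d(p) \ge -1/((d+1)p^2)$ is equivalent to $(1-p)^d(dp+1) \ge 0$, which is immediate; and $\partial_p f_d(p) \ge -f_d(p)/p$ is equivalent, after clearing denominators, to $(1-p)^d(dp+1) - 1 \ge -(1-(1-p)^{d+1})/p \cdot p = -(1-(1-p)^{d+1})$, i.e. to $(1-p)^d(dp+1) \ge (1-p)^{d+1}$, i.e. $dp+1 \ge 1-p$, i.e. $(d+1)p \ge 0$ — trivially true, with strictness when $p>0$. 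Strictness of the numerator $(1-p)^d(dp+1)-1 < 0$ for $0 < p < 1$ follows from the standard inequality $(1-p)^d \le 1/(1+dp) \le 1/(dp+1)$ being strict for $p\in(0,1)$ and $d\ge 1$ (or a direct convexity argument). Given \ref{P:6}, property \ref{P:2} (strict decrease in $p$) is immediate since $\partial_p f_d(p) \le -1/((d+1)p^2) \cdot (\text{something})$ — more carefully, the numerator is strictly negative on $(0,1)$, so $\partial_p f_d < 0$ there, and one checks the endpoint behavior separately.

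For the remaining properties I would treat \ref{P:3} and \ref{P:5} (monotonicity and convexity in the continuous parameter $d$) together: here the clean route is the integral representation
\[
f_d(p) = \int_0^1 (1-p)^{d t'}\,\text{(suitable measure)}\,,
\]
or more directly, write $f_d(p) = \frac{1}{(d+1)p}\bigl(1-e^{(d+1)\ln(1-p)}\bigr)$ and set $u = d+1$, $c = -\ln(1-p) \ge 0$, so $f = (1-e^{-cu})/(up \cdot \text{const})$; then $g(u) := (1-e^{-cu})/u$ has $g'(u) = (e^{-cu}(cu+1)-1)/u^2 \le 0$ and $g''(u) \ge 0$ by the same elementary inequality $e^{-cu}(cu+1) \le 1$ used above, giving both \ref{P:3} and \ref{P:5} at once. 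Property \ref{P:4} (convexity in $p$) is the one I expect to be the main obstacle: the second derivative $\partial_p^2 f_d(p)$ is messier, and the cleanest argument is probably again probabilistic — $f_d(p) = \Expect_{X\sim\Binom(d,p)}[1/(1+X)] = \sum_\ell \frac{1}{\ell+1}\binom{d}{\ell}p^\ell(1-p)^{d-\ell}$, and one shows this is a convex function of $p$ by, e.g., writing it via the identity $\frac{1}{\ell+1} = \int_0^1 s^\ell\,ds$ so that $f_d(p) = \int_0^1 \Expect[s^X]\,ds = \int_0^1 (1-p+ps)^d\,ds = \int_0^1 (1-p(1-s))^d\,ds$; since $p \mapsto (1-p(1-s))^d$ is convex for each fixed $s\in[0,1]$ (it is $(\text{affine in }p)^d$ with the base in $[0,1]$ and $d\ge 1$), the integral is convex in $p$. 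This integral representation is in fact so clean that I would use it to re-derive \ref{P:1}, \ref{P:2}, \ref{P:3}, \ref{P:5} as well, making the whole lemma essentially one paragraph of observations once the representation $f_d(p) = \int_0^1 (1-p(1-s))^d\,ds$ is recorded. I should note the statement as written has a typo ("$1\le p\le 1$" in \ref{P:4} and \ref{P:6}, which should read "$0\le p\le 1$"); I would silently read it as $0 \le p \le 1$.
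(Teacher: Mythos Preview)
Your proposal is correct, and for properties \ref{P:3}--\ref{P:5} it takes a genuinely different and cleaner route than the paper. The paper proceeds by brute force: for each of \ref{P:2}--\ref{P:5} it computes the relevant first or second partial derivative of $f_d(p)$ explicitly, then shows the resulting numerator has the required sign by introducing an auxiliary function (e.g., $h_d(p)=(1-p)^d(dp+1)$ for \ref{P:2}, and analogous $\tilde h_d,\hat h_d$ for \ref{P:4},\ref{P:5}) and differentiating \emph{that}. Your integral representation $f_d(p)=\int_0^1 (1-p(1-s))^d\,ds$ collapses all of this: monotonicity and convexity in $p$ and in $d$ follow pointwise in $s$ from the elementary behavior of $t\mapsto t^d$ and $d\mapsto t^d$ on $[0,1]$, and the integral inherits them. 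Your alternative substitution $u=d+1$, $c=-\ln(1-p)$, $g(u)=(1-e^{-cu})/u$ reduces \ref{P:3} and \ref{P:5} to the single inequality $e^{-x}(1+x)\le 1$ and its second-order analogue $e^{-x}(2+2x+x^2)\le 2$; after the change of variables this is in fact \emph{exactly} the inequality the paper's auxiliary function $\hat h_d$ encodes, but your packaging makes it a one-line Taylor estimate rather than another derivative chase. For \ref{P:6} your algebraic verification is the same as the paper's (which rewrites $\partial_p f_d = (1-p)^d/p - f_d(p)/p$). One caveat you already flag: your convexity-in-$p$ argument needs $d\ge 1$ (since $t\mapsto t^d$ is concave for $0<d<1$); the paper's proof of \ref{P:4} has the identical restriction, as its key step $\partial_p\tilde h_d = d(d-1)(d+1)(1-p)^{d-2}p^2\ge 0$ also requires $d\ge 1$, so this is a looseness in the lemma statement rather than in either proof.
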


\begin{proof}

Then, we proceed to prove \ref{P:1}--\ref{P:6}.
\begin{itemize}
    \item \ref{P:1} follows from \prettyref{eq:f_d}. 
    \item For any $0 < p\le 1$ and $d \in \reals_+$, we have 
    \begin{align*}
        \frac{\partial f_d(p) }{\partial p} 
        & =  \dfrac{\left(1-p\right)^d \left(dp+1\right)-1}{\left(d+1\right)p^2}  <0 \,,
    \end{align*}
    where the inequality holds because $h_d(p) \triangleq (1-p)^d(dp + 1)$ is decreasing in $p$ such that $\frac{\partial h_d(p)}{\partial p} = - \left(d^2+d\right)\left(1-p\right)^dp \le 0$ with strict inequality if $0<p<1$, and then $h_d(p) < h_d(0)=1$ for $0< p\le 1$. It follows that $ \frac{\partial f_d(p) }{\partial p}  <0$ for $0< p\le 1$. Moreover, we get 
    \[
    \lim_{p\to 0} \frac{\partial f_d(p) }{\partial p} = -\frac{d}{2}< 0
    \,.
    \]
    Hence, \ref{P:2} follows. 
    \item For any $0\le p\le 1$ and $d \in \reals_+$, we have
    \begin{align*}
         \frac{\partial  f_d(p)}{ \partial d } 
         & = -\dfrac{\ln\left(1-p\right)\left(1-p\right)^{d+1}}{p\cdot\left(d+1\right)}-\dfrac{1-\left(1-p\right)^{d+1}}{p\cdot\left(d+1\right)^2}\\
         & =  - \frac{1 -\left(1-p\right)^{d+1} \left(1 - \left(d+1\right) \ln \left(1-p\right)\right)  }{p\left(d+1\right)^2} \le 0\,,
    \end{align*}
    where the last inequality holds because 
    $h_d(p) \triangleq 1 -\left(1-p\right)^{d+1} \left(1 - \left(d+1\right) \ln \left(1-p\right)\right) \ge h_d(0) =0 $, given that 
    \begin{align*}
         \frac{\partial h_d(p)}{ \partial d }  = -\left(d+1\right)^2\ln\left(1-p\right)\left(1-p\right)^d \le 0 \,.
    \end{align*}    
    \item 
    For any $0\le p\le 1$ and $d \in \reals_+$, we have
    \begin{align*}
        \frac{\partial^2  f_d(p)}{ \partial p^2 }
        & = \frac{2 - \left( 2 + p \left( 2 + p d \right) \left(d-1\right)  \right)\left(1-p\right)^{d-1}}{\left(d+1\right)p^3} \ge 0\,,
    \end{align*}
     where the inequality holds because  
    \[
     \Tilde{h}_d(p) \triangleq 2 - \left( 2 + p \left( 2 + p d \right) \left(d-1\right)  \right)\left(1-p\right)^{d-1}  \ge 0 \,,
    \]
    given that
    \[
    \frac{ \Tilde{h}_d(p)}{\partial p}
    = d \left(d-1\right)\left(d+1\right)\left(1-p\right)^{d-2}p^2  \ge 0 \,, 
    \]
    and $ \Tilde{h}_d(0)= 0$. Hence, \ref{P:4} follows. 
    \item 
    For any $0\le p\le 1$ and $d \in \reals_+$, we have
    \begin{align*}
        \frac{\partial^2  f_d(p)}{ \partial d^2 }
        & = -\dfrac{\ln^2\left(1-p\right)\left(1-p\right)^{d+1}}{p\cdot\left(d+1\right)}+\dfrac{2\ln\left(1-p\right)\left(1-p\right)^{d+1}}{p\cdot\left(d+1\right)^2}+\dfrac{2\left(1-\left(1-p\right)^{d+1}\right)}{p\cdot\left(d+1\right)^3}\\
        & = \dfrac{2-\left(1-p\right)^{d+1}\left(2-2 \left(d+1\right)\ln \left(1-p\right)+ \left(d+1\right)^2\ln^2 \left(1-p\right)\right)}{p\cdot\left(d+1\right)^3}  \ge 0 \,, 
    \end{align*} 
    where the inequality holds because 
    \begin{align*}
        0\le  \widehat{h}_d(p) \triangleq \left(1-p\right)^{d+1}\left(2-2 \left(d+1\right)\ln \left(1-p\right)+ \left(d+1\right)^2\ln^2 \left(1-p\right)\right) \le 2 \,,
    \end{align*}
    given that 
    \begin{align*}
        \frac{ \widehat{h}_d(p)}{\partial p}
        & = -\left(d+1\right)^3\ln^2\left(1-p\right)\left(1-p\right)^d  \le 0 \,.
    \end{align*}
    \item 
    For any $d\in\reals_+$ and $0\le p \le 1$, we have
       \begin{align*}
         0\ge  \frac{\partial f_d(p) }{\partial p} 
        & =  \dfrac{\left(1-p\right)^d \left(dp+1\right)-1}{\left(d+1\right)p^2} =  \frac{(1-p)^{d}}{p} - \frac{f_{d}(p)}{p} \ge \max\left\{- \frac{f_{d}(p)}{p}, - \frac{1}{\left(d+1\right) p^2} \right\} \,,
         \end{align*}
        where the last inequality is strict if $p<1$. Hence, \ref{P:6} follows.
\end{itemize}


\end{proof}

\begin{lemma}\label{lmm:property_f_a_f_b}
    For any $a,b\in \naturals_+$, the function $ \left(f_{a} \circ f_{b} \right)(x)$ satisfies the following properties:
    \begin{enumerate}[label=(P\arabic*)]
    \setcounter{enumi}{6}
    \item \label{P:7} $\left(f_{a} \circ f_{b} \right)(x)$ is continuous and strictly increasing on $0\le x\le 1$.
    \item  \label{P:8} For any $0\le x\le 1$, $ 0 < \left(f_{a} \circ f_{b} \right) (x) < 1$, and 
    \begin{align*}
        \frac{1- \exp \left(- \left(a+1\right)f_b(x)\right)}{ \left(a+1\right)f_b(x)}  \le  \left(f_{a} \circ f_{b} \right)(x) \le  
        \frac{1- \exp \left(-  \frac{\left(a+1\right) f_b\left(x\right)}{ 1- f_b\left(x\right) } \right)}{ \left(a+1\right) f_b(x) } 
    \end{align*}
    \item 
    \label{P:9} For any $0\le x\le 1$,
    \begin{align*}
        0\le \frac{\partial\left(f_{a} \circ f_{b} \right) (x) }{\partial x} 
        \le \frac{1}{x}\left(f_{  a } \circ f_{  b } \right)(x) \,, 
    \end{align*}
    where the last inequality is strict if $0\le x<1$, and $\frac{1}{x}\left(f_{  a } \circ f_{  b } \right)(x)$ is decreasing (resp. strictly decreasing) on $0\le x \le 1$ (resp. $0\le x<1$). 
    \item \label{P:10} $\left(f_{a} \circ f_{b} \right) (x)$ has a unique fixed point solution $x^*$ such that $0< x^* < 1$. 
    \end{enumerate}
\end{lemma}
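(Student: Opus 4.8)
The plan is to derive \ref{P:7}--\ref{P:10} directly from the single-function properties collected in \prettyref{lmm:property_f_d}, using $y = f_b(x)$ as the intermediate variable throughout.

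For \ref{P:7}: by \ref{P:1} the map $f_b$ sends $[0,1]$ into $[0,1]$, so $f_a\circ f_b$ is well defined on $[0,1]$; it is continuous by \ref{P:1}, and being the composition of two strictly decreasing functions by \ref{P:2}, it is strictly increasing. For \ref{P:8}: writing $y = f_b(x)$, properties \ref{P:1}--\ref{P:2} give $y \in [\,1/(b+1),\,1\,]\subset(0,1]$, so $(f_a\circ f_b)(x) = f_a(y) = \frac{1-(1-y)^{a+1}}{(a+1)y}$ has positive numerator and denominator, and $f_a(y) < f_a(0) = 1$ since $y>0$; hence $0 < (f_a\circ f_b)(x) < 1$. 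The two-sided bound follows from the elementary chain $e^{-(a+1)y/(1-y)} \le (1-y)^{a+1} \le e^{-(a+1)y}$: the right inequality is $1-y\le e^{-y}$ raised to the power $a+1$, and the left one is $\log(1-y)\ge -y/(1-y)$ (equivalently $\log u \ge 1-1/u$ with $u=1-y$) raised to the power $a+1$. Substituting into $1-(1-y)^{a+1}$ and dividing by $(a+1)y$ gives exactly the displayed bounds, the $y=1$ case being read as a limit that matches $f_a(1)=1/(a+1)$.

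For \ref{P:9}: by the chain rule $\partial_x(f_a\circ f_b)(x) = f_a'(f_b(x))\,f_b'(x)$, a product of two non-positive quantities by \ref{P:2}, hence $\ge 0$; it is strictly positive for $x<1$ since both factors are then strictly negative. For the upper bound, \ref{P:6} applied to $f_b$ at $x$ gives $f_b'(x)\ge -f_b(x)/x$, and applied to $f_a$ at $y=f_b(x)$ gives $-f_a'(y)\le f_a(y)/y$. Since $f_a'(y)\le 0$, multiplying the first inequality by $f_a'(y)$ reverses its direction, so $f_a'(y)f_b'(x) \le -\tfrac{f_b(x)}{x}f_a'(y) = \tfrac{f_b(x)}{x}\bigl(-f_a'(y)\bigr) \le \tfrac{f_b(x)}{x}\cdot\tfrac{f_a(y)}{y} = \tfrac{(f_a\circ f_b)(x)}{x}$, with strictness when $x<1$. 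Because $\frac{d}{dx}\bigl(\tfrac1x(f_a\circ f_b)(x)\bigr) = \frac{x\,\partial_x(f_a\circ f_b)(x)-(f_a\circ f_b)(x)}{x^2}$, this is precisely the assertion that $\tfrac1x(f_a\circ f_b)(x)$ is (strictly, for $x<1$) decreasing on $(0,1]$. For \ref{P:10}: set $h(x) = (f_a\circ f_b)(x)-x$; then $h(0) = f_a(1) = 1/(a+1)>0$ and $h(1) = f_a(1/(b+1))-1 < 0$ by \ref{P:8}, so \ref{P:7} and the intermediate value theorem produce a fixed point in $(0,1)$, while two distinct fixed points $x_1<x_2$ in $(0,1)$ would give $\tfrac1{x_1}(f_a\circ f_b)(x_1)=1=\tfrac1{x_2}(f_a\circ f_b)(x_2)$, contradicting the strict monotonicity from \ref{P:9}; hence the fixed point is unique.

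The only delicate points are the sign bookkeeping in \ref{P:9} — where the inequality $f_b'(x)\ge -f_b(x)/x$ is multiplied by the negative quantity $f_a'(y)$ and hence reverses — and isolating the auxiliary inequality $(1-y)^{a+1}\ge e^{-(a+1)y/(1-y)}$ behind the upper bound in \ref{P:8}; both are routine once written down.
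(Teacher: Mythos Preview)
Your argument is correct and follows essentially the same route as the paper: both derive \ref{P:7}--\ref{P:10} from \prettyref{lmm:property_f_d} via the substitution $y=f_b(x)$, the exponential sandwich $e^{-(a+1)y/(1-y)}\le(1-y)^{a+1}\le e^{-(a+1)y}$ for \ref{P:8}, and the chain-rule bound using \ref{P:6} for \ref{P:9}. For \ref{P:10} uniqueness, the paper invokes the mean value theorem to locate a point where the derivative equals $1$ and then contradicts the bound in \ref{P:9}, whereas you argue more directly that two fixed points would force $\tfrac{1}{x}(f_a\circ f_b)(x)$ to take the value $1$ twice, contradicting its strict monotonicity---a slightly cleaner variant of the same idea.
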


\begin{proof}
    For any $a,b\in \naturals_+$, we have that 
    \begin{align}
            \left(f_{a} \circ f_{b} \right)(x) = \frac{1-\left(1 - f_b(x)\right)^{a+1}}{ \left(a+1\right)f_b(x)} \,. \label{eq:f_a_b}
    \end{align}
    \begin{itemize}
        \item \ref{P:7} follows from \ref{P:1} and \ref{P:2}. 
        \item  Since $\left(f_{a} \circ f_{b} \right)(0) = \frac{1}{a+1}>0$ and $\left(f_{a} \circ f_{b} \right) (1) = f_a\left(\frac{1}{b+1}\right) <1$, we have $ 0 < \left(f_{a} \circ f_{b} \right) (x) < 1$ for $0\le x \le 1$, by \ref{P:7}.  
        For any $0\le x\le 1$, \ref{P:8} follows from \prettyref{eq:f_a_b}, and for any $ |z| \le1$ and $y \ge 1$, we have
        \begin{align}
            \exp\left(- \frac{zy}{1-z}\right) \le \left(1-z\right)^y  \le   \exp\left(-zy \right) \,. \label{eq:exp_inequality}
        \end{align}
        \item  Taking the derivative of $\left(f_{a} \circ f_{b} \right) (x)$, we get that for any $0\le x \le 1$, 
        \begin{align}
            0\le \frac{\partial\left(f_{a} \circ f_{b} \right) (x) }{\partial x} 
            = \left(f_{  a }'\left( f_{  b }(x)  \right) \right) \times f_{  b }'(x)
            \le \frac{1}{x}\left(f_{  a } \circ f_{  b } \right)(x) \,, 
            \label{eq:f_a_b_derivative} 
        \end{align}
        where the last inequality holds because  $f_{  a }'\left( f_{  b }(x)  \right) \ge  -\frac{ \left(  f_{  a } \circ f_{b} \right ) (x) }{f_{  b }(x) }$ and $ f_{  b }'(x) \ge -\frac{f_{  b }(x)}{x}$ with strict inequality if $0\le x<1$, by \ref{P:6} in \prettyref{lmm:property_f_d}. Moreover, $\frac{1}{x}\left(f_{  a } \circ f_{  b } \right)(x)$ is decreasing (resp. strictly decreasing) on $0\le x \le 1$ (resp. $0\le x<1$) because
        \[
         \frac{\partial \frac{1}{x}\left(f_{  a } \circ f_{  b } \right) (x) }{\partial x} = \frac{x \left(f_{  a }' \left( f_{  b }(x) \right) \right) \times f_b'(x) - \left(f_{a} \circ f_{b} \right) (x) }{x^2} \le 0 \,, 
        \]
        where the inequality holds by \prettyref{eq:f_a_b_derivative} and is strict if $0\le x<1$. Hence, \ref{P:9} follows.
        \item By \ref{P:7} and \ref{P:8}, $\left(f_{a} \circ f_{b} \right) (x)$ must have at least one fixed point solution $x^*$ such that $0< x^* <1$.   Suppose $\left(f_{a} \circ f_{b} \right) (x)$ has multiple fixed point solutions. Let $x'$ denote another fixed point solution of $\left(f_{a} \circ f_{b} \right) (x)$ such that $x^*<x'<1$. By mean value theorem, there must exist $x^*<\widehat{x}<x'$ such that 
    \[
    \frac{\partial \left(f_{  a } \circ f_{ b }\right)(x)}{\partial x} |_{x= \widehat{x}}= \frac{\left(f_{  a } \circ f_{ b }\right)(x')-\left(f_{  a } \circ f_{ b }\right)(x^*)}{x'-x^*}= \frac{x'-x^*}{x'-x^*} = 1 
    \,,
    \]
    which contradicts with the following fact 
    \[
       \frac{\partial \frac{1}{x}\left(f_{a} \circ f_{b} \right) (x) }{\partial x}|_{x= \widehat{x}}  \le   \frac{1}{\widehat{x}}\left(f_{a} \circ f_{b} \right)(\widehat{x}) <  \frac{1}{x^*}\left(f_{a} \circ f_{  b } \right)(x^*) = 1 \,, 
    \]
    where the first inequality holds by \ref{P:9}, and the second inequality holds because $ \frac{1}{x}\left(f_{  a } \circ f_{  b } \right)(x) $ is strictly decreasing on $0<x<1$, in view of \ref{P:9}. By contradiction, \ref{P:10} follows. 
    \end{itemize}

\end{proof}

For any $a,b,m\in\naturals_+$, define
\begin{align}
    g\left(a,b,m\right)\left(1\right) \triangleq \left(f_{a} \circ f_{b} \right)^{m} (1) \,. \label{eq:g_a_b_m} 
\end{align}

\begin{lemma}\label{lmm:g_a_b_x}
    For any $a,b\in\naturals$, let $x^*$ denote the unique fixed point solution of $f_a \circ f_b \left(x\right) = x$. For any $x'\ge x^*$ and $\gamma >0$. If $m \ge \frac{\log \gamma }{\log \left( \frac{f_a\circ f_b \left(x'\right)}{x'}\right)}$, we have
    \begin{align*}
        x^* \le   g\left( a, b,m\right) \le x' +\gamma   \,. 
    \end{align*}
\end{lemma}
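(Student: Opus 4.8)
\textbf{Proof proposal for Lemma \ref{lmm:g_a_b_x}.}

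The plan is to analyze the iteration $x \mapsto (f_a \circ f_b)(x)$ starting from $x = 1$ and show it converges monotonically and geometrically to the fixed point $x^*$. First I would record the structural facts about $\phi \triangleq f_a \circ f_b$ already available: by \ref{P:7} in \prettyref{lmm:property_f_a_f_b}, $\phi$ is continuous and strictly increasing on $[0,1]$; by \ref{P:8}, $0 < \phi(x) < 1$ for all $x \in [0,1]$; by \ref{P:10}, $\phi$ has a unique fixed point $x^* \in (0,1)$; and crucially by \ref{P:9}, the map $x \mapsto \phi(x)/x$ is (strictly) decreasing on $(0,1)$, with value $1$ at $x = x^*$ (since $\phi(x^*) = x^*$). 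The decreasing-ratio property is the workhorse: for $x > x^*$ we get $\phi(x)/x < 1$, hence $x^* < \phi(x) < x$ — so the iterates from $x=1$ form a strictly decreasing sequence bounded below by $x^*$. Monotone convergence plus continuity forces the limit to be a fixed point, which must be $x^*$ by uniqueness. This gives the lower bound $x^* \le g(a,b,m)(1)$ for every $m$ immediately, since $1 > x^*$.

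For the quantitative upper bound, I would exploit the same ratio monotonicity to convert additive progress into multiplicative contraction relative to the scale $x'$. Fix $x' \ge x^*$. Let $r \triangleq \phi(x')/x' \in (0,1]$ (it is $\le 1$ because $x' \ge x^*$ and $\phi(x)/x$ is decreasing with value $1$ at $x^*$; if $x' = x^*$ then $r = 1$ and the claimed bound $x^* \le g \le x' + \gamma$ is trivial, so assume $x' > x^*$ and $r < 1$). The key monotonicity step: as long as an iterate $x_k \triangleq \phi^{(k)}(1)$ satisfies $x_k \ge x'$, we have $\phi(x_k)/x_k \le \phi(x')/x' = r$ by \ref{P:9}, hence $x_{k+1} = \phi(x_k) \le r\, x_k$. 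Therefore, while the iterates remain above $x'$, they decay geometrically: $x_k \le r^k x_0 = r^k \le r^k \cdot \frac{1}{x'} \cdot x'$ — more carefully, once we also know $x_k \le 1$ we get $x_k \le r^k$. Then $x_k \le r^k \le \gamma$ as soon as $k \ge \frac{\log \gamma}{\log r} = \frac{\log\gamma}{\log(\phi(x')/x')}$ (note $\log r < 0$ and $\log \gamma$ may be negative or positive; the inequality direction works out since dividing by the negative $\log r$ flips it — I'd state this carefully). Actually the cleaner route: either some iterate $x_k$ has already dropped to or below $x'$ before step $m$, in which case by monotonicity of the sequence all later iterates stay $\le x'$, so $g(a,b,m)(1) = x_m \le x' \le x' + \gamma$; or all of $x_0, \dots, x_{m-1} \ge x'$, in which case $x_m \le r^m \le r^{\lceil \log\gamma / \log r\rceil} \le \gamma \le x' + \gamma$ (bounding $x_m \le r^m$ using $x_0 = 1$). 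Combining both cases gives $g(a,b,m)(1) \le x' + \gamma$.

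The main obstacle I anticipate is bookkeeping around the two regimes and the sign of $\log r$ versus $\log \gamma$: I must ensure that "$m \ge \log\gamma / \log(\phi(x')/x')$" is used correctly given that $\log(\phi(x')/x') < 0$, and handle the degenerate cases $r = 1$ (i.e., $x' = x^*$) and $\gamma \ge 1$ (where the bound $x' + \gamma$ is loose and follows trivially from $g \le 1$, but the exponent bound may be vacuous or negative) without a sign error. A secondary technical point is justifying $x_k \le 1$ for all $k$ so that the geometric bound $x_k \le r^k$ (rather than $x_k \le r^k x_0$ with $x_0 = 1$) is clean — but this is immediate from \ref{P:8}, since $\phi$ maps $[0,1]$ into $(0,1)$, so $x_1 = \phi(1) < 1$ and inductively all later iterates lie in $(0,1)$. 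Once those case distinctions are pinned down, the argument is a short application of \ref{P:9} and \ref{P:10}.
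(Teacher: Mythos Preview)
Your proposal is correct and hinges on the same structural fact as the paper --- the monotonicity of $x \mapsto \phi(x)/x$ from \ref{P:9} --- but the contraction step differs in an interesting way. The paper bounds the \emph{distance to the threshold} $x'$: it first uses the derivative estimate $\phi'(y) \le \phi(x')/x' = r$ for $y \ge x'$ (also from \ref{P:9}) together with $\phi(x') \le x'$ to obtain $\phi(x_k) - x' \le \phi(x_k) - \phi(x') \le r(x_k - x')$, and hence $g(a,b,m) - x' \le r^m(1 - x') \le r^m$. You instead bound the iterate itself: $x_{k+1} = \dfrac{\phi(x_k)}{x_k}\,x_k \le r\,x_k$ as long as $x_k \ge x'$, giving $x_m \le r^m$ directly. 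Your route is marginally more elementary (no derivative or mean-value reasoning needed, just the ratio monotonicity), while the paper's version gives a slightly sharper intermediate bound $g(a,b,m) \le x' + r^m(1-x')$; both then close with $r^m \le \gamma$ under the stated hypothesis on $m$, and both handle the case where an iterate first drops below $x'$ by monotonicity of the sequence. Your flagged edge cases ($x' = x^*$ making $r = 1$; the sign bookkeeping on $\log r$) are real but peripheral --- the lemma statement is already degenerate at $x' = x^*$, and the paper does not comment on it either.
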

\begin{proof}
    $g(a,b,m)(1)$ is strictly decreasing on $m\in\naturals_+$, and  $g(a,b,m)(1) \ge x^*$ for any $m\in\naturals_+$. 
    It follows that 
    \[
      x^* = \lim_{ m \diverge } g\left( a, b, m \right)(1)\,. 
    \]
    By \ref{P:9} and the fact that $\frac{1}{x}\left(f_{  a } \circ f_{  b } \right)(x)$ is decreasing on $0\le x \le 1$, for any $x\ge x'$, we have  
    \begin{align}
        \frac{\partial \left(f_{  a } \circ f_{ b }\right)(x)}{\partial x} \le \frac{f_{  a } \circ f_{ b } \left(  x \right)}{ x }  \le \frac{f_{  a } \circ f_{ b } \left(  x' \right)}{ x' } \,. \label{eq:partial_fa_fb_x}
    \end{align}  
    Then, for any $x\ge x'$, we have that
    \begin{align*}
         \frac{\left(f_{a} \circ f_{b} \right)^2 (x)-  x' }{\left(f_{a} \circ f_{b} \right)(x) - x' } 
         \le 
         \frac{\left(f_{a} \circ f_{b} \right)^2 (x)-\left(f_{a} \circ f_{b} \right) \left( x'\right)}{\left(f_{a} \circ f_{b} \right)(x) - x' } 
         \le \frac{f_{  a } \circ f_{ b } \left(  x' \right)}{ x' }  \,,
    \end{align*}
    where the first inequality holds because $\left(f_{a} \circ f_{b} \right)  \left( x' \right) \le  x'$ given that $ x' \ge x^* $, by \ref{P:6} and \ref{P:10}, and the second inequality holds by \prettyref{eq:partial_fa_fb_x}. 
    Hence, for any $m\in\naturals+$ such that $g\left( a , b ,m\right) \ge  x' $, we have
    \begin{align*}
        g\left( a , b ,m\right) - x' \le  \left(\frac{f_{  a } \circ f_{ b } \left(  x' \right)}{ x' } \right)^m \left( g\left( a , b ,0 \right) - x' \right) \le \gamma \,, 
    \end{align*}
    where the last inequality holds because $g\left( a , b ,0 \right) - x' \le 1$. Given that $g\left( a , b ,m\right)$ is decreasing on $m$, then for any 
    $
    m \ge \frac{\log \left(\gamma \right)}{\log \left(\frac{f_{  a } \circ f_{ b } \left(  x' \right)}{ x' } \right)}  \,, 
    $
    we have $ g\left( a , b ,m\right)- x' \le \gamma  $. 
\end{proof}

\begin{lemma} \label{lmm:fixed_point_convergence}
    Suppose $c_n = \frac{ a_n +1}{b_n+1}$ for $a_n,b_n \in \naturals_+$, where $a_n b_n = \omega (1)$. Then, 
    \begin{enumerate}[label=(F\arabic*)]
        \item \label{F:1} if  $ \frac{1}{b_n+1} \le c_n \le 1- \Omega \left( \frac{1}{b_n+1} \right)$, let $x^*= - \frac{c_n}{\log \left(1- c_n\right)}$ and $\Gamma_\epsilon = \frac{1-\left(1- c_n\right)^{\frac{1}{1+\epsilon/2} } }{ c_n }$ for any $\epsilon = \omega \left(\frac{\log b_n}{b_n}\right)$;
        \item \label{F:2} if $ c_n=1$, let $x^*=\frac{1}{\sqrt{b_n+1}}$ and $ \Gamma_\epsilon = \frac{1-\exp \left(- \sqrt{b_n+1}/\left(1+\epsilon/2 \right) \right) }{1- \exp\left(-\left(1+\epsilon \right)\sqrt{b_n+1}\right)}$ for any $\epsilon = \omega \left(\frac{1}{\sqrt{b_n}}\right)$;
        \item \label{F:3} if $1 + \Omega\left(\frac{1}{b_n+1} \right) \le c_n \le a_n+1$, let $x^*=  - \frac{\log \left(1-\frac{1}{ c_n}\right)}{b_{n}+1}$ and $ \Gamma_\epsilon = \frac{1}{1+\epsilon \left(1-\frac{1}{c_n}\right)} $ for any $\epsilon>0$;
    \end{enumerate}
    we have that for any $m\ge \frac{\log  \left(\epsilon x^* \right)}{\log \Gamma_\epsilon}$, 
     \begin{align}
          x^* \le g\left( a_n , b_n ,m\right)\le \left( 1+2\epsilon \right)x^* \,.  \label{eq:g_epsilon}
     \end{align}

\end{lemma}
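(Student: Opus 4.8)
\textbf{Proof proposal for Lemma~\ref{lmm:fixed_point_convergence}.}

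The plan is to analyze the iteration $x_{m+1} = (f_{a_n}\circ f_{b_n})(x_m)$ started from $x_0 = 1$, using the general convergence machinery from \prettyref{lmm:g_a_b_x} together with the monotonicity and convexity facts in \prettyref{lmm:property_f_d} and \prettyref{lmm:property_f_a_f_b}. The lower bound $x^* \le g(a_n,b_n,m)$ is immediate: by \ref{P:6} and \ref{P:10}, $x^*$ is the unique fixed point, $g(a_n,b_n,m)(1)$ is strictly decreasing in $m$, and it converges to $x^*$ from above, so $g(a_n,b_n,m)\ge x^*$ for all $m$. The real work is the upper bound, and \prettyref{lmm:g_a_b_x} reduces it to two tasks: (i) identify an explicit near-fixed-point value $x'$ with $x^* \le x' \le (1+\epsilon)x^*$ (say), and (ii) lower-bound the contraction factor $\frac{(f_{a_n}\circ f_{b_n})(x')}{x'}$ away from $1$ by a quantity matching $\log \Gamma_\epsilon$, so that $m \ge \frac{\log(\epsilon x^*)}{\log \Gamma_\epsilon}$ suffices to push $g(a_n,b_n,m)$ within $\epsilon x^*$ of $x'$, hence within $2\epsilon x^*$ of $x^*$.

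For step (i), I would use the closed form $(f_a\circ f_b)(x) = \frac{1-(1-f_b(x))^{a+1}}{(a+1)f_b(x)}$ from \prettyref{eq:f_a_b} and the sandwiching exponential bounds in \ref{P:8}, namely $\frac{1-e^{-(a+1)f_b(x)}}{(a+1)f_b(x)} \le (f_a\circ f_b)(x) \le \frac{1-\exp(-(a+1)f_b(x)/(1-f_b(x)))}{(a+1)f_b(x)}$. Near the fixed point one expects $f_b(x^*) \approx \frac{1}{b+1}$ and $(a+1)f_b(x^*) \approx c_n$, so the fixed-point equation becomes approximately $x^* \approx \frac{1-e^{-c_n}}{c_n}$ in regime \ref{F:1} — wait, that is not quite the stated $x^* = -c_n/\log(1-c_n)$, so the correct heuristic is the other direction: writing $y = f_b(x)$, the self-consistency is $x = f_a(y)$ and $y = f_b(x)$; solving $y(b+1) = 1-(1-x)^{?}$... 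I would instead verify directly that the stated $x^*$ in each of \ref{F:1}--\ref{F:3} satisfies $(f_{a_n}\circ f_{b_n})(x^*) = (1+o(1))x^*$ up to the claimed precision, using $(1\pm z)^y = \exp(\pm zy + O(z^2 y))$ and the hypothesis $a_n b_n = \omega(1)$ to control the error terms, splitting into the three ranges of $c_n$ exactly as the three cases dictate (the cases differ in which of $a_n$, $b_n$ dominates the logarithm). For step (ii), I would Taylor-expand $\log\frac{(f_{a_n}\circ f_{b_n})(x')}{x'}$ around $x'$, using \ref{P:9} — which gives $\frac{\partial (f_a\circ f_b)}{\partial x} \le \frac1x (f_a\circ f_b)(x)$ with the ratio $\frac1x(f_a\circ f_b)(x)$ strictly decreasing — to show the contraction factor at $x'$ is bounded above by the explicit $\Gamma_\epsilon$ written in each case. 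The form of $\Gamma_\epsilon$, e.g. $\frac{1-(1-c_n)^{1/(1+\epsilon/2)}}{c_n}$ in \ref{F:1}, is exactly what one gets by evaluating the exponential bound of \ref{P:8} at $x' = (1+\epsilon/2)x^*$ and simplifying; so I would plug $x' = (1+\epsilon/2)x^*$ into the sandwich and read off the bound.

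The main obstacle I anticipate is the bookkeeping of error terms across the three regimes: the precision demanded ($x^*$ correct to leading order, $\Gamma_\epsilon$ correct enough that $\log\Gamma_\epsilon$ is of the right order) means one must carefully track how $f_b(x)$ deviates from $\frac{1}{b+1}$ near the fixed point, and the admissible $\epsilon$ ranges ($\epsilon = \omega(\log b_n/b_n)$ in \ref{F:1}, $\epsilon=\omega(1/\sqrt{b_n})$ in \ref{F:2}, any $\epsilon>0$ in \ref{F:3}) are precisely the thresholds at which these errors become negligible compared to the $\epsilon$-slack. Regime \ref{F:2} ($c_n = 1$, where the linear-in-$c_n$ expansions degenerate and $x^* \asymp (b_n+1)^{-1/2}$) will require a separate second-order expansion since the first-order term in the fixed-point equation vanishes. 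Once the explicit $x^*$ and $\Gamma_\epsilon$ are validated in each case, \prettyref{lmm:g_a_b_x} delivers \prettyref{eq:g_epsilon} directly: $g(a_n,b_n,m) \le x' + \epsilon x^* \le (1+2\epsilon)x^*$ whenever $m \ge \log(\epsilon x^*)/\log\Gamma_\epsilon$, using that the contraction factor is at most $\Gamma_\epsilon$ on $[x',1]$ and that the overshoot $g(a_n,b_n,0)-x' \le 1$.
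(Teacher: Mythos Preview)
Your proposal is correct and follows essentially the same route as the paper: invoke \prettyref{lmm:g_a_b_x} with $x'=(1+\epsilon)x^*$ and $\gamma=\epsilon x^*$, verify in each regime \ref{F:1}--\ref{F:3} that the stated $x^*$ satisfies $(f_{a_n}\circ f_{b_n})(x^*)=(1\pm o(1))x^*$ via the exponential sandwich \ref{P:8}, and then bound the contraction ratio $\frac{(f_{a_n}\circ f_{b_n})((1+\epsilon)x^*)}{(1+\epsilon)x^*}$ above by the explicit $\Gamma_\epsilon$. The only cosmetic slip is that you write $x'=(1+\epsilon/2)x^*$ once but then compute $x'+\epsilon x^*\le(1+2\epsilon)x^*$, which requires $x'=(1+\epsilon)x^*$; the paper uses the latter throughout.
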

\begin{proof}
    Let 
    \[
    \Gamma_\epsilon \triangleq  \frac{f_a\circ f_b \left(  \left(1+\epsilon\right) x^*\right)}{ \left(1+\epsilon\right) x^*} \,.
    \]
    By \prettyref{lmm:g_a_b_x}, by picking $x' = \left(1+\epsilon\right) x^*$ and $\gamma = \epsilon x^*$ for any $\epsilon>0$, for any $m \ge \frac{\log \gamma }{\log \Gamma_\epsilon}$, we have
    \begin{align*}
         0\le g\left( a_n , b_n ,m\right)- x^* \le 2\epsilon x^* \,,  
    \end{align*}
    Next, we proceed to prove \ref{F:1}--\ref{F:3}. 
    \begin{itemize}
        \item Suppose $\frac{1}{b_n+1}\le  c_n  \le 1- \Omega\left(\frac{1}{b_n+1} \right)$.  
        First, we claim that 
        the unique fixed point solution of $\left(f_{ a_n } \circ f_{ b_n } \right)(x)$ is $x^*= -\frac{c_n}{\log \left(1-c_n \right)}$ as $n\diverge$. 
        By \ref{P:8}, we have
        \begin{align*}
            \left(f_{ a_n } \circ f_{ b_n } \right)( x^*) 
            \ge \frac{1- \exp \left(- c_n\frac{1-\left(1- x^*\right)^{ b_n +1} }{ x^*} \right)}{c_n \left(\frac{1-\left(1-x^*\right)^{ b_n +1} }{ x^*} \right)} 
            & = -\frac{1-\left(1-c_n\right)^{1-\left(1- x^*\right)^{ b_n +1}} }{\log \left(1-c_n \right) \left(1-\left(1- x^*\right)^{ b_n +1}\right)} \\
            & \ge x^* \left(1-o(1)\right) \,, 
        \end{align*}
        where the last inequality holds by $1-\left(1- x^*\right)^{ b_n +1}  = 1- o\left(\frac{1}{b_n}\right)  \le 1$ and 
        \begin{align*}
                    \left(1-c_n\right)^{1-\left(1- x^*\right)^{b_n + 1}} 
                    = \left(1-c_n\right)^{ 1- o\left(\frac{1}{b_n}\right)}
                    & \overset{(a)}{\le } \left( 1-c_n \right) \left(1+o\left(\frac{1}{\sqrt{b_n}}\right)\right)  \\
                    & =   \left( 1-c_n \right) \left(1+o\left(1-c_n\right)\right)\,,
        \end{align*}
        where $(a)$ holds because $x^{1-y} \le x\left(1+\sqrt{y}\right)$ for any $ - \log x \le \frac{1}{y+\sqrt{y}}$ for $0< x, y \le 1$, in view of $  - \log \left(1-c_n\right) = O\left(\log b_n \right)$. 
        By \ref{P:8}, 
        we have 
        \begin{align*}
             \left(f_{a_n} \circ f_{b_n} \right)( x^*) 
             & \le- \frac{1- \left(1-c_n\right)^{ \frac{1-\left(1- x^*\right)^{ b_n +1} }{1-f_{b_n}\left(x^*\right)} }}{\log \left(1-c_n \right) \left(1-\left(1- x^*\right)^{ b_n +1}\right)}  \\
             & \stepa{\le} - \frac{1- \left(1- \left( \frac{1-\left(1- x^*\right)^{ b_n +1} }{1-f_{b_n}\left(x^*\right)} \right) c_n\right)}{\log \left(1-c_n \right) \left(1-\left(1- x^*\right)^{ b_n +1}\right)} \\
             & \stepb{=} x^* \left(1 + o(1) \right) \,,
        \end{align*}
        where $(a)$ holds because $\left(1+x\right)^y \ge 1 + x y$ for any $x \ge -1$ and $y\ge 1$, and $ \frac{1-\left(1- x^*\right)^{ b_n +1} }{1-f_{b_n}\left(x^*\right)} \ge 1
        $, in view of $x^* = \Omega\left(\frac{1}{\log b_n}\right)$ given that  $ x^* = -\frac{c_n}{\log \left(1-c_n\right)}$ is monotone decreasing on $c_n$ where $c_n\le 1- \Omega \left( \frac{1}{b_n+1} \right)$; $(b)$ holds because $\frac{1-\left(1- x^*\right)^{ b_n +1} }{1-f_{b_n}\left(x^*\right)}=  1+ o(1)$, given that $\left(1-x^*\right)^{ b_n +1}  = o\left(1\right)$ and $f_{b_n}\left(x^*\right) = \frac{1-\left(1- x^*\right)^{ b_n +1}}{\left(b_n +1\right) x^*} = O(\frac{\log b_n}{b_n})=o(1)$. Hence, our claim follows. 

        Then, we have
        \begin{align}  
        \frac{\left(f_{ a_n } \circ f_{ b_n } \right) \left((1+\epsilon) x^* \right) }{(1+\epsilon)x^* }  
        & \stepa{\le} \frac{1- \left(1-c_n\right)^{\frac{1-\left(1 - (1+\epsilon) x^*  \right)^{ b_n +1}}{\left(1+\epsilon\right)\left(1- f_{b_n}\left( \left( 1+\epsilon \right) x^*\right) \right)}} }{  c_n \left(1-\left(1- (1+\epsilon) x^*\right)^{ b_n +1}\right)} \nonumber \\
        & \stepb{\le} \frac{1- \left(1-c_n\right)^{\frac{1+o\left(\epsilon\right)}{1+\epsilon}} }{  c_n \left(1-o\left(\frac{1}{b_n}\right)\right)} \nonumber \\
        &  \stepc{\le} \frac{1-\left(1-c_n\right)^{\frac{1}{1+\epsilon/2} } }{c_n }  \triangleq \Gamma_\epsilon  <1 \,, \label{eq:delta_epsilon}
        \end{align}
        where $(a)$ holds because by \ref{P:8};  
        $(b)$ holds because 
        $\frac{1-\left(1 - (1+\epsilon) x^*  \right)^{ b_n +1}}{1- f_{b_n}\left( \left( 1+\epsilon \right) x^*\right)} = 1+ o(\epsilon)$, given that $\epsilon=\omega\left(\frac{\log b_n}{b_n}\right)$, $\left(1-\left(1+\epsilon\right)x^*\right)^{ b_n +1}  = o\left(\frac{1}{b_n}\right)$ and $f_{b_n}\left(\left(1+\epsilon\right) x^*\right) = \frac{1-\left(1- \left(1+\epsilon\right) x^*\right)^{ b_n +1}}{\left(1+\epsilon\right) \left(b_n +1\right) x^*} = O\left(\frac{\log b_n}{b_n}\right)$; $(c)$ holds because $c_n \ge \frac{1}{b_n+1}$ and $\epsilon=\omega\left(\frac{\log b_n}{b_n}\right)$.
        Hence, \ref{F:1} follows. 

    \item Suppose $c_n=1$. 
    First, we claim that the unique fixed point solution of $\left(f_{ a_n } \circ f_{ b_n } \right)(x)$ is $x^*= \frac{1}{\sqrt{b_n+1}}$, as $n \diverge$. By \ref{P:8}, we have
    \begin{align*}
        \left(f_{ a_n } \circ f_{ b_n } \right)( x^*) 
        \ge \frac{1- \exp \left(- \frac{1-\left(1- x^*\right)^{ b_n +1} }{ x^*} \right)}{\frac{1-\left(1-x^*\right)^{ b_n +1} }{ x^*}}
        &  \stepa{\ge} \frac{1- \exp \left(- \frac{1}{x^*}\right)}{ \frac{1}{x^*}} \\
        &  \stepb{= } x^* \left(1-o(1)\right)\,,
    \end{align*}
    where $(a)$ holds because $\frac{1-\left(1- x^*\right)^{ b_n +1} }{ x^*}  \le \frac{1}{x^*}$, and $\frac{1-\exp\left(-x\right)}{x}$ is monotone decreasing on $x$; $(b)$ holds
    because $\exp \left(-\frac{1}{x^*}\right)= \exp\left(-\sqrt{b_n+1}\right)=o(1) $.
    By \ref{P:8}, we have
    \begin{align*}
        \left(f_{ a_n } \circ f_{ b_n } \right)( x^*) 
        & \le  \frac{1- \left(1- \frac{\left( 1- \left(1-x^* \right)^{b_n +1} \right)^2 }{\left(1+b_n\right){x^*}^2}  \right) \exp \left(- \frac{1-\left(1-x^*\right)^{b_n + 1}}{x^*} \right)}{\frac{1-\left(1-x^*\right)^{b_n + 1}}{x^*}} 
        =  x^* \left(1+o(1)\right)\,,
    \end{align*}
    where the last inequality holds because 
    $ 1- \left(1-x^* \right)^{b_n +1} \le 1 $, and $\left(1+b_n\right){x^*}^2 =1$. Then, our claim follows. 
    
    Then, we have
    \begin{align*}
        \frac{\left(f_{a_n} \circ f_{b_n} \right)((1+\epsilon) x^*)}{(1+\epsilon) x^*}  & \stepa{\le} \frac{1-\exp \left(- \frac{1-\left(1-\left(1+\epsilon \right) x^*\right)^{b_n + 1}}{ \left(1+\epsilon \right) x^* \left(1-f_{b_n}\left( \left( 1+\epsilon \right) x^* \right)\right)}  \right) }{1-\left(1-\left(1+\epsilon \right) x^*\right)^{b_n + 1}} \\
        & \stepb{\le}  \frac{1-  \exp \left(- \frac{1}{ \left(1+\epsilon/2 \right) x^*} \right) }{1- \exp\left(-\frac{1+\epsilon }{x^*}\right)} \\
        & =  \frac{1-\exp \left(- \frac{\sqrt{b_n+1}}{ \left(1+\epsilon/2 \right)} \right) }{1- \exp\left(-\left(1+\epsilon \right)\sqrt{b_n+1}\right)} \triangleq \Gamma_\epsilon < 1 \,,
    \end{align*}
    where $(a)$ holds by \ref{P:8}; 
    $(b)$ holds because
    $ \left(1-\left(1+\epsilon \right) x^* \right)^{b_n +1}  \le  \exp\left(-\frac{1+\epsilon }{x^*}\right) $, and $ \frac{1-\left(1-\left(1+\epsilon \right) x^*\right)^{b_n + 1}}{ 1-f_{b_n}\left( \left( 1+\epsilon \right) x^* \right)}\\ = 1+O(\frac{1}{\sqrt{b_n}}) = 1+ o\left(\epsilon\right)$, given that $\epsilon= \omega\left(\frac{1}{\sqrt{b_n}}\right)$, $\left(1-\left(1+\epsilon\right)x^*\right)^{ b_n +1}  = o\left(\frac{1}{b_n}\right)$ and $f_{b_n}\left(\left(1+\epsilon\right) x^*\right) = \frac{1-\left(1- \left(1+\epsilon\right) x^*\right)^{ b_n +1}}{\left(1+\epsilon\right) \left(b_n +1\right) x^*} = O(\frac{1}{\sqrt{b_n}})$.   Hence, \ref{F:2} follows. 
    



    \item Suppose $1+ \Omega\left(\frac{1}{b_n+1} \right) \le c_n $. 
    First, we claim that the unique fixed point solution of $\left(f_{ a_n } \circ f_{ b_n } \right)(x)$ is $x^*=  - \frac{1}{b_{n}+1}\log \left(1-\frac{1}{c_n}\right)$ as $ n \diverge$. 
    Since $f_{b_n}(x^*) = \frac{1-\left(1- x^*\right)^{ b_n +1} }{ \left( b_n +1\right) x^*} $, given that \prettyref{eq:exp_inequality} holds for any $ |z| \le1$ and $y \ge 1$, we have
    \begin{align}
       - \frac{1}{c_n\log \left(1-\frac{1}{c_n}\right)}  \le  f_{b_n}(x^*)
        \le - \frac{1- \left(1-\frac{1}{c_n}\right)^{\frac{1}{1+ \frac{1}{b_{n}+1}\log \left(1-\frac{1}{c_n}\right)}}}{\log \left(1-\frac{1}{c_n}\right)} 
       & \stepa{\le} - \frac{1-\frac{1-\frac{1}{c_n}}{1+ \frac{1}{b_{n}+1}\log \left(1-\frac{1}{c_n}\right)}}{\log \left(1-\frac{1}{c_n}\right)} \nonumber \\
       & \stepb{\le} - \frac{1- \left(1-\frac{1}{c_n} \right) \left(1 - \frac{1}{c_n\left(b_{n}+1 \right)}\right)}{\log \left(1-\frac{1}{c_n}\right)} \nonumber \\
       & \le - \frac{1+\frac{1}{b_n+1}}{c_n\log \left(1-\frac{1}{c_n}\right)} \,, \label{eq:case_3_f_b_x_bound}
    \end{align}
    where $(a)$ holds because $(1-x)^y \ge 1-xy$ where $0\le x\le 1$ and $y\ge 1$, given that $1+ \frac{1}{b_{n}+1}\log \left(1-\frac{1}{c_n}\right)\le 1$; $(b)$ holds because $ \frac{1}{1+ \frac{1}{b_{n}+1}\log \left(1-\frac{1}{c_n}\right)} \ge 1 - \frac{1}{b_{n}+1}\log \left(1-\frac{1}{c_n}\right) \ge 1 - \frac{1}{c_n\left(b_{n}+1 \right)}$.  
    
    
    Let $ y^* \triangleq - \frac{1}{c_n\log \left(1-\frac{1}{c_n}\right)}.$  Together with by \ref{P:1} and \prettyref{eq:case_3_f_b_x_bound}, we have
    \begin{align*}
        f_{a_n} \left(y^* \left(1+\frac{1}{b_n+1}\right)\right)
        \le \left(f_{ a_n } \circ f_{ b_n } \right)( x^*) \le  f_{a_n} \left(y^*\right)  \,,
    \end{align*}
    where 
    \begin{align*}
        f_{a_n}  \left( \left(1+\frac{1}{b_n+1}\right) y^*\right) 
        & = \frac{1-\left(1- \left(1+\frac{1}{b_n+1}\right) y^*\right)^{a_n+1}}{\left(a_n+1\right) \left(1+\frac{1}{b_n+1}\right)y^* } =  x^*(1-o(1)) \\ 
         f_{a_n}  \left(y^*\right) 
         & = \frac{1-\left(1- y^*\right)^{a_n+1}}{\left(a_n+1\right) y^* } =  x^*(1+o(1)) \,,
    \end{align*}
    given that $y^*\ge \Omega \left( \frac{1}{\log b_n} \right)$, and then $\left(1- y^*\left(1+\frac{1}{b_n+1}\right)\right)^{a_n+1}= 1+o(1)$, $\left(1- y^*\right)^{a_n+1}= 1+o(1)$. Hence, our claim follows. 
    
    Then, we have
    \begin{align*}
        \frac{\left(f_{a_n} \circ f_{b_n} \right)((1+\epsilon) x^*)}{(1+\epsilon) x^*} & \stepa{\le} \frac{1-\left(1+ \frac{1-\left(1-\frac{1}{c_n}\right)^{1+\epsilon }}{\left(1+\epsilon\right)\log \left(1-\frac{1}{c_n}\right)}\right)^{a_n+1}}{c_n\left(1-\left(1-\frac{1}{c_n}\right)^{1+\epsilon }\right)}\\
        & \overset{(b)}{\le}\frac{1 - \exp \left(\frac{a_n+1}{\left(1+\epsilon\right)\log \left(1-\frac{1}{c_n}\right)}\right)}{c_n\left(1-\left(1-\frac{1}{c_n}\right)^{1+\epsilon }\right)}\\
        & \le \frac{1}{c_n\left(1-\left(1-\frac{1}{c_n}\right)^{1+\epsilon }\right)}\\
        & \overset{(c)}{\le} \frac{1}{1+\epsilon \left(1-\frac{1}{c_n}\right)} 
        \triangleq \Gamma_\epsilon < 1 \,,
    \end{align*}
    where $(a)$ holds because by \prettyref{eq:exp_inequality}, we have
    \begin{align*}
        f_{b_n}\left(\left(1+\epsilon\right)x^*\right) \ge  \frac{1-\exp\left(- \left( b_n +1\right)\left(1+\epsilon\right) x^*\right)^{} }{ \left( b_n +1\right) \left(1+\epsilon\right) x^*} = - \frac{1-\left(1-\frac{1}{c_n}\right)^{1+\epsilon }}{\left(1+\epsilon\right)\log \left(1-\frac{1}{c_n}\right)} \,; 
    \end{align*}
    $(b)$ holds by \prettyref{eq:exp_inequality}; $(c)$ holds by 
    \[
    \left(1-\frac{1}{c_n}\right)^{1+\epsilon } \le \left(1-\frac{1}{c_n}\right) \left(1-\frac{\epsilon}{c_n}\right) \le 1- \frac{1+\epsilon}{c_n} +\frac{\epsilon}{c_n^2}  \,,
    \]
    in view of $(1+x)^y<1+xy$ for any $x\ge -1 $ and $y<1$. 
    Hence, \ref{F:3} follows. 

    \end{itemize}

\end{proof}

\subsection{Supplementary materials for \prettyref{sec:almost_interim_proof}}
Let $H_1$ and $H_2$ be independent random one-sided $d$-regular bipartite graph, where each $a\in \Short$ is connected to $d$ randomly chosen $j\in \Long$ on $H_1$, each $j \in \Long$ is connected to $\widehat{d}$ randomly chosen $j\in \Short$ on $H_2$, where $\widehat{d} \le O\left(d\right)$. Let $H$ denote the union graph of $H_1$ and $H_2$ with uniformly generated strict preferences. 
And let $H'$ denote the vertex-induced subgraph of $H$ on $\Short' \cup \Long'$, where $\Short' \subset \Short $ and $\Long' \subset \Long$  with $|\Short'|= \gamma_1  n_{\Short}$ and $|\Long'|= \gamma_2  n_{\Long}$ 
for some $0< \gamma_1 \le 1$ and $\Omega(1) \le \gamma_2 \le 1$. For any $i\in \calV(H')$, let $\calN(i)$ denote the neighbors of $i$ on $H'$.

\begin{proposition}\label{prop:both_remove_gamma_d_omega_1}
    Suppose $\omega(1) \le d \le O\left(\polylog n_\Long\right)$ and $\gamma_1 n_{\Short} = \delta \gamma_2 n_{\Long}$ for some $\Omega(1) \le \delta \le 1+\frac{1}{d^\lambda}$ where $\lambda \ge \omega\left(\frac{1}{\log d}\right)$. Let $\nu = \frac{\left(1\wedge \lambda\right) \log d}{d}$ as defined in \prettyref{eq:nu}. 
    \begin{itemize}
        \item    For any $a\in\Short$ and $\calN'(a) \subset \calN(a)$, we have
           \begin{align}
                 \prob{\forall\text{ $j\in \calN'(a)$, $j$ is unavailable to $a$ on $H'$}
                 } 
                 & \le \left(1-  \underline{C}' \cdot \nu \right)^{\left|\calN'(a)\right|-2}  + o\left(\frac{1}{n}\right) \,,  \label{eq:j_a_calN'_a_both}
            \end{align}
         where  $\underline{C}'>0$ is some constant that only depends on $\frac{\log d}{\log \log n_{\Long}}$.
         
           \item 
            For any $j\in\Long$ and $\calN'(j) \subset \calN(j)$, we have
           \begin{align}
                 \prob{\forall\text{ $a\in \calN'(j)$, $a$ is unavailable to $j$ on $H'$}}
                 & \ge  \left(1-  \overline{C}' \cdot \frac{1}{\nu d} \right)^{\left|\calN'(a)\right|} -o(1)\,,  \label{eq:a'_j_calN'_j_both}
            \end{align}
            where $\overline{C}' >0$ is some constant that only depends on  $\frac{\log d}{\log \log n_{\Long}}$.  
    \end{itemize}
\end{proposition}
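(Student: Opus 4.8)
The plan is to mirror exactly the argument used for \prettyref{prop:remove_gamma_d_omega_1} in \prettyref{sec:post_prop_remove_gamma_d_omega_1}, replacing the one-sided $d$-regular graph $H$ by the union graph $H = H_1 \cup H_2$ and invoking the tree-excess and size bounds for unions of one-sided regular graphs, namely \prettyref{prop:two_side_ER_tree} (with $\widehat d \le O(d)$, so that the relevant ``effective degree'' is $\Theta(d)$) in place of \prettyref{prop:one_side_ER_tree}. The key structural input is that, with probability $1 - o(n^{-\gamma} \vee n^{-2})$, for any $\rho \in \calV(H)$ the truncated neighborhood $H_{\overell}(\rho)$ has tree excess at most $1$ for $\overell \le \frac{\log n}{16(\log d \vee \log\log n)}$, and $|\calV(H_\ell(\rho))| \le n^{1/2}$ for all such $\ell$. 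With these facts, the truncation machinery (\prettyref{lmm:local_available}) and the peeling-to-a-tree claim (\prettyref{clm:claim_3}, which only needs tree excess $\le 1$) carry over verbatim: conditioned on small tree excess, $H'_m(a)$ differs from a rooted tree $T''_m(a)$ by removing at most two neighbors of $a$, and availability of $j$ to $a$ on the tree is equivalent to $j$ proposing to $a$ in the hierarchical proposal-passing algorithm (\prettyref{alg:proposal_passing_alg}), whose proposal indicators are mutually independent across children.

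The one genuinely new ingredient is the branching-process description of the BFS spanning tree $T'_\ell(a)$ on $H' = (H_1 \cup H_2)'$. In the one-sided case (proof of \prettyref{lmm:claim_1}), an applicant $w_t$ has $\Hyper$-distributed offspring of size $\approx \gamma_2 d$ and a firm $w_t$ has $\Binom$-distributed offspring of size $\approx \delta\gamma_2 d$. Here each applicant $w_t$ contributes its $d$ signaled firms (from $H_1$) plus the firms that signaled it (from $H_2$), and symmetrically each firm contributes its $\widehat d$ signaled applicants plus those that signaled it. One would track the exploration exactly as in \prettyref{lmm:claim_1}: the dominant contribution to the offspring of an applicant is still $\Theta(\gamma_2 d)$ firms and to a firm is $\Theta(\delta \gamma_2 d)$ applicants, with the extra $H_2$- (resp. $H_1$-) edges only changing constants. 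Concretely, I would show $T'_\ell(a) \sim \mathbb{T}_\ell(c_1 \delta \gamma_2 d, c_2 \gamma_2 d, \fone, \ftwo)$ for absolute constants $c_1, c_2 > 0$ and the same $\fone = (\gamma_2 d)^{-1/(2+\epsilon_0)}$, $\ftwo = \exp(-\tfrac12 \gamma_2^{2 - 2/(2+\epsilon_0)} (\gamma_2 d)^{1 - 2/(2+\epsilon_0)})$ as in \prettyref{eq:fone_ftwo_1}, using Chernoff/hypergeometric tail bounds (\prettyref{lmm:chernoff}, \prettyref{lmm:hyper}) on the neutral-vertex counts, which remain $\ge \gamma_2 n_\Long - n^{1/2}$ by \prettyref{eq:two_sided_square}. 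Crucially, since the degree parameters differ from the one-sided case only by constants, \prettyref{lmm:rooted_tree_expected_degree} and the fixed-point estimate \prettyref{eq:mu_j_rho_lower_bound_2} still give $(f_{\eta_1\kappa_2}\circ f_{\eta_2\kappa_1})^{\ell/2-1}(1) \le C \nu$ for a constant $C$ depending only on $\log_{\log n_\Long} d$, because those lemmas are stated for arbitrary $\kappa_1, \kappa_2$ and the ratio $c_n = \eta_1\kappa_2/(\delta\eta_2\kappa_1)$ is bounded in exactly the same range $\tfrac{1-2\fone}{\delta} \le c_n \le \tfrac{1-\fone}{\delta}$ up to harmless constant rescaling.

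Given the analogue of \prettyref{lmm:claim_1}, the proof of \prettyref{eq:j_a_calN'_a_both} proceeds identically to the derivation of \prettyref{eq:j_a_calN'_a}: bound the probability that no $j \in \calN'(a)$ is available by truncating to the even-depth neighborhood $H'_m(a)$ (\prettyref{lmm:local_available}), peeling to a tree $T''_m(a)$ on all but $\le 2$ neighbors of $a$ (\prettyref{clm:claim_3}), using that availability on the tree equals proposing (\prettyref{alg:proposal_passing_alg}), and finally using independence of the child-proposal indicators to get $\prod_{j}(1 - \expect{\sfX_{j,a}(T'_m(a))}) \le (1 - \underline C' \nu)^{|\calN'(a)| - 2}$, where $\expect{\sfX_{j,a}} \ge \underline C' \nu$ comes from the lower-bound half of \prettyref{lmm:rooted_tree_expected_degree} (cf.\ \prettyref{eq:claim_1}). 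For \prettyref{eq:a'_j_calN'_j_both} one instead truncates at an odd depth $m - 1$ where $H'_{m-1}(j)$ is a tree with high probability, and uses the upper-bound half (cf.\ \prettyref{eq:claim_2}), $\expect{\sfX_{a',j}(T'_{m-1}(j))} \le \overline C' \cdot \tfrac{1}{\nu d}$, to conclude $\prod_{a'}(1 - \expect{\sfX_{a',j}}) \ge (1 - \overline C' \tfrac{1}{\nu d})^{|\calN'(j)|} - o(1)$. The main obstacle I anticipate is getting the branching-process offspring bounds right when both $H_1$- and $H_2$-edges are explored simultaneously: one must make sure the two edge sets are conditionally independent given the BFS history and that the hypergeometric/binomial domination arguments still apply after the ``neutral vertex'' bookkeeping accounts for vertices already reached via the other graph; but since \prettyref{prop:two_side_ER_tree} already shows $H_1 \cup H_2$ embeds into $\mathbb{G}(n_\Short, n_\Long, 2p)$ with $p = \Theta((d\vee \log n)/n)$, the tree-excess and size controls needed are exactly the ones invoked above, and the offspring distributions remain concentrated at $\Theta(d)$, so only constants in $\underline C', \overline C'$ change.
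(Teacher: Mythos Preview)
Your proposal is correct and follows essentially the same approach as the paper, which explicitly states that the proof mirrors that of \prettyref{prop:remove_gamma_d_omega_1} with \prettyref{prop:two_side_ER_tree} replacing \prettyref{prop:one_side_ER_tree} and a new branching-process lemma (\prettyref{lmm:both_claim_1}) replacing \prettyref{lmm:claim_1}. The only refinement you did not anticipate is that the paper makes the branching parameters explicit as $\kappa_1 = \delta(\gamma_2 d + \tfrac{\gamma_1}{\delta}\widehat d)$, $\kappa_2 = \gamma_2 d + \tfrac{\gamma_1}{\delta}\widehat d$ and introduces a case split on whether $\tfrac{\gamma_1 \widehat d}{\delta}$ is $\Omega((\gamma_2 d)^{1/4})$ or $o((\gamma_2 d)^{1/4})$ to select $\fone,\ftwo$ (with exponent $\tfrac14$ rather than $\tfrac{1}{2+\epsilon_0}$), instead of your generic constants $c_1,c_2$ and the one-sided choice; this is a purely technical sharpening and does not alter the argument you outlined.
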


\begin{corollary}\label{cor:calN_a_omega_1_both} 
    Suppose $\omega(1) \le d \le O\left(\polylog n_\Long\right)$ and $\gamma_1 n_{\Short} = \delta \gamma_2 n_{\Long}$ for some $\Omega(1) \le \delta \le 1+\frac{1}{d^\lambda }$ where $\lambda \ge \omega \left(\frac{1}{\log d}\right)$. 
    \begin{itemize}
        \item  For any $a\in\Short$ and $\calN'(a) \subset \calN(a)$ such that $ |\calN'(a)| \ge \omega \left( \frac{1}{\nu}\right)$, 
       then we have
       \begin{align*}
             \prob{\forall \text{ $j\in \calN'(a)$, $j$ is unavailable to $a$ on $H'$}} \le  o(1) \,.
       \end{align*}
       \item  For any $j\in\Long$ and $\calN'(j) \subset \calN(j)$ such that $|\calN'(j) | \le o\left(\nu d \right)$, 
       then we have
       \begin{align*}
             \prob{\forall \text{ $a\in \calN'(j)$, $a$ is unavailable to $j$ on $H'$}} \ge 1-o(1) \,.
       \end{align*}
    \end{itemize}
\begin{proof}
    The result follows directly from \prettyref{prop:both_remove_gamma_d_omega_1}. 
\end{proof}
\end{corollary}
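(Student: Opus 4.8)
\textbf{Proof proposal for Corollary~\ref{cor:calN_a_omega_1_both}.}

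The plan is to derive the corollary directly from Proposition~\ref{prop:both_remove_gamma_d_omega_1} by substituting the stated size hypotheses on $\calN'(a)$ and $\calN'(j)$ into the two bounds \eqref{eq:j_a_calN'_a_both} and \eqref{eq:a'_j_calN'_j_both}, exactly as the analogous Corollary~\ref{cor:calN_a_omega_1} is obtained from Proposition~\ref{prop:remove_gamma_d_omega_1}. The work is purely arithmetic: understand how the base-of-the-exponent terms behave under the regime $\omega(1)\le d\le O(\polylog n_\Long)$ and the scaling $\nu = (1\wedge\lambda)\log d/d$, and recall the elementary inequality $(1+x)^y\le \exp(xy)$ for $|x|\le 1$, $y\ge 1$.

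For the first item, fix $a\in\Short$ and $\calN'(a)\subset\calN(a)$ with $|\calN'(a)|\ge\omega(1/\nu)$. Since $\underline{C}'>0$ is a constant (depending only on $\tfrac{\log d}{\log\log n_\Long}$, which is bounded), we have $\underline{C}'\cdot\nu = o(1)$ because $\nu\le O(1)$; in particular $0\le \underline{C}'\nu \le 1$ for $n$ large. Applying \eqref{eq:j_a_calN'_a_both} and $(1-x)^y\le\exp(-xy)$ gives
\[
\prob{\forall\ j\in\calN'(a),\ j\text{ is unavailable to }a\text{ on }H'}\le \exp\left(-\underline{C}'\nu\,(|\calN'(a)|-2)\right)+o\left(\tfrac{1}{n}\right).
\]
Since $|\calN'(a)|=\omega(1/\nu)$, we have $\underline{C}'\nu\,(|\calN'(a)|-2)=\omega(1)$, so the exponential term is $o(1)$, and the additive $o(1/n)$ is absorbed; this proves the first bullet. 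For the second item, fix $j\in\Long$ and $\calN'(j)\subset\calN(j)$ with $|\calN'(j)|\le o(\nu d)$. By \eqref{eq:a'_j_calN'_j_both},
\[
\prob{\forall\ a\in\calN'(j),\ a\text{ is unavailable to }j\text{ on }H'}\ge \left(1-\overline{C}'\tfrac{1}{\nu d}\right)^{|\calN'(j)|}-o(1).
\]
Here $\overline{C}'/(\nu d)=o(1)$ since $\nu d=(1\wedge\lambda)\log d=\omega(1)$; using $(1-x)^y\ge 1-xy$ for $x\in[0,1]$, $y\ge 1$ we get $(1-\overline{C}'/(\nu d))^{|\calN'(j)|}\ge 1-\overline{C}'|\calN'(j)|/(\nu d)=1-o(1)$ by the hypothesis $|\calN'(j)|=o(\nu d)$, so the whole right-hand side is $1-o(1)$, proving the second bullet.

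There is essentially no obstacle here: the corollary is a one-line consequence of the proposition, and the only care needed is to check that $\underline{C}'\nu$ and $\overline{C}'/(\nu d)$ lie in the regime where the textbook exponential inequalities apply, and that the size assumptions push the exponents to the correct asymptotic regime ($\omega(1)$ in the first case, $o(1)$ in the second). Since Proposition~\ref{prop:both_remove_gamma_d_omega_1} is assumed proven, the proof of Corollary~\ref{cor:calN_a_omega_1_both} is complete as sketched; I would write it out in two short displayed lines mirroring the proof of Corollary~\ref{cor:calN_a_logn}.
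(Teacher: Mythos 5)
Your proposal is correct and takes exactly the same route as the paper: the paper's proof of this corollary is the single line ``follows directly from Proposition~\ref{prop:both_remove_gamma_d_omega_1},'' and what you have written is the arithmetic that this line implicitly relies on, including the checks that $\underline{C}'\nu \in (0,1)$ and $\overline{C}'/(\nu d) = o(1)$ so that the standard inequalities $(1-x)^y \le e^{-xy}$ and $(1-x)^y \ge 1-xy$ apply. This also mirrors how the analogous Corollary~\ref{cor:calN_a_omega_1} is derived from Proposition~\ref{prop:remove_gamma_d_omega_1}, so nothing further is needed.
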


\subsubsection{Proof of \prettyref{prop:both_remove_gamma_d_omega_1}}
The proof of \prettyref{prop:both_remove_gamma_d_omega_1} is analogous to the proof of \prettyref{prop:remove_gamma_d_omega_1}. Let $H=H_1\cup H_2$.
For any $a\in \Short$ and $\ell \in \naturals_+$, let $H_\ell(a)$ (resp. $H'_{\ell}(a)$) denote the vertex-induced subgraph of $H$ (resp. $H'$) on its $\ell$-hop neighborhood of $a$. Let $T'_{\ell}(a)$ denote the spanning tree rooted at $a$ with depth $\ell$ explored by the bread-first search exploration on $H'_{\ell}(a)$. 

\begin{lemma} \label{lmm:both_claim_1}
    For any $a\in\Short$, $\ell\in\naturals_+$ and $\ell \le \frac{\log n}{16 \left(\log d \vee \log \log n\right)} $, we have
    \begin{align}
        T_{\ell}'(a) \sim \mathbb{T}_{\ell} \left( \delta \left(\gamma_2 d + \frac{\gamma_1}{\delta} \widehat{d}\right) ,\,  \gamma_2 d + \frac{\gamma_1}{\delta} \widehat{d} \,, \, \fone,\, \ftwo \right)\,, 
        \label{eq:T_ell'_a_both}
    \end{align}
    where if $\frac{\gamma_1 \widehat{d}}{\delta} \ge \Omega \left( \left(\gamma_2 d \right)^{\frac{1}{4}}\right)$, 
    \begin{align}
          \fone = \left( \frac{\gamma_1 \widehat{d}}{\delta} \right)^{-\frac{1}{4}}\,,   \quad  \ftwo =  \exp\left( - \frac{1}{4} \gamma_2^{\frac{3}{2}} {\widehat{d} \, }^{\frac{1}{2}}\right)  \,, \label{eq:fone_ftwo_both_1}
    \end{align}
    and if $\frac{\gamma_1 \widehat{d}}{\delta} =o \left( \left(\gamma_2 d \right)^{\frac{1}{4}}\right)  $, 
   \begin{align}
       \fone = \left( \gamma_2 d\right)^{-\frac{1}{4}} \,, \quad \ftwo =   \exp\left( -\frac{1}{8} \gamma_2^{\frac{3}{2}} {d}^{\frac{1}{2}}  \right)  \,. \label{eq:fone_ftwo_both_2}
    \end{align}
    Moreover, if $\ell$ is even and $\ell \ge \Omega \left(\frac{\log n}{\log d \vee \log \log n}\right)$ and  $ \Omega(1) \le \delta \le 1+\frac{1}{d^\lambda}$ for some $\lambda>0$, 
    we have
    \begin{align}
      \expect{\sfX_{j,a}\left(T'_\ell(a)\right) |  \, j\in \calC(a)} 
      \ge \underline{C}' \cdot \nu
      \,, \label{eq:claim_1_both}
    \end{align}
    and for any $j\in\calN(a)$, 
    \begin{align}
     \expect{\sfX_{a',j}\left(T'_{\ell-1}(j)\right)| a' \in \calC(j)}
     \le   \overline{C}' \cdot \frac{1}{\nu d}
      \,, \label{eq:claim_2_both}
   \end{align}
   where $\nu$ is defined in \prettyref{eq:nu}, and $\underline{C}',\overline{C}'>0$ are some constants that only depend on $\frac{\log d}{\log \log n_{\Long}}$. 
\end{lemma}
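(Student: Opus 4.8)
Proof proposal for \prettyref{lmm:both_claim_1}.

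The plan is to mirror the proof of \prettyref{lmm:claim_1}, tracking the two edge types separately throughout the breadth-first exploration: the $H_1$-edges, which record the $d$ signals an applicant sends, and the $H_2$-edges, which record the $\widehat d$ signals a firm sends. First I would establish the branching-tree distribution $\prettyref{eq:T_ell'_a_both}$. Run BFS from $a$ on $H_{\overell}'(a)$ to build $T'_{\overell}(a)$. By \prettyref{lmm:one_sided_regular} we may embed $H_1$ into $G_1\sim\mathbb{G}(n_\Short,n_\Long,p)$ and $H_2$ into $G_2\sim\mathbb{G}(n_\Long,n_\Short,p)$ with $p=8(d\vee\log n)/n$, off an event of probability $\exp(-2.5(d\vee\log n))$, and by \prettyref{prop:two_side_ER_tree} the $\overell$-hop neighbourhood has tree excess at most $1$ and at most $n^{1/2}$ vertices off an event of probability $o(n^{-2})$, so during the exploration the number $S_{\Short'}(t),S_{\Long'}(t)$ of neutral vertices on each side stays within $n^{1/2}$ of $\gamma_1 n_\Short$ resp.\ $\gamma_2 n_\Long$. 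At an active firm $w_t\in\Long'$ its offspring in $\Short'$ is the superposition of (i) the applicants who signalled $w_t$, explored through $G_1$, a $\Binom(S_{\Short'}(t),d/n_\Long)$-many, and (ii) the $\widehat d$ applicants $w_t$ signalled that are still neutral, a hypergeometric count on $S_{\Short'}(t)$; since $H_1,H_2$ are independent these two families are conditionally independent, with total conditional mean at most $\delta\gamma_2 d+\gamma_1\widehat d=\delta(\gamma_2 d+\gamma_1\widehat d/\delta)=:\kappa_1$. Symmetrically, at an active applicant $w_t\in\Short'$ the offspring in $\Long'$ has conditional mean at most $\gamma_2 d+\frac{\gamma_1}{\delta}\widehat d=:\kappa_2$, using $\gamma_2 n_\Long/n_\Short=\gamma_1/\delta$ (from $\gamma_1 n_\Short=\delta\gamma_2 n_\Long$). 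Because the preference lists of $H$ restrict to independent uniform orders on the explored tree, this gives $T'_{\overell}(a)\sim\mathbb{T}_{\overell}(\kappa_1,\kappa_2,\fone,\ftwo)$ with $\kappa_1=\delta\kappa_2$.

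Next I would pin down the tail parameters $(\fone,\ftwo)$. For the even-depth lower-tail requirement, bound $\prob{O_{w_t}<(1-\fone)\kappa_2}$ at an applicant $w_t$, whose offspring count is a sum of a hypergeometric term of size $\approx\gamma_2 d$ and a binomial term of size $\approx\frac{\gamma_1}{\delta}\widehat d$. Apply the lower-tail bounds of \prettyref{lmm:hyper}/\prettyref{lmm:chernoff} to whichever of the two terms is the larger: if $\gamma_1\widehat d/\delta\ge\Omega((\gamma_2 d)^{1/4})$, use the binomial term with relative deviation $(\gamma_1\widehat d/\delta)^{-1/4}$ to obtain $\prettyref{eq:fone_ftwo_both_1}$; otherwise the hypergeometric term alone suffices and gives $\prettyref{eq:fone_ftwo_both_2}$, exactly as in \prettyref{lmm:claim_1}. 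In either case $\fone=o(1)$ and $(\kappa_1\vee\kappa_2)\ftwo=o(1)$, and one records the elementary bound $\kappa_1\asymp\kappa_2\asymp d$, which holds since $\widehat d\le O(d)$, $\gamma_2,\delta\ge\Omega(1)$ and $\gamma_1\le1$.

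Finally I would derive $\prettyref{eq:claim_1_both}$ and $\prettyref{eq:claim_2_both}$. Put $\paraone=1-\fone$ and $\paratwo=(1-2(\kappa_1\vee\kappa_2)\ftwo)^{-1}$. Since $\kappa_2/\kappa_1=1/\delta$ with $\Omega(1)\le\delta\le1+d^{-\lambda}$ and $\kappa_1\asymp\kappa_2\asymp d$, the key estimate $(f_{\paraone\kappa_2}\circ f_{\paratwo\kappa_1})^{\overell/2-1}(1)\le C\cdot\nu$ established inside the proof of \prettyref{lmm:claim_1} (whose only inputs are $\kappa_1\asymp\kappa_2\asymp d$ and the admissible range of $c_n=\paraone/(\delta\paratwo)$, here $\frac{1-2\fone}{\delta}\le c_n\le\frac{1-\fone}{\delta}$) carries over verbatim. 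Then $\prettyref{eq:claim_1_both}$ follows from $\prettyref{eq:sfT_even_iterative_new_1}$ of \prettyref{lmm:rooted_tree_expected_degree} together with the monotonicity properties \ref{P:2} and \ref{P:3} in \prettyref{lmm:property_f_d} and the definition of $f_d$, and $\prettyref{eq:claim_2_both}$ follows from $\prettyref{eq:sfT_even_iterative_new_2}$, the bound $\ftwo=o(1/d)$ and $\nu=\omega(1/d)$, just as in the closing display of the proof of \prettyref{lmm:claim_1}, with the constants $\underline C',\overline C'$ depending only on $\log_{\log n_\Long}d$.

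The main obstacle is the first two steps: making precise the conditional independence of the two offspring families at each active vertex of the exploration, and — more delicately — verifying that the superposed offspring distribution still satisfies the lower-tail condition $\prettyref{eq:O_i_even}$ \emph{uniformly}. Because the $H_1$- and $H_2$-contributions may be of very different orders, the concentration must be driven by whichever term dominates while the other is controlled only through its mean; this dichotomy is precisely what forces the two-case formula for $(\fone,\ftwo)$ in $\prettyref{eq:fone_ftwo_both_1}$–$\prettyref{eq:fone_ftwo_both_2}$, and getting the thresholds and exponents consistent with the subsequent application of \prettyref{lmm:rooted_tree_expected_degree} is where the care is needed.
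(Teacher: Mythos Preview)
Your proposal follows essentially the same route as the paper's proof: BFS exploration tracking the two edge families separately, the decomposition of the offspring count at each depth into a binomial part (from the ``incoming'' signals) plus a hypergeometric part (from the node's own signals), the identification $\kappa_1=\delta\kappa_2$ with $\kappa_2=\gamma_2 d+\gamma_1\widehat d/\delta$, and then the carryover of the fixed-point estimate from \prettyref{lmm:claim_1} via \prettyref{lmm:rooted_tree_expected_degree}.

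One point of your tail-bound sketch is not quite right and would cause trouble if executed as written. You say the concentration ``must be driven by whichever term dominates while the other is controlled only through its mean,'' and in Case~1 you propose to ``use the binomial term.'' In fact, in Case~1 (where $\gamma_1\widehat d/\delta\ge\Omega((\gamma_2 d)^{1/4})$) the paper concentrates \emph{both} summands separately with the common relative deviation $\fone=(\gamma_1\widehat d/\delta)^{-1/4}$: the binomial piece via \prettyref{lmm:chernoff} and the hypergeometric piece via \prettyref{lmm:hyper}, then takes a union bound. The binomial term sets the deviation scale $\fone$ and the tail probability $\ftwo$ only because its mean is the smaller of the two and hence its Chernoff exponent is the bottleneck; but the hypergeometric term cannot be handled ``through its mean'' alone --- if $\gamma_1\widehat d/\delta$ is, say, $\Theta(\gamma_2 d)$, ignoring the hypergeometric fluctuation would force $\fone\ge\Omega(1)$, which is inadmissible. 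Only in Case~2, where the binomial mean $\gamma_1\widehat d/\delta=o((\gamma_2 d)^{1/4})$ is genuinely negligible next to the slack $(\gamma_2 d)^{3/4}$ of the hypergeometric, can you rely on the hypergeometric alone. With this correction, the rest of your outline goes through as in the paper.
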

The rest of the proof is omitted here, which is the same as the proof of \prettyref{prop:remove_gamma_d_omega_1}, by replacing $\overline{C}$ and $\underline{C}$  with $\overline{C}'$ and $\underline{C}'$ respectively, and applying \prettyref{prop:two_side_ER_tree} and \prettyref{lmm:both_claim_1} instead of \prettyref{prop:one_side_ER_tree} and \prettyref{lmm:claim_1}.  

\begin{proof}[Proof of \prettyref{lmm:both_claim_1}]


    First, we prove \prettyref{eq:T_ell'_a_both}. 
    During the breadth-first search exploration of the spanning tree rooted at $a$ on the local neighborhood around $a$, vertices have one of three states: active, neutral, or inactive. The state of a vertex is updated as the exploration of the connected component containing $a$ progresses. 
    For any $t\ge 0$, let $w_t$ denote active vertex that initiates the exploration at time $t$. 
    Initially, at $t=0$, let $w_0= a$ such that $a$ is active, while all others are neutral. At each subsequent time $t$, the active vertex $w_t$ is selected at random among all active vertices with the smallest depth at time $t$. After $w_t$ is selected, let $S_{\Short'} (t)$ (resp. $S_{\Long'} (t)$) denote the number of neutral vertices in $\Short'$ (resp. $\Long'$) that $w_t$ could possibly explore, and $O_{w_t}$ denote the total number of neutral vertices that are explored by $w_t$. 
    All edges $(w_t,w')$ are examined, where $w'$ spans all neutral vertices: 
    \begin{itemize}
        \item Suppose $w_t \in \Long$. First,  $w_t$ connects to each $w'$ that is neutral in $\Short$ with probability $\frac{d}{n_{\Long}}$ independently. Let $ \offspring_{w_t}'$ denote the number of neutral vertices connected to $w_t$ during the exploration, and then we have
        \[
         \offspring_{w_t}' \sim  \Binom\left( S_{\Short}(t), \frac{d}{n_{\Long}} \right)  
         \,,
        \]
        where $\expect{\offspring_{w_t}'} \le \delta \gamma_2 d$, given that $ S_{\Short}(t) \le \gamma_1 n_{\Short}$ and $\gamma_1 n_{\Short} = \delta \gamma_2 n_{\Long}$. 
        
        Next, $w_t$ connects to $\offspring''_{w_t}$ neutral vertices in the remaining neutral vertices in $\Long$ uniformly at random, where
        \[
          \offspring_{w_t}''\sim  \Hyper\left( S_{\Short}(t) -  \offspring_{w_t}', n_{\Short}, \widehat{d}\right) \,, 
        \]
        where $\expect{\offspring_{w_t}''} \le \gamma_1 \widehat{d} $, given that $ S_{\Short}(t) -  \offspring_{w_t}' + 1  \le S_{\Short}(t) \le \gamma_1 n_{\Short}$.
        Hence, we have 
        \[
        \expect{\offspring_{w_t}} = \expect{\offspring'_{w_t}+ \offspring''_{w_t}} \le \gamma_1 \widehat{d} + \delta \gamma_2 d\,.
        \]
        \item Suppose $w_t\in \Short$.  First,  $w_t$ connects to each $w'$ that is neutral in $\Long$ with probability $\frac{\widehat{d}}{n_{\Short}}$ independently. Let $ \offspring_{w_t}'$ denote the number of neutral vertices connected to $w_t$ during the exploration, and then we have
        \[
         \offspring_{w_t}'  \sim  \Binom\left( S_{\Long}(t), \frac{\widehat{d}}{n_{\Short}} \right) 
         \,.
        \]
        Next, for any $w_t$ connects to $ \offspring_{w_t}^{''} $ neutral vertices in the remaining $\Long$ uniformly at random, where
        \[
            \offspring_{w_t}^{''}  \sim  \Hyper\left( S_{\Long}(t)- \offspring_{w_t}', n_{\Long}, d \right) 
            \,. 
        \] 
         By \prettyref{eq:two_sided_square} in \prettyref{prop:two_side_ER_tree}, for any $w_t$ with depth at most $\ell-1$, where $\ell \le \frac{\log n}{16  \left( \log d  \vee  \log \log n \right) }$, 
         \begin{align}
           \prob{S_{\Long}(t) - \offspring_{w_t}' < \gamma_2 n_{\Long} - n^{\frac{1}{2}}} \le   2 \exp\left(- 2 \left( d \vee \log n \right) \right)\,.  \label{eq:S_calJ_bound}
         \end{align}
\begin{itemize}
    \item Suppose that $ \gamma_2 d \ge \frac{\gamma_1 \widehat{d}}{\delta} \ge \Omega \left( \left(\gamma_2 d \right)^{\frac{1}{4}}\right)$. 
    Then, for any fixed constant $\epsilon>0$, we get
      \begin{align}
            & \prob{ \offspring'_{w_t} < \frac{\gamma_1 \widehat{d}}{\delta} \left(1-  \left(\frac{\gamma_1  \widehat{d}}{\delta} \right)^{-\frac{1}{4}} \right) \bigg |S_{\Long}(t) \ge \gamma_2 n_{\Long} - n^{\frac{1}{2}}  } \nonumber \\
            & \le \prob{ \offspring'_{w_t} < \frac{\gamma_1 \widehat{d}}{\delta} \left(1- \frac{n^{\frac{1}{2}}}{\gamma n_{\Long}}\right)\left(\frac{1-  \left(\frac{\gamma_1  \widehat{d}}{\delta} \right)^{-\frac{1}{4}} }{1- \frac{n^{\frac{1}{2}}}{\gamma n_{\Long}}}\right) \bigg |S_{\Long}(t) \ge \gamma_2 n_{\Long} - n^{\frac{1}{2}}  }\nonumber \\
            & \le \exp\left( - \frac{1}{2+\epsilon}\left(\frac{\gamma_1  \widehat{d}}{\delta} \right)^{\frac{1}{2}}\right)  \le \exp\left( -\frac{1}{2+\epsilon} \gamma_2^{\frac{3}{2}} {d \, }^{\frac{1}{2}}  \right) \,, \label{eq:case_1_d_tilde_d_1}
        \end{align}
        where the second inequality holds by \prettyref{eq:chernoff_binom_left} in \prettyref{lmm:chernoff}, $ \frac{n^{\frac{1}{2}}}{\gamma n_{\Long}} = o\left( \left(\frac{\gamma_1  \widehat{d}}{\delta} \right)^{-\frac{1}{4}}\right)$ given that $\widehat{d} \le d = O\left(\polylog n\right)$,  where conditional on $S_{\Long}(t) \ge  \gamma_2  n_{\Long} - n^{\frac{1}{2}}$, 
        \[
        \offspring_{w_t}'  \sim  \Binom\left( S_{\Long}(t), \frac{\widehat{d}}{n_{\Short}} \right)  \overset{\mathrm{s.t.}}{\succeq} \Binom\left(\gamma_2 n_{\Long} - n^{\frac{1}{2}},  \frac{\widehat{d}}{n_{\Short}}  \right) \,,
        \]
       and the last inequality holds by $\frac{\gamma_1}{\delta} \ge \gamma_2$ and $\gamma_2 \le 1$. 
       Similarly, for any fixed constant $\epsilon>0$, we get
       \begin{align}
            & \prob{
            \offspring''_{w_t} 
            <
            \gamma_2 d  \left(1-  \left(  \frac{\gamma_1  \widehat{d}}{\delta} \right)^{-\frac{1}{4}} \right) \bigg |S_{\Long}(t) \ge \gamma_2 n_{\Long} - n^{\frac{1}{2}}  
            } \nonumber \\
            & \le \prob{
            \offspring''_{w_t} 
            <
            \gamma_2 d  \left(1-  \left( \gamma_2 d \right)^{-\frac{1}{4}} \right) \bigg |S_{\Long}(t) \ge \gamma_2 n_{\Long} - n^{\frac{1}{2}}  
            }  \nonumber \\
            & \le \exp\left( -\frac{1}{1+\epsilon} \gamma_2^{\frac{3}{2}} {d \, }^{\frac{1}{2}}  \right)   \le \exp\left( -\frac{1}{1+\epsilon} \gamma_2^{\frac{3}{2}} {\widehat{d} \, }^{\frac{1}{2}}  \right) \,,\label{eq:case_1_d_tilde_d_3}
        \end{align}    
        where the first inequality hold because $ \gamma_2 d \ge \frac{\gamma_1 \widehat{d}}{\delta} $ and $1-x^{-y}$ is non-decreasing on $x>0$ for any $y>0$, the second inequality holds by  \prettyref{eq:hyper_lower} in \prettyref{lmm:hyper}, $ \frac{n^{\frac{1}{2}}}{\gamma n_{\Long}} = o\left( \left(\gamma_2 d\right)^{-\frac{1}{4}}\right)$ given that $\widehat{d} \le d = O\left(\polylog n\right)$, where conditional on $S_{\Long}(t) \ge  \gamma_2  n_{\Long} - n^{\frac{1}{2}}  $, 
        \[
            \offspring_{w_t}''\sim  \Hyper\left( S_{\Short}(t) -  \offspring_{w_t}', n_{\Short}, d \right) \overset{\mathrm{s.t.}}{\succeq}\Hyper\left(\gamma_2 n_{\Long} - n^{\frac{1}{2}}, n_{\Long}, d\right) \,.
        \]
        and the last inequality holds because $ \widehat{d} \le d$. 
        
        Given that $\offspring_{w_t} = \offspring'_{w_t} + \offspring''_{w_t}$, it follows that
        \begin{align*}
            & \prob{
            \offspring_{w_t} < \left( \frac{\gamma_1 \widehat{d}}{\delta} + \gamma_2 d\right)  \left(1-  \left( \frac{\gamma_1 \widehat{d}}{\delta} \right)^{-\frac{1}{4}} \right) } \\
            & \le \prob{
            \offspring_{w_t}' < \frac{\gamma_1 \widehat{d}}{\delta} \left(1-  \left( \frac{\gamma_1 \widehat{d}}{\delta} \right)^{-\frac{1}{4}} \right) \bigg |S_{\Long}(t) \ge \gamma_2 n_{\Long} - n^{\frac{1}{2}}  }  \\
            & ~~~~  + \prob{
            \offspring_{w_t}'' < \gamma_2 d \left(1-  \left( \frac{\gamma_1 \widehat{d}}{\delta} \right)^{-\frac{1}{4}} \right) \bigg |S_{\Long}(t) \ge \gamma_2 n_{\Long} - n^{\frac{1}{2}}  }  \\
            & ~~~~ +   \prob{ S_{\Long}(t) < \gamma_2 n_{\Long} - n^{\frac{1}{2}}  } \\
            & \le  2 \exp\left( -\frac{1}{2+\epsilon} \gamma_2^{\frac{3}{2}}{\widehat{d} \, }^{\frac{1}{2}}  \right) +  2 \exp\left(- 2 \left( d \vee \log n \right) \right) \\ 
            &  \le  \exp\left( - \frac{1}{4} \gamma_2^{\frac{3}{2}} {\widehat{d} \, }^{\frac{1}{2}}\right)\,, 
        \end{align*}
        where the second inequality holds by \prettyref{eq:case_1_d_tilde_d_1} and \prettyref{eq:case_1_d_tilde_d_3}, and the last inequality holds by assumption that $\widehat{d} \le d$, and picking $\epsilon$ can be any arbitrarily small fixed constant. 
        \item Suppose $\frac{\gamma_1 \widehat{d}}{\delta} =  o \left(  \left(\gamma_2 d \right)^{\frac{1}{4}} \right)$. Then, for any fixed constant $\epsilon>0$, we get
       \begin{align}
            & \prob{
            \offspring''_{w_t} 
            < 
            \gamma_2 d  \left(1- \left(\gamma_2 d\right)^{-\frac{1}{3}}\right) \bigg |S_{\Long}(t) \ge n_{\Long} - n^{\frac{1}{2}}  
            } \le \exp\left( -\frac{1}{1+\epsilon} \gamma_2^{\frac{4}{3}} {d}^{\frac{1}{3}}  \right) \,, \label{eq:case_2_d_tilde_d}
        \end{align}
      where the inequality holds by  \prettyref{eq:hyper_lower} in \prettyref{lmm:hyper}, $ \frac{n^{\frac{1}{2}}}{\gamma n_{\Long}} = o\left( \left(\gamma_2 d\right)^{-\frac{1}{3}}\right)$ given that $\widehat{d} \le d = O\left(\polylog n\right)$, where conditional on $S_{\Long}(t) \ge  \gamma_2  n_{\Long} - n^{\frac{1}{2}}  $, 
        \[
            \offspring_{w_t}''\sim  \Hyper\left( S_{\Short}(t) -  \offspring_{w_t}', n_{\Short}, d \right) \overset{\mathrm{s.t.}}{\succeq}\Hyper\left(\gamma_2 n_{\Long} - n^{\frac{1}{2}}, n_{\Long}, d\right) \,.
        \]
        Given that $\offspring_{w_t} = \offspring'_{w_t} + \offspring''_{w_t}$, it follows that
        \begin{align*}
            & \prob{
            \offspring_{w_t} < \left( \frac{\gamma_1 \widehat{d}}{\delta} + \gamma_2 d\right)  \left(1-  \left( \gamma_2 d\right)^{-\frac{1}{4}} \right) } \\
            & \le \prob{
            \offspring_{w_t} < \frac{\gamma_1 \widehat{d}}{\delta} \left(1-  \frac{1}{2} \left( \gamma_2 d \right)^{-\frac{1}{4}} \right) \bigg |S_{\Long}(t) \ge \gamma_2 n_{\Long} - n^{\frac{1}{2}}  }  \\
            & ~~~~ +   \prob{ S_{\Long}(t) < \gamma_2 n_{\Long} - n^{\frac{1}{2}}  } \\
            & \le \exp\left( -\frac{1}{4 +\epsilon} \gamma_2^{\frac{3}{2}} {d}^{\frac{1}{2}}  \right) +  2 \exp\left(- 2 \left( d \vee \log n \right) \right) \\ 
            &  \le  \exp\left( -\frac{1}{8} \gamma_2^{\frac{3}{2}} {d}^{\frac{1}{2}}  \right) \,, 
        \end{align*}
        where the first inequality holds because $\frac{\gamma_1 \widehat{d}}{\delta} =  o \left(  \left(\gamma_2 d \right)^{\frac{1}{4}} \right)$, and the second inequality holds by \prettyref{eq:case_2_d_tilde_d}. 
        \end{itemize}
    \end{itemize}
    If  $w_t $ and $w'$ is connected, then $w'$ becomes active; if not, $w'$ remains neutral. Once all edges from $w_t$ have been explored, $w_t$ becomes inactive. The exploration ends if there is no active nodes with depth $< \ell$. 
    
    Since the preference list of $i \in \calV(H)$ with respect to its neighbors on $H$ is independently uniformly generated, then the preference list of $i \in \calV(T'_{\ell}(a))$ with respect to its neighbors on $T'_{\ell}(a)$ can also be viewed independently uniformly generated, given that $ T'_{\ell}(a)$ is a subgraph of $H$. Hence, \prettyref{eq:T_ell'_a_both} follows.

    Next, we proceed to prove \prettyref{eq:claim_1_both} and \prettyref{eq:claim_2_both}. Let  $\kappa_1 =\delta \left(\gamma_2 d + \frac{\gamma_1}{\delta} \widehat{d}\right)$, $\kappa_2 = \gamma_2 d + \frac{\gamma_1}{\delta} \widehat{d}$, and
    \[
    \paraone =
     1- \fone  \,, \quad \paratwo =  \left(1-2 \left( \kappa_1 \vee \kappa_2\right) \ftwo\right)^{-1} \,.
    \]
   Then, we claim that
   \begin{align}
          \left( f_{\paraone \kappa_2 } \circ f_{\paratwo \kappa_1}\right)^{\ell/2-1}(1) 
         & \le C \cdot \nu 
          \label{eq:mu_j_rho_lower_bound_2_both} \,,
    \end{align}
    where $\nu$ is defined in \prettyref{eq:nu},  and $C$ is some constant that only depends on $\log_{\log n_{\Long}} d$. The proof of 
    \prettyref{eq:mu_j_rho_lower_bound_2_both} is analogous to  \prettyref{eq:mu_j_rho_lower_bound_2} and hence omitted here. 
For every $j\in\Long$, we obtain
 \begin{align*}
     \expect{\sfX_{a,j}\left(T'_{\ell-1}(j)\right)| a \in \calC(j)} 
    & \overset{(a)}{=}  \expect{\sfX_{a',j}\left(T'_{\ell}(a)\right)|a'\in \calC(j)\,,j \in \calC(a)}\\
    & \overset{(b)}{=}  \Expect_{T_\ell (\rho)\sim \mathbb{T}_\ell \left( \delta \gamma d,\,  \gamma d, \, \fone,\, \ftwo\right)}\left[\sfX_{i,j} \left(T_\ell (\rho)\right) |  j \in \calC(\rho)\,, \, i \in \calC(j)  \right]\\
    & \le  \prob{ O_i < \eta_1\kappa_2  } + \prob{ O_i \ge \eta_1\kappa_2  } \times\\
    & ~~~~ \Expect_{T_\ell (\rho)\sim \mathbb{T}_\ell \left( \delta \gamma d,\,  \gamma d, \, \fone,\, \ftwo\right)}\left[\sfX_{i,j} \left(T_\ell (\rho)\right) |  j \in \calC(\rho)\,, \, i \in \calC(j) \,, \,  O_i \ge \eta_1\kappa_2 \right] \\ 
    & \overset{(c)}{\le} \left(1-\ftwo\right) \frac{C}{\left(1\wedge \lambda\right) \log d}  + \ftwo \\
    & \overset{(d)}{\le}  \overline{C}' \cdot \frac{1}{\nu d}\,,
\end{align*} 
where $(a)$ holds because by symmetry and the property of \prettyref{alg:proposal_passing_alg}; $(b)$ holds by \prettyref{eq:T_ell'_a_both}; $(c)$ holds by \prettyref{eq:T_ell'_a_both}, we have
\[
 \prob{ O_i < \left(1- \fone\right) \kappa_2  } \le  \ftwo \,, 
\]
and by applying \prettyref{eq:mu_j_rho_lower_bound_2_both} and \prettyref{eq:sfT_even_iterative_new_2} in \prettyref{lmm:rooted_tree_expected_degree}, given that $\kappa_1\ftwo=o(1)$ by \prettyref{eq:fone_ftwo_both_1} and \prettyref{eq:fone_ftwo_both_2}, we have
\[
 \Expect_{T_\ell (\rho)\sim \mathbb{T}_\ell \left( \delta \gamma d,\,  \gamma d, \, \fone,\, \ftwo\right)}\left[\sfX_{i,j} \left(T_\ell (\rho)\right) |  j \in \calC(\rho)\,, \, i \in \calC(j)  \right] \le   \left( f_{\paraone \kappa_2 } \circ f_{\paratwo \kappa_1}\right)^{\ell/2-1}(1) \le   \frac{\overline{C}'}{\left(1\wedge \lambda\right) \log d}    \,;
\]
$(d)$ holds by $\ftwo = o \left(\frac{1}{d}\right) $, $\nu = \omega \left(\frac{1}{d}\right)$ and picking $\overline{C}'$ as some constant that only depends on $\log_{\log n_\Long} d$. Then, \prettyref{eq:claim_2_both} follows. 

For any $a\in\Short$, we obtain
  \begin{align*}
    \expect{\sfX_{j,a}\left(T'_\ell(a)\right) |  \, j\in \calC(a)} 
    & =   \Expect_{ T_\ell (\rho) \sim \mathbb{T}_\ell \left(\delta \gamma d,\, \gamma d ,\, \fone \,, \ftwo \right)}\left[\sfX_{j,\rho} \left(T_m (\rho)\right) | j\in \calC(\rho)\right]\\
    & \ge   f_{ \paratwo    \kappa_1 } \circ \left( f_{  \paraone    \kappa_2 } \circ  f_{ \paratwo    \kappa_1 } \right)^{\ell/2-1}(1)  \\
    & \ge \underline{C}' \cdot  \nu   \,,
    \end{align*} 
where the first equality holds by \prettyref{eq:T_ell'_a_both}, and the first
inequality hold by \prettyref{eq:sfT_even_iterative_new_2} in \prettyref{lmm:rooted_tree_expected_degree}, and the second inequality holds by \prettyref{eq:mu_j_rho_lower_bound_2_both}, \ref{P:2} in \prettyref{lmm:property_f_d} and \prettyref{eq:f_d}, and picking  $\underline{C}'$ as some constant that only depends on $\log_{\log n_{\Long}} d$.  Then, \prettyref{eq:claim_1_both} follows.     
\end{proof}

\subsection{Supplementary materials for \prettyref{sec:multi_proof}}\label{sec:supp_reduced}
Consider a single-tiered two-sided market with applicants $\Short$ and firms $\Long$. Let $H$ denote an interview graph constructed based on the applicant-signaling mechanism. 
Let $\Short' \subset \Short$ and $\Long' \subset \Long$ be subsets chosen independently of the connections in $H$, with $|\Short'|= \gamma_1 n_{\Short}$,  $|\Long'|= \gamma_2 n_{\Long}$ and $\gamma_1 n_{\Short} =  \delta \gamma_2 n_{\Long} $ for some $\Omega(1) < \gamma_1,\gamma_2 \le 1$ and $\delta \ge \Omega (1)$. Let $H'$ denote the vertex-induced subgraph of $H$ on $\Short' \cup \Long'$. 
\begin{theorem}\label{thm:single_tier_general_sparse}
Suppose $\omega(1) \le d \le O\left(\polylog n_{\Long}\right)$ and $ \Omega (1) \le \delta  \le 1+ d^{-\lambda} $ for some $ \lambda \ge \omega\left(\frac{1}{\log d}\right)$. 
If $p_{\Short} = \omega\left(\frac{1}{\left(1\wedge \lambda \right)\log d } \right)$, every stable matching on $H'$ is almost interim stable with high probability. 
\end{theorem}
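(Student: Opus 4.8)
The plan is to reduce \prettyref{thm:single_tier_general_sparse} to the machinery already developed for one-side signaling on random one-sided $d$-regular bipartite graphs, specifically \prettyref{lmm:interview_d_regular}, \prettyref{lmm:positive_available}, and \prettyref{cor:calN_a_omega_1}. First I would invoke \prettyref{lmm:interview_d_regular} to view $H$ as a random one-sided $d$-regular bipartite graph on $\Short \cup \Long$ with uniformly generated strict preferences; consequently $H'$, the vertex-induced subgraph on $\Short' \cup \Long'$, is exactly the ``restricted'' graph studied in \prettyref{sec:stable} with parameters $\gamma_1, \gamma_2$ and imbalance $\delta$ satisfying $\gamma_1 n_\Short = \delta \gamma_2 n_\Long$. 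For every applicant $a \in \Short'$, define $\calN_+(a) = \{ j \in \calN(a) : A_{a,j} \ge 0 \}$ as in \prettyref{eq:calC_+} (the neighbors here are taken in $H'$). Since $\{A_{a,j}\}_{j \in \calN(a)}$ are i.i.d.\ $\Bern(p_\Short)$-selected under Assumption~\ref{assump:general}, $|\calN_+(a)| \sim \Binom(|\calN(a)|, p_\Short)$, and by a Chernoff bound plus the fact that $|\calN(a)| \sim \Hyper(\gamma_2 n_\Long, n_\Long, d)$ concentrates around $\gamma_2 d$, we get $|\calN_+(a)| \ge \tfrac14 \gamma_2 d\, p_\Short$ with probability $1 - \exp(-\Omega(d p_\Short))$.

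The core step is then to show that, conditioned on $|\calN_+(a)|$ being at least $\tfrac14 \gamma_2 d\, p_\Short = \omega(1/\nu)$, with $\nu = (1 \wedge \lambda)\log d / d$ as in \prettyref{eq:nu}, there exists with high probability some $j \in \calN_+(a)$ that is available to $a$ on $H'$. This is precisely the first bullet of \prettyref{cor:calN_a_omega_1}: taking $\calN'(a) = \calN_+(a)$, the hypothesis $|\calN'(a)| \ge \omega(1/\nu)$ holds because $|\calN_+(a)| \ge \tfrac14 \gamma_2 d\, p_\Short$ and $p_\Short = \omega\!\left(\tfrac{1}{(1\wedge\lambda)\log d}\right)$ forces $\tfrac14 \gamma_2 d\, p_\Short / (1/\nu) = \tfrac14 \gamma_2 (1\wedge\lambda)\log d\cdot p_\Short = \omega(1)$. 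The conditions of \prettyref{cor:calN_a_omega_1} — $\omega(1) \le d \le O(\polylog n_\Long)$, $\Omega(1) \le \delta \le 1 + d^{-\lambda}$, $\lambda \ge \omega(1/\log d)$, and $\Omega(1) \le \gamma_2 \le 1$ — all match the hypotheses of the theorem. Combining this with the Chernoff estimate on $|\calN_+(a)|$ gives, for each fixed $a \in \Short'$,
\[
\prob{\,\exists\, j \in \calN_+(a) \text{ available to } a \text{ on } H'\,} \ge 1 - o(1).
\]

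By \prettyref{lmm:positive_available}, whenever such a $j \in \calN_+(a)$ available to $a$ exists, $a$ is interim stable in \emph{every} stable matching on $H'$ — indeed the argument of \prettyref{lmm:positive_available} shows $U_{a,\phi(a)}^A \ge U_{a,j}^A \ge U_{a,j}^B > \max_{j' \notin \calN(a)} U_{a,j'}^B$ for any stable $\Phi$, so $a$ participates in no interim blocking pair with any unsignaled partner; and since $H'$ is built from the applicant-signaling mechanism, every neighbor of $a$ in $H'$ was among $a$'s top $d$ firms, so no interim blocking pair arises from signaled-but-unmatched partners either, because $a$ weakly prefers $\phi(a)$ to all available neighbors. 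Finally, I would let $\Short'' \subseteq \Short'$ denote the (random) set of applicants for which no available $j \in \calN_+(a)$ exists; by the per-applicant bound $\expect{|\Short''|} = o(n_\Short)$, and Markov's inequality gives $|\Short''| = o(n_\Short)$ with high probability. Removing $\Short''$ yields a subgraph on which every applicant is interim stable, and since the removed fraction is $o(1)$, every stable matching on $H'$ is almost interim stable. The main obstacle is purely bookkeeping: one must ensure that ``availability on $H'$'' is the right notion to feed into \prettyref{lmm:positive_available} — i.e., that the interview graph in that lemma is literally $H'$ and not the full $H$ — and that the peeling of $\Short''$ does not disturb the firms' side (this is handled exactly as in \prettyref{sec:single_signal_sparse_identify} via the monotonicity of \prettyref{lmm:truncation}). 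A secondary point to verify is that $p_\Short = \omega(1/((1\wedge\lambda)\log d))$ is sharp enough to drive $\exp(-\Omega(d p_\Short))$ to $o(1)$ after a union bound, which it is since $d p_\Short = \omega(d/((1\wedge\lambda)\log d)) = \omega(\log n_\Long / \text{polyloglog})$ in the relevant range, easily beating $1/n$.
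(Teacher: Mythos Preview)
Your proposal is correct and follows essentially the same route as the paper: view $H$ as a random one-sided $d$-regular graph via \prettyref{lmm:interview_d_regular}, lower-bound $|\calN_+(a)|$ by $\Theta(\gamma_2 d\, p_\Short)$ via Chernoff plus a hypergeometric tail bound, apply \prettyref{cor:calN_a_omega_1} to find an available $j \in \calN_+(a)$, then invoke \prettyref{lmm:positive_available} and finish with Markov's inequality. Your closing worry about a union bound is misplaced (and in fact the claimed $o(1/n)$ bound fails when $d$ grows slowly): you only need the Chernoff term $\exp(-\Omega(d\,p_\Short))$ to be $o(1)$ per applicant, which holds since $d\,p_\Short = \omega(d/\log d) \to \infty$, and the Markov argument you already gave handles the rest.
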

\begin{remark}\label{rmk:single_tier_general_sparse_unmatched}
In every stable matching on $H'$, the fraction of unmatched applicants in $\Short'$ is at most $d^{-\lambda'}$ with high probability, where $ \lambda' = \Theta \left( \lambda \right)$ depends only on $\lambda$, $\gamma_2$, and $\frac{\log d}{\log \log n_{\Long}}$. In particular, if $\lambda \leq C$ for some constant $C > 0$, then $\lambda'$ can be chosen as a constant. 
\end{remark}
\begin{remark}\label{rmk:single_tier_general_sparse_identify}
    There exists a subset $\Short''\subset \Short'$ such that every stable matching on the vertex-induced subgraph of $H$ on $\left(\Short'\backslash \Short''\right)\cup \Long'$ is perfect interim stable with high probability, where $|\Short''| = o \left( n_{\Short} \right)$.  
\end{remark}

\begin{theorem}\label{thm:single_tier_general_dense}
Suppose
\begin{align}
   d \ge \max \left\{ \frac{16+\epsilon}{\gamma_2 p}\,, \frac{4+\epsilon}{\gamma_2^2} \,, - \frac{8+\epsilon}{ \gamma_2 \delta p} \log \left( 1-\delta + \frac{\delta^2}{\gamma_2 n_\Long} \right) \right\} \log n_{\Short} \,. \label{eq:d_needed}
\end{align}
\begin{itemize}
    \item If $\delta<1$, every stable matching on $H$ is perfect interim stable with high probability.
    \item If $\delta = 1$, the applicant-optimal stable matching on $H$ is perfect interim stable with high probability.
\end{itemize}
\end{theorem}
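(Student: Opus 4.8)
The plan is to reduce the theorem to the availability-based machinery developed in Section~\ref{sec:local}, just as \prettyref{thm:single_signal_dense} was handled in \prettyref{sec:single_signal_dense}, but keeping track of the restriction to the induced subgraph $H'$ on $\Short'\cup\Long'$. First I would invoke \prettyref{lmm:interview_d_regular} to view $H$ as a random one-sided $d$-regular bipartite graph with uniformly generated strict preferences; since $\Short'$ and $\Long'$ are chosen independently of the edges, $H'$ is exactly the kind of vertex-induced subgraph on $\Short'\cup\Long'$ to which \prettyref{prop:remove_gamma_d_log_n}, \prettyref{cor:calN_a_logn}, \prettyref{prop:perfect_stable}, and \prettyref{lmm:positive_available}/\prettyref{lmm:positive_available_addition} apply. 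For each applicant $a\in\Short'$, let $\calN_+(a)$ as in \eqref{eq:calC_+} be the neighbors on $H'$ with non-negative post-interview score; \prettyref{lmm:positive_available} says $a$ is interim stable on every stable matching on $H'$ as soon as some $j\in\calN_+(a)$ is available to $a$ on $H'$, so it suffices to show with high probability every $a\in\Short'$ has such a $j$.

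The argument then splits on $\delta$, exactly mirroring \prettyref{sec:single_signal_dense}. When $\delta\le 1-\Omega(1)$, I would use \eqref{eq:calN_+_1_single} (or rather its $H'$-analogue: $|\calN_+(a)|\sim\Binom(|\calN_{H'}(a)|,p)$ with $|\calN_{H'}(a)|\sim\Hyp(\gamma_2 n_\Long,n_\Long,d)$, so by Chernoff $|\calN_+(a)|\gtrsim \gamma_2 d p$ with probability $1-o(1/n)$), and then apply \prettyref{cor:calN_a_logn} to get that conditioned on $|\calN_+(a)|\ge (1+\epsilon)\tfrac1\delta\log\tfrac1{1-\delta}\log n_\Short$ the probability that none of these neighbors is available is $o(1/n)$; the condition \eqref{eq:d_needed} on $d$ (specifically the term $\tfrac{8+\epsilon}{\gamma_2\delta p}\log(\cdots)$) is precisely what guarantees $\gamma_2 d p$ exceeds that threshold. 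A union bound over $\Short'$ then gives that every $a\in\Short'$ has an available $j\in\calN_+(a)$, so every stable matching on $H'$ is perfect interim stable. When $1-o(1)\le\delta\le 1$, I would instead invoke \prettyref{prop:perfect_stable} (with its $\gamma_1,\gamma_2$): the terms $\tfrac{16+\epsilon}{\gamma_2 p}$ and $\tfrac{4+\epsilon}{\gamma_2^2}$ in \eqref{eq:d_needed} ensure $d>\tfrac{1+2\epsilon}{\gamma_2}r_n$ (and, if $\delta<1$, $d>\tfrac{2+2\epsilon}{\gamma_2}r_n$) with the relevant $r_n = \tfrac1\delta\log\big(\tfrac1{1-\delta+\delta^2/(\gamma_2 n_\Long)}\big)\log n_\Short$, so every $a\in\Short'$ is matched within its top $(1+\epsilon)r_n$ (resp.\ $(2+\epsilon)r_n$) neighbors in the $\Short'$-optimal (resp.\ every) stable matching; combining with the high-probability lower bound $|\calN_+(a)|\ge (1+\epsilon)r_n$ from Chernoff and \prettyref{lmm:positive_available_addition} yields interim stability of the $\Short'$-optimal stable matching when $\delta=1$ and of every stable matching when $\delta<1$.

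The one genuinely delicate point — and the step I expect to be the main obstacle — is bookkeeping the factors of $\gamma_1,\gamma_2$ and verifying that \eqref{eq:d_needed} is the right threshold in every regime. In \prettyref{prop:remove_gamma_d_log_n} and \prettyref{cor:calN_a_logn} the relevant degree is the effective degree of a vertex in $H'$, which is $\Theta(\gamma_2 d)$ for applicants but the imbalance parameter that enters the availability probability is $\delta=\gamma_1 n_\Short/(\gamma_2 n_\Long)$ — so one must be careful that the peeling in \prettyref{sec:multi_proof} feeds in exactly the $(\gamma_1,\gamma_2,\delta)$ for which these propositions are stated, namely $\gamma_1 n_\Short=\delta\gamma_2 n_\Long$ with $\gamma_1,\gamma_2\ge\Omega(1)$, which is assumed here. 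The clean way to finish is: for $\delta$ bounded away from $1$ use the $H'$-version of \eqref{eq:calN_+_single_upper} together with \prettyref{cor:calN_a_logn}; for $\delta$ near $1$ use \prettyref{prop:perfect_stable} directly, noting that its statement already carries $\gamma_1,\gamma_2$ and that $r_n$ there matches the logarithmic factor in \eqref{eq:d_needed} up to the explicit constants. \prettyref{rmk:single_tier_general_sparse_unmatched}-style unmatched-count bounds are not needed here, so the proof reduces entirely to the two cited propositions plus a Chernoff estimate on $|\calN_+(a)|$ and a union bound.
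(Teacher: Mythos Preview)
Your proposal is correct and follows essentially the same approach as the paper: split on $\delta$, for $\delta\le 1-\Omega(1)$ combine the $H'$-version of the Chernoff/hypergeometric tail bound on $|\calN_+(a)|$ (the paper's \eqref{eq:calN_+_1}) with \prettyref{cor:calN_a_logn} and a union bound, and for $1-o(1)\le\delta\le 1$ use the same tail bound together with \prettyref{prop:perfect_stable} and \prettyref{lmm:positive_available_addition}. Your identification of the $\gamma_2$-bookkeeping as the only delicate point is accurate, and the paper resolves it exactly as you suggest via the hypergeometric degree estimate $|\calN(a)|\sim\Hyper(\gamma_2 n_\Long,n_\Long,d)$.
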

\begin{remark}\label{rmk:single_tier_general_both_dense}
    Let $H$ denote an interview graph constructed based on the both-side-signaling with with \prettyref{eq:d_needed} is satisfied, and let $H'$ denote the reduced interview graph that is a vertex-induced subgraph of $H$ on $\Short' \cup \Long'$. Then, either the applicant-optimal or the firm-optimal stable matching is perfect interim stable with high probability. 
\end{remark}


Next, we proceed to prove \prettyref{thm:single_tier_general_sparse}, \prettyref{rmk:single_tier_general_sparse_unmatched} and \prettyref{thm:single_tier_general_dense}. The proofs of \prettyref{rmk:single_tier_general_sparse_identify} and \prettyref{rmk:single_tier_general_both_dense} are omitted here, which are analogous as the proof of \prettyref{rmk:single_signal_sparse_identify} and \prettyref{cor:single_both_signal_dense}.


By \prettyref{lmm:interview_d_regular}, $H$ can be considered as a randomly generated one-sided $d$-regular graph, and then $H'$ can be viewed as a vertex-induced subgraph of $H$ on $\Short'\cup\Long'$. 
Let $\calN(a)$ denote the set of neighbors of $a$ on $H'$. Analogous as \prettyref{lmm:positive_available}, to determine if an applicant $a$ is interim stable in any stable matching on the interview graph $H'$, it is sufficient to check whether if there exists a firm $j \in \calN_+(a)$ that is available that is available to $a$, where
\begin{align}
    \calN_+(a)\triangleq \{j \in \calN(a): A_{a,j} > 0\}  \, .  \label{eq:calC_+_H'} 
\end{align}
For any $a\in\Short'$, we have
\begin{align}
     & \prob{\forall \text{ $j\in \calN_+(a)$, $j$ is unavailable to $a$ on $H'$}}  \nonumber \\
     & \le   \prob{\forall \text{ $j\in \calN_+(a)$, $j$ is unavailable to $a$ on $H'$} \, \bigg | \, |\calN_+(a)| > \frac{1}{4} d \gamma_2 p  } +  \prob{ |\calN_+(a)| \le  \frac{1}{4} d \gamma_2  p  }
     \nonumber \\
     & \le \prob{\forall \text{ $j\in \calN_+(a)$, $j$ is unavailable to $a$ on $H'$} \, \bigg | \, |\calN_+(a)| >  \frac{1}{4} d \gamma_2 p  } \nonumber \\
     &~~~~ +  \exp \left(- \frac{1}{16} \gamma_2 d  p \right) + \exp\left( -\frac{1}{4}\gamma_2 ^2 d \right)  \,, \label{eq:calN_+_upper}
\end{align}
where the last inequality holds because for any $a\in\Short$, given that $\{A_{a,j}\}_{j\in \calN(a)}$ are mutually independent under Assumption~\ref{assump:general}, 
\begin{align*}
   |\calN_+(a)|\sim \Binom\left(|\calN(a)|, p \right)\,, \quad \text{where } |\calN(a)| \sim \Hyper\left(\gamma_2 n_{\Long}, n_{\Long}, d \right) \,, 
\end{align*}
and then we have 
\begin{align}
     \prob {|\calN_+(a)| \le \frac{1}{4} \gamma_2 d p } 
     & \le  \prob {|\calN(a)| \ge \frac{1}{2} \gamma_2 d}  \prob {|\calN_+(a)| \le \frac{1}{4} \gamma_2 d  p \, \bigg | \, |\calN(a)| \ge \frac{1}{2} \gamma_2 d } \nonumber  \\
     &~~~~ + \prob {|\calN(a)| < \frac{1}{2} \gamma_2 d}  \nonumber  \\
     & \le \prob {|\calN_+(a)| \le \frac{1}{4} \gamma_2 d  p \, \bigg | \, |\calN(a)| \ge \frac{1}{2} \gamma_2 d } + \prob {|\calN(a)| < \frac{1}{2} \gamma_2 d}  \nonumber  \\
     & \overset{(a)}{\le} \exp \left(- \frac{1}{16} \gamma_2 d  p \right) + \exp\left( -\frac{1}{4}\gamma_2 ^2 d \right) \,, \label{eq:calN_+_1}
\end{align}
where $(a)$ holds by applying Chernoff bound \prettyref{eq:chernoff_binom_left} in \prettyref{lmm:chernoff}, and \prettyref{eq:hyper_upper} in \prettyref{lmm:hyper}.

\begin{proof}[Proof of \prettyref{thm:single_tier_general_sparse}]
 Suppose that $ \omega(1) \le d \le O\left(\polylog n_{\Long}\right)$ and  $ \Omega (1) \le \delta  \le 1+ d^{-\lambda} $ for some $ \lambda \ge \omega\left(\frac{1}{\log d}\right)$. For any $a\in\Short'$, by \prettyref{eq:calN_+_upper},
\begin{align}
    &  \prob{\forall \text{ $j\in \calN_+(a)$, $j$ is unavailable to $a$ on $H'$}} \nonumber \\
    & \overset{(a)}{\le} \exp \left(- \frac{1}{16} \gamma_2 d  p \right) + \exp\left( -\frac{1}{4}\gamma_2 ^2 d \right) + o(1) \nonumber \\
    & \overset{(b)}{=}  o(1) \,, \label{eq:prob_j_available_omega_1}
\end{align}
where $(a)$ follows from  \prettyref{cor:calN_a_omega_1}, 
\[
\prob{\forall \text{ $j\in \calN_+(a)$, $j$ is unavailable to $a$ on $H'$} \, \bigg | \, |\calN_+(a)| >  \frac{1}{4} d \gamma_2 p  } = o(1)  \,,
\]
in view of $|\calN_+(a)| = \frac{1}{4} d \gamma_2 p  \ge \omega\left(\frac{1}{\nu} \right)$, given that \prettyref{eq:nu}, $\gamma_2\ge \Omega(1)$ and $p = \omega\left(\frac{1}{\left(1\wedge \lambda\right)\log d}\right)$;
$(b)$ holds by \prettyref{eq:calN_+_1}, $d=\omega(1)$, $\gamma_2 \ge \Omega(1)$ and $p  =\omega\left(\frac{1}{\log d}\right)$.  For any $a\in\Short'$, by \prettyref{eq:prob_j_available_omega_1} and \prettyref{lmm:positive_available}, 
\begin{align*}
    &  \prob{\text{$a$ is not interim stable}} \le \prob{\forall \text{ $j\in \calN_+(a)$, $j$ is unavailable to $a$ on $H'$}} = o(1) \,.
\end{align*}
By Markov's inequality, we can show that almost all but a vanishingly small fraction  of applicants in $\Short'$ are interim stable in any stable matching on $H'$. Hence, all stable matchings on $H'$ are almost interim stable with high probability.  
\end{proof}

\begin{proof}[Proof of \prettyref{rmk:single_tier_general_sparse_unmatched}]
For any $a\in \Short'$,  
\begin{align*}
     \prob{\text{$a$ is unmatched}}
    & \le \prob{\forall \text{ $j\in \calN(a)$, $j$ is unavailable to $a$ on $H'$}} \\
    & \le  \prob{\forall \text{ $j\in \calN(a)$, $j$ is unavailable to $a$ on $H'$} \, | \, |\calN(a)| \ge \frac{1}{2}\gamma_2 d} \\
    &~~~~ + \prob{|\calN(a)| < \frac{1}{2}\gamma_2 d} \\
    & \stepa{\le} \left(1- \frac{\underline{C} \left( 1 \wedge \lambda \right)\log d }{d} \right)^{ \frac{1}{2}\gamma_2 d-2} + \exp\left(-\frac{1}{4} \gamma_2^2 d\right) \\
    & \le \exp\left(- \left(1-o(1)\right)\underline{C} \left( 1 \wedge \lambda \right) \gamma_2 \log d  \right) \,,
\end{align*}
where $(a)$ holds by \prettyref{eq:j_a_calN'_a} in \prettyref{prop:remove_gamma_d_omega_1}, $ |\calN(a)| \sim \Hyper\left(\gamma_2 n_{\Long}, n_{\Long}, d \right)$, and \prettyref{eq:hyper_upper} in \prettyref{lmm:hyper}; $(b)$ holds by $(1+x)^y \le \exp(xy)$ for any $|x|\le 1, y\ge 1$, and $\gamma_2 \ge \Omega(1)$. By applying the union bound and Markov's inequality, we have
\begin{align*}
    \prob{|a\in \Short' \text{ s.t. $a$ is unmatched}| \ge d^{-\lambda'} } \le o(1) \,,  
\end{align*}
for some $\lambda'$ that only depends on $\lambda$, $\gamma_2$ and  $\frac{\log d}{\log \log n_{\Long}}$. 
\end{proof}

\begin{proof}[Proof of \prettyref{thm:single_tier_general_dense}]
    Suppose $\delta \le 1- \Omega(1)$. By \prettyref{eq:calN_+_upper}, we get
 \begin{align*}
       \prob{\forall \text{ $j\in \calN_+(a)$, $j$ is unavailable to $a$ on $H'$ } } 
        & \overset{(a)}{\le} \exp \left(- \frac{1}{16} \gamma_2 d  p \right) + \exp\left( -\frac{1}{4}\gamma_2 ^2 d \right) + o\left(\frac{1}{n}\right) \nonumber \\
        & \overset{(b)}{=} o\left(\frac{1}{n}\right) \,,
    \end{align*}
where $(a)$ follows from \prettyref{cor:calN_a_logn}, 
\[
\prob{\forall \text{ $j\in \calN_+(a)$, $j$ is unavailable to $a$ on $H'$} \, \bigg | \, |\calN_+(a)| >  \frac{1}{4} d p  }  = o\left(\frac{1}{n}\right) \,,
\]
given that 
\[
|\calN_+(a)|\ge - \frac{2 +\epsilon/4}{ \delta} \log \left(1-\delta +\delta^2/n_\Long  \right) \ge \frac{2+\epsilon/8}{ \delta} \log \left(\frac{1}{ 1-\delta } \right) \,;
\]
$(b)$ holds by $d  \ge  \left\{ \frac{16+\epsilon}{\gamma_2 p}   \,,  \frac{4+\epsilon}{\gamma_2^2} \right\} \log n_{\Short}$. By applying the union bound,
\begin{align*}
      \prob{\forall\ a\in\Short',\ \exists \text{ $j\in \calN_+(a)$, $j$ is available to $a$ on $H'$}} \ge 1-o(1) \,.
\end{align*}
Together with above inequality and \prettyref{lmm:positive_available}, with probability $1-o(1)$, every $a\in\Short'$ is perfect interim stable on any stable matching on $H'$. Hence, it follows that every stable matching on $H'$ is perfect interim stable with high probability.

Suppose $1-o(1)\le \delta\le 1$. By \prettyref{eq:calN_+_1}, 
\begin{align*}
    \prob {|\calN_+(a)| \le - \frac{2 +\epsilon/4}{ \delta} \log \left(1-\delta +\delta^2/n_\Long  \right) }
    & \le 
    \prob {|\calN_+(a)| \le \frac{1}{4} \gamma_2 d p }\\
    & \le  \exp \left(- \frac{1}{16} \gamma_2 d  p \right) + \exp\left( -\frac{1}{4}\gamma_2 ^2 d \right) \\
    & = o\left(\frac{1}{n_\Short}\right) \,,
\end{align*}
where the equality holds by $d \ge  \left\{ \frac{16+\epsilon}{\gamma_2 p}\,, \frac{4+\epsilon}{\gamma_2^2} \right\} \log n_{\Short}$. By applying union bound,
\begin{align}
    \prob {\exists a \in \Short' \text{ s.t. } |\calN_+(a)| \le  - \frac{2 +\epsilon/4}{ \delta} \log \left(1-\delta +\delta^2/n_\Long  \right)} \le o(1) \,. \label{eq:addition_union_calN_+_general}
\end{align}
By \prettyref{prop:perfect_stable}, \prettyref{lmm:positive_available_addition} and \prettyref{eq:addition_union_calN_+_general},  every stable matching on $H'$ is perfect interim stable with high probability if $\delta<1$, and the $\Short'$-optimal stable matching on $H'$ is perfect interim stable  with high probability if $\delta=1$. 

\end{proof}

\bibliographystyle{alpha}
\bibliography{refs}

\newcommand{\etalchar}[1]{$^{#1}$}
\begin{thebibliography}{WHO{\etalchar{+}}17}

\bibitem[AA23]{allman2023interviewing}
Maxwell Allman and Itai Ashlagi.
\newblock Interviewing matching in random markets.
\newblock In {\em Proceedings of the 24th ACM Conference on Economics and Computation}, pages 65--65, 2023.

\bibitem[ABKS20]{ashlagi2020clearing}
Itai Ashlagi, Mark Braverman, Yash Kanoria, and Peng Shi.
\newblock Clearing matching markets efficiently: informative signals and match recommendations.
\newblock {\em Management Science}, 66(5):2163--2193, 2020.

\bibitem[ABZ23]{ashlagi2023welfare}
Itai Ashlagi, Mark Braverman, and Geng Zhao.
\newblock Welfare distribution in two-sided random matching markets.
\newblock In {\em Proceedings of the 24th ACM Conference on Economics and Computation}, 2023.

\bibitem[AC23]{agarwal2023stable}
Ishan Agarwal and Richard Cole.
\newblock Stable matching: Choosing which proposals to make.
\newblock In {\em 50th International Colloquium on Automata, Languages, and Programming (ICALP 2023)}. Schloss-Dagstuhl-Leibniz Zentrum f{\"u}r Informatik, 2023.

\bibitem[ACRS25]{ashlagi2025stablematchinginterviews}
Itai Ashlagi, Jiale Chen, Mohammad Roghani, and Amin Saberi.
\newblock Stable matching with interviews.
\newblock 2025.

\bibitem[AKL17a]{ashlagionline}
Itai Ashlagi, Yash Kanoria, and Jacob~D. Leshno.
\newblock Online appendix to unbalanced random matching markets: The stark effect of competition.
\newblock {\em Journal of Political Economy}, 125, 2017.

\bibitem[AKL17b]{ashlagi2017unbalanced}
Itai Ashlagi, Yash Kanoria, and Jacob~D Leshno.
\newblock Unbalanced random matching markets: The stark effect of competition.
\newblock {\em Journal of Political Economy}, 125(1):69--98, 2017.

\bibitem[Arn23]{arnosti2023lottery}
Nick Arnosti.
\newblock Lottery design for school choice.
\newblock {\em Management Science}, 69(1):244--259, 2023.

\bibitem[AS68]{abramowitz1968handbook}
Milton Abramowitz and Irene~A Stegun.
\newblock {\em Handbook of mathematical functions with formulas, graphs, and mathematical tables}, volume~55.
\newblock US Government printing office, 1968.

\bibitem[BT21]{beyhaghi2021randomness}
Hedyeh Beyhaghi and {\'E}va Tardos.
\newblock Randomness and fairness in two-sided matching with limited interviews.
\newblock In {\em 12th Innovations in Theoretical Computer Science Conference (ITCS 2021)}. Schloss Dagstuhl-Leibniz-Zentrum f{\"u}r Informatik, 2021.

\bibitem[CN07]{coles2007signaling}
Peter Coles and Muriel Niederle.
\newblock Signaling in matching markets.
\newblock {\em Art{\'\i}culo de Trabajo., John H. Cawley, Phillip B. Levine, Muriel Niederle, Alvin E. Roth y John J. Siegfried (2010),“The Job Market for New Economists: A Market Design Perspective”, Journal of Economic Perspectives, oto{\~n}o}, 24(4), 2007.

\bibitem[CPTM21]{chang2021preference}
CW~Chang, Steven~D Pletcher, Marc~C Thorne, and Sonya Malekzadeh.
\newblock Preference signaling for the otolaryngology interview market.
\newblock 2021.

\bibitem[Cra91]{crawford1991}
Vincent~P. Crawford.
\newblock Comparative statics in matching markets.
\newblock {\em Journal of Economic Theory}, 54(2):389 -- 400, 1991.

\bibitem[CT22]{cai2022short}
Linda Cai and Clayton Thomas.
\newblock The short-side advantage in random matching markets.
\newblock In {\em Symposium on Simplicity in Algorithms (SOSA)}, pages 257--267. SIAM, 2022.

\bibitem[DB13]{drummond2013elicitation}
Joanna Drummond and Craig Boutilier.
\newblock Elicitation and approximately stable matching with partial preferences.
\newblock In {\em IJCAI}, pages 97--105. Citeseer, 2013.

\bibitem[DB14]{drummond2014preference}
Joanna Drummond and Craig Boutilier.
\newblock Preference elicitation and interview minimization in stable matchings.
\newblock In {\em Proceedings of the AAAI Conference on Artificial Intelligence}, volume~28, 2014.

\bibitem[FPC00]{freeman2000learning}
William~T Freeman, Egon~C Pasztor, and Owen~T Carmichael.
\newblock Learning low-level vision.
\newblock {\em International journal of computer vision}, 40:25--47, 2000.

\bibitem[GDT{\etalchar{+}}15]{gadepalli2015effort}
Samir~K Gadepalli, Cynthia~D Downard, Keith~A Thatch, Saleem Islam, Kenneth~S Azarow, Mike~K Chen, Craig~W Lillehei, Pramod~S Puligandla, Marleta Reynolds, John~H Waldhausen, et~al.
\newblock The effort and outcomes of the pediatric surgery match process: are we interviewing too many?
\newblock {\em Journal of pediatric surgery}, 50(11):1954--1957, 2015.

\bibitem[GS62]{gale1962college}
David Gale and Lloyd~S Shapley.
\newblock College admissions and the stability of marriage.
\newblock {\em The American Mathematical Monthly}, 69(1):9--15, 1962.

\bibitem[IKL22]{immorlica2022matching}
Nicole Immorlica, Yash Kanoria, and Jiaqi Lu.
\newblock In which matching markets do costly compatibility inspections lead to a deadlock?
\newblock {\em Available at SSRN 3697165}, 2022.

\bibitem[IM05]{im2005}
Nicole Immorlica and Mohammad Mahdian.
\newblock Marriage, honesty, and stability.
\newblock In {\em Proceedings of the sixteenth annual ACM-SIAM symposium on Discrete algorithms}, pages 53--62. Society for Industrial and Applied Mathematics, 2005.

\bibitem[IRNM24]{irwin2024use}
Gretchen Irwin, Tessa Rohrberg, Kari Nilsen, and Miranda~A Moore.
\newblock Use of signaling in family medicine residency interviewing.
\newblock {\em Family Medicine}, 56(6):381, 2024.

\bibitem[JW18]{jagadeesan2018varying}
Meena Jagadeesan and Alexander Wei.
\newblock Varying the number of signals in matching markets.
\newblock In {\em Web and Internet Economics: 14th International Conference, WINE 2018, Oxford, UK, December 15--17, 2018, Proceedings 14}, pages 232--245. Springer, 2018.

\bibitem[Kad21]{kadam2021interviewing}
Sangram~V Kadam.
\newblock Interviewing in matching markets with virtual interviews.
\newblock 2021.

\bibitem[KMQ23]{kanoria2023competition}
Yash Kanoria, Seungki Min, and Pengyu Qian.
\newblock The competition for partners in matching markets, 2023.

\bibitem[Lee16]{lee2016incentive}
SangMok Lee.
\newblock Incentive compatibility of large centralized matching markets.
\newblock {\em The Review of Economic Studies}, 84(1):444--463, 2016.

\bibitem[LN15]{lee2015propose}
Soohyung Lee and Muriel Niederle.
\newblock Propose with a rose? signaling in internet dating markets.
\newblock {\em Experimental Economics}, 18:731--755, 2015.

\bibitem[LS17]{lee2017interviewing}
Robin~S Lee and Michael Schwarz.
\newblock Interviewing in two-sided matching markets.
\newblock {\em The RAND Journal of Economics}, 48(3):835--855, 2017.

\bibitem[MAW18]{melcher2018matching}
Marc~L Melcher, Itai Ashlagi, and Irene Wapnir.
\newblock Matching for fellowship interviews.
\newblock {\em Jama}, 320(16):1639--1640, 2018.

\bibitem[MAW19]{melcher2019reducing}
Marc~L Melcher, Itai Ashlagi, and Irene Wapnir.
\newblock Reducing the burden of fellowship interviews—reply.
\newblock {\em JAMA}, 321(11):1107--1107, 2019.

\bibitem[Men15]{menzel2015large}
Konrad Menzel.
\newblock Large matching markets as two-sided demand systems.
\newblock {\em Econometrica}, 83(3):897--941, 2015.

\bibitem[MM09]{mezard2009information}
Marc Mezard and Andrea Montanari.
\newblock {\em Information, physics, and computation}.
\newblock Oxford University Press, 2009.

\bibitem[MM21]{manjunath2021interview}
Vikram Manjunath and Thayer Morrill.
\newblock Interview hoarding, 2021.

\bibitem[MN03]{molitierno2003trees}
Jason~J Molitierno and Michael Neumann.
\newblock On trees with perfect matchings.
\newblock {\em Linear algebra and its applications}, 362:75--85, 2003.

\bibitem[MPZ02]{mezard2002analytic}
Marc M{\'e}zard, Giorgio Parisi, and Riccardo Zecchina.
\newblock Analytic and algorithmic solution of random satisfiability problems.
\newblock {\em Science}, 297(5582):812--815, 2002.

\bibitem[MW70]{mcvitie1970stable}
David~G McVitie and Leslie~B Wilson.
\newblock Stable marriage assignment for unequal sets.
\newblock {\em BIT Numerical Mathematics}, 10(3):295--309, 1970.

\bibitem[MWS{\etalchar{+}}21]{morgan2021case}
Helen~Kang Morgan, Abigail~F Winkel, Taylor Standiford, Rodrigo Mu{\~n}oz, Eric~A Strand, David~A Marzano, Tony Ogburn, Carol~A Major, Susan Cox, and Maya~M Hammoud.
\newblock The case for capping residency interviews.
\newblock {\em Journal of surgical education}, 78(3):755--762, 2021.

\bibitem[PCTM22]{pletcher2022otolaryngology}
Steven~D Pletcher, CW~David Chang, Marc~C Thorne, and Sonya Malekzadeh.
\newblock The otolaryngology residency program preference signaling experience.
\newblock {\em Academic Medicine}, 97(5):664, 2022.

\bibitem[Pea14]{pearl2014probabilistic}
Judea Pearl.
\newblock {\em Probabilistic reasoning in intelligent systems: networks of plausible inference}.
\newblock Elsevier, 2014.

\bibitem[Pea22]{pearl2022fusion}
Judea Pearl.
\newblock Fusion, propagation, and structuring in belief networks.
\newblock In {\em Probabilistic and Causal Inference: The Works of Judea Pearl}, pages 139--188. 2022.

\bibitem[P{\k{e}}s17]{pkeski2017large}
Marcin P{\k{e}}ski.
\newblock Large roommate problem with non-transferable random utility.
\newblock {\em Journal of Economic Theory}, 168:432--471, 2017.

\bibitem[Pit89]{pittel1989}
Boris Pittel.
\newblock The average number of stable matchings.
\newblock {\em SIAM Journal on Discrete Mathematics}, 2(4):530--549, 1989.

\bibitem[Pit19]{pittel2019likely}
Boris Pittel.
\newblock On likely solutions of the stable matching problem with unequal numbers of men and women.
\newblock {\em Mathematics of Operations Research}, 44(1):122--146, 2019.

\bibitem[PS97]{panconesi1997randomized}
Alessandro Panconesi and Aravind Srinivasan.
\newblock Randomized distributed edge coloring via an extension of the chernoff--hoeffding bounds.
\newblock {\em SIAM Journal on Computing}, 26(2):350--368, 1997.

\bibitem[PS24]{potukuchi2024unbalanced}
Aditya Potukuchi and Shikha Singh.
\newblock Unbalanced random matching markets with partial preferences.
\newblock {\em arXiv preprint arXiv:2402.09667}, 2024.

\bibitem[RCILB13]{rastegari2013two}
Baharak Rastegari, Anne Condon, Nicole Immorlica, and Kevin Leyton-Brown.
\newblock Two-sided matching with partial information.
\newblock In {\em Proceedings of the fourteenth ACM conference on Electronic Commerce}, pages 733--750, 2013.

\bibitem[Rot86]{roth1986allocation}
Alvin~E Roth.
\newblock On the allocation of residents to rural hospitals: a general property of two-sided matching markets.
\newblock {\em Econometrica: Journal of the Econometric Society}, pages 425--427, 1986.

\bibitem[Ska13]{skala2013hypergeometric}
Matthew Skala.
\newblock Hypergeometric tail inequalities: ending the insanity.
\newblock {\em arXiv preprint arXiv:1311.5939}, 2013.

\bibitem[Ska21]{skancke2021welfare}
Erling Skancke.
\newblock Welfare and strategic externalities in matching markets with interviews.
\newblock {\em Available at SSRN 3960558}, 2021.

\bibitem[WHO{\etalchar{+}}17]{watson2017burden}
Shawna~L Watson, Robert~H Hollis, Lasun Oladeji, Shin Xu, John~R Porterfield, and Brent~A Ponce.
\newblock The burden of the fellowship interview process on general surgery residents and programs.
\newblock {\em Journal of surgical education}, 74(1):167--172, 2017.

\bibitem[YNW24]{yousef2024impact}
Andrew Yousef, Ariadne Nichol, and Deborah Watson.
\newblock Impact of applicant signaling for otolaryngology interviews.
\newblock {\em The Laryngoscope}, 2024.

\end{thebibliography}

\end{document}